\newtheorem{theorem}{Theorem}[section]
\newtheorem{lemma}[theorem]{Lemma}
\newtheorem{proposition}[theorem]{Proposition}
\newtheorem{corollary}[theorem]{Corollary}
\newtheorem{remark}{Remark}
\numberwithin{theorem}{section} \numberwithin{equation}{section}
\newcommand{\nc}{\newcommand}
\nc{\be}{\begin{equation}} \nc{\la}{\label} \nc{\ba}{\begin{array}}
\nc{\ea}{\end{array}} \nc{\bs}{\begin{split}} \nc{\es}{\end{split}}
\newcommand{\R}{\mathbb{R}}
\newcommand{\Z}{\mathbb{Z}}
\newcommand{\N}{{\mathbb N}}
\newcommand{\cA}{{\cal{A}}}
\newcommand{\imsgap}{{\gamma}}
\newcommand{\al}{{\alpha}}
\newcommand{\del}{{\delta}}
\newcommand{\s}{{\sigma}}
\nc{\G}{\Gamma} \nc{\g}{\gamma} \nc{\Omt}{\tilde{\Omega}}
\nc{\ta}{\tau} \nc{\w}{\omega} \nc{\io}{\iota} \nc{\h}{\theta}
\nc{\Si}{\Sigma}
\renewcommand{\H}{H}
\newcommand{\E}{E(y)}
\newcommand{\Einfty}{E(\infty)}
\newcommand{\hb}{\hat{b}}
\newcommand{\hc}{\hat{c}}
\nc{\ra}{\rightarrow} \nc{\ran}{\rangle} \nc{\lan}{\langle}
\nc{\bP}{\bar{P}} \nc{\bQ}{\bar{Q}} \nc{\bL}{\bar{L}} \nc{\1}{{\bf
\nc{\p}{\partial}
\newcommand{\DETAILS}[1]{}
\newcommand{\supp}{\operatorname{supp}}
\newcommand{\Ran }{\operatorname{Ran}}
\newcommand{\ls}{\lesssim}
\newcommand{\Rbkl}{R_{B_k B_l}^{\s,\bot}}
\newcommand{\Rbij}{R_{B_i B_j}^{\s,\bot}}
\newcommand{\Rakl}{R_{A_k A_l}^{\s,\bot}}
\newcommand{\Raij}{R_{A_i A_j}^{\s,\bot}}
\newcommand{\Rbkld}{R_{B_k B_l,\delta}^{\s,\bot}}
\newcommand{\Rbkldo}{R_{B_k B_l,\delta_0}^{\s,\bot}}
\begin{document}
\title{Remainder estimates for the Long Range Behavior of the van der Waals interaction energy}

\author{Ioannis Anapolitanos \thanks{Dept.~of Math.,
Univ. of Stuttgart, Stuttgart, Germany; Supported by the German
Science Foundation under Grant GR 3213/1-1.}}
\bigskip

\bigskip

\bigskip

\bigskip
\bigskip

\maketitle

\begin{abstract}
   The van der Waals-London's law, for a collection of atoms at large separation, states that their interaction energy is pairwise attractive and decays proportionally to one over their distance to the sixth. The first rigorous result in this direction
   was obtained by Lieb and Thirring \cite{LT}, by proving an upper bound which confirms this law.
   Recently the van der Waals-London's law was proven under some assumptions by I.M. Sigal and
   the author \cite{AS}. Following the strategy of \cite{AS} and reworking the approach appropriately,
    we prove estimates on the remainder of the interaction energy.
   Furthermore, using an appropriate test
  function, we prove an upper bound for the interaction energy, which is sharp to leading order.
  For the upper bound, our assumptions are weaker, the remainder estimates stronger and the proof is simpler.
  The upper bound, for the cases it applies, improves considerably the upper bound of Lieb and Thirring.
  However, their bound is much more general. Here we consider only spinless Fermions.
\end{abstract}


\section{Introduction}

  The van der Waals forces are forces between atoms or molecules.
They are much weaker than ionic or covalent bonds. The physicist J.D. van der Waals
discovered them during his effort to formulate an equation of state of gases that
is compatible with experimental measurements (see \cite{vdW1} and \cite{vdW2}). These forces play a
fundamental role in quantum chemistry, physics and material
sciences. Due to them, for instance, water
condenses from vapor. They force gigantic molecules like enzymes,
proteins, and DNA into the shapes required for biological activity.
They explain why diamond, consisting of carbon atoms connected with
covalent bonds only, is a much harder material than graphite, which
consists of layers of carbon atoms that attract each other through
van der Waals forces (see \cite{C}).

         We begin with a mathematical formulation of the problem.
  We consider a system of $M$ interacting atoms with nuclei fixed
  at $y_1,\dots,y_M \in \R^3$, respectively, and described
by the Hamiltonian
\begin{equation}\label{Hy}
H^N(y)=\sum_{i=1}^{N} \bigg(- \Delta_{x_i}-\sum_{j=1}^{M}
\frac{e^2 Z_j}{|x_i-y_j|}\bigg)+\sum_{i<j}^{1,N}
\frac{e^2}{|x_{i}-x_{j}|}+\sum_{i<j}^{1,M} \frac{e^2 Z_{i} Z_{j}}{|y_{i}-y_{j}|}.
\end{equation}
 Here $N$ is the total number of electrons, $x_i,y_i\in \R^3$ denote
the coordinates of the electrons and the nuclei, respectively,
$y=(y_1,\dots, y_M)$,  $-e $ is the electron charge, and $Z_j \in \mathbb{N}$
 is the atomic number of the $j$-th nucleus.
The notation $\sum_{i<j}^{1,N}$ means that $i,j$ are summed from $1$
to $N$ for all values $i<j$. The scaling has been chosen so that the mass
of the electrons is $\frac{1}{2}$, and therefore $-\Delta_{x_i}$
is the operator of kinetic energy of the electron with coordinate $x_i$.
 We consider a system of atoms (an atom has total charge zero)
so we must have $\sum_{j=1}^M Z_j=N$. The Hamiltonian $H^N(y)$
arises from the standard full Hamiltonian of the system by fixing
the positions of the nuclei and neglecting their kinetic energies.
This is an approximation called the Born-Oppenheimer approximation,
and $H^N(y)$ is called the Born-Oppenheimer Hamiltonian. The
Born-Oppenheimer approximation relies on the fact that the nuclei
are much heavier than the electrons. We refer to \cite{AS} and
references therein for a discussion of the Born-Oppenheimer
approximation and the Born-Oppenheimer Hamiltonian \eqref{Hy}. The operator
$H^N(y)$ acts on $\otimes_1^N L^2(\mathbb{R}^3)$. In this article we study spinless electrons.
This has to be taken into account in the mathematical formulation of the problem.
 Let $n \in \mathbb{N}$ and $S_n$ be the
 permunation group of $\{1,\dots,n\}$. For any
 $\pi \in S_n$ we define
  the unitary operator $T_{\pi}$ on $\otimes_1^n L^2(\mathbb{R}^3)$ given by
\begin{equation}\label{def:Tpi}
(T_{\pi} \Psi)(x_1,\dots,x_n)=\Psi(x_{\pi^{-1}(1)},\ldots,
x_{\pi^{-1}(n)}).
\end{equation}
Then
\begin{equation}\label{Qdef}
Q_n:=\frac{1}{n!} \sum_{\pi \in S_n} \text{sgn}(\pi) T_{\pi},
\end{equation}
is the orthogonal projection onto the space $\wedge_1^n L^2(\R^3)$ of
antisymmetric functions with respect to permutations of coordinates.  Furthermore, we define
\begin{equation*}
H^{N,\sigma}(y):= H^N(y) Q_N
\end{equation*}
and let
\begin{equation}\label{def:GSE}
E(y):=\inf \sigma(H^{N,\sigma}(y)|_{\Ran  Q_N}),
\end{equation}
 be the ground state energy of the system. The restriction $H^{N, \sigma}(y)|_{\Ran  Q_N}$ onto
 the range of $Q_N$ makes sense, because $H^N(y)$ commutes with $Q_N$. We note
 that in this paper all operators will be considered as acting on the entire $L^2$ spaces, unless
 an explicit restriction is written.
For all  $i \in \{1,\dots,M\}$ and $n_i \in \mathbb{Z}$ with $n_i \leq Z_i$, we define
\begin{equation}\label{def:Hm}
H_{i,n_i}:=\sum_{j =1}^{Z_i-n_i} \left(- \Delta_{x_j}- \frac{e^2
Z_i}{|x_j|}\right)+ \sum_{j<k}^{1, Z_i-n_i}\frac{e^2}{|x_j-x_k|}.
\end{equation}
In simple words, $H_{i,n_i}$ is the Hamiltonian of an ion with atomic number $Z_i$
and total charge $e n_i$. We can define $H_{i,n_i}^\sigma$ similarly as
$H^{N, \sigma}(y)$, namely
\begin{equation}\label{def:Hinis}
H_{i,n_i}^\sigma=H_{i,n_i} Q_{Z_i-n_i}.
\end{equation}
 Furthermore, we define
\begin{equation}\label{def:Eini}
E_{i,n_i}:=\inf \sigma(H_{i,n_i}^\sigma|_{\Ran Q_{Z_i-n_i}}).
\end{equation}
The following well known lemma is a corollary of the HVZ and Zhilin's theorems,
as we shall explain in Section \ref{importantprop}.
\begin{lemma}\label{lem:negGSE}
 For all $i \in \{1,\dots,M\}$ and all $n_i \in \mathbb{Z}$ with $n_i< Z_i$, we have that
 \begin{equation}
  E_{i,n_i}<0
 \end{equation}
 and 0 lies on the essential spectrum of $H_{i,n_i}^\s|_{\Ran Q_{Z_i-n_i}}$.
 Moreover, if $n_i \geq 0$, then $E_{i,n_i}$ lies on the discrete spectrum of $H_{i,n_i}^\s|_{\Ran Q_{Z_i-n_i}}$.
\end{lemma}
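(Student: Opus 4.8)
The plan is to obtain all three statements by specializing the HVZ theorem and Zhislin's theorem to the Coulomb Hamiltonian $H_{i,n_i}$, which describes an atom or ion with a nucleus of charge $Z_i$ fixed at the origin and $K:=Z_i-n_i$ electrons. Since $n_i\in\Z$ and $n_i<Z_i$ we have $K\ge 1$, so $H_{i,n_i}^{\sigma}$ and $E_{i,n_i}$ are genuinely defined; all operators below are understood on the antisymmetric subspace $\Ran Q_K$. The first step is to identify the essential spectrum: by the HVZ theorem (a cluster of free electrons sent off to spatial infinity contributes the bottom of its own spectrum, which is $0$) one has $\sigma_{\mathrm{ess}}(H_{i,n_i}^{\sigma}|_{\Ran Q_K})=[\Sigma,\infty)$ with $\Sigma=\min_{1\le j\le K}E_{i,n_i+j}$, the minimum running over the ground state energies left after removing $j$ electrons to infinity. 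Taking $j=K$ strips the ion down to the bare nucleus, whose threshold is $0$; hence $\Sigma\le 0$, and therefore $0\in\sigma_{\mathrm{ess}}(H_{i,n_i}^{\sigma}|_{\Ran Q_K})$.

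Next I would prove the strict bound $E_{i,n_i}<0$. If $K=1$ then $H_{i,n_i}=-\Delta_{x_1}-e^2Z_i/|x_1|$ is hydrogen-like with ground state energy $-e^4Z_i^2/4<0$. If $K\ge 2$, take $j=K-1$ in the formula for $\Sigma$: the corresponding threshold is the energy of the one-electron ion, equal to $-e^4Z_i^2/4$, so $\Sigma\le -e^4Z_i^2/4<0$; since $E_{i,n_i}=\inf\sigma(H_{i,n_i}^{\sigma}|_{\Ran Q_K})\le\inf\sigma_{\mathrm{ess}}=\Sigma$, this gives $E_{i,n_i}<0$. Note that this uses no stability property of the ion: the negativity is forced purely by the lowest HVZ thresholds (removal down to the bare nucleus, respectively to a single hydrogenic electron) being $\le 0$, respectively $<0$.

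It then remains to treat $n_i\ge 0$, where the assertion is that $E_{i,n_i}$ is a discrete eigenvalue. The hypothesis $n_i\ge 0$ is exactly $K=Z_i-n_i\le Z_i$, i.e.\ the number of electrons does not exceed the nuclear charge, which is precisely the hypothesis of Zhislin's theorem in its fermionic form. Zhislin's theorem then provides infinitely many eigenvalues of $H_{i,n_i}^{\sigma}|_{\Ran Q_K}$ strictly below $\inf\sigma_{\mathrm{ess}}=\Sigma$; in particular $E_{i,n_i}<\Sigma$, so $E_{i,n_i}$ is isolated in the spectrum and of finite multiplicity, i.e.\ it belongs to the discrete spectrum.

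The step needing the most care is conceptual rather than computational. The first two conclusions must hold for \emph{every} integer $n_i<Z_i$, including strongly negative ions (large $K$), so one must resist invoking binding of the ion; instead one exploits that the HVZ thresholds obtained by stripping off many electrons decrease all the way down to the purely hydrogenic value $-e^4Z_i^2/4$. Dually, Zhislin's theorem --- the only ingredient producing a genuine gap below the essential spectrum --- requires $K\le Z_i$, which is exactly why discreteness of $E_{i,n_i}$ is claimed only under $n_i\ge 0$, whereas the other two conclusions need no such restriction. A minor routine point is checking that the HVZ and Zhislin theorems apply verbatim on the antisymmetric subspace and that the cluster sub-Hamiltonians are themselves the fermionic operators $H_{i,n_i+j}^{\sigma}$, which is standard.
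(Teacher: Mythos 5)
Your proposal is correct, and it reaches the conclusions by a genuinely different route than the paper for the negativity and essential-spectrum statements. The paper's proof is the short remark after Theorem \ref{zysl1}: it uses only the single-removal form of HVZ stated in Theorem \ref{hvz} ($\inf\sigma_{\mathrm{ess}}(H_{i,n_i}^{\s}|_{\Ran Q_{Z_i-n_i}})=E_{i,n_i+1}$) together with Zhislin's strict inequality $E_{i,n_i}<E_{i,n_i+1}$ for $0\le n_i<Z_i$, anchored at the observation $H_{i,Z_i}=0$, hence $E_{i,Z_i}=0$; a downward induction then gives $E_{i,n_i}<0$ and discreteness for $0\le n_i<Z_i$, and for $n_i<0$ one chains the non-strict inequalities $E_{i,n_i}\le E_{i,n_i+1}\le\dots\le E_{i,0}<0$, with $0\in\sigma_{\mathrm{ess}}$ because $E_{i,n_i+1}\le 0$. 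You instead invoke the multi-threshold form of HVZ, $\inf\sigma_{\mathrm{ess}}=\min_{1\le j\le K}E_{i,n_i+j}$, and read off both claims directly: the bare-nucleus threshold ($j=K$) gives $0\in\sigma_{\mathrm{ess}}$, and the hydrogenic threshold ($j=K-1$, energy $-e^4Z_i^2/4$) gives $E_{i,n_i}\le\inf\sigma_{\mathrm{ess}}<0$ for $K\ge2$, with the $K=1$ case explicit; Zhislin enters only for the discreteness claim when $n_i\ge0$. What each approach buys: yours needs the stronger (but entirely standard) statement of HVZ with all cluster thresholds plus the explicit hydrogen ground-state energy, in exchange for which the negativity and $0\in\sigma_{\mathrm{ess}}$ statements require no induction, no binding of intermediate ions, and even come with the quantitative bound $E_{i,n_i}\le -e^4Z_i^2/4$ for $K\ge 2$; the paper's argument gets by with the weaker threshold statement it actually records in Theorem \ref{hvz} and avoids any explicit computation, at the price of routing everything, including the negative-ion case, through Zhislin's theorem and the chain of inequalities anchored at $E_{i,Z_i}=0$. (If you wanted to stay strictly within the paper's stated Theorem \ref{hvz}, your inequalities $\inf\sigma_{\mathrm{ess}}\le E_{i,n_i+j}$ would follow from the same short monotonicity induction $E_{i,m}\le E_{i,m+1}$, so nothing essential is lost either way.)
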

From Lemma \ref{lem:negGSE} and the fact that $H_{i,n_i}^\sigma|_{(\Ran Q_{Z_i-n_i})^\bot}=0$,
it follows that the restriction onto $\Ran Q_{Z_i-n_i}$ does not change the discrete
spectrum and the corresponding eigenspaces. More precisely:
\begin{corollary}\label{cor:restrictionaway}
 For all $i \in \{1,\dots,M\}$ and all $n_i \in \mathbb{Z}$ with $n_i< Z_i$, we have that
 \begin{equation}
E_{i,n_i}=\inf \sigma(H_{i,n_i}^\sigma).
\end{equation}
Moreover the discrete spectrum of $H_{i,n_i}^\sigma$
is the same as that of $H_{i,n_i}^\sigma|_{\Ran Q_{Z_i-n_i}}$ (it might be empty) and the corresponding
eigenspaces are the same.
\end{corollary}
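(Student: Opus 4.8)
The plan is to decompose the ambient space $\bigotimes_1^{Z_i-n_i} L^2(\R^3)$ orthogonally as $\Ran Q_{Z_i-n_i} \oplus (\Ran Q_{Z_i-n_i})^\bot$. Since $H_{i,n_i}$ commutes with $Q_{Z_i-n_i}$, both subspaces reduce $H_{i,n_i}^\sigma = H_{i,n_i} Q_{Z_i-n_i}$, and on the complement the operator vanishes, as already observed above. Writing $P := Q_{Z_i-n_i}$ and $A := H_{i,n_i}^\sigma|_{\Ran P}$, we therefore have $H_{i,n_i}^\sigma = A \oplus 0$ with respect to this splitting. (When $Z_i - n_i = 1$ the projection $P$ is the identity and the corollary is trivial, so we may assume $(\Ran P)^\bot \neq \{0\}$.) From here everything reduces to the elementary spectral calculus of orthogonal direct sums, the only genuine input being Lemma~\ref{lem:negGSE}.

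First I would prove the identity for $\inf\sigma$. We have $\sigma(A\oplus 0) = \sigma(A)\cup\{0\}$; but Lemma~\ref{lem:negGSE} tells us $0\in\sigma_{ess}(A)\subseteq\sigma(A)$, so in fact $\sigma(H_{i,n_i}^\sigma) = \sigma(A)$. Taking infima and using the definition \eqref{def:Eini} of $E_{i,n_i}$ gives $\inf\sigma(H_{i,n_i}^\sigma) = E_{i,n_i}$.

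For the discrete spectrum I would invoke that, for a self-adjoint operator, $\sigma_{disc}$ and $\sigma_{ess}$ partition the spectrum, together with $\sigma_{ess}(A\oplus 0) = \sigma_{ess}(A)\cup\sigma_{ess}\bigl(0|_{(\Ran P)^\bot}\bigr)$. The second set is contained in $\{0\}$, which is already in $\sigma_{ess}(A)$ by Lemma~\ref{lem:negGSE}; hence $\sigma_{ess}(H_{i,n_i}^\sigma) = \sigma_{ess}(A)$. Combining this with $\sigma(H_{i,n_i}^\sigma) = \sigma(A)$ from the previous paragraph yields $\sigma_{disc}(H_{i,n_i}^\sigma) = \sigma_{disc}(A)$ (possibly empty, e.g.\ for sufficiently negative ions). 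Finally, any $\lambda\in\sigma_{disc}(A)$ is different from $0$ because $0\in\sigma_{ess}(A)$ and the two sets are disjoint; for such $\lambda$ the kernel of $A\oplus 0 - \lambda$ equals $\ker(A-\lambda)\oplus\{0\}$, i.e.\ the eigenspaces of $H_{i,n_i}^\sigma$ and of $A$ at $\lambda$ coincide.

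I do not anticipate a real obstacle here: the statement is essentially a bookkeeping corollary of Lemma~\ref{lem:negGSE}, whose role is precisely to guarantee that $0$ is already present in the essential spectrum of the restricted operator, so that adjoining the trivial summand $0$ changes neither the spectrum nor the essential/discrete decomposition. The only points requiring a line of care are the degenerate case $Z_i-n_i=1$ and the well-definedness and self-adjointness of $H_{i,n_i}^\sigma = H_{i,n_i}Q_{Z_i-n_i}$, which follow from $[H_{i,n_i},Q_{Z_i-n_i}]=0$.
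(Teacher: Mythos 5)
Your proof is correct and follows exactly the route the paper intends: the paper itself justifies the corollary in one line by combining Lemma \ref{lem:negGSE} with the observation that $H_{i,n_i}^\sigma$ vanishes on $(\Ran Q_{Z_i-n_i})^\bot$, i.e.\ by viewing $H_{i,n_i}^\sigma$ as the direct sum of its restriction and the zero operator, so that $0\in\sigma_{\textrm{ess}}$ of the restriction guarantees nothing new is added to the spectrum, the essential/discrete decomposition, or the eigenspaces. Your write-up simply fills in the routine direct-sum bookkeeping that the paper leaves implicit.
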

This corollary will allow us to remove the restrictions onto the antisymmetric spaces, which will
be very convenient for the proof.

Before stating the van der Waals-London's law, we
formulate and discuss a property of many-body systems playing an
important role below.
\begin{itemize}
\item[(E)] For all $i,j \in \{1,\dots,M\}$ with $i \neq j$ we assume that:
if $m,n \in \N \cup \{0\}$ and $l \in \N$ so that $m+l \leq Z_i$,
then we have that
\begin{equation}\label{Prop(E)}
E_{i,m}+E_{j,-n}< E_{i,m+l}+E_{j,-n-l}.
\end{equation}
\end{itemize}
\medskip
   The inequality $m+l \leq Z_i$ is imposed, because $H_{i,m+l}$ is not defined otherwise.
   It means physically that a positive ion can not have a charge that is bigger
    than the one of its nucleus. The physical meaning of \eqref{Prop(E)} is that if we have
  a system of a non-positive ion and a non-negative ion that are infinitely far from each other (no interaction), then it costs
   energy to transfer electrons from the non-negative ion to the non-positive one. So far every experimental
 measurement verifies Property (E). For a detailed discussion of this fact we refer to
  \cite{AS} (pages 6-7). However, theoretically, Property (E) is an
 open problem except for the case of a system of hydrogen atoms (see Proposition \ref{prop:ConditionEhyd}
 below). The hydrogen atom is an atom with atomic number $1$.

        A simple induction argument on the number of atoms shows that Property (E) implies
\begin{itemize}
\item [(E')]  
\qquad $\sum_{i=1}^M E_{i,0} < \sum_{i=1}^M  E_{i,n_i}, \quad \forall (n_1, \dots, n_M): \sum_i n_i=0,\ \sum_i |n_i|> 0,\ n_i \leq Z_i$.    
\end{itemize}
The physical meaning of Property (E') is that if we have a system  of atoms that are infinitely far from each other (no interaction),
then it costs energy to make them a system of ions by transferring electrons between them.
 It was proven in \cite{AS} (see page 30) that
Property (E') is a necessary Condition for the van der Waals-London's law to
hold. The proof of the necessity is based on the fact that ions
interact with each other through Coulomb interaction (behaving like $|y_i-y_j|^{-1}$). This is
inconsistent with the inverse sixth power van der Waals-London's law (according to which the
 interaction behaves like $|y_i-y_j|^{-6}$).
 For simplicity, we will assume Property (E') in the discussion below.
For all $i \in \{1,\dots,M\}$ we define
\begin{equation}\label{Eidef}
E_i=E_{i,0},
\end{equation}
where $E_{i,0}$ was defined in \eqref{def:Eini}. In other words,
$E_i$ is the ground state energy of the atom with atomic number $Z_i$.
From Lemma \ref{lem:negGSE} it follows that
\begin{equation}\label{eqn:Eiless0}
 E_i<0, \text{ } \forall i \in \{1,2,\dots,M\}.
\end{equation}
 The interaction energy $W(y)$ of the system
is defined as
\begin{equation}\label{interene}
W(y):=E(y)-\Einfty,
\end{equation}
where $E(y)$ was defined in \eqref{def:GSE} and
\begin{equation}\label{Einftydef}
\Einfty= \sum_{i=1}^M E_i.
\end{equation}
The quantity $\Einfty$ can be roughly understood as the ground energy of
the system when the atoms are infinitely far from each other.

     It was expected after  London (see \cite{Lo}), that $W(y)$ is a sum of pair
interactions which are attractive and decay at infinity
as $-|y_i-y_j|^{-6}$. In other words, one expects that
\begin{equation}\label{vdWlaw}
W(y)=-\sum_{i<j}^{1,M}\frac{e^4
\sigma_{ij}}{|y_i-y_j|^6}+O\left(\sum_{i<j}^{1,M} \frac{1}{|y_i-y_j|^7}\right),
\end{equation}
 provided that
 \begin{equation}\label{Rdef}
 R:=\min\{|y_i-y_j|: 1\leq i <j \leq M\}
 \end{equation}
  is large enough. Here $\sigma_{ij}$ are positive constants depending  on the atomic
numbers $Z_i, Z_j$. An upper bound of the form
 \begin{equation}\label{LTupper}
W(y) \leq -\sum_{i<j}^{1,M}\frac{ e^4 C_{ij}}{|y_i-y_j|^6},
 \end{equation}
where $C_{ij}$ are some positive constants, was proven by Lieb and Thirring \cite{LT} in 1986, using an
 intricate test function. The upper bound was proven without any assumptions and it holds for a system of molecules
 as well. A first rigorous proof of \eqref{vdWlaw}
 was given by the author and Sigal (see Theorem 1.1 in \cite{AS}) assuming Condition (D) below and Property
 (E'). There, a related result was proven for the case of Fermions with
 spin as well. In \cite{AS} there was no information on how
 large $R$ should be, so that the remainder
 is small relative to the leading term. The error terms involved sums
 over sets whose cardinality was growing extremely fast (at the rate $N^N$)
 in the number of electrons. It was claimed,
 however, that, after reworking the approach, the sums of the error terms can be controlled
  appropriately, to obtain error estimates that grow
   much slower than the cardinality of these sums. This is the first goal of this work.
 We estimate the remainder, up to constants depending only on
\begin{equation}\label{def:maxZj}
  Z=\max\{Z_j, j \in \{1,2,\dots,M\}\}.
  \end{equation}
  Our point is that these constants depend only on the kinds
 of atoms involved in the system and not on how many there are.
  The second goal of this work is to provide an upper
  bound for $W(y)$ under weaker assumptions, by using an appropriate test
function. The upper bound is sharp in the leading order, and therefore, when it applies,
it improves considerably the upper bound of Lieb and Thirring in \cite{LT}. We note once more, however,
that the upper bound in \cite{LT} holds without any assumptions and for a system of molecules as well.
 We refer to \cite{MS} for the related retarded van der Waals potential.

Before we state our main theorems, we state another condition which
we will need. We define
\begin{equation}\label{def:His}
H_i=H_{i,0}, H_i^\sigma:=H_{i,0}^\sigma,
\end{equation}
 where $H_{i,0}^\sigma$ was defined in \eqref{def:Hinis}.
 In simple words, $H_i$ is the Hamiltonian
of the atom with atomic number $Z_i$ and with nucleus at $0$.
From \eqref{def:His}, \eqref{Eidef} and Corollary \ref{cor:restrictionaway}
it follows that
\begin{equation}\label{eqn:EiHs}
E_i=\inf \s(H_i^\sigma), \text{ } \forall i \in \{1,\dots,M\}.
\end{equation}
It is well known that $H_i^\sigma|_{Ran Q_{Z_i}}$ has a ground
state for all $i \in \{1,\dots,M\}$ (see Theorem \ref{zysl} below). It follows then, from Corollary
\ref{cor:restrictionaway}, that $H_i^\sigma$ has a ground state for all $i \in \{1,\dots,M\}$.
We are going further to assume
\begin{itemize}
\item[(D)] For all $i \in \{1,\dots,M\}$ the ground state energy of
 $H_i^\sigma$ is non-degenerate.
\end{itemize}
Assuming condition (D), we define $\phi_{i}$ to be the ground state
of $H_i^\sigma$, for all $i \in \{1,\dots,M\}$.  Throughout the text we will
 always assume that $\|\phi_i\|=1$ for all $i \in \{1,\dots,M\}$. Of course
 the ground state is unique up to a constant phase factor $e^{i \theta}$.
 However, this factor can be chosen arbitrarily and the results and the proofs do not depend on this choice.
 We are now going to define the constants $\sigma_{ij}$. To this end,
 we need to introduce some notation that will be useful later on as well.
 For all $i,j \in \{1,2,\dots,M\}$, with $i \neq j$, we define
\begin{equation}\label{def:Hijs}
 H_{ij}^{\sigma}:= H_i^\sigma\otimes I^{Z_j}+ I^{Z_i} \otimes H_j^\sigma,
\end{equation}
where, for $n \in \mathbb{N}$, $I^n$ is the identity acting on $\otimes_1^n L^2(\mathbb{R}^3)$.
Then $H_{ij}^{\sigma}$ has the unique ground state $\phi_i \otimes \phi_j$.
We further define
\begin{equation}
 H_{ij}^{\sigma, \bot}:=H_{ij}^\sigma (1-P_{\phi_{i} \otimes \phi_{j}}),
\end{equation}
where $P_{\phi_{i} \otimes \phi_{j}}$ denotes the orthogonal projection onto the ground state $\phi_{i} \otimes \phi_{j}$.
From \eqref{eqn:EiHs} and \eqref{def:Hijs} it follows that the ground state energy of $H_{ij}^\s$ is $E_i+E_j$. By
\eqref{eqn:Eiless0} we have that $E_i+E_j<0$. Moreover, $E_i+E_j$ belongs to the discrete spectrum of $H_{ij}^\s$ by
\eqref{def:Hijs} and Corollary \ref{cor:restrictionaway} (see \cite{RSI} Theorem VIII.33). It follows, therefore, that there exists a constant
$c>0$ so that for all $i,j \in \{1,\dots,M\}$ with $i<j$ we have
\begin{equation}\label{ineq:Hijsbot}
 H_{ij}^{\sigma,\bot}-E_i-E_j \geq c, \text{ on } \otimes_{1}^{Z_i+Z_j} L^2(\mathbb{R}^3).
\end{equation}
Therefore, the resolvent
\begin{equation}\label{def:rijsbot}
R_{ij}^{\sigma, \bot}:=(H_{ij}^{\sigma,\bot}-E_i-E_j)^{-1},
\end{equation}
 is defined, bounded and positive on the entire $\otimes_{1}^{Z_i+Z_j} L^2(\mathbb{R}^3)$.
 For each vector $v \in \mathbb{R}^3$ we define the function
\begin{equation}\label{def:fij}
f_{ij,v}(z_1,\dots,z_{Z_i+Z_j})=\sum_{i=1}^{Z_i} \sum_{j=Z_i+1}^{Z_i+Z_j} \left(z_i \cdot z_j-3 (z_i \cdot v)(z_j \cdot v) \right)
\end{equation}
and the number
\begin{equation}\label{sigmaijvdef}
\sigma_{ij}(v):= \langle f_{ij,v} \phi_i \otimes \phi_j, R_{ij}^{\sigma,\bot} f_{ij,v}\phi_i \otimes \phi_j \rangle.
\end{equation}
We will prove in Section \ref{sec:sigma}:
\begin{lemma}\label{lem:sigma}
For all $i,j \in \{1,2,\dots,M\}$, with $i<j$, and for all unit vectors $v \in \mathbb{R}^3$,
  $\sigma_{ij}(v)$ is positive and does not depend on the choice of $v$.
\end{lemma}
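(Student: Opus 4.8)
I would prove the two assertions separately, and for positivity the plan is to use that $R_{ij}^{\sigma,\bot}$ is not merely positive but injective: it is the bounded inverse of $H_{ij}^{\sigma,\bot}-E_i-E_j$, which by \eqref{ineq:Hijsbot} satisfies $H_{ij}^{\sigma,\bot}-E_i-E_j\ge c>0$, so $R_{ij}^{\sigma,\bot}\psi=0$ forces $\psi=(H_{ij}^{\sigma,\bot}-E_i-E_j)R_{ij}^{\sigma,\bot}\psi=0$. Writing $g:=f_{ij,v}\,\phi_i\otimes\phi_j$ — an element of $L^2$ by the exponential decay of the atomic eigenfunctions, which is implicitly used already in \eqref{sigmaijvdef} — one has $\sigma_{ij}(v)=\langle g,R_{ij}^{\sigma,\bot}g\rangle=\big\|(R_{ij}^{\sigma,\bot})^{1/2}g\big\|^2\ge 0$, with equality if and only if $g=0$. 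Hence it suffices to show $g\neq 0$.

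To see $g\neq 0$ I would compute $\|g\|^2=\int |f_{ij,v}(z)|^2\,|(\phi_i\otimes\phi_j)(z)|^2\,dz$ and show the integrand is strictly positive on a set of positive measure. The polynomial $f_{ij,v}$ is not identically zero: since $Z_i,Z_j\ge 1$, choosing a unit vector $w\perp v$ and evaluating $f_{ij,v}$ at $z_1=z_{Z_i+1}=w$ with all other arguments equal to $0$ leaves only the single term $w\cdot w-3(w\cdot v)(w\cdot v)=1$. Therefore its zero set has Lebesgue measure zero, while $\{\,(\phi_i\otimes\phi_j)\neq 0\,\}$ has positive measure because $\|\phi_i\otimes\phi_j\|=\|\phi_i\|\,\|\phi_j\|=1$; the intersection of the two sets has positive measure and the integrand is positive there, so $\|g\|^2>0$ and $\sigma_{ij}(v)>0$.

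For independence of $v$ I would use rotation invariance together with Condition (D). Given unit vectors $v,v'$, pick $\rho\in SO(3)$ with $\rho v=v'$ and let $U_\rho^{(n)}$ be the unitary on $\otimes_1^n L^2(\R^3)$ with $(U_\rho^{(n)}\psi)(x_1,\dots,x_n)=\psi(\rho^{-1}x_1,\dots,\rho^{-1}x_n)$. Since $-\Delta$ and all Coulomb terms are rotation invariant and $U_\rho^{(n)}$ commutes with $Q_n$, the operator $U_\rho^{(Z_i)}$ commutes with $H_i^\sigma$; by \eqref{eqn:EiHs} and Condition (D) the eigenspace at the ground state energy is one-dimensional, so $U_\rho^{(Z_i)}\phi_i=\lambda_i\phi_i$ with $|\lambda_i|=1$, and likewise for $j$. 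Then $W_\rho:=U_\rho^{(Z_i)}\otimes U_\rho^{(Z_j)}$ commutes with $H_{ij}^\sigma$, with $P_{\phi_i\otimes\phi_j}$ (since $W_\rho(\phi_i\otimes\phi_j)=\lambda_i\lambda_j\,\phi_i\otimes\phi_j$), hence with $H_{ij}^{\sigma,\bot}$ and with $R_{ij}^{\sigma,\bot}$. Because $\rho$ is orthogonal and $\rho^{-1}v'=v$, one checks $f_{ij,v'}(\rho z_1,\dots,\rho z_{Z_i+Z_j})=f_{ij,v}(z_1,\dots,z_{Z_i+Z_j})$, whence $W_\rho^{*}\big(f_{ij,v'}\,\phi_i\otimes\phi_j\big)=\overline{\lambda_i\lambda_j}\,f_{ij,v}\,\phi_i\otimes\phi_j$. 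Inserting this into \eqref{sigmaijvdef} and using unitarity of $W_\rho$ together with the commutation with $R_{ij}^{\sigma,\bot}$ gives $\sigma_{ij}(v')=|\lambda_i\lambda_j|^2\,\sigma_{ij}(v)=\sigma_{ij}(v)$.

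The argument is essentially bookkeeping; the two points I expect to need care are keeping track of the unimodular phases $\lambda_i,\lambda_j$ in the rotation step — they must be carried along but cancel by unitarity — and securing \emph{strict} positivity, which rests on $f_{ij,v}$ being a genuinely nonzero polynomial and on $R_{ij}^{\sigma,\bot}$ being injective rather than merely nonnegative. One could instead note that $f_{ij,v}\,\phi_i\otimes\phi_j\perp\phi_i\otimes\phi_j$ (using the parity invariance of $|\phi_i|^2$ and $|\phi_j|^2$, again a consequence of Condition (D)), but this is not needed for the proof above.
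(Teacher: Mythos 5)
Your proof is correct and follows essentially the same route as the paper: rotation covariance of $f_{ij,v}$ together with Condition (D) (so the rotated ground state $\phi_i\otimes\phi_j$ acquires only a unimodular phase, and the rotation unitary commutes with $H_{ij}^{\sigma,\bot}$ and hence with $R_{ij}^{\sigma,\bot}$) gives independence of $v$, while positivity comes from writing $\sigma_{ij}(v)=\|(R_{ij}^{\sigma,\bot})^{1/2}f_{ij,v}\,\phi_i\otimes\phi_j\|^2$ and using injectivity of the resolvent. The only difference is that you spell out explicitly why $f_{ij,v}\,\phi_i\otimes\phi_j\neq 0$ (a nonzero polynomial has a null zero set, and $\phi_i\otimes\phi_j$ is nonzero on a set of positive measure), a point the paper leaves implicit.
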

We can therefore define
\begin{equation}\label{def:sigmaij}
\sigma_{ij}:=\sigma_{ij}(v) \text{ for some unit vector } v \in \mathbb{R}^3.
\end{equation}
\smallskip
  We are now ready to state our main results. As before, $M$ is the
  number of atoms, $N$ is the number of electrons, $R$ is defined
  in \eqref{Rdef} and $Z$ is defined in \eqref{def:maxZj}.
\begin{theorem}\label{thm:vdW-maxspin}
(i) Assume Condition (D) and Property (E).
 Then, there exist positive constants
$C_1, C_2, C_3$, depending only on $Z$, so that if
$R \geq C_1 N^{\frac{4}{3}}$, then
 \begin{equation}\label{verbesserung}
 \bigg|W(y)+\sum_{i<j}^{1,M} \frac{e^4 \sigma_{ij}}{|y_i-y_j|^6}\bigg|
    \leq  C_2 \left( \sum_{i<j}^{1,M} \frac{1}{|y_i-y_j|^7}+\frac{ M^4}{R^9}\big(1+  N^Z e^{-C_3 R}\big)
    \right).
 \end{equation}
 \newline
 (ii) Assume Condition (D) and Property (E'). Then the same conclusion as in part (i) holds with the constants
 $C_1, C_2$ replaced by two constants $C_1', C_2'$ which are also positive and depend only on $Z$.
\end{theorem}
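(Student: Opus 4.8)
\emph{Strategy and decomposition.} I would prove \eqref{verbesserung} by establishing matching upper and lower bounds for $\E-\Einfty$. Write $H^N(y)=H_0(y)+I(y)$, where $H_0(y)=\sum_{i=1}^M H_i^{(y_i)}$ is the Hamiltonian of the $M$ non-interacting atoms (atom $i$ having its nucleus at $y_i$, acting on a fixed block of $Z_i$ of the $N$ electron coordinates) and $I(y)$ collects all Coulomb interactions between distinct atoms. For a pair $(i,j)$ the monopole and monopole--dipole terms of the interaction vanish because the atoms are neutral, so the leading term is the dipole--dipole interaction $e^2|y_i-y_j|^{-3}f_{ij,v_{ij}}$ with $v_{ij}=(y_i-y_j)/|y_i-y_j|$, and the remainder is $O(|y_i-y_j|^{-4})$ on states with finite second position moments, which include all the states in the relevant low-energy window by the exponential decay of their wave functions. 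For the \emph{upper bound} one takes the trial function $\Psi_t=Q_N(\Phi-\Phi^{(1)})$ with $\Phi=\bigotimes_i\phi_i^{(y_i)}$ and
\[
\Phi^{(1)}=\sum_{i<j}^{1,M}\frac{e^2}{|y_i-y_j|^3}\,\big(R_{ij}^{\sigma,\bot}f_{ij,v_{ij}}\,\phi_i\otimes\phi_j\big)^{(y_i,y_j)}\otimes\bigotimes_{k\ne i,j}\phi_k^{(y_k)},
\]
where $\phi_i^{(y_i)}$ is $\phi_i$ translated so that its nucleus sits at $y_i$. Since the $\phi_i$ are rotation invariant all their permanent multipole moments vanish, so $\langle\Phi,I(y)\Phi\rangle=O(e^{-C_3R})$; the second-order term produces $-\sum_{i<j}e^4\sigma_{ij}|y_i-y_j|^{-6}$; every contribution involving two distinct pairs is annihilated by the orthogonality $R_{ij}^{\sigma,\bot}f_{ij,v_{ij}}\phi_i\otimes\phi_j\perp\phi_i\otimes\phi_j$; the antisymmetrization $Q_N$ contributes only exponentially small relative corrections once $R\gtrsim\log N$; and the remaining corrections are the $O(\sum_{i<j}|y_i-y_j|^{-7})$ dipole--quadrupole cross terms and the $O(M^4R^{-9})$ genuine many-body terms. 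This gives ``$\le$'' in \eqref{verbesserung} and uses neither (E) nor (E').

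\emph{Lower bound: localization.} Fix a partition of unity $\sum_{\vec a}J_{\vec a}^2=1$ on $\otimes_1^N L^2(\R^3)$, indexed by maps $\vec a\colon\{1,\dots,N\}\to\{0,1,\dots,M\}$ assigning electrons to nuclei ($0=$ ``far from every nucleus''), each $J_{\vec a}$ a product of one-electron cut-offs whose transition regions lie at distance $\asymp R$ from all nuclei. The IMS formula gives $H^N(y)=\sum_{\vec a}J_{\vec a}H^N(y)J_{\vec a}-\sum_{\vec a}|\n J_{\vec a}|^2$, and, combining the localization of the kinetic energy with the exponential decay of the states below the essential spectrum, one obtains $\sum_{\vec a}|\n J_{\vec a}|^2\le CN^Z e^{-C_3R}$; it is at this point that the threshold $R\ge C_1N^{4/3}$ enters, controlling the localization error and the a priori decay estimates uniformly in $M$ and $N$. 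Group the $\vec a$ by the induced charge vector $\vec m=(m_1,\dots,m_M)$, $\sum_i m_i=0$. If $\vec m\ne 0$, then on $\Ran J_{\vec a}$ one has $J_{\vec a}H^N(y)J_{\vec a}\ge\Einfty+\tfrac12\delta_0(Z)$ for $R$ large, since the self-energy cost $\sum_i(E_{i,m_i}-E_i)$ of the ion configuration is bounded below by $\delta_0(Z)>0$ (the gap furnished by (E'), deduced from (E) by the induction noted in the text) and grows with the number of ionized clusters, while the inter-cluster Coulomb tail is dominated by it. Hence only the neutral sector $\vec m=0$ can reach energies below $\Einfty-\sum_{i<j}e^4\sigma_{ij}|y_i-y_j|^{-6}$, and on that sector, modulo errors of size $O(N^Z e^{-C_3R})$, $H^N(y)$ reduces to the cluster operator $\sum_i H_i^{\sigma,(y_i)}+I(y)$ on $\wedge^{Z_1}L^2\otimes\cdots\otimes\wedge^{Z_M}L^2$, whose leading part has the non-degenerate ground state $\bigotimes_i\phi_i^{(y_i)}$ at $\Einfty$ with spectral gap $\gamma_0(Z)>0$ above it (Condition (D)).

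\emph{Lower bound: Feshbach.} Let $P=\bigotimes_i P_{\phi_i^{(y_i)}}$ be the rank-one projection onto $\bigotimes_i\phi_i^{(y_i)}$ and $\bar P=1-P$. Since, on the relevant spectral window, $I(y)$ is a perturbation of the cluster operator of relative size $O(MR^{-3})$ (again using the exponential decay, so that position operators are effectively bounded), the Feshbach--Schur operator
\[
F_P(z)=PH^N(y)P-PH^N(y)\bar P\,\big(\bar P\,(H^N(y)-z)\,\bar P\big)^{-1}\,\bar P\,H^N(y)P
\]
is well defined for $z$ near $\Einfty$, and its unique zero of $F_P(z)-z$ equals the bottom of the neutral sector. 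Now $PH^N(y)P=\Einfty+O(e^{-C_3R})$ (vanishing permanent multipoles), $PH^N(y)\bar P=PI(y)\bar P$, and expanding $I(y)$ in multipoles together with a Neumann series for $(\bar P(H^N(y)-z)\bar P)^{-1}$ around the resolvent of $\sum_i H_i^\sigma$: the diagonal dipole--dipole contribution yields exactly $-\sum_{i<j}e^4\sigma_{ij}|y_i-y_j|^{-6}$; every term built from two distinct pairs (sharing one index or none) vanishes identically, because a multipole operator sends $\phi_i$ to a vector orthogonal to $\phi_i$ while the cluster resolvent does not mix clusters; and the surviving corrections are the dipole--quadrupole cross terms $O(\sum_{i<j}|y_i-y_j|^{-7})$ and the genuine three- and four-body terms $O(M^4R^{-9})$. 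Solving $F_P(\E)=\E$ gives the lower bound of the same shape; together with the upper bound this proves part (i). For part (ii) the only use of Property (E) is in the sector estimate, and there only its consequence (E') is needed; repeating the argument with (E') in place of (E), and treating partially clustered configurations with the cruder bound that (E') still supplies, reproduces \eqref{verbesserung} with constants $C_1',C_2'$ depending only on $Z$.

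\emph{Main obstacle.} The difficulty is combinatorial rather than analytic: in \cite{AS} the error terms were sums over sets whose cardinality grew like $N^N$, and the whole point here is to reorganize both the localization (so that only $O(M^{\mathrm{poly}})$ charge sectors are relevant, the remaining ones being exponentially suppressed) and the multipole/Neumann expansion (so that the sum over $k$-tuples of interacting pairs is dominated by a geometric series in $MR^{-3}$, most of the terms being killed by the orthogonality of the excited states to the atomic ground states) in such a way that every surviving bound is polynomial in $M$ and $N$. Keeping the spectral gaps $\gamma_0(Z),\delta_0(Z)$ and the decay rates uniform in $M$ and $N$ throughout this bookkeeping---so that $C_1,C_2,C_3$ genuinely depend on $Z$ alone---is the delicate part of the proof.
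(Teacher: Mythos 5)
Your overall architecture (Feshbach--Schur map on the projection onto the product of atomic ground states, IMS localization with charge sectors controlled by (E)/(E'), multipole expansion with Newton's theorem killing the low orders) is essentially the paper's, and the separate trial-function upper bound is the route of Theorem \ref{thm:upper}. However, several steps in your lower bound are wrong or unsubstantiated. First, the IMS localization error is not exponentially small: with cut-offs varying on any polynomial scale one only gets $\sum_{\hb}|\nabla J_{\hb}|^2 \lesssim N^2R^{-3/2}$ (cf.\ \eqref{nablaleqN}), and it is precisely this polynomial error, together with the comparison of the ionization gaps against the intercluster Coulomb tails of size $N/R$, that forces the hypothesis $R\geq C_1N^{4/3}$; your claim of an $O(N^Ze^{-C_3R})$ localization error would fail and misattributes the role of the threshold. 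Second, the sector in which an electron is far from \emph{all} nuclei is not covered by your charge-vector/(E') argument: there one needs $J_{\{j\}}HJ_{\{j\}}\gtrsim E(y)+\gamma$ with $\gamma=\Sigma_{N-1}-E(y)$, and proving $\gamma\geq c(Z)>0$ uniformly in $M,N$ is itself a nontrivial induction over the operators $H^{N-k,\s}(y)$ (Lemma \ref{lem:gammac}), absent from your sketch. Third, for part (ii) you invoke a gap $\delta_0(Z)$ ``furnished by (E')'', but (E') quantifies over charge configurations of the whole system, so a priori its gap depends on $M$; making it $Z$-only is a genuine combinatorial step (partitioning the ionized clusters into neutral subgroups of fewer than $Z^2+2\delta_{Z,1}$ ions, Lemma \ref{lem:groupbreaking}, plus Lieb's bound to handle very negative ions), which ``the cruder bound that (E') still supplies'' does not replace.

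The same issue of missing mechanism affects your Feshbach/Neumann step, which is where the paper's actual novelty lies. With $P=\bigotimes_iP_{\phi_i^{(y_i)}}$ and antisymmetrization dismissed as ``exponentially small relative corrections once $R\gtrsim\log N$'', you have not addressed the fact that the number of exchange terms grows combinatorially and that the reduced resolvent is nonlocal, so smallness per term does not beat the count. The paper's devices are: cutting off the atomic ground states so that distinct permutation terms have disjoint supports \eqref{pairor}; organizing the pairings into equivalence classes whose cardinality is bounded by $N^{2Z}$ (the origin of the $N^Ze^{-C_3R}$ factor in \eqref{verbesserung}); the rank-one-sum inequality \eqref{pairorthnorm}; and pushing exponential weights through the cluster resolvents via boosted Hamiltonians (Lemma \ref{lem:Resdelta-est}) to show that the cut-off states separated by the nonlocal resolvent have exponentially small overlap (Lemma \ref{lem:Uentwicklung}). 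Moreover, your claim that all cross-pair terms ``vanish identically'' is only true at leading multipole order against the cluster resolvent; for the full resolvent and the higher Neumann/multipole orders these terms merely contribute to the $M^4R^{-9}$ remainder, and bounding them polynomially in $M,N$ is exactly the bookkeeping you defer to the ``main obstacle''. As written, the proposal reproduces the skeleton of the proof but omits the estimates that make the error terms of \eqref{verbesserung} come out uniform in $M$ and $N$.
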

\begin{theorem}\label{thm:upper}
Assume Condition (D). Then, there exist positive constants $C_4, C_5, C_6$,
depending only on $Z$, such that if $R \geq C_4
M^\frac{1}{3}$, then
\begin{equation}\label{bnd:upper}
W(y) \leq- \sum_{i<j}^{1,M} \frac{e^4 \sigma_{ij}}{|y_i-y_j|^6}+C_5
\left(\sum_{i<j}^{1,M} \frac{1}{|y_i-y_j|^7}+
\frac{M^3}{R^9}+ M^5 e^{-C_6 R}\right).
\end{equation}
\end{theorem}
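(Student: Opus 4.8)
\noindent The plan is to prove Theorem~\ref{thm:upper} by the Rayleigh--Ritz variational principle, using a trial state built from the atomic ground states together with their first-order dipole corrections. For $l\in\{1,\dots,M\}$ let $\phi_l^{y_l}$ be the ground state $\phi_l$ of $H_l^\sigma$ (available under Condition~(D)) translated so that the nucleus sits at $y_l$. Being a non-degenerate eigenstate of a rotation-invariant operator, $\phi_l^{y_l}$ is spherically symmetric about $y_l$, so all its multipole moments of order $\ge 1$ vanish; it is normalized and exponentially localized around $y_l$ with a rate bounded below in terms of $Z$ alone (the ionization gap of atom $l$). Set
\[
\Phi_0:=\bigotimes_{l=1}^{M}\phi_l^{y_l},\qquad
\Phi_1:=-\sum_{i<j}^{1,M}\frac{e^2}{|y_i-y_j|^{3}}\,\Big(R_{ij}^{\sigma,\bot}f_{ij,v_{ij}}(\phi_i\otimes\phi_j)\Big)^{y}\otimes\bigotimes_{k\ne i,j}\phi_k^{y_k},
\]
where $v_{ij}=(y_i-y_j)/|y_i-y_j|$ and the superscript $y$ denotes the corresponding translation of the $i$-th and $j$-th clusters, and put $\Phi:=\Phi_0+\Phi_1$, $\Psi:=Q_N\Phi$. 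Since $H^N(y)$ commutes with $Q_N$ and with every coordinate permutation, $\Psi\in\Ran Q_N$ and $E(y)\le\langle\Psi,H^N(y)\Psi\rangle/\|\Psi\|^2=\langle\Phi,H^N(y)Q_N\Phi\rangle/\langle\Phi,Q_N\Phi\rangle$ as soon as $\langle\Phi,Q_N\Phi\rangle>0$; it remains to estimate the right-hand side.

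\noindent\emph{Cluster decomposition and multipole expansion.} Write $H^N(y)=\sum_{l=1}^{M}\widetilde H_l+\sum_{l<m}^{1,M}I_{lm}$, where $\widetilde H_l$ is the Hamiltonian of atom $l$ with nucleus at $y_l$ acting on the $l$-th block of electron coordinates (so $\widetilde H_l\phi_l^{y_l}=E_l\phi_l^{y_l}$) and $I_{lm}$ is the Coulomb interaction between the two neutral clusters $l$ and $m$. A Taylor expansion of $I_{lm}$ in powers of $|y_l-y_m|^{-1}$ has vanishing terms of orders $|y_l-y_m|^{-1}$ and $|y_l-y_m|^{-2}$ by neutrality; its order-$|y_l-y_m|^{-3}$ term equals $e^2|y_l-y_m|^{-3}f_{lm,v_{lm}}$ with $f_{lm,v}$ exactly as in \eqref{def:fij}; and, in matrix elements against exponentially decaying states, the remainder contributes $O(|y_l-y_m|^{-4})$ with constants depending only on $Z$ (all moments of the $\phi_l$ being finite and bounded in terms of $Z$). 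Moreover, since $\phi_l^{y_l}$ is spherically symmetric and neutral, Newton's theorem shows that $\langle\phi_l^{y_l}\otimes\phi_m^{y_m},I_{lm}\,\phi_l^{y_l}\otimes\phi_m^{y_m}\rangle$ has \emph{no} power-law contribution and is $O(e^{-cR})$.

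\noindent\emph{Evaluating the form and symmetrizing.} First compute $\langle\Phi,H^N(y)\Phi\rangle$ ignoring $Q_N$, organized by $\langle\Phi_0,H\Phi_0\rangle$, $2\re\langle\Phi_0,H\Phi_1\rangle$, $\langle\Phi_1,H\Phi_1\rangle$ and by the cluster decomposition, using $\widetilde H_l\phi_l^{y_l}=E_l\phi_l^{y_l}$ and the fact that, clusterwise, $R_{ij}^{\sigma,\bot}f_{ij,v_{ij}}(\phi_i\otimes\phi_j)$ lies in $\{\phi_i\}^\bot\otimes\{\phi_j\}^\bot$. One finds: $\langle\Phi_0,H\Phi_0\rangle=\Einfty+O(M^2e^{-cR})$; the $\widetilde H_l$-parts of $2\re\langle\Phi_0,H\Phi_1\rangle$ vanish and its $I_{lm}$-parts give $-2\sum_{i<j}e^4|y_i-y_j|^{-6}\sigma_{ij}(v_{ij})+O(\sum_{i<j}|y_i-y_j|^{-7})$ (only the dipole--dipole part of $I_{ij}$ matters at leading order, and cross-pairs sharing one atom drop out by the vanishing atomic dipole moment); and $\langle\Phi_1,H\Phi_1\rangle=\Einfty\|\Phi_1\|^2+\sum_{i<j}e^4|y_i-y_j|^{-6}\sigma_{ij}(v_{ij})+O(M^3/R^9)$, the first two terms coming from the diagonal pairs via $R_{ij}^{\sigma,\bot}(H_{ij}^{\sigma,\bot}-E_i-E_j)R_{ij}^{\sigma,\bot}=R_{ij}^{\sigma,\bot}$, the off-diagonal pairs vanishing (disjoint pairs by cluster-orthogonality, one-atom-overlap pairs by the vanishing dipole moment), and the three-cluster $I_{lm}$-terms contributing $M^3/R^9$. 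Since $\|\Phi\|^2=1+\|\Phi_1\|^2$ and the hypothesis $R\ge C_4 M^{1/3}$ forces $\|\Phi_1\|^2\le\tfrac12$ (via $\|\Phi_1\|^2\le C_Z e^4\sum_{i<j}|y_i-y_j|^{-6}\le C_Z e^4 M^2 R^{-6}$), dividing, absorbing lower-order products with $M/R^3\le 1$, and invoking Lemma~\ref{lem:sigma} gives
\[
\frac{\langle\Phi,H^N(y)\Phi\rangle}{\|\Phi\|^2}\ \le\ \Einfty-\sum_{i<j}^{1,M}\frac{e^4\sigma_{ij}}{|y_i-y_j|^6}+C\Big(\sum_{i<j}^{1,M}\frac{1}{|y_i-y_j|^7}+\frac{M^3}{R^9}+M^2e^{-cR}\Big).
\]
Finally pass from $\Phi$ to $Q_N\Phi$: since $H^N(y)$ commutes with every $T_\pi$, both $\langle\Phi,H^N(y)Q_N\Phi\rangle$ and $\langle\Phi,Q_N\Phi\rangle$ equal $\tfrac1{N!}\sum_\pi\mathrm{sgn}(\pi)$ times matrix elements between $\Phi$ and $T_\pi\Phi$; the permutations preserving every cluster of slots reproduce the quantities above times the common factor $(\prod_l Z_l!)/N!$, which cancels in the quotient, while every other $\pi$ carries at least one electron coordinate between clusters that are distance $\ge R$ apart, so by the uniform exponential localization of the $\phi_l^{y_l}$ (and of $H^N(y)\Phi$) its contribution is exponentially small; summing the crude bound $\le N^k$ on the number of permutations moving $k$ coordinates against these factors yields a convergent series whenever $R$ exceeds a $Z$-dependent multiple of $\log M$, hence under $R\ge C_4 M^{1/3}$, at the cost of a single additional error term $\le C M^5 e^{-C_6 R}$. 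Combining with $E(y)\le\langle\Phi,H^N(y)Q_N\Phi\rangle/\langle\Phi,Q_N\Phi\rangle$ and $W(y)=E(y)-\Einfty$ gives \eqref{bnd:upper}.

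\noindent\emph{Main obstacle.} The delicate point is the uniform control, in $M$ and $N$, of the permutation sum: a priori it has $N!$ terms and naive estimates (as in \cite{AS}) grow like $N^N$. The way around this is that the cluster-preserving permutations contribute one clean factor that cancels in the energy quotient, while every remaining permutation must pay at least one factor of the exponential tail of an atomic orbital, turning the rest of the sum over $S_N$ into a geometric-type series converging once $R\gtrsim\log M$, with decay rate fixed solely by the atomic ionization energies, hence by $Z$. The remaining ingredients -- the multipole remainder bounds for $I_{lm}$, the smallness of $\|\Phi_1\|^2$, and the bookkeeping of the pair and triple sums -- are routine once one insists that every constant depend on $Z$ only, which follows from the uniform exponential decay and the finiteness of all moments of the $\phi_l$.
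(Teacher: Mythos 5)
Your route is genuinely different from the paper's on two counts, and it is worth saying what each buys. First, your trial state uses the exact atomic ground states plus only the leading dipole--dipole correction $-e^2|y_i-y_j|^{-3}R_{ij}^{\sigma,\bot}f_{ij,v_{ij}}(\phi_i\otimes\phi_j)$, whereas the paper cuts the ground states off (see \eqref{psiAj}) and corrects with the full intercluster interaction, $\chi_{B_kB_l}R_{B_kB_l}^{\sigma,\bot}I_{B_kB_l}\Psi_b$ as in \eqref{testdef}. Second, you antisymmetrize the uncut state and propose to control the permutation sum directly by exchange-overlap estimates; the paper's cut-offs instead force the terms of $Q_N\tilde\Psi_a$ to have pairwise disjoint supports (\eqref{tildepsiab}), so $Q_N$ is dropped exactly and no permutation sum ever arises --- this is precisely the device that eliminates the $N^N$-type combinatorics, and it also makes the Newton cancellations exact rather than up to $O(e^{-cR})$. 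Your $\sum_k N^{2k}e^{-ckR}$ scheme is a workable (if more laborious) substitute under $R\ge C_4M^{1/3}$, provided every cross term can indeed be shown exponentially small, which brings me to the real issue.

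There is a genuine gap at exactly the step you dismiss as routine bookkeeping. To get the claimed $O(M^3/R^9)$ from the interaction part of $\langle\Phi_1,H\Phi_1\rangle$, i.e.\ from terms $\langle\Phi_1^{(ij)},I_{lm}\Phi_1^{(i'j')}\rangle$ with $\{l,m\}$ meeting $\{i',j'\}$, you need $\|I_{lm}\Phi_1^{(i'j')}\|\lesssim R^{-6}$, which requires running the multipole cancellation of $I_{lm}$ against the cluster state $R_{i'j'}^{\sigma,\bot}f_{i'j',v}(\phi_{i'}\otimes\phi_{j'})$; for that you must know this resolvent-corrected state is exponentially localized, which you never establish. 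Without such a bound, Cauchy--Schwarz only yields $\|I_{lm}\Phi_1^{(i'j')}\|\lesssim R^{-3}$ (relative boundedness of the Coulomb terms), hence an error $O(M^3/R^6)$, which is of the order of $M$ times the leading term and ruins the sharpness of \eqref{bnd:upper}. The paper flags this as the delicate point of the upper bound (``one of the three $I_{12}$ terms is not multiplied with the exponentially decaying function $\psi$'') and resolves it by pushing exponential weights through the resolvent with boosted Hamiltonians, Lemma \ref{lem:Resdelta-est}, used in \eqref{Rklabsch2} and again for the bound \eqref{D2est} on the terms \eqref{Dijklmn}. The same missing decay is needed in your permutation step: the exchange terms $\langle\Phi_1,T_\pi\Phi_1\rangle$ and $\langle\Phi_0,HT_\pi\Phi_1\rangle$ for cluster-mixing $\pi$ must each be exponentially small to beat the $N^{2k}$ count, and $\|\Phi_1\|\lesssim M/R^3$ alone does not give this. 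So your outline closes only after adding a Combes--Thomas-type estimate for $R_{ij}^{\sigma,\bot}$ (or the paper's cutoff-plus-weights substitute) as an explicit ingredient; the rest of your computation of the leading terms is correct, with the minor caveat that what you should invoke is the spherical symmetry of the one-electron density of $\phi_l$ (Proposition \ref{spherical}), which is what actually yields the vanishing dipole moments and the Newton-type estimates up to $O(e^{-cR})$.
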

\begin{remark}
The constants $C_1', C_2'$ in part (ii) of Theorem \ref{thm:vdW-maxspin}
are in principle much larger that the constants $C_1, C_2$. Our strategy of proving
part (i) provides better constants and is simpler.
 So assuming that one proves Property (E) for a system of atoms,
then one should follow the strategy of proof of part (i) to obtain better bounds.
This will be made clear in the proof of Proposition \ref{prop:Eimpliesgap} below.
\end{remark}
\begin{remark}
 Similar theorems hold in the case that we do not take the Fermionic
 statistics into account. In this case, Condition (D) follows from the positivity
improving property  of $e^{-\beta H_i},  \beta >0, i \in \{1,\dots,M\}$, where $H_i$ was defined
in \eqref{def:His}, and from Perron-Frobenious theory (see
for example \cite{RSIV} Chapter XIII Section 12). Therefore, in this case, the assumption of
Condition (D)  can be omitted. In particular, the upper bound \eqref{bnd:upper} holds,
in this case, with no assumptions.
\end{remark}
\begin{remark}
By the previous remark, Condition (D) holds for a system of hydrogen atoms ($Z_i=1$ for all $i \in \{1,\dots,M\}$)
independently of statistics, since the hydrogen atom has only one
electron. As we will discuss below (see Section \ref{sec:sigma}), Property (E) holds in this case too.
Therefore, the conclusions of Theorems \ref{thm:vdW-maxspin} and
\ref{thm:upper} hold for a system of hydrogen atoms with no
assumptions.
\end{remark}
\begin{remark}
 It is important that the constants
 $C_1, C_1', C_2, C_2', C_3, C_4, C_5, C_6$ in Theorems \ref{thm:vdW-maxspin} and \ref{thm:upper}
  do not depend on the number of atoms $M$ but
  only on $Z$, or in simple words only on the kinds of atoms involved in the system
  and not on how many there are.
   Note, however, that we have been
   unable to determine how the
constants depend on $Z$. The remainder in Theorem
\ref{thm:upper} is small relative to the leading order provided that $R \geq c
M^{\frac{1}{3}}$, where $c$ again depends on $Z$. In Theorem
\ref{thm:vdW-maxspin}, the assumption $R \geq C_1 N^{\frac{4}{3}}$
ensures that the remainder is small relative to the leading order. Of course if $M$ is large, the assumptions on $R$
 are too strong.
\end{remark}
\begin{remark}
As indicated from the title of the paper, we do not really estimate the force between the atoms but their interaction
energy. The forces $F_j(y):=-\nabla_{y_j} W(y)$, $j \in \{1,\dots,M\}$ have never been rigorously studied,
as far as we know. Here $F_j(y)$ denotes the force that the atom at $y_j$ experiences from
the rest of the system. We conjecture that
\begin{equation*}
F_j(y)=\sum_{i \neq j} \frac{6 \sigma_{ij}(y_i-y_j)}{|y_i-y_j|^8}+O\left(\sum_{i \neq j}\frac{1}{|y_i-y_j|^8}\right), \text{ } \forall j \in \{1,\dots,M\},
\end{equation*}
where $\sigma_{ij}$ are the same constants as in Theorem \ref{thm:vdW-maxspin}. In other words, we expect
that the leading term of the force is given by minus the gradient of the leading term of the interaction energy.
We believe that our methods can be adapted in order to prove such an estimate.
 However, our result does not imply this estimate,
 because the error term in \eqref{verbesserung} could in principle be fast oscillating
  and have a large gradient.
\end{remark}
\begin{remark}
 As it is clear from the statements of Theorems \ref{thm:vdW-maxspin} and \ref{thm:upper},
 the remainder term of the order of one over distance to the seventh is  relevant, but its size relative to the leading order does not
 grow when $M$ grows. The origin of this remainder term
 can be explicitly seen in the proof (see Lemma \ref{lem:Uaagenauer} below). The worse remainder term, however,
 is the one of the order of one over distance to the ninth, because it grows
 relative to the leading term at the rate $M^2$ in Theorem \ref{thm:vdW-maxspin}
and $M$ in Theorem \ref{thm:upper}, when $M$ grows.
\end{remark}
    Theorems \ref{thm:vdW-maxspin} and \ref{thm:upper} describe the interaction energy of the atoms at
a pairwise large separation between them. Note that for small
distances, the interaction energy is repulsive (positive) as follows from the rough estimate
$$H^N(y) \geq -C+\sum_{i<j}^{1,M}\frac{e^2 Z_i Z_j}{|y_i-y_j|},$$ for some
 constant $C$ independent of $y$, implied by the
bound $\frac{e^2 Z_m}{|x_n-y_m|} \leq -\alpha \Delta_{x_n}+\beta,$
valid for any $\alpha>0$ and a corresponding $\beta>0$. For a proof of the last bound we refer to \cite{RSIV} Chapter XIII Section 11.

 Often the
interaction energy for two atoms ($M=2$) is modeled by the
Lennard-Jones potential
$W_{LJ}(y)=\frac{a}{|y_1-y_2|^{12}}-\frac{b}{|y_1-y_2|^6}$, where
the constants $a,b>0$ are determined experimentally. This potential
was originally proposed by J.E. Lennard-Jones in the form $\frac{a}{|y_1-y_2|^{m}}-\frac{b}{|y_1-y_2|^n}$, during his effort
to deduce an appropriate law of dependence of the viscosity of a gas on the temperature  (see \cite{J1}), and to study the equation of state of gases
(see \cite{J2}).

   Our approach for the proof of Theorem \ref{thm:vdW-maxspin} is
based on perturbation theory in the parameter $\frac{1}{R}$, for
which the Feshbach map is used. We follow \cite{AS} closely. Essentially, our new
elements for the proof of Theorem \ref{thm:vdW-maxspin} are in the
proofs of Proposition \ref{prop:Eimpliesgap} and of Lemma \ref{lem:Uentwicklung}.
 The main ideas of these proofs are still
similar to ideas introduced in \cite{AS}, but we substantially
rework the techniques that were introduced there, in order to obtain
stronger estimates of error terms. Parts that are similar to
\cite{AS} will be repeated here, so  that the present work is
self-contained. We shall now sketch the proof of Theorem
\ref{thm:vdW-maxspin} and afterwards the proof of Theorem
\ref{thm:upper}.


\textbf{Brief sketch of the proof of Theorem \ref{thm:vdW-maxspin}.} 
 The main ingredient of the proof is the Feshbach map and the
 Feshbach-Schur method. We refer to \cite{BFS} Section IV for an exposition
 of the method in a general form and with proofs. For purpose
 of simplicity we will state everything in the special form we need.
 Let $\Pi=|\Psi\rangle \langle \Psi|$ be the orthogonal projection onto
a function $\Psi \in \wedge_{1}^N L^2(\R^{3})$, with $\|\Psi\|=1$ (normalized function), and $\Pi^\bot=1-\Pi$.
To simplify the notation we will write $H^\s$ instead of $H^{N,\s}(y)$. We introduce the
notation $\H^{\s,\bot}=\Pi^\bot \H^\s \Pi^\bot$. We recall that $E(y)$ is the ground state energy
of $H^\s|_{\Ran Q_N}$. As we shall see in the proof, $E(y)$ is as well the ground state energy of
$H^\s$, when $R$ is large enough. Therefore, we shall work with $H^\s$ and assume that $R$ is large enough so that $E(y)$ is its ground state energy.
The  Feshbach-Schur method states that if
\begin{itemize}
\item[(a)]  $\Psi \in D(\H^\s)$ (domain of $\H^\s$);
\item[(b)]   The operator $(\H^{\s,\bot}-E(y))$   is invertible;
\end{itemize}
then the Feshbach map
\begin{equation}\label{FP}
F_\Pi(\lambda)=(\Pi H^\s \Pi-V(\lambda))|_{\Ran  \Pi},
\end{equation}
where
\begin{equation}\label{W}
V(\lambda):=\Pi \H^\s \Pi^\bot (\H^{\s,\bot}-\lambda)^{-1} \Pi^\bot
\H^\s \Pi,
\end{equation}
is well defined at $\lambda=E(y)$ and
\begin{equation}\label{FSE}
E(y)= F_\Pi(E(y)).
\end{equation}
Note that since $F_{\Pi}(\lambda)$ is a linear operator on a one-dimensional space, it can be
 identified with a multiplying coefficient. This reduces the
problem of determining  the ground state energy of $H^\s$ to a
scalar nonlinear fixed point problem, because $\Pi$ is a rank one
projection. Besides Equation \eqref{FSE}, we have that the ground
state of $H^\s$ is the normalization of the function
\begin{equation}\label{groundFesh}
\Psi-(\H^{\s,\bot}-E(y))^{-1} \Pi^\bot \H^\s \Psi.
\end{equation}

 Now we outline how we use the Feshbach map in order to prove
Theorem \ref{thm:vdW-maxspin}. For purpose of simplicity of the outline,
some things are stated below heuristically only. They will be made rigorous in the
actual proof. We take $\Psi=\Phi:=\frac{\wedge_{j=1}^M
\phi_{j,y_j}}{\|\wedge_{j=1}^M \phi_{j,y_j}\|}$ where $\phi_{j,y_j}$ is the ground state
of the atom with atomic number $Z_j$ and nucleus at $y_j$. Here $\wedge_{j=1}^M \phi_{j,y_j}$ denotes
the antisymmetric tensor product of the functions $\phi_{j,y_j}$. The wave
function $\Phi$ is an approximation of the ground state of the
system, the error arising from the fact that there are interaction
terms between atoms. The error depends on $\frac{1}{R}$ and becomes
small when $R$ is large. Since $\Phi$ is an approximate ground state
 of the system and Property (D) holds, one intuitively expects that
 when $R$ is large, then
 \begin{equation}\label{stabout}
  (\H^{\s,\bot}-E(y)) \geq d >0,
 \end{equation}
 or, in other words, that when we project out $\Phi$, the resulting operator  $\H^{\s,\bot}$ has a gap above
  the ground state energy $E(y)$ of $\H^\s$.
  We prove such an estimate in Section \ref{Hbotbndseveral} using Property (E') and the IMS localization formula:
  we find an appropriate family $\{J_a, a \in \hat{\mathcal{A}}\}$ of functions, such that
   $\sum_{a \in \hat{\mathcal{A}}} J_a^2=1$ and
    \begin{equation}\label{stabout2}
  (\H^{\s,\bot}-E(y)) \geq \sum_{a \in \mathcal{A}} J_a  (\H^{\s,\bot}-E(y)) J_a-O(\frac{1}{R}).
     \end{equation}
 Due to the stated properties of $J_a$, showing \eqref{stabout} reduces to showing that
 there exists $c>0$ such that
 \begin{equation}\label{stabout1}
 J_a  (\H^{\s,\bot}-E(y)) J_a \geq c J_a^2-O(\frac{1}{R}), \forall a \in \hat{\mathcal{A}}.
 \end{equation}
  The family $\{J_a, a \in \hat{\mathcal{A}}\}$ consists of
functions supported either on a set where  each of the electrons is
close to some nucleus, or on a set where at least one electron is
far from all nuclei. If at least one electron is far from all nuclei
then \eqref{stabout1} is obtained by the HVZ theorem (see Theorem \ref{hvz} below). If
all electrons are close to some nucleus, then this corresponds to a
decomposition of the system into ions/atoms with total charge 0.  If
 the decomposition has only atoms, then
\eqref{stabout1} follows from the fact that $\Pi^\bot$ projects
out their ground states. Property (E') gives \eqref{stabout1},
 if in the decomposition there are ions with nonzero
charges.  From \eqref{stabout} it follows that the Feshbach-Schur method is
applicable and therefore \eqref{FSE} holds.

  In view of \eqref{FSE} we need to estimate $\Pi H^\s \Pi$ and $V(E(y))$.
  Recall that $\Einfty$, defined in \eqref{Einftydef}, is the sum of the ground
 state energies of the atoms. Due to the
 interaction terms between the atoms, the equality $\Pi H^\s \Pi=\Einfty \Pi$ does not hold.
  However, due to Condition (D), which says
that the ground state energy of each atom is non-degenerate, it
turns out that the ground state of each atom has a
 spherically symmetric one-electron density (see Proposition \ref{spherical}). Therefore,
  we can apply Newton's theorem (see \cite{LL} Section 9.7) to show that the error arising
 from the interaction terms  is exponentially decaying in $R$, because the ground states of the
 atoms are exponentially decaying. In other words, we obtain that
 \begin{equation}\label{PHPout}
 \Pi \H^\s \Pi \approx \Einfty \Pi,
 \end{equation}
 where the approximate equality is understood up to an error which
 is  exponentially decaying in $R$.
From \eqref{PHPout}, \eqref{interene}, \eqref{FP} and \eqref{FSE} it follows that
\begin{equation}\label{intenap}
W(y) \approx -V(E(y))|_{\Ran  \Pi},
\end{equation}
so that estimating the interaction energy reduces to estimating
$V(E(y))$. From \eqref{stabout} it follows that $V(E(y))>0$ when $R$ is
large. From this observation and \eqref{intenap} it follows that
the interaction energy is negative. We now sketch how we estimate
$V(E(y))$. If $\Phi$ were the exact ground state of $\H^\s$ then
$\Pi$ would commute with $\H^\s$ and therefore we would have $\Pi
\H^\s \Pi^\bot=0$ and thus $V(E(y))=0$. In this sense, $V(E(y))$
originates from the error of our choice of $\Phi$ as an approximate
ground state of the system. Since the error in our choice of $\Phi$
originates from the fact that we have neglected the interaction terms
between the atoms, it turns out that $\Pi \H^\s \Pi^\bot$ is
proportional to the interaction. If we make a Taylor expansion of the
Coulomb interaction terms between two atoms in powers of their one over their distance,
 it turns out that there
is cancelation in the first two orders, because both atoms are neutral. Thus, the total
interaction is to leading order proportional to $R^{-3}$. Therefore,
$\Pi \H^\s \Pi^\bot \sim R^{-3} +O(R^{-4})$, where $\sim$ has the
unprecise meaning of proportional, which together with \eqref{W} and
\eqref{stabout} implies that
\begin{equation*}
V(E(y)) \sim R^{-6}+O(R^{-7}).
\end{equation*}
The last estimate together with \eqref{intenap} implies the desired
result:
\begin{equation*}
W(y) \sim -R^{-6}+ O(R^{-7}).
\end{equation*}


\textbf{Brief sketch of the proof of Theorem \ref{thm:upper}.} 
We shall sketch the proof of the theorem for the case that we have
two atoms only, and without taking into account the Fermionic
statistics, as this is much simpler than the general case. For
purpose of simplicity, the sketch will not be precise. We decompose
the full Hamiltonian of the system as $H=H_{12}+I_{12}$, where
$H_{12}$ is the sum of the Hamiltonians of the two atoms and
$I_{12}$ has the interaction terms of the atoms. We denote by $\psi$
the ground state of $H_{12}$, so that $H_{12} \psi=\Einfty \psi$.
Let also $P_{\psi}^\bot=1-P_{\psi}$ with $P_{\psi}$ the orthogonal
projection onto $\psi$. The test function we consider is the
normalization of the function $\tilde \psi=\psi- R_{12}^\bot I_{12}
\psi$, where $R_{12}^\bot=(H_{12}P_\psi^\bot-\Einfty)^{-1}$. This
test function can be understood as an approximation of the ground
state of $H$ as given by the Feshbach map. Indeed, up to
antisymmetrization, which in this sketch we ignore, the test
function $\tilde \psi$ originates from the function given in
\eqref{groundFesh} after modifying the resolvent by omitting the
interaction between the atoms.

Since the interaction energy $W(y)$ is the ground state energy of
$H-\Einfty$, we have that
\begin{equation*}
W(y) \leq \frac{1}{\|\tilde \psi\|^2} \langle \tilde \psi,
(H-\Einfty) \tilde \psi \rangle.
\end{equation*}
  Expanding the inner product on the right hand side of the last
  estimate and using the equality $(H-\Einfty)\psi=I_{12} \psi$, we obtain that
\begin{align} \notag
 \langle \tilde \psi, (H-\Einfty) \tilde \psi \rangle=  \langle  \psi, I_{12}  \psi \rangle
 -2 \langle I_{12}  \psi, R_{12}^\bot I_{12}  \psi
 \rangle  + & \langle R_{12}^\bot I_{12}  \psi, I_{12} R_{12}^\bot I_{12}  \psi
 \rangle \\ \label{testexp}  + & \langle R_{12}^\bot I_{12}  \psi, (H_{12}-\Einfty) R_{12}^\bot I_{12}  \psi
 \rangle,
\end{align}
where the last two terms originated from the decomposition
$(H-\Einfty)=(H_{12}-\Einfty)+I_{12}$. From Newton's theorem it
follows that $\langle \psi, I_{12} \psi \rangle \approx 0$, as we
explained in the sketch of the proof of Theorem
\ref{thm:vdW-maxspin}, where $\approx$ means that the error decays
exponentially in the distance $|y_1-y_2|$ of the atoms. Therefore, we also have that $P_{\psi}^\bot
I_{12} \psi \approx I_{12} \psi$, so that in the second line of
Equation \eqref{testexp} we have the simplification
$(H_{12}-\Einfty) R_{12}^\bot I_{12} \psi \approx I_{12} \psi$. As a
consequence, the term in the second line of \eqref{testexp} is equal to $\langle
I_{12} \psi, R_{12}^\bot I_{12}  \psi
 \rangle$, up to an  error that is exponentially decaying in $|y_1-y_2|$.
 With these observations we arrive at
\begin{align}\label{finalexp}
 \langle \tilde \psi, (H-\Einfty) \tilde \psi \rangle \approx
 - \langle I_{12}  \psi, R_{12}^\bot I_{12}  \psi
 \rangle + \langle R_{12}^\bot I_{12}  \psi, I_{12} R_{12}^\bot I_{12}  \psi
 \rangle.
\end{align}
As we explained in the sketch of the proof of Theorem
\ref{thm:vdW-maxspin}, if we make a Taylor expansion of the
interaction terms $I_{12}$ between the atoms in powers of one over their distance, it turns out that
$I_{12}$ is to leading term of the order $|y_1-y_2|^{-3}$. If we
drop the higher order terms of $I_{12}$, the term $- \langle I_{12}
\psi, R_{12}^\bot I_{12} \psi
 \rangle$ gives exactly the term
 $-\frac{\sigma_{12}}{|y_1-y_2|^6}$. The remainder of this term can be easily proven
 to be of the order $|y_1-y_2|^{-7}$. Due to the fact that $I_{12}$
 appears three times in the last term of \eqref{finalexp}, it turns
 out that this term is at most of the order $|y_1-y_2|^{-9}$. Making
 the last statement precise is harder, because one of the three $I_{12}$
 terms is not multiplied with the exponentially decaying function $\psi$.
 For this reason, we will push exponential weights through the
 resolvent $R_{12}^\bot$ using boosted Hamiltonians.
 Finally, observing that $\|\tilde \psi\| \leq 1+O(|y_1-y_2|^{-3})$, the theorem follows.

\begin{remark}
Of course Property (D) is a very restrictive assumption. As far as we know, Property (D) remains an open question
for all atoms with only exception the hydrogen atom. Our proof of Theorem \ref{thm:vdW-maxspin} depends heavily
on it. Regarding the method of the proof of Theorem \ref{thm:upper}, it can be adapted in a situation as general as in \cite{LT}.
Lieb and Thirring considered a system of interacting molecules.
They proved that if their separation is large enough, then there exist orientations
of the molecules so that the upper bound \eqref{LTupper} holds. Their strategy was
to construct a test function and to average over all possible orientations. Using our test function in this situation
but arising from a ground state of each molecule
(as opposed to "the ground state") and following the strategy of \cite{LT},
 one obtains a bound of the form \eqref{LTupper}, for some orientations of the molecules. The main reason is that the term $\langle  \psi, I_{12}  \psi \rangle$ in \eqref{testexp} vanishes
after averaging over all orientations, by Newton's theorem.
  We conjecture that the constants $C_{ij}$ are better than in \cite{LT} in this case as well, because the test functions are approximate ground states.
 We note that in the general situation
of a system of molecules one can not expect attraction for every orientation of them. For example, if the molecules have
 dipole moments (see \cite{Le} Definition 2), then the leading term of the interaction energy is expected to be
  proportional to one over their distance to the three. In this case, the sign of the leading term depends on the orientations
 of the molecules.
\end{remark}
\bigskip


The paper is organized as follows. In Section \ref{Sec:prelim} we
discuss preliminaries of quantum many-body systems.
In Section \ref{sec:sigma} we prove Lemma \ref{lem:sigma} and Property (E) for a system
of hydrogen atoms. In Section \ref{sec:reformulation}  we reformulate Theorem \ref{thm:vdW-maxspin}
 in terms of two propositions and two lemmas,
 which we then prove in Sections \ref{sec:setup} and \ref{Hbotbndseveral}.
 In Section \ref{proof:thmupper} we prove Theorem
\ref{thm:upper}.

\bigskip

\textbf{Notation.} We collect here general notation used in this
paper. In what follows,
\begin{itemize} \item $M$ is always the number of the nuclei, $N$ is always the number of the electrons, $R$ is the one
defined in Equation \eqref{Rdef} and $Z$ is the one defined in Equation \eqref{def:maxZj}. We will
always assume that $R>0$.

 \item  For any Banach space $X$, we
denote $B(X):=\{f:X \rightarrow X: f$ linear and bounded$\}$.

\item For an
operator $A$, the symbols $\s(A)$ and $\s_{\textrm{ess}}(A)$ stand
for the spectrum and the essential spectrum of $A$, respectively.

 \item   $C$ and $c$ will denote positive constants that depend only on
 $Z$. They are independent of $R$
 and the number of atoms $M$, but they might change from one equation to the
 other. Such constants will be used very often and it is important
 to always remember this notation.

 \item  The inequality $A \lesssim B$ means the following: there exists $c,C$ so
 that for all $R \geq c$, we have $A \leq C B$. The assumption $R \geq c$ will be
 different only if explicitly stated. Sometimes it is superfluous
 but this will not affect the proof.

 \item    $O(\delta )$  will stand for functions and operators satisfying
  $\|O( \delta)\| \ls \delta$.

 \item $\|\cdot\|$ will denote either the $L^2 - $
 norm of a function or the $B(L^2) - $
norm of an operator, depending on the context,  and the symbols
$O(\delta)$ are understood in this norm, or in the absolute value in
the case of complex numbers.

  \item We will write $A \doteq B$ and $A \doteq_{M^d} B$,  if there
   exists $c$ so that $A-B = O (e^{-cR})$ and $A-B =O( M^d e^{-cR})$,
  respectively.

 \item   
 $\langle x \rangle=(1+|x|^2)^{\frac{1}{2}}$
and $\Delta=\sum_{j=1}^N \Delta_{x_j}$,
$\nabla=(\nabla_{x_1},\dots,\nabla_{x_N})$ with $\Delta_{x_j}$,
$\nabla_{x_j}$ the Laplacian and gradient acting on
the coordinate $x_j\in\R^3$, respectively. 
 \item   For a normalized function $\phi$ we define $P_\phi:= |\phi\ran\lan\phi|$ the orthogonal projection onto
  $\phi$ and $P_{\phi}^\bot:=1-P_{\phi}$.
 \end{itemize}

\medskip
\noindent {\bf Acknowledgements.} The author is grateful to Israel
Michael Sigal and Marcel Griesemer for numerous inspiring
discussions on van der Waals forces that were very important both
for beginning this project, and for its development, as well as for
improving its presentation. The author thanks, in addition, Israel
Michael Sigal for numerous very useful comments on drafts of this
paper. He is also grateful to Mathieu Lewin for numerous stimulating
discussions on van der Waals forces between molecules and for many useful remarks,
to Jeremy Quastel for a discussion that was inspiring for Theorem
\ref{thm:upper}, and to two anonymous referees for very careful reports
with many useful corrections, remarks and questions that improved significantly
the paper. The author would also like to thank Volker Bach and
 Alessandro Giuliani for interesting and useful discussions on van
der Waals forces, and Rupert Frank for interesting discussions on
interacting quantum many-body systems.
 The interest of the authors of the previous work \cite{AS} in the
subject of van der Waals forces was derived from talks by and
personal conversations with Herbert Spohn about the Casimir-Polder
effect. This project was supported by the German Science Foundation
under Grant No. GR 3213/1-1.

\section{Preliminaries about many-body systems} \label{Sec:prelim} 

\subsection{Decompositions}\label{sec:deco}
Recall that $M$ and $N$ are the numbers of the nuclei and electrons,
respectively. Let $a=\{A_1,\dots,A_M\}$ be a partition of
$\{1,2,\dots,N\}$ into disjoint subsets some of which might be empty.
With the set $A_i$ we associate the nucleus at $y_i$ of atomic number
$Z_i$ by assigning the electron coordinates $x_j, j \in A_i$ to be in
the same atom/ion as the nucleus at $y_i$. This gives a decomposition
of the system. We denote the collection of all such decompositions
by $\mathcal{A}$ and we will call $A_1,\dots,A_M$ clusters of the
decomposition $a$. The set of all $a \in \mathcal{A}$ with
$|A_i|=Z_i$ for all $i \in \{1,\dots,M\}$ will be denoted by
$\mathcal{A}^{at}$. Its elements correspond to decompositions of our
system into atoms.

If $a=\{A_1,\dots,A_M\}$ and $ b=\{B_1,\dots,B_M\}$ are elements of
$\mathcal{A}^{at}$, then there exists a permutation $\pi \in S_N$ such that
\begin{equation}\label{decopermu1111}
B_i=\{\pi^{-1}(j)| j \in A_i\}, \quad \forall i \in \{1,\dots,M\}.
\end{equation}
 In this case we write $b=\pi a$.
Various $b \in \mathcal{A}^{at}$ are related by permutations of the
electron coordinates and could be labeled as $b =\pi a, \pi \in S_N$ with
some redundancy.

  For each  decomposition $ a=\{A_1,\dots,A_M\} \in \mathcal{A}$ we
define the Hamiltonian
\begin{equation}\label{Ha}
H_a=\sum_{i=1}^{M}H_{A_i},
\end{equation}
where
\begin{equation}\label{Ha1}
 H_{A_i}:=\sum_{j \in
A_i}(- \Delta_{x_j}- \frac{e^2Z_i}{|x_j-y_i|})+\sum_{j,k \in
A_i, j<k}\frac{e^2}{|x_j-x_k|},
\end{equation}
so that $H_{A_i}$ is the Hamiltonian of the atom or ion at $y_i$ and
$H_a$ is the sum of the Hamiltonians of the atoms or ions of the
decomposition $a$. The inter-cluster interaction is defined as
\begin{equation*}
I_a:=H-H_a,
\end{equation*}
where
\begin{equation*}
H:=H^N(y),
\end{equation*}
and $H^N(y)$, defined in \eqref{Hy}, is the Hamiltonian of the system.
 In other words, $I_a$ consists of all terms of
interaction between the different atoms/ions in the decomposition
$a$. We have that
\begin{equation}\label{Hadecomp}
 H=H_a+I_a.
 \end{equation}
 For any cluster $A_i$ we define $S_{A_i}$ to be the permutation
 group of $A_i$ (as identity we consider the permutation in which the elements
 are in increasing order). We define
\begin{equation}\label{def:QAj}
Q_{A_i}:=\frac{1}{|A_i|!} \sum_{\pi \in S_{A_i}} \text{sgn}(\pi) T_\pi,
\end{equation}
\begin{equation}\label{def:HAis}
 H_{A_i}^\s:=H_{A_i} Q_{A_i}
 \end{equation}
and
\begin{equation}\label{def:Has}
 H_a^\s= H_a Q_a, \text{ where } Q_a=Q_{A_1} Q_{A_2}\dots  Q_{A_M}.
\end{equation}

\bigskip

\subsection{Some important properties of various Hamiltonians}\label{importantprop}
For each $m \in \mathbb{N} \cup \{0\}$ with $m \leq N$, we define
\begin{equation}
H^m(y)=\sum_{j=1}^{m} \bigg(- \Delta_{x_j}-\sum_{i=1}^{M}
\frac{e^2 Z_i}{|x_j-y_i|}\bigg)+\sum_{i<j}^{1,m}
\frac{e^2}{|x_{i}-x_{j}|}+\sum_{i<j}^{1,M} \frac{e^2 Z_{i} Z_{j}}{|y_{i}-y_{j}|}
\end{equation}
and
\begin{equation}\label{def:H^m}
H^{m,\s}(y)=H^{m}(y) Q_m,
\end{equation}
where $Q_m$ was defined in \eqref{Qdef}. In simple words
$H^m(y)$ arises from $H=H^N(y)$ after removing $N-m$ electrons.

 The general information on the essential spectrum of the
Hamiltonians defined in \eqref{def:Hinis} and \eqref{def:H^m}
is given in the following theorem which is a
special case of the HVZ Theorem (see e.g. \cite{H, vW, Zh, HS, CFKS}).
 \begin{theorem}\label{hvz}
For all $m \in \{0,1,\dots,N-1\}$, we have that $\sigma_{\textrm{ess}}(H^{m+1,\s}(y)|_{\Ran  Q_{m+1}})=[\Sigma_m,\infty),$ where
\begin{equation}\label{def:sigmam}
\Sigma_m:=\inf \sigma(H^{m,\s}(y)|_{Ran Q_m}).
\end{equation}
Moreover, for all $i \in \{1,2,\dots,M\}$ and all
$n_i \in \mathbb{Z}$ with $n_i \leq Z_i$ we have that
\begin{equation}
\sigma_{\textrm{ess}}(H_{i,n_i-1}^\s|_{\Ran  Q_{(Z_i-n_i+1)}})=[E_{i,n_i},\infty),
\end{equation}
where $E_{i,n_i}$ was defined in \eqref{def:Eini}.
\end{theorem}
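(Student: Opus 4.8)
The plan is to derive this from the general HVZ theorem for an $N$-body Schr\"odinger operator with a one-body potential vanishing at infinity, the only real content being the identification of the lowest threshold. First I would discard the additive constant $\sum_{i<j}^{1,M} e^2 Z_i Z_j/|y_i-y_j|$, which is the same in $H^{m,\s}(y)$ and $H^{m+1,\s}(y)$. After this, $H^{m+1,\s}(y)|_{\Ran Q_{m+1}}$ is the Hamiltonian of $m+1$ spinless fermions in the external potential $-\sum_i e^2 Z_i/|x-y_i|$ with Coulomb pair repulsion, and the fermionic HVZ theorem of \cite{H,vW,Zh,HS,CFKS} gives $\sigma_{\mathrm{ess}}(H^{m+1,\s}(y)|_{\Ran Q_{m+1}})=[\Sigma,\infty)$, where $\Sigma$ is the minimum, over all decompositions of the $m+1$ electrons into a cluster of $k\le m$ electrons that stays bound to the nuclei and a cluster of $m+1-k\ge 1$ electrons that escapes to infinity, of $\Sigma_k+\mu_{m+1-k}$; here $\mu_j$ is the bottom of the spectrum of $j$ free fermions with mutual Coulomb repulsion. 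Since that operator is the sum of $-\sum_{l=1}^{j}\Delta_{z_l}$, which has spectrum $[0,\infty)$ on $\wedge_1^j L^2(\R^3)$, and of a non-negative interaction, and since its bottom is approached by Slater determinants of widely separated wave packets with small momenta, $\mu_j=0$ for every $j$. Hence $\Sigma=\min_{0\le k\le m}\Sigma_k$.

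It then remains only to show that $k\mapsto\Sigma_k$ is non-increasing, so that this minimum equals $\Sigma_m$; I would do this by induction on $m$. The case $m=0$ is trivial (a one-electron Schr\"odinger operator whose essential spectrum starts at $\Sigma_0$, the value of the zero-electron Hamiltonian). For the inductive step, the statement already proved for $H^{m,\s}(y)|_{\Ran Q_m}$ gives $\inf\sigma_{\mathrm{ess}}(H^{m,\s}(y)|_{\Ran Q_m})=\Sigma_{m-1}$, and since $\Sigma_m=\inf\sigma(H^{m,\s}(y)|_{\Ran Q_m})\le\inf\sigma_{\mathrm{ess}}(H^{m,\s}(y)|_{\Ran Q_m})$ we get $\Sigma_m\le\Sigma_{m-1}$; hence $\min_{0\le k\le m}\Sigma_k=\Sigma_m$ and the statement for $H^{m+1,\s}(y)|_{\Ran Q_{m+1}}$ follows. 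Equivalently, $\Sigma_m\le\Sigma_{m-1}$ can be shown directly by a trial function: antisymmetrize a near-minimizer of $H^{m-1,\s}(y)|_{\Ran Q_{m-1}}$ against a normalized one-electron bump supported in a ball of radius $\sqrt R$ centred far from all nuclei, whose kinetic energy is $O(1/R)$ while all cross terms with the other electrons and with the nuclei tend to $0$ as $R\to\infty$.

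The second assertion is the $M=1$ case of the first: applying the argument above to the one-nucleus operator $\sum_{j=1}^{k}\bigl(-\Delta_{x_j}-e^2 Z_i/|x_j|\bigr)+\sum_{j<l}^{1,k}e^2/|x_j-x_l|$ on $\wedge_1^k L^2(\R^3)$ with $k=Z_i-n_i+1$ --- which is precisely $H_{i,n_i-1}^\s$, and which becomes $H_{i,n_i}^\s$ upon removing one electron --- turns $\Sigma_m$ into $E_{i,n_i}$ and yields the claim (the neutrality normalization $\sum_j Z_j=N$ plays no role in the argument, so negative ions $n_i<0$ are covered as well). The whole proof is routine once the general HVZ theorem is invoked; the only steps needing a little care are verifying that an escaping cluster of free fermions contributes zero energy and the monotonicity of $m\mapsto\Sigma_m$, which together guarantee that the lowest threshold is $\Sigma_m$ rather than some $\Sigma_k$ with $k<m$ --- and that is where I expect the bookkeeping, if anywhere, to require attention.
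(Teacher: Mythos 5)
The paper offers no proof of this statement at all: it is quoted as a special case of the classical HVZ theorem with citations to \cite{H,vW,Zh,HS,CFKS}, so there is nothing internal to compare against. Your reduction to the general fermionic HVZ theorem — dropping the common nuclear-repulsion constant, noting that an escaping cluster of mutually repelling free electrons contributes threshold energy $0$, and establishing $\Sigma_m\leq\Sigma_{m-1}$ (by induction through the HVZ statement or by a spread-out trial electron) so that the lowest threshold is $\Sigma_m$, with the ionic case being the $M=1$ instance — is the standard argument and is correct.
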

The HVZ Theorem enables  to identify the bottom of the essential spectrum
as the ground state energy of the same system but with one electron removed.

 The next result shows that the Hamiltonians
 $H_{i,n_i}^\s|_{\Ran Q_{Z_i-n_i}}, i \in \{1,\dots,M\}, 0 \leq n_i <Z_i$,
  as well as the Hamiltonians $H^{m,\s}(y)|_{\Ran Q_m}, m \in \{1,2,\dots,N\}$,
 have a ground state (see e.g. \cite{ Zh, HS,
CFKS}):
\begin{theorem}\label{zysl1}  For all $i \in \{1,\dots,M\}$ and $n_i \in \mathbb{N} \cup \{0\}$,
 with $n_i<Z_i$, the operator $H_{i,n_i}^\s|_{\Ran  Q_{Z_i-n_i}}$ has a ground state.
  Its ground state energy $E_{i,n_i}$ is below
 the bottom of its essential spectrum (which by Theorem \ref{hvz} is $E_{i,n_i+1}$).
 Similarly, for all $m \in \{1,2,\dots,N\},$ the Hamiltonian $H^{m,\s}(y)|_{\Ran Q_m}$
 has a ground state. Its ground state energy is below the  bottom of its essential spectrum.
  \end{theorem}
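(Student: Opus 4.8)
Both assertions have the same structure, so the plan is to describe the argument for the molecular family $H^{m,\s}(y)|_{\Ran Q_m}$, $m \in \{1,\dots,N\}$; the atomic family $H_{i,n_i}^\s|_{\Ran Q_{Z_i-n_i}}$ is handled identically — it is the single-nucleus case, with the induction below run on the electron number $Z_i-n_i$ (which is $\ge 1$ since $n_i<Z_i$) and with threshold $E_{i,n_i+1}$ in place of $\Sigma_{m-1}$. Set $\Sigma_m := \inf\sigma\bigl(H^{m,\s}(y)|_{\Ran Q_m}\bigr)$. By the HVZ theorem (Theorem \ref{hvz}) we have $\sigma_{\mathrm{ess}}\bigl(H^{m,\s}(y)|_{\Ran Q_m}\bigr)=[\Sigma_{m-1},\infty)$, and a self-adjoint operator that is bounded below and whose infimum lies strictly below the bottom of its essential spectrum has that infimum as an eigenvalue of finite multiplicity, since the interval $[\Sigma_m,\Sigma_{m-1})$ then consists only of discrete spectrum. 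Hence the whole theorem reduces to the binding inequality
\[
\Sigma_m<\Sigma_{m-1},\qquad m=1,\dots,N
\]
(and the analogous $E_{i,n_i}<E_{i,n_i+1}$), which is Zhislin's theorem in the present antisymmetric setting; a complete proof may also simply be quoted from \cite{Zh, HS, CFKS}.

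I would prove the binding inequality by induction on $m$. For $m=0$, $H^{0,\s}(y)$ is multiplication by the (constant) nuclear repulsion energy on the one-dimensional space $\Ran Q_0$, so it trivially has a ground state. Assume $H^{m-1,\s}(y)|_{\Ran Q_{m-1}}$ has a normalized ground state $\psi_{m-1}$ of energy $\Sigma_{m-1}$. Since $m-1<N$ and $\sum_i Z_i=N$, the $(m-1)$-electron molecular ion carries net charge $N-(m-1)\ge 1$, so an electron placed far from all the nuclei feels an attractive effective potential with Coulomb tail $-e^2(N-m+1)|x|^{-1}$. I would then pick a normalized $g$ supported in a ball of radius $L$ about a point $d\,\mathbf{e}$, $|\mathbf{e}|=1$, with $1\ll L$ and $L^2\sim d$, and take the antisymmetrized trial function $\Psi=Q_m(\psi_{m-1}\otimes g)/\|Q_m(\psi_{m-1}\otimes g)\|\in\Ran Q_m$. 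Expanding $\langle\Psi,H^{m,\s}(y)\Psi\rangle$, the direct term is $\Sigma_{m-1}+\langle g,-\Delta g\rangle+(\text{electron--ion interaction})$, with kinetic contribution $O(L^{-2})$ and interaction contribution $\le -c\,e^2(N-m+1)/d$; with $d\sim L^2$ this direct part equals $\Sigma_{m-1}-c'/d<\Sigma_{m-1}$. The exchange terms generated by $Q_m$, and the correction to $\|Q_m(\psi_{m-1}\otimes g)\|$, involve overlaps of $g$ with $\psi_{m-1}$, hence decay exponentially in $d-L$ by the Agmon / Combes--Thomas exponential bounds on the eigenfunction $\psi_{m-1}$; for $L$ large they are beaten by the polynomial gain $-c'/d$, so $\langle\Psi,H^{m,\s}(y)\Psi\rangle<\Sigma_{m-1}$ and the min--max principle yields $\Sigma_m<\Sigma_{m-1}$, completing the induction.

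The main obstacle is exactly this last step: controlling the antisymmetrization. In the bosonic case the trial function is just $\psi_{m-1}\otimes g$ and nothing beyond the classical Coulomb-tail estimate is needed, whereas for fermions one must check that projecting onto $\Ran Q_m$ does not destroy the energy bound, which forces a careful balance of the three scales — kinetic energy ($\sim L^{-2}$), Coulombic binding ($\sim d^{-1}$), and exponentially small exchange ($\sim e^{-c(d-L)}$) — and relies on exponential decay of the lower ground state. Once $\Sigma_m<\Sigma_{m-1}$ (resp. $E_{i,n_i}<E_{i,n_i+1}$) is established, the spectral remark of the first paragraph gives a ground state lying below the essential spectrum, which is the claim.
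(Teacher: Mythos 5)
The paper never proves this statement: Theorem \ref{zysl1} is quoted as Zhislin's theorem, with a pointer to \cite{Zh,HS,CFKS}, and only its consequences (existence of ground states for atoms, non-negative ions, and the full system) are used later. Your proposal therefore supplies more than the paper does, and what you supply is essentially the classical argument contained in those references: reduce, via the HVZ theorem, to the strict binding inequality $\Sigma_m<\Sigma_{m-1}$ (resp.\ $E_{i,n_i}<E_{i,n_i+1}$), then prove binding by induction on the electron number with a trial state in which one additional electron is localized far from the $(m-1)$-electron system; since that system has net positive charge $e(N-m+1)\geq e$, the Coulomb tail gives an energy gain of order $d^{-1}$, the localization costs kinetic energy of order $L^{-2}$, and the exchange terms produced by $Q_m$ are exponentially small because the lower ground state decays exponentially. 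This is sound and is exactly the route the cited literature takes, so nothing is lost relative to the paper. Two small points worth tightening: (i) with your scaling $L^2\sim d$ the kinetic cost and the Coulomb gain are both of order $d^{-1}$, so the conclusion hinges on constants; choosing instead $L\sim \epsilon d$ (say $L=d/4$) makes the kinetic term $O(d^{-2})$, strictly lower order than the gain, and removes any bookkeeping of constants; (ii) the exponential decay of $\psi_{m-1}$ that kills the exchange and normalization corrections is available precisely because the induction hypothesis places $\Sigma_{m-1}$ strictly below the bottom of the essential spectrum (Combes--Thomas, as in Theorem \ref{thm:ct}), so this should be stated explicitly as part of what the induction propagates.
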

This theorem is known as Zhislin's Theorem. It shows that atoms
and positive ions are stable in the sense that they have a bound state.
Lemma \ref{lem:negGSE} follows from Theorems \ref{hvz} and \ref{zysl1}, by observing that
 $H_{i,Z_i}=0, \forall i \in \{1,...,M\}$, which implies that $E_{i,Z_i}=0, \forall i \in \{1,...,M\}$.

Let $a \in \mathcal{A}$ and $i \in \{1,\dots,M\}$. Then the Hamiltonian
$H_{A_i}$, defined in \eqref{Ha1}, differs from the Hamiltonian $H_{i,Z_i-|A_i|}$ defined in
\eqref{def:Hm} only in that $H_{A_i}$ is translated by $y_i$ and it acts on the coordinates
in $A_i$. Therefore, from Theorems \ref{hvz}, \ref{zysl1} and
Corollary \ref{cor:restrictionaway} we obtain:
\begin{theorem}\label{zysl}
 Let $a \in \mathcal{A}$ and $i \in \{1,\dots,M\}$.
 The operators $H_{A_i}^\s|_{\Ran  Q_{A_i}}$ and $H_{A_i}^\s$ have the same discrete spectrum
 (which might be empty) and the corresponding eigenspaces are the same.
 Moreover,
 \begin{equation}\label{infsHAi}
 \inf \s(H_{A_i}^\s)=E_{i,Z_i-|A_i|}, \text{ and } \inf \s_{ess}(H_{A_i}^\s)=E_{i,Z_i-|A_i|+1}.
 \end{equation}
Furthermore, if $0< |A_i| \leq Z_i$, then the operator $H_{A_i}^\s$ has a ground state,
and its ground state energy is below the bottom of its essential spectrum.
 In addition, if $|A_i|=Z_i$, then Condition (D)
implies that $H_{A_i}^\s$ has a unique ground state  $\phi_{A_i}$.
  \end{theorem}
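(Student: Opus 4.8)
The plan is to reduce every assertion to the corresponding, already-established, fact about the reference ion Hamiltonians $H_{i,n_i}$ via a unitary transformation, and then to pass from the operator $H_{A_i}^\s=H_{A_i}Q_{A_i}$ on the full space to its restriction to $\Ran Q_{A_i}$, keeping careful track of the trivial part that lives on $(\Ran Q_{A_i})^\bot$.

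First I would set $n_i:=Z_i-|A_i|$ and exhibit the unitary $U_i\colon\otimes_{j\in A_i}L^2(\R^3)\to\otimes_1^{|A_i|}L^2(\R^3)$ that relabels the coordinates $\{x_j:j\in A_i\}$ in increasing order of the index and simultaneously translates each of them by $-y_i$. Comparing \eqref{Ha1} with \eqref{def:Hm} one reads off $U_iH_{A_i}U_i^*=H_{i,n_i}$, and comparing \eqref{def:QAj} with \eqref{Qdef} one reads off $U_iQ_{A_i}U_i^*=Q_{|A_i|}=Q_{Z_i-n_i}$ (the translation commutes with all coordinate permutations and the sign is preserved under $S_{A_i}\cong S_{|A_i|}$). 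Hence $U_iH_{A_i}^\s U_i^*=H_{i,n_i}^\s$ and $U_i$ carries $\Ran Q_{A_i}$ onto $\Ran Q_{Z_i-n_i}$ (so \eqref{def:HAis} and \eqref{def:Hinis} match up accordingly), and every spectral statement about $H_{A_i}^\s$ or $H_{A_i}^\s|_{\Ran Q_{A_i}}$ is equivalent to the corresponding one about $H_{i,n_i}^\s$ or $H_{i,n_i}^\s|_{\Ran Q_{Z_i-n_i}}$; it therefore suffices to prove the latter.

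For $|A_i|\ge1$ (equivalently $n_i<Z_i$), Corollary \ref{cor:restrictionaway} already gives that $H_{i,n_i}^\s$ and $H_{i,n_i}^\s|_{\Ran Q_{Z_i-n_i}}$ have the same discrete spectrum and the same eigenspaces and that $\inf\s(H_{i,n_i}^\s)=E_{i,n_i}$, while Theorem \ref{hvz} gives $\s_{\mathrm{ess}}(H_{i,n_i}^\s|_{\Ran Q_{Z_i-n_i}})=[E_{i,n_i+1},\infty)$. The only thing left to check is the essential spectrum of the unrestricted $H_{i,n_i}^\s$, which vanishes on $(\Ran Q_{Z_i-n_i})^\bot$: when $|A_i|\ge2$ that complement is infinite-dimensional, so $\s(H_{i,n_i}^\s)=\s(H_{i,n_i}^\s|_{\Ran Q_{Z_i-n_i}})\cup\{0\}$ with $0\in\s_{\mathrm{ess}}$, but $E_{i,n_i+1}<0$ by Lemma \ref{lem:negGSE}, so the extra point $0$ sits strictly above $\inf\s_{\mathrm{ess}}$ and changes nothing; when $|A_i|=1$ one has $Q_{A_i}=I$ and $E_{i,n_i+1}=E_{i,Z_i}=0$, consistent with $\s_{\mathrm{ess}}=[0,\infty)$. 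Transporting back through $U_i$ yields, for $H_{A_i}^\s$, the identity of discrete spectra and eigenspaces with $H_{A_i}^\s|_{\Ran Q_{A_i}}$ together with $\inf\s(H_{A_i}^\s)=E_{i,Z_i-|A_i|}$ and $\inf\s_{\mathrm{ess}}(H_{A_i}^\s)=E_{i,Z_i-|A_i|+1}$ (the case $|A_i|=0$, where $H_{A_i}^\s=0$ on $\C$ and $E_{i,Z_i}=0$, being immediate). If moreover $1\le|A_i|\le Z_i$, so that $0\le n_i<Z_i$, Theorem \ref{zysl1} furnishes a ground state of $H_{i,n_i}^\s|_{\Ran Q_{Z_i-n_i}}$ with energy $E_{i,n_i}<E_{i,n_i+1}$; by Corollary \ref{cor:restrictionaway} it is also a ground state of $H_{i,n_i}^\s$, so $U_i^*$ maps it to a ground state of $H_{A_i}^\s$ with energy $E_{i,Z_i-|A_i|}$, which lies below $\inf\s_{\mathrm{ess}}(H_{A_i}^\s)=E_{i,Z_i-|A_i|+1}$. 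Finally, for $|A_i|=Z_i$ we have $n_i=0$ and $H_{i,0}^\s=H_i^\s$, whose ground-state energy is non-degenerate by Condition (D), and pulling the unique normalized ground state of $H_i^\s$ back by $U_i^*$ produces the unique ground state $\phi_{A_i}$ of $H_{A_i}^\s$.

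I expect the only mildly delicate points to be the bookkeeping at the boundary values $|A_i|\in\{0,1\}$ and the observation that adjoining the eigenvalue $0$ coming from $(\Ran Q_{A_i})^\bot$ alters neither $\inf\s$ nor $\inf\s_{\mathrm{ess}}$ of $H_{A_i}^\s$ — which is precisely where the strict negativity $E_{i,Z_i-|A_i|}<0$ from Lemma \ref{lem:negGSE} enters; everything else is a mechanical transcription of Theorems \ref{hvz} and \ref{zysl1} and Corollary \ref{cor:restrictionaway} through the unitary $U_i$.
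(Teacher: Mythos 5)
Your proposal is correct and follows essentially the same route as the paper, which obtains Theorem \ref{zysl} precisely by observing that $H_{A_i}$ is just $H_{i,Z_i-|A_i|}$ translated by $y_i$ and acting on the relabeled coordinates, and then invoking Theorems \ref{hvz} and \ref{zysl1} together with Corollary \ref{cor:restrictionaway}. You merely make the unitary conjugation explicit and spell out the (correct) bookkeeping for the trivial $0$ eigenvalue on $(\Ran Q_{A_i})^\bot$ and the boundary cases $|A_i|\in\{0,1\}$, details the paper leaves implicit.
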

From this theorem it follows that the restrictions of the
operators $H_{A_i}^\s|_{\Ran  Q_{A_i}}$ onto $\Ran  Q_{A_i}$  can be removed
 without affecting the proof. Therefore, in the rest of the paper
 we will consider only the operators $H_{A_i}^\s$.

The following theorem, says that for all $a \in \mathcal{A}^{at}$ the ground states of the operators $H_{A_i}^\s$ are well localized.
We define $x_{A_i}=(x_j: j \in A_i)$ to be the collection of electron coordinates
 in $A_i$ with increasing order in $j$, and $x_{A_i}-y_i=(x_j-y_i: j \in A_i)$, where the order is again increasing in $j$.
  \begin{theorem}\label{thm:ct}
  Let $a \in \mathcal{A}^{at}$ and $i \in \{1,\dots,M\}$. With the same notation as in Theorem \ref{zysl} we have
\begin{equation}\label{groundstatedecay}
\|e^{\theta \langle x_{A_i}-y_i \rangle} \partial^{\alpha} \phi_{A_i}\|
\ls 1, \forall \alpha \text{ with } 0\leq |\alpha| \leq 2,
\end{equation}
 for any $\theta<\sqrt{E_{i,1}-E_i}$.
\end{theorem}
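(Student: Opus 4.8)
The plan is to reduce everything to a single atom at the origin and then run the standard Agmon--IMS exponential decay machinery, followed by an elliptic bootstrap for the derivatives. Since $a\in\mathcal{A}^{at}$, the operator $H_{A_i}$ is, after translating by $y_i$ and relabeling the coordinates in $A_i$, unitarily equivalent to $H_{i,0}=H_i$, and $\phi_{A_i}$ to the ground state $\phi_i$ of $H_i^\sigma$ (nucleus at $0$, $Z_i$ electrons), with $x_{A_i}-y_i$ going over to $x=(x_1,\dots,x_{Z_i})\in\R^{3Z_i}$. So it suffices to prove $\|e^{\theta\langle x\rangle}\partial^\alpha\phi_i\|\lesssim 1$ for $0\le|\alpha|\le2$ and $\theta<\sqrt{E_{i,1}-E_i}$, where $\langle x\rangle=(1+\sum_{j=1}^{Z_i}|x_j|^2)^{1/2}$. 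I would carry out the whole argument inside $\Ran Q_{Z_i}$: there $E_i=E_{i,0}$ is an isolated eigenvalue lying a distance $E_{i,1}-E_i>0$ below the bottom $E_{i,1}$ of the essential spectrum of $H_i^\sigma|_{\Ran Q_{Z_i}}$, by Theorems \ref{hvz}, \ref{zysl1}, \ref{zysl}. The essential point is that $\langle x\rangle$, and the cut-offs used below, depend only on the permutation-symmetric quantity $\sum_j|x_j|^2$, hence commute with $Q_{Z_i}$ and preserve $\Ran Q_{Z_i}$; this is what lets the fermionic threshold $E_{i,1}$ enter and produces the sharp rate.

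For $|\alpha|=0$ I would use the energy (Agmon--IMS) method. Regularize the weight by $F_\epsilon:=\theta\langle x\rangle/(1+\epsilon\langle x\rangle)$, which is smooth, bounded, increases pointwise to $\theta\langle x\rangle$ as $\epsilon\downarrow0$, and satisfies $|\nabla F_\epsilon|\le\theta$ with all higher derivatives bounded uniformly in $\epsilon$. Since $\phi_i\in H^2(\R^{3Z_i})$ (Kato's theorem), we have $e^{F_\epsilon}\phi_i\in H^2\cap\Ran Q_{Z_i}$, and a direct computation using $(H_i^\sigma-E_i)\phi_i=0$ together with the vanishing of the real part of the first-order terms in $e^{F_\epsilon}H_i^\sigma e^{-F_\epsilon}$ gives
\[
\langle e^{F_\epsilon}\phi_i,(H_i^\sigma-E_i)e^{F_\epsilon}\phi_i\rangle=\big\||\nabla F_\epsilon|\,e^{F_\epsilon}\phi_i\big\|^2\le\theta^2\|e^{F_\epsilon}\phi_i\|^2 .
\]
Then I localize: choose cut-offs with $\chi_{<}^2+\chi_{>}^2=1$, functions of $\sum_j|x_j|^2$ only, with $\chi_{<}$ supported in $\{\langle x\rangle\le2R_0\}$ and $\chi_{>}$ in $\{\langle x\rangle\ge R_0\}$. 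The IMS formula together with the standard HVZ--Persson geometric bound (for each $\eta>0$ there is $R_0$ with $\langle u,(H_i^\sigma-E_i)u\rangle\ge(E_{i,1}-E_i-\eta)\|u\|^2$ for $u\in\Ran Q_{Z_i}$ supported in $\{\langle x\rangle\ge R_0\}$, exactly the kind of estimate used in Section \ref{Hbotbndseveral}), applied to $u=\chi_{>}e^{F_\epsilon}\phi_i$, yields
\[
\theta^2\|e^{F_\epsilon}\phi_i\|^2\ge(E_{i,1}-E_i-\eta)\|\chi_{>}e^{F_\epsilon}\phi_i\|^2-C_{R_0},
\]
where $C_{R_0}$ bounds the contributions of $\chi_{<}e^{F_\epsilon}\phi_i$ and of the localization error, both supported where $e^{F_\epsilon}\le e^{2\theta R_0}$. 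Choosing $\eta$ with $E_{i,1}-E_i-\eta>\theta^2$ gives $\|e^{F_\epsilon}\phi_i\|\le C$ uniformly in $\epsilon$, and monotone convergence as $\epsilon\downarrow0$ gives $\|e^{\theta\langle x\rangle}\phi_i\|\lesssim1$.

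For $|\alpha|=1,2$ I would bootstrap by elliptic regularity. The energy identity above combined with the Coulomb form bound $e^2Z_i/|x_j|\le\alpha(-\Delta_{x_j})+\beta(\alpha)$ (for small $\alpha$) first gives $\|\nabla(e^{F_\epsilon}\phi_i)\|\lesssim1$, hence $\|e^{F_\epsilon}\nabla\phi_i\|\lesssim1$, uniformly in $\epsilon$. For second derivatives set $\tilde\phi_\epsilon:=e^{F_\epsilon}\phi_i\in H^2$; it solves $(H_i-E_i)\tilde\phi_\epsilon=g_\epsilon$ with $g_\epsilon:=(|\nabla F_\epsilon|^2-\Delta F_\epsilon)\tilde\phi_\epsilon-2\nabla F_\epsilon\cdot\nabla\tilde\phi_\epsilon$, which has no Coulomb singularity and satisfies $\|g_\epsilon\|_{L^2}\lesssim1$ by the $|\alpha|\le1$ bounds. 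Since $H_i$ is self-adjoint, bounded below, with domain $H^2(\R^{3Z_i})$, for $\mu$ large $(H_i+\mu)^{-1}:L^2\to H^2$ is bounded, and from $\tilde\phi_\epsilon=(H_i+\mu)^{-1}\big(g_\epsilon+(\mu+E_i)\tilde\phi_\epsilon\big)$ we get $\|\tilde\phi_\epsilon\|_{H^2}\lesssim1$ uniformly in $\epsilon$. Expanding $\partial^\alpha(e^{F_\epsilon}\phi_i)$ by Leibniz, using that $\partial^\beta e^{F_\epsilon}=e^{F_\epsilon}\cdot(\text{polynomial in bounded derivatives of }F_\epsilon)$, and inducting on $|\alpha|$, this transfers to $\|e^{F_\epsilon}\partial^\alpha\phi_i\|\lesssim1$ for $|\alpha|\le2$; letting $\epsilon\downarrow0$ (monotone convergence) completes the proof.

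The step I expect to be the main obstacle is not any single computation but the many-body bookkeeping behind the sharp rate: one must make sure the regularized weight and the cut-offs remain within $\Ran Q_{Z_i}$, so that the localization-at-infinity estimate feels the ionic threshold $E_{i,1}$ rather than a possibly lower bosonic one, and one must state the HVZ--Persson bound on the antisymmetric sector cleanly. Everything else — the conjugation identity, the Coulomb form bound, and the elliptic bootstrap to $H^2$ — is routine.
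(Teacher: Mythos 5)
Your argument is essentially correct, but it is worth pointing out that the paper does not prove this statement at all: Theorem \ref{thm:ct} is quoted as the classical Combes--Thomas bound with a citation to \cite{CT}, the original approach there being based on boosted (complex-translated) Hamiltonians and resolvent analyticity rather than on variational estimates. What you propose is a self-contained alternative: reduce to $H_i^\sigma$ on $\Ran Q_{Z_i}$, where $E_i$ is an isolated eigenvalue below the HVZ threshold $E_{i,1}$, run the Agmon/IMS energy method with the regularized weight $F_\epsilon$ to get $\|e^{\theta\langle x\rangle}\phi_i\|\ls 1$ for $\theta^2<E_{i,1}-E_i$, and then bootstrap to first and second derivatives via the form bound on the Coulomb terms and the $H^2$-boundedness of $(H_i+\mu)^{-1}$. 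The conjugation identity, the formula for $g_\epsilon$, and the Leibniz bookkeeping are all correct as written, and the elliptic bootstrap is in fact needed even if one cites \cite{CT}, since the cited bound is usually stated for the eigenfunction itself and not for $\partial^\alpha\phi_{A_i}$ with $|\alpha|\le 2$; so your write-up is arguably more complete than the paper's one-line attribution. What each approach buys: the citation keeps the paper short and the sharp rate is inherited from the literature, while your proof makes the constant-tracking transparent (the paper later only uses some fixed rate $c_1>0$ in \eqref{Psiadecay}, so the sharp constant $\sqrt{E_{i,1}-E_i}$ is a bonus) and fits naturally with the IMS machinery already used in Section \ref{Hbotbndseveral}.

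The only step you state without proof is the Persson/HVZ-type geometric lower bound $\langle u,(H_i^\sigma-E_i)u\rangle\ge (E_{i,1}-E_i-\eta)\|u\|^2$ for antisymmetric $u$ supported in $\{\langle x\rangle\ge R_0\}$. This is standard, but in the many-body setting it is not a one-liner: one proves it with a Ruelle--Simon (cluster) partition of unity subordinate to the decompositions, exactly as in the proof of HVZ, checking that the partition can be chosen permutation-symmetric so that the localization stays in $\Ran Q_{Z_i}$ and the relevant threshold is the fermionic one $E_{i,1}$. You correctly flag the symmetry issue for your radial cutoffs $\chi_<,\chi_>$; the same care is needed inside the proof of the Persson bound itself. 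With that ingredient supplied (or cited), the proposal is a complete and correct proof of \eqref{groundstatedecay}.
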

This theorem is known as the as the Combes - Thomas bound (see \cite{CT}).

 The following proposition, is going to be very useful.
\begin{proposition}\label{spherical}
Let $a \in \mathcal{A}^{at}$ and $i \in \{1,\dots,M\}$.
 If $\Psi$ is an eigenfunction of the Hamiltonian $H_{A_i}^\s$,
 corresponding to a non-degenerate eigenvalue, then the one-electron density
\begin{equation*}
\rho_{\Psi}(x)=\int |\Psi(x,x_2,\dots,x_{Z_i})|^{2} dx_2\dots
dx_{Z_i}
\end{equation*}
 of $\Psi$ is spherically symmetric.
\end{proposition}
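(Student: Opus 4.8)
The plan is to exploit the rotational symmetry of the Hamiltonian $H_{A_i}^\s$ together with non-degeneracy of the eigenvalue. First I would observe that $H_{A_i}$, being (a translate of) an atomic Hamiltonian with a single nucleus, commutes with the full rotation group $SO(3)$ acting diagonally on all electron coordinates: for $U \in SO(3)$, letting $(\cR_U \Psi)(x_1,\dots,x_{Z_i}) = \Psi(U^{-1}(x_1-y_i)+y_i, \dots)$, we have $\cR_U H_{A_i} \cR_U^{-1} = H_{A_i}$, because the kinetic term, the nucleus-electron Coulomb term (which depends only on $|x_j - y_i|$), and the electron-electron term (which depends only on $|x_j - x_k|$) are all rotation-invariant. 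Moreover $\cR_U$ commutes with $Q_{A_i}$ since it acts the same way on every coordinate, hence $\cR_U$ commutes with $H_{A_i}^\s = H_{A_i} Q_{A_i}$ and preserves $\Ran Q_{A_i}$.

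Next, since $\Psi$ is an eigenfunction of $H_{A_i}^\s$ for a non-degenerate eigenvalue $\lambda$, the corresponding eigenspace is one-dimensional, and it is invariant under each $\cR_U$. Therefore $\cR_U \Psi = \chi(U)\,\Psi$ for some $\chi(U) \in \C$ with $|\chi(U)|=1$, and $U \mapsto \chi(U)$ is a one-dimensional unitary representation (character) of $SO(3)$. Since $SO(3)$ is connected and its only one-dimensional representation is trivial (its abelianization is trivial, equivalently $SO(3)$ is perfect, or simply: any continuous homomorphism $SO(3)\to U(1)$ must kill the commutator subgroup, which is all of $SO(3)$), we conclude $\chi(U) \equiv 1$, i.e. $\cR_U\Psi = \Psi$ for all $U \in SO(3)$. (One should check the weak continuity $U \mapsto \cR_U\Psi$ is continuous in $L^2$, which is standard for the regular representation on $L^2(\R^{3Z_i})$.)

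Finally I would transfer this invariance to the one-electron density. For $U \in SO(3)$, changing variables $x_j \mapsto U^{-1}(x_j - y_i) + y_i$ in the integral defining $\rho_\Psi$ and using $\cR_U\Psi = \Psi$ gives
\[
\rho_\Psi\big(U(x-y_i)+y_i\big) = \int |\Psi(U(x-y_i)+y_i, x_2,\dots)|^2\,dx_2\cdots dx_{Z_i} = \rho_\Psi(x),
\]
after the substitution $x_k \mapsto U(x_k - y_i)+y_i$ for $k\ge 2$, whose Jacobian is $1$. Hence $\rho_\Psi$ is invariant under all rotations about $y_i$, which is precisely spherical symmetry. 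The main obstacle, though a mild one, is the representation-theoretic step: one must be careful that $\Psi$ genuinely generates a one-dimensional invariant subspace (this uses non-degeneracy in an essential way — without it the eigenspace could carry a nontrivial $SO(3)$-representation and the density need not be radial) and that the resulting character of $SO(3)$ is forced to be trivial; everything else is a routine change of variables.
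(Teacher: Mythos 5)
Your proposal is correct and takes essentially the same route as the paper: rotational invariance of $H_{A_i}^\s$ together with non-degeneracy forces the eigenfunction to satisfy $T_U\Psi=c(U)\Psi$ with $|c(U)|=1$, and a change of variables with Jacobian $1$ then gives spherical symmetry of $\rho_\Psi$. Your extra representation-theoretic step showing $c(U)\equiv 1$ (triviality of characters of $SO(3)$) is correct but unnecessary, since only $|\Psi|^2$ enters the density and $|c(U)|=1$ already suffices, which is all the paper uses.
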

\begin{proof}
The proposition is standard but we will provide a proof for convenience of the reader.
 For any rotation $U$ in $\mathbb{R}^3$ we consider the transformation $T_U$
defined by
$$T_U \Psi(x_1,\dots,x_{Z_i})=\Psi(U^{-1} x_1,\dots, U^{-1} x_{Z_i}).$$
 Since the Coulomb potentials are spherically symmetric, we have that $H_{A_i}^\s$ commutes with $T_U$, i.e.
$H_{A_i}^\s T_U=T_U H_{A_i}^\s.$ Since $\Psi$ is an eigenfunction of $H_{A_i}^\s$, the
last equality gives that $T_U \Psi$ is also an eigenfunction of
$H_{A_i}^\s$ corresponding to the same eigenvalue. Since the  eigenvalue is
non-degenerate we obtain that $T_U \Psi=c(U) \Psi,$ where $c(U)$ is
a complex valued function. Since $T_U$ is unitary we have that
$|c(U)|=1$ for any rotation $U$ and therefore,
$$|\Psi(x_1,\dots,x_{Z_i})|^2=|\Psi(U^{-1} x_1,\dots,U^{-1} x_{Z_i})|^2,$$
for any $U$.
 Using this and the definition  of the electron density, we conclude that the latter is spherically
 symmetric, because the change of variables arising from a rotation
 has Jacobian 1.
\end{proof}

\bigskip
\section{Proof of Lemma \ref{lem:sigma} and of Property (E) for a system hydrogen atoms}\label{sec:sigma}
\begin{proof}[Proof of Lemma \ref{lem:sigma}]
 For any rotation $U$ in
$\mathbb{R}^3$ we define, similarly to the proof of Proposition
\ref{spherical}, a transformation $T_U$ acting on the space
$L^2(\mathbb{R}^{3(Z_i+Z_j)})$ by
\begin{equation}
T_U \psi(z_1,\dots,z_{Z_i+Z_j})=\psi(U^{-1} z_1,\dots, U^{-1}
z_{Z_i+Z_j}).
\end{equation}
 Similarly as in the proof of Proposition \ref{spherical}, we can
 show that
 \begin{equation}\label{TRphiij}
 T_U \phi_i \otimes \phi_j=c(U) \phi_i \otimes \phi_j, \text{ where } c(U) \in \mathbb{C} \text{ with } |c(U)|=1.
 \end{equation}
 Using \eqref{def:fij} and the fact that the rotation $U$ is
 unitary on $\mathbb{R}^3$, we obtain that
\begin{equation}\label{fijrotation}
T_U^{-1}f_{ij,v}=f_{ij,U^{-1}v}.
\end{equation}
       Using \eqref{TRphiij}, \eqref{fijrotation} and the fact that $T_U$
commutes with $H_{kl}^{\s,\bot}$ we obtain that
$\sigma_{ij}(v)=\sigma_{ij}(U^{-1}v)$ implying that $\sigma_{ij}(v)$ is independent of $v$.

 From \eqref{ineq:Hijsbot} and \eqref{def:rijsbot} it follows that $R_{ij}^{\s,\bot}$ is a positive operator.
 Therefore, using \eqref{sigmaijvdef}, we obtain that
     \begin{equation}
     \sigma_{ij}(v)=\|(R_{ij}^{\s,\bot})^{\frac{1}{2}} f_{ij,v} \phi_i \otimes \phi_j\|^2>0.
     \end{equation}
     We note that $(R_{ij}^{\s,\bot})^{\frac{1}{2}} f_{ij,v} \phi_i \otimes \phi_j \neq 0$, otherwise
     we would multiply with $(H_{ij}^{\s,\bot}-E_i-E_j)^{\frac{1}{2}}$ to obtain that
     $f_{ij,v} \phi_i \otimes \phi_j=0$, which is a contradiction.
\end{proof}

 We now prove Property (E) for a system of hydrogen atoms $(Z_i=1$, for all $i \in \{1,\dots,M\}$).
  Property (E') for a system of hydrogen atoms has already been proven in \cite{AS} Appendix A.
\begin{proposition} \label{prop:ConditionEhyd}
 If $Z_i=1$, for all $i \in \{1,2,\dots,M\}$, then Property (E) holds.
\end{proposition}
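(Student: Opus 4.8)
The plan is to reduce Property (E) in the hydrogen case to a single inequality about the ground state energies of the negative hydrogen ions, and then prove that inequality by combining the HVZ theorem (Theorem \ref{hvz}) with a one--electron extraction bound.

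First I would carry out the reduction. Since $Z_i=Z_j=1$ for all $i,j$, the constraint $m+l\le Z_i$ in \eqref{Prop(E)}, together with $m\in\N\cup\{0\}$ and $l\in\N$, forces $m=0$, $l=1$. Moreover $H_{i,1}=0$ (an empty sum in \eqref{def:Hm}), so $E_{i,1}=0$; and all one--electron Hamiltonians $H_{i,0}=-\Delta-e^2|x|^{-1}$ coincide, so $E_{i,0}=E_{j,0}=:E_{\mathrm H}$, with $E_{\mathrm H}<0$ by Lemma \ref{lem:negGSE}. Hence Property (E) for hydrogen is equivalent to $E_{j,-n-1}>E_{j,-n}+E_{\mathrm H}$ for all $n\in\N\cup\{0\}$. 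Writing $a_k:=E_{j,1-k}$ for $k\ge1$ (the ground state energy of a nucleus of charge $1$ carrying $k$ electrons, so $a_1=E_{\mathrm H}$), this reads $a_{k+1}>a_k+a_1$ for all $k\ge1$.

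Next I would record two facts. (i) By Theorem \ref{hvz} applied with $i=j$, $n_i=1-k$, we have $\sigma_{\mathrm{ess}}\!\big(H_{j,-k}^{\s}|_{\Ran Q_{k+1}}\big)=[a_k,\infty)$; in particular $a_{k+1}=\inf\sigma\!\big(H_{j,-k}^{\s}|_{\Ran Q_{k+1}}\big)\le a_k$, and \emph{if} $a_{k+1}<a_k$ then $a_{k+1}$ is an eigenvalue strictly below the essential spectrum, hence attained by some nonzero $\psi\in\wedge_1^{k+1}L^2(\R^3)$. (ii) A one--electron extraction bound: for any nonzero $\psi\in\wedge_1^{k+1}L^2(\R^3)$ in the form domain, splitting off the last coordinate gives
\[
\langle\psi,H_{j,-k}\psi\rangle=\langle\psi,H_{j,1-k}\psi\rangle+\langle\psi,(-\Delta_{x_{k+1}}-e^2|x_{k+1}|^{-1})\psi\rangle+\sum_{p=1}^{k}\Big\langle\psi,\frac{e^2}{|x_p-x_{k+1}|}\psi\Big\rangle,
\]
where in the first term $H_{j,1-k}$ acts on $(x_1,\dots,x_k)$. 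Since $\psi$ is antisymmetric in $(x_1,\dots,x_k)$ for a.e.\ $x_{k+1}$, fibering over $x_{k+1}$ and using the variational meaning of $a_k=E_{j,1-k}$ shows the first term is $\ge a_k\|\psi\|^2$; the second is $\ge a_1\|\psi\|^2$ since $-\Delta-e^2|x|^{-1}$ has spectral bottom $a_1$; and the last sum is \emph{strictly} positive because the Coulomb repulsions are pointwise positive and $\psi\ne0$. Hence $\langle\psi,H_{j,-k}\psi\rangle>(a_k+a_1)\|\psi\|^2$.

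Finally I would combine these: suppose for contradiction $a_{k+1}\le a_k+a_1$. Since $a_1=E_{\mathrm H}<0$ this forces $a_{k+1}<a_k$, so by (i) $a_{k+1}$ is attained by some nonzero antisymmetric $\psi$, and then (ii) gives $a_{k+1}=\langle\psi,H_{j,-k}\psi\rangle/\|\psi\|^2>a_k+a_1$, a contradiction. Therefore $a_{k+1}>a_k+a_1$ for all $k\ge1$, which is Property (E) for hydrogen. The only genuinely delicate point is that negative ions need not possess a ground state, so the extraction inequality cannot be applied to a minimizer directly for general $k$; the above dichotomy circumvents this, since in the non-attainment case one has $a_{k+1}=a_k$ and the required inequality is then simply $a_1<0$, which is Lemma \ref{lem:negGSE}.
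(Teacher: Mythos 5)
Your proposal is correct and follows essentially the same route as the paper: reduce to the single inequality $E_{j,-n-1}>E_{j,-n}+E_{1}$ using $m=0$, $l=1$, $E_{i,1}=0$, then argue by contradiction, use $E_1<0$ and the HVZ theorem to produce a ground state of the negative ion, and conclude via the one-electron extraction lower bound together with the strict positivity of the Coulomb repulsion. Your explicit handling of the non-attainment case and the fiberwise justification of the extraction bound only make explicit what the paper's operator inequality leaves implicit.
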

  \begin{proof}
 It is enough to restrict our attention to the first
 two atoms. More, precisely, it is enough to show that
  \begin{equation}\label{hyd-ineq1}
  E_{1,m}+E_{2,-n}<E_{1,m+l}+E_{2,-n-l}, \text{ } \forall m,n \in \mathbb{N} \cup \{0\}, l \in \mathbb{N}, \text{ with } m+l \leq 1.
  \end{equation}
   From the assumptions on $l,m$ we obtain that $m=0, l=1$. Since
  $E_{1,1}=0$ (because $H_{1,1}=0$), showing \eqref{hyd-ineq1} reduces to showing
  that
  \begin{equation}\label{hyd-ineq}
   E_1+E_{2,-n} < E_{2,-n-1}, \text{ } \forall n \in  \mathbb{N} \cup \{0\}.
  \end{equation}
To prove \eqref{hyd-ineq}, we assume that for some
$n \in  \mathbb{N} \cup \{0\}$ we have $E_1+E_{2,-n} \geq E_{2,-n-1}$. Using \eqref{eqn:Eiless0}
 we obtain that $E_{2,-n-1} < E_{2,-n}$. Therefore, from Theorem \ref{hvz} it
follows that $E_{2,-n-1}< \inf \s_{\text{ess}} (H_{2,-n-1}^\sigma)$. The last inequality implies that $H_{2,-n-1}^\sigma$ has a ground state $\Psi$. Since
\begin{equation*}
H_{2,-n-1}^\s=\left(H_{2,-n}^\s+(-\Delta_{x_{n+2}}-\frac{e^2}{|x_{n+2}|}) +\sum_{i=1}^{n+1} \frac{e^2}{|x_i-x_{n+2}|}\right)Q_{n+2},
\end{equation*}
it follows that
\begin{equation*}
H_{2,-n-1}^\sigma \geq \left(E_{2,-n}+E_1 +\sum_{i=1}^{n} \frac{e^2}{|x_i-x_{m+1}|}\right) Q_{n+2}.
\end{equation*}
Taking the expectation value with respect to the ground state $\Psi$  of $H_{2,-n-1}^\s$ we obtain that
 $$E_{2,-n-1} >  E_{2,-n} +E_1,$$ giving a contradiction.
 \end{proof}

\section{Reformulation of Theorem \ref{thm:vdW-maxspin}}\label{sec:reformulation}
In this Section we choose an orthogonal projection
for the Feshbach map. Then we state two propositions and two lemmas
and we show that they imply Theorem \ref{thm:vdW-maxspin}.
We prove them, however,
in subsequent sections.
 As we mentioned above, $\Pi$ should be a projection on an antisymmetric
 tensor product of the ground states of the atoms. It turns out, however,
 that it is useful to cut the ground states off appropriately,
  so that different terms of the antisymmetrization have disjoint supports.
  Before introducing the cut off we introduce the ground states.
 We recall that $\phi_i$ is the ground state of $H_i^\sigma$ (see \eqref{def:His}),
 which is unique, due to Condition (D), for any $i \in \{1,\dots,M\}$.
     Let $a=\{A_1,\dots, A_M\} \in \mathcal{A}^{at}$ and $i \in \{1,\dots,M\}$. The ground state
    $\phi_{A_i}$ of $H_{A_i}^\s$ (see theorem \ref{zysl})
     is given by
 \begin{equation}\label{phiAj}
\phi_{A_i}(x_{A_i})=\phi_{i}(x_{A_i}-y_i),
 \end{equation}
where, recall,  $x_{A_i}=(x_j: j \in A_i)$ is the collection of electron coordinates
 in $A_i$ with increasing order in $j$, and $x_{A_i}-y_i=(x_j-y_i: j \in A_i)$, where the order is again increasing in $j$.
Moreover,
\begin{equation*}
H_{A_i} \phi_{A_i}=E_i \phi_{A_i}.
\end{equation*}
   It follows that $H_a^\s$, defined in \eqref{def:Has}, has a unique ground state
 $\Phi_a$ and that $H_a\Phi_a=\Einfty \Phi_a$, where $\Einfty$ was defined in \eqref{Einftydef}.
  Moreover, $\Phi_a$ is given by
\begin{equation}\label{Phiaeq}
 \Phi_{a}(x_1,\dots,x_N)= \prod_{i=1}^M \phi_{A_i}(x_{A_i}).
\end{equation}

Now we introduce the cut off of the ground states $\Phi_a, \phi_{A_i}, i \in \{1,\dots,M\}$.
 Let $\chi_1: \mathbb{R}^3
\rightarrow \mathbb{R}$ be a spherically symmetric $C^\infty$ function
supported in the ball $B(0,\frac{1}{6})$ and equal $1$ on the ball
$B(0,\frac{1}{7})$ with $ 0 \leq \chi_1 \leq 1$ (smoothed out
characteristic function of the ball $B(0,\frac{1}{6})$), and define
\begin{equation}\label{chiR}
\chi_R(x):=\chi_1(R^{-1} x).
\end{equation}
 We introduce the cut off ground states
\begin{equation}\label{Psiaeq}
 \Psi_{a}(x_1,\dots,x_N):= \prod_{i=1}^M \psi_{A_i}(x_{A_i}),
\end{equation}
where, as in \eqref{phiAj},
\begin{equation}\label{psiAj}
\psi_{A_i}(x_{A_i}):=\psi_i(x_{A_i}-y_i),\ \quad \mbox{with}\ \quad 
 \psi_i:=\frac{\phi_i \chi_R^{\otimes Z_i}}{\|\phi_i \chi_R^{\otimes Z_i}\|}.
\end{equation}
Recall that $E_i$ is the ground state energy of $H_i^\s$.
  Using  \eqref{phiAj}, \eqref{groundstatedecay},
 $H_{A_i} \phi_{A_i}= E_i \phi_{A_i}$ and \eqref{psiAj},
 one can verify that
\begin{align}\label{phijpsij}
&\|\psi_i-\phi_i\|_{H^2} \doteq 0, \|\psi_{A_i}-\phi_{A_i}\|_{H^2} \doteq 0, \text{ } \forall i \in \{1,\dots,M\},\\
\label{almosteigat}
&\|(H_{A_i} - E_i) \psi_{A_i}\|_{H^1} \doteq 0, \text{ } \forall i \in \{1,\dots,M\},
\end{align}
and that there exists $c_1>0$, depending only on $Z$, so that
\begin{align}
\label{Psiadecay} &\|e^{c_1 \langle x_{A_i}-y_i \rangle}
\partial^{\alpha} \psi_{A_i}\|
\ls 1,\ \quad  \forall \alpha\   \text{ with }  0 \leq |\alpha| \leq 2, \quad \forall i \in \{1,\dots,M\}.
\end{align}
Using \eqref{phijpsij}, together with \eqref{Phiaeq}, \eqref{Psiaeq} and the triangle inequality,
we obtain that
\begin{align}\label{phiapsia}
&\|\Phi_a-\Psi_a\|_{L^2} \doteq_M 0, \text{ } \forall a \in \mathcal{A}^{at}.
\end{align}
From \eqref{Einftydef} and \eqref{Ha} it follows that
\begin{equation}
H_a  - \Einfty=\sum_{i=1}^M (H_{A_i}-E_i), \text{ } \forall a \in \mathcal{A}^{at}.
\end{equation}
Using the last equation together with \eqref{Psiaeq}, \eqref{almosteigat}
and the triangle inequality, we obtain that
\begin{align}\label{almosteig}
&\|(H_a  - \Einfty) \Psi_a\|_{H^1} \doteq_M 0, \text{ } \forall a \in \mathcal{A}^{at}.
\end{align}
From \eqref{Psiaeq}, \eqref{psiAj} and the properties of $\chi_R$ it follows that
\begin{align}\label{pairor}
 & \text{supp}(\Psi_a)  \cap \text{supp}(\Psi_b) = \emptyset, \ \quad \forall a,b \in
\mathcal{A}^{at},\ \quad \text{ with }\ \quad a \neq b.
\end{align}
Property \eqref{pairor} was the main reason for introducing the cut
off of the ground states. As we shall see later, a lot of error terms, that would exist without \eqref{pairor},
vanish.

 We are now ready to choose the projection $\Pi$. Let
\begin{equation}\label{Pi}
\Pi:=\left|\frac{Q_N \Psi_a}{\|Q_N \Psi_a\|} \right\rangle \left\langle \frac{Q_N
\Psi_a}{\|Q_N \Psi_a\|}\right|,
\end{equation}
where $Q_N$ was defined in \eqref{Qdef}.
An elementary computation gives that
\begin{equation}\label{QNpsia}
Q_N \Psi_a=\frac{1}{|\mathcal{A}^{at}|}\sum_{b \in \mathcal{A}^{at}} \text{sgn}(b,a) \Psi_b.
\end{equation}
Here $\text{sgn}(b,a)$ is the sign  of the unique permutation $\pi$ with the properties
$\pi(A_i)=B_i$, for all $i \in \{1,\dots,M\}$ and $\pi|_{A_i}$ is an increasing function for all $i \in \{1,\dots,M\}$.
From \eqref{QNpsia} and \eqref{pairor} it follows that $Q_N \Psi_a \neq 0$. Moreover,
from \eqref{QNpsia} it follows that the right hand side of \eqref{Pi} does not depend
on the decomposition $a$.

 To show that the Feshbach map exists with this choice of $\Pi$, we note that $Q_N \Psi_a \in
\text{Dom}(H^\s)$ since $\Psi_a \in H^2(\R^{3N})$. Hence, the
condition (a) for the existence of the Feshbach map holds. Recall that
$E(y)$ was defined in \eqref{def:GSE}.
We will prove in Section \ref{Hbotbndseveral} the following proposition:
 \begin{proposition}\label{prop:Eimpliesgap}
(i) Suppose that Property (E) holds. Then there exist $C_1,\imsgap_1>0$, depending only
on $Z$, such that if $R \geq C_1 N^{\frac{4}{3}}$, then
\begin{equation}\label{Hbotbnd}
\H^{\s,\bot} \geq E(y)+2\imsgap_1,
\end{equation}
where $\H^{\s,\bot}=\Pi^\bot H^\s \Pi^\bot$ with $\Pi$ defined in \eqref{Pi}.
\newline
(ii) Suppose that Property (E') holds. Then the same conclusion as in part (i) holds with the constants
$C_1, \imsgap_1$ replaced by constants $C_1',\imsgap_1'$ which again are positive and dependent only on
$Z$.
 \end{proposition}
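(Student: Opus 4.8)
The plan is to establish the operator bound $\H^{\s,\bot}\ge E(y)+2\gamma_1$ by an IMS localization of $H^N(y)$, after a preliminary reduction. On $\Ran Q_N^\bot$ one has $\H^{\s,\bot}=0$, and since $E(y)<0$ by Lemma~\ref{lem:negGSE} the bound is trivial there once $\gamma_1$ is small; on $\Ran Q_N$ one has $H^\s=H^N(y)$, and for $\psi\in\Ran\Pi^\bot$ one computes $\langle\psi,\H^{\s,\bot}\psi\rangle=\langle\varphi,H^N(y)\varphi\rangle$ with $\varphi:=Q_N\psi$ antisymmetric, $\|\varphi\|\le\|\psi\|$ and $\varphi\perp Q_N\Psi_a$. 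As $E(y)+2\gamma_1<0$, it suffices to show $\langle\varphi,H^N(y)\varphi\rangle\ge(E(y)+2\gamma_1)\|\varphi\|^2$ for every antisymmetric $\varphi\perp Q_N\Psi_a$. Using $Q_N\Psi_a/\|Q_N\Psi_a\|$ as a trial vector together with \eqref{almosteig} and Newton's theorem (the one-electron densities of $\Psi_a$ are spherically symmetric and compactly supported, so $\langle\Psi_a,I_a\Psi_a\rangle=0$), one gets $E(y)\le\Einfty+O(Me^{-cR})$, so $E(y)$ may be replaced by $\Einfty$ up to an exponentially small error. Now introduce a smooth partition $\{J_a\}_{a\in\mathcal{A}}\cup\{J_\infty\}$ of $\R^{3N}$ with $J_\infty^2+\sum_a J_a^2=1$: $J_a$ is supported where the electrons labelled by $A_i$ lie within $\sim R/4$ of $y_i$ for every $i$ (and the others are $\gtrsim R/4$ away), $J_\infty$ where some electron is $\gtrsim R/4$ from all nuclei, with scales chosen so that $J_a\Psi_a=\Psi_a$. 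Then $\sum|\n J_a|^2\ls NR^{-2}$, so the IMS formula gives $\langle\varphi,H^N(y)\varphi\rangle=\sum_a\langle J_a\varphi,H^N(y)J_a\varphi\rangle+\langle J_\infty\varphi,H^N(y)J_\infty\varphi\rangle-O(NR^{-2})\|\varphi\|^2$, the last term $\ls\gamma_1\|\varphi\|^2$ once $R\ge C_1N^{4/3}$; moreover, since each $J_a$ is symmetric under permutations within each cluster and $\varphi$ is antisymmetric, $J_a\varphi\in\Ran Q_a$ and $J_\infty\varphi\in\Ran Q_N$, and $\langle\Psi_a,J_a\varphi\rangle=\langle Q_N\Psi_a,\varphi\rangle=0$, so by \eqref{phiapsia} the vector $J_a\varphi$ is orthogonal to $\Phi_a$ up to $O(Me^{-cR})\|\varphi\|$.

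It remains to bound each localized term from below by $\big(\Einfty+(\text{a fixed }Z\text{-only gap})-(\text{small losses})\big)\|\cdot\|^2$. For $J_\infty$: on its support some electron is far from all nuclei, so $H^N(y)\ge\Sigma_{N-1}-O(R^{-1})$ there by Theorem~\ref{hvz}, and $\Sigma_{N-1}-\Einfty\ge\min_i(E_{i,1}-E_i)>0$ is a gap depending only on $Z$ by Lemma~\ref{lem:negGSE} and Zhislin's theorem; this uses nothing about $\varphi\in\Ran\Pi^\bot$. For $a\in\mathcal{A}^{at}$ all clusters are neutral, so by Newton's theorem the monopole contributions to $I_a$ cancel and $|I_a|\ls M^{4/3}R^{-1}$ on $\supp J_a$ (since pairwise nuclear distances are $\ge R$, whence $\sum_{i<j}|y_i-y_j|^{-2}\ls M^{4/3}R^{-2}$), while $H_a^\s-\Einfty\ge g$ on $\{\Phi_a\}^\bot$ with $g=\min_i\operatorname{dist}\!\big(\s(H_{A_i}^\s)\setminus\{E_i\},E_i\big)>0$ depending only on $Z$ (Theorem~\ref{zysl}); combining with $J_a\varphi\perp\Phi_a+O(Me^{-cR})$ gives $\langle J_a\varphi,H^N(y)J_a\varphi\rangle\ge(\Einfty+g-CM^{4/3}R^{-1}-O(M^2e^{-cR}))\|J_a\varphi\|^2$, and $M^{4/3}R^{-1}\le g/2$ for $R\ge C_1N^{4/3}$ with $C_1=C_1(Z)$ large (using $M\le N$).

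The crux is $a\in\mathcal{A}\setminus\mathcal{A}^{at}$, where the charges $n_i:=Z_i-|A_i|$ satisfy $\sum_in_i=0$, $\sum_i|n_i|>0$; only charged clusters produce slowly decaying interactions, so $|I_a|\ls T(a)^2R^{-1}$ on $\supp J_a$ with $T(a):=\tfrac12\sum_i|n_i|\le N$, and $H_a^\s\ge\inf\s(H_a^\s)=\sum_iE_{i,n_i}$. The decisive input is a quantitative Property~(E'): there is $c_E>0$ depending only on $Z$ with $\sum_iE_{i,n_i}\ge\Einfty+2c_E\,T(a)$; granting it, $\langle J_a\varphi,H^N(y)J_a\varphi\rangle\ge(\Einfty+2c_ET(a)-CT(a)^2R^{-1})\|J_a\varphi\|^2\ge(\Einfty+c_E)\|J_a\varphi\|^2$ once $R\ge(C/c_E)N$, which follows from $R\ge C_1N^{4/3}$ — the gap is proportional to $T(a)$ and so outpaces the $T(a)^2/R$ loss. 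To prove the quantitative~(E') in part~(i) one uses Property~(E): neutralize $(n_i)$ by $T(a)$ elementary transfers, each moving one electron from a still-negative cluster $j$ to a still-positive cluster $i$; by Property~(E) each step strictly lowers the total energy, and examining the two cases — $j$ over-saturated (so the $j$-side of that step contributes $0$ and the $i$-side a first ionization energy) versus not (so only charges bounded in terms of $Z$ enter) — each step lowers it by at least a $Z$-only amount $2c_E>0$, and summing over the $T(a)$ steps gives the claim. In part~(ii), assuming only Property~(E'), the finitely many admissible configurations give $2\gamma_1':=\min_{(n_i)}\big(\sum_iE_{i,n_i}-\Einfty\big)>0$ directly, and reducing a minimizer to a single oppositely-charged pair (via monotonicity of successive ionization and affinity energies) controls $\gamma_1'$ by $Z$, at the price of a much larger constant; the rest of the argument is identical with $c_E$ replaced by $\gamma_1'$. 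Assembling the three regions, choosing $\gamma_1$ to be a small $Z$-only multiple of $\min(c_E,g,\min_i(E_{i,1}-E_i))$ and $C_1=C_1(Z)$ large enough that every error $NR^{-2}$, $M^{4/3}R^{-1}$, $CN/R$, $Me^{-cR}$ is suitably small, and recalling $E(y)\le\Einfty+O(Me^{-cR})$, yields \eqref{Hbotbnd}.

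I expect the main obstacle to be exactly the ionic case: one needs a lower bound for $\sum_iE_{i,n_i}-\Einfty$ that is simultaneously strictly positive, independent of the number of atoms $M$, and large enough (indeed $\propto T(a)$) to dominate the surviving $\sim T(a)^2/R$ inter-cluster Coulomb energy — Property~(E) supplies all of this through the per-transfer argument, which is why part~(i) is both stronger and simpler than part~(ii). The remaining points — that $J_a\varphi\in\Ran Q_a$ and $\langle\Psi_a,J_a\varphi\rangle=0$ so that antisymmetry and the orthogonality $\varphi\perp Q_N\Psi_a$ survive localization, the Newton-theorem cancellation giving $\langle\Psi_a,I_a\Psi_a\rangle=0$ and $|I_a|\ls M^{4/3}R^{-1}$ in the atomic case, and the uniform $Z$-only bounds for the gaps $g$ and $E_{i,1}-E_i$ over $Z_i\le Z$ — are routine and would be carried out in the body of the proof.
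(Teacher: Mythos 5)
Your overall architecture (IMS localization, a spectral gap on atomic regions obtained by projecting out the ground states, Property (E) supplying a gap proportional to the number of transferred electrons on ionic regions) is the same as the paper's, but there are two genuine gaps. The first is the far-electron region: you dispose of $J_\infty$ by writing $H^N(y)\ge \Sigma_{N-1}-O(\cdot)$ there and then asserting $\Sigma_{N-1}-\Einfty\ge \min_i(E_{i,1}-E_i)$ "by Lemma \ref{lem:negGSE} and Zhislin's theorem". That inequality is not a consequence of HVZ/Zhislin, which concern single-ion Hamiltonians and merely locate essential spectra; $\Sigma_{N-1}$ defined in \eqref{def:sigmam} is the ground-state energy of the full interacting $(N-1)$-electron, $M$-nucleus system, and a priori it could lie far below $\Einfty$ through ionic rearrangements — exactly the scenario that Property (E)/(E') must be invoked to exclude. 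In the paper this is the content of Lemma \ref{lem:gammac}: one bounds all $\Sigma_{N-k}$ from below by the recursion \eqref{sigmarec}, each step requiring its own IMS localization of $H^{N-k,\s}(y)$, the ionization gaps $D_5$ of \eqref{def:D5}, and the bound $E(y)\le -ND_6$; a large part of the hypothesis $R\ge C_1N^{4/3}$ is consumed precisely here. Without an argument of this type your far-electron term is simply not controlled, and this is not a "routine" point to be filled in later.

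The second gap is part (ii). Your claim that under (E') alone one may take $2\gamma_1'=\min_{(n_i)}\big(\sum_iE_{i,n_i}-\Einfty\big)$ "directly" overlooks that the minimum runs over a family of configurations whose size grows with $M$, so positivity of each term gives no lower bound depending only on $Z$; the proposed remedy, "monotonicity of successive ionization and affinity energies", is not available (it is not implied by (E') and is open). Moreover, even granted a $Z$-only constant gap, it would not suffice: by your own accounting the inter-ion Coulomb term is of size $T(a)^2/R$, which under $R\sim N^{4/3}$ can be as large as $N^{2/3}$, so one needs a gap growing at least linearly in the number of transferred electrons. The paper achieves exactly this under (E'): Lemmas \ref{lem:dirichlet} and \ref{lem:groupbreaking} split any neutral ionic configuration into neutral subgroups of at most $Z^2+2\delta_{Z,1}$ ions, each subgroup contributing a $Z$-only gap $D_4'$, and Lieb's maximal-ionization bound handles ions of charge below $-Z$ (Lemmas \ref{lem:Habgap'} and \ref{lem:last}), yielding $\inf\s(H_b^\s)-\inf\s(H_a^\s)\ge 2lD_4'$, proportional to the number of transfer steps. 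Your part (i) per-transfer argument is essentially the paper's Lemmas \ref{lem:bca} and \ref{lem:gapde} and is sound modulo the explicit use of Lieb's bound to make the step gap $Z$-only; remaining sketch-level inaccuracies (e.g. the pointwise bound $|I_a|\lesssim T(a)^2R^{-1}$ omits the dipole contributions of neutral clusters, which at your localization scale $\sim R$ are of order $M^{4/3}/R$) are repairable under the stated hypothesis on $R$.
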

The constants $C_2, C_2'$ in Theorem \ref{thm:vdW-maxspin} are different,
 only because the constants $\imsgap_1, \imsgap_1'$ in Proposition
 \ref{prop:Eimpliesgap} are different:
in the proof we shall use the boundedness of the resolvent
 $(\H^{\s,\bot} - E(y))^{-1}$
but the estimate of its norm is different in
 the cases (i) and (ii).

 From Proposition \ref{prop:Eimpliesgap} and Property (E), respectively (E'), it follows that the Feshbach map \eqref{FP}
 is defined at $\lambda=E(y)$ when $R \geq C_1 N^{\frac{4}{3}}$, repsectively $R \geq C_1' N^{\frac{4}{3}}$.
 In Section \ref{sec:setup} we will prove the following lemma:
 \begin{lemma}\label{lem:Eyless0}
There exists $C$ so that if $R \geq C$, then $E(y)<0$.
 \end{lemma}
The operator $H^\s$ differs from $H^\s|_{\Ran  Q_N}$ only because every element of
$(\Ran  Q_N)^\bot$ is eigenvector of $H^\s$ with eigenvalue zero. Therefore, from
Lemma \ref{lem:Eyless0} and Theorem \ref{zysl1} it follows that $E(y)$ is the ground
 state energy of $H^\s$ and that it belongs to its discrete spectrum.
  Therefore, under the assumptions of Proposition \ref{prop:Eimpliesgap} and Lemma \ref{lem:Eyless0}, we have that \eqref{FSE} holds.
From \eqref{FSE} and \eqref{FP} we obtain that
\begin{equation}\label{FSE1}
E(y)= (\Pi H^\s \Pi+ V(E(y)))|_{\Ran  \Pi}.
\end{equation}
In view of this equation, estimating $E(y)$ reduces to estimating the terms
$\Pi H^\s \Pi$ and $V(E(y)$. To this end we will prove
in Section \ref{sec:setup} the following lemma and proposition:
\begin{lemma}\label{lem:PHP}
With $\Pi$ defined in \eqref{Pi}, we have that
 \begin{equation}\label{PHP}
\Pi \H^\s \Pi \doteq_M \Einfty \Pi.
\end{equation}
\end{lemma}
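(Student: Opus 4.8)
The plan is to turn the operator identity into a scalar estimate, expand $\Pi$ over the atomic decompositions using \eqref{QNpsia} and the disjoint-support property \eqref{pairor}, split $H^\s$ into intra-cluster and inter-cluster parts $H_a+I_a$, handle the intra-cluster part with the near-eigenvalue bound \eqref{almosteig}, and kill the inter-cluster part \emph{exactly} via Newton's theorem.

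First, $\Pi$ is the rank-one projection onto $\widetilde\Psi:=Q_N\Psi_a/\|Q_N\Psi_a\|$, so $\Pi H^\s\Pi=\langle\widetilde\Psi,H^\s\widetilde\Psi\rangle\,\Pi$ and it suffices to prove $\langle\widetilde\Psi,H^\s\widetilde\Psi\rangle\doteq_M\Einfty$. Since $\widetilde\Psi\in\Ran Q_N$ and $Q_N$ commutes with $H^N(y)$, this equals $\langle\widetilde\Psi,H^N(y)\widetilde\Psi\rangle$. By \eqref{pairor} the family $\{\Psi_b\}_{b\in\mathcal{A}^{at}}$ is orthonormal, so \eqref{QNpsia} gives $\|Q_N\Psi_a\|^2=|\mathcal{A}^{at}|^{-1}$ and
\begin{equation*}
\langle\widetilde\Psi,H^N(y)\widetilde\Psi\rangle=\frac{1}{|\mathcal{A}^{at}|}\sum_{b,b'\in\mathcal{A}^{at}}\text{sgn}(b,a)\,\text{sgn}(b',a)\,\langle\Psi_b,H^N(y)\Psi_{b'}\rangle .
\end{equation*}
The off-diagonal terms vanish: if $b\neq b'$ there is an electron label $m$ lying in cluster $i$ of $b$ and in cluster $i'\neq i$ of $b'$, and by \eqref{Psiaeq}, \eqref{psiAj}, \eqref{chiR} the functions $\Psi_b,\nabla\Psi_b$ are supported in $\{|x_m-y_i|\le R/6\}$ and $\Psi_{b'},\nabla\Psi_{b'}$ in $\{|x_m-y_{i'}|\le R/6\}$, two regions that are disjoint because $|y_i-y_{i'}|\ge R$; writing $H^N(y)$ as a sum of form-Laplacians $-\Delta_{x_j}$, Coulomb multiplication operators and a constant, integration by parts in the kinetic terms (valid since $\Psi_b\in H^2$) together with $\Psi_b\Psi_{b'}\equiv0$ yields $\langle\Psi_b,H^N(y)\Psi_{b'}\rangle=0$. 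By permutation symmetry of $H^N(y)$ the surviving diagonal terms are all equal to $\langle\Psi_a,H^N(y)\Psi_a\rangle$, hence $\langle\widetilde\Psi,H^\s\widetilde\Psi\rangle=\langle\Psi_a,H^N(y)\Psi_a\rangle$ for a fixed $a=\{A_1,\dots,A_M\}\in\mathcal{A}^{at}$.

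Now write $H^N(y)=H_a+I_a$ as in \eqref{Hadecomp}, with $I_a=\sum_{k<k'}I_{kk'}$ where $I_{kk'}$ collects the interaction terms between clusters $k$ and $k'$ (including the nuclear term $e^2Z_kZ_{k'}/|y_k-y_{k'}|$). Since $H_a-\Einfty=\sum_i(H_{A_i}-E_i)$, Cauchy--Schwarz and \eqref{almosteig} give $|\langle\Psi_a,(H_a-\Einfty)\Psi_a\rangle|\le\|(H_a-\Einfty)\Psi_a\|=O(M e^{-cR})$. For the interaction part, because $\Psi_a$ factorizes over the disjoint clusters, $\langle\Psi_a,I_{kk'}\Psi_a\rangle$ equals the electrostatic energy $e^2\iint\frac{d\mu_k(x)\,d\mu_{k'}(x')}{|x-x'|}$ of the two charge distributions $\mu_\ell:=Z_\ell\delta_{y_\ell}-n_\ell$, where $n_\ell$ is the one-electron density of $\psi_{A_\ell}$; each $\mu_\ell$ has total charge zero and $\supp n_\ell\subseteq\overline{B(y_\ell,R/6)}$. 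The crucial point is that $n_\ell$ is spherically symmetric about $y_\ell$: by Condition (D) and Proposition \ref{spherical} the one-electron density of $\phi_\ell$ is radial, and since the cut-off $\chi_R$ in \eqref{psiAj} is radial, so is $n_\ell$. Newton's theorem (see \cite{LL}) then shows that the Coulomb potential generated by $\mu_k$ vanishes identically on $\{|x-y_k|>R/6\}$; as $\supp\mu_{k'}\subseteq\overline{B(y_{k'},R/6)}$ lies at distance $\ge5R/6$ from $y_k$, we conclude $\langle\Psi_a,I_{kk'}\Psi_a\rangle=0$, hence $\langle\Psi_a,I_a\Psi_a\rangle=0$. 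Altogether $\langle\widetilde\Psi,H^\s\widetilde\Psi\rangle=\Einfty+O(M e^{-cR})$, i.e.\ $\Pi H^\s\Pi\doteq_M\Einfty\Pi$.

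The step requiring the most care is identifying the cut-off density $n_\ell$ as spherically symmetric — one must rerun the rotation/non-degeneracy argument behind Proposition \ref{spherical} on $\psi_{A_\ell}=\phi_\ell\,\chi_R^{\otimes Z_\ell}/\|\cdot\|$, exploiting that $\chi_R$ is radial — and then the Newton bookkeeping, keeping track of the point mass at the nucleus and of the fact that $\supp\mu_{k'}$ lies strictly outside the ball on which the potential of $\mu_k$ is nontrivial. Everything else is routine once \eqref{QNpsia}, \eqref{pairor} and \eqref{almosteig} are available.
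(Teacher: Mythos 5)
Your proof is correct and follows essentially the same route as the paper: reduce $\Pi H^\s\Pi$ to $\langle\Psi_a,H\Psi_a\rangle$ via \eqref{QNpsia} and the disjoint supports \eqref{pairor}, control the $H_a$ part with \eqref{almosteig}, and kill $\langle\Psi_a,I_a\Psi_a\rangle$ exactly by Newton's theorem using the spherical symmetry of the (radially cut-off) one-electron densities, which is precisely the content of the paper's Lemma \ref{lem:newton}. Your packaging of the interaction term as the mutual energy of two neutral radial charge distributions is just a mild repackaging of the paper's term-by-term computation \eqref{genexperror2}--\eqref{genexperror3}.
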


\begin{proposition}\label{prop:Feshest}
Suppose that \eqref{Hbotbnd} holds. Then, there exist
 $C_3, C_7>0$ depending on $Z$ only, so that if $R \geq C_7 \ln M$, then
\begin{equation}\label{Uest}
 \bigg|V(E(y))|_{\Ran  \Pi}- \sum_{i<j}^{1,M} \frac{ e^4
\sigma_{ij}}{|y_i-y_j|^6}\bigg| \ls
\sum_{i<j}^{1,M}\frac{1}{|y_i-y_j|^7}+\frac{M^4}{R^9}\big(1+  N^Z e^{-C_3 R}\big),
\end{equation}
where the constants $\sigma_{ij}$ were defined in \eqref{def:sigmaij}.
\end{proposition}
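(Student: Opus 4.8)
The plan is to run the Feshbach identity \eqref{FSE1} backwards: reduce $V(E(y))|_{\Ran \Pi}$ to a single two-sided matrix element of $(H^{\sigma,\bot}-E(y))^{-1}$ against the inter-cluster interaction, then expand that interaction in multipoles and decouple the resolvent cluster by cluster. The gap bound \eqref{Hbotbnd} enters only through $\|(H^{\sigma,\bot}-E(y))^{-1}\|\le(2\gamma_1)^{-1}$, and \eqref{ineq:Hijsbot} through $\|R^{\sigma,\bot}_{A_iA_j}\|\le c^{-1}$.

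\emph{Reduction to a two-cluster matrix element.} Since $\Pi=|\Phi\rangle\langle\Phi|$ with $\Phi=Q_N\Psi_a/\|Q_N\Psi_a\|$, we have $V(E(y))|_{\Ran \Pi}=\langle\Pi^\bot H^\sigma\Phi,(H^{\sigma,\bot}-E(y))^{-1}\Pi^\bot H^\sigma\Phi\rangle$. Using $H=H_a+I_a$, the commutation of $Q_N$ with $H$ and with $\Pi$, the estimate \eqref{almosteig}, and the disjointness of supports \eqref{pairor} (which gives $\|Q_N\Psi_a\|^2=|\mathcal{A}^{at}|^{-1}$ and so kills the factors $|\mathcal{A}^{at}|^{\pm1/2}$), one obtains $\Pi^\bot H^\sigma\Phi=|\mathcal{A}^{at}|^{1/2}Q_NI_a\Psi_a+O(Me^{-cR})$. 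The key simplification is that $\langle\Psi_a,I_a\Psi_a\rangle=0$ \emph{exactly}: each $\langle\Psi_a,I_{A_iA_j}\Psi_a\rangle$ is the classical Coulomb energy of two neutral charge distributions, each spherically symmetric about its nucleus (spherical symmetry of $\rho_{\psi_i}$ from Proposition \ref{spherical} together with that of $\chi_R$) and supported in a ball of radius $R/6$; these balls are disjoint, so Newton's theorem forces the interaction to vanish. Hence $\Pi^\bot Q_NI_a\Psi_a=Q_NI_a\Psi_a$, and using \eqref{QNpsia} and again the disjoint supports,
\[
V(E(y))|_{\Ran \Pi}=\langle I_a\Psi_a,(H^{\sigma,\bot}-E(y))^{-1}I_a\Psi_a\rangle+\sum_{b\ne a}\text{sgn}(b,a)\,\langle I_a\Psi_a,(H^{\sigma,\bot}-E(y))^{-1}I_b\Psi_b\rangle+O\big(M^3R^{-3}e^{-cR}\big).
\]
The terms with $b\ne a$ are exponentially small: $\Psi_a$ and $\Psi_b$ are separated in configuration space by a distance $\gtrsim k(b)R$, with $k(b)$ the number of electrons that must be moved, and pushing an exponential weight through $(H^{\sigma,\bot}-E(y))^{-1}$ by a boosted Hamiltonian, exactly as in the sketch of Theorem \ref{thm:upper}, turns this into a factor $e^{-ck(b)R}$; summing over the at most $(NM)^{k}$ decompositions at distance $k$ gives a total that grows only polynomially in $M$ and $N$ times $e^{-cR}$, which is absorbed into $\frac{M^4}{R^9}N^Ze^{-C_3R}$ once $R\ge C_7\ln M$ with $C_7$ chosen (depending only on $Z$) so large that $N^Ze^{-C_3R}\lesssim1$.

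\emph{Multipole expansion and decoupling of the resolvent.} On $\mathrm{supp}\,\Psi_a$ the electrons of cluster $i$ lie within $R/6$ of $y_i$, so $I_{A_iA_j}$ may be Taylor expanded in $|y_i-y_j|^{-1}$; neutrality of both atoms annihilates the $|y_i-y_j|^{-1}$ and $|y_i-y_j|^{-2}$ terms, and the $|y_i-y_j|^{-3}$ term is $\frac{e^2}{|y_i-y_j|^3}f^{(a)}_{ij}$, where $f^{(a)}_{ij}$, after translating $x_{A_i}-y_i$, $x_{A_j}-y_j$ and relabeling, is precisely $f_{ij,v}$ of \eqref{def:fij} with $v=(y_i-y_j)/|y_i-y_j|$; the moment bounds \eqref{Psiadecay} give $I_{A_iA_j}\Psi_a=\frac{e^2}{|y_i-y_j|^3}f^{(a)}_{ij}\Psi_a+r_{ij}\Psi_a$ with $\|r_{ij}\Psi_a\|\lesssim|y_i-y_j|^{-4}$. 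Substituting, $V(E(y))|_{\Ran \Pi}$ becomes a double sum over cluster pairs of matrix elements $\langle g_{ij}\Psi_a,(H^{\sigma,\bot}-E(y))^{-1}g_{kl}\Psi_a\rangle$, $g\in\{f^{(a)},r\}$. Because $(P_{\phi_i}\otimes I)f_{ij,v}(\phi_i\otimes\phi_j)=0$ (spherical symmetry again), $f^{(a)}_{ij}\Psi_a$ lies, up to $O(e^{-cR})$, in the subspace in which clusters $i,j$ are excited and the remaining clusters are in their ground states. One then replaces $(H^{\sigma,\bot}-E(y))^{-1}$ by the decoupled resolvent acting as $R^{\sigma,\bot}_{A_iA_j}$ on clusters $i,j$ and trivially elsewhere, using the second resolvent identity, the bounds $\|(H^{\sigma,\bot}-E(y))^{-1}\|\le(2\gamma_1)^{-1}$ and $\|R^{\sigma,\bot}_{A_iA_j}\|\le c^{-1}$, and the fact that the omitted part of $H-H_a$, evaluated between such ``two-cluster-excited'' vectors, contributes only $O(|y_i-y_j|^{-3}+e^{-cR})$ (the terms $I_{A_pA_q}$ with $\{p,q\}\cap\{i,j\}=\emptyset$ act on ground states, hence vanish to leading order by Newton, while their orthogonal parts re-excite spectator clusters and drop out on pairing). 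This gives: the diagonal $f$--$f$ matrix elements equal $\sigma_{ij}$ up to $O(|y_i-y_j|^{-3}+e^{-cR})$ by \eqref{sigmaijvdef}, \eqref{def:sigmaij} and Lemma \ref{lem:sigma}; the off-diagonal $f$--$f$ ones ($\{i,j\}\ne\{k,l\}$) vanish up to the same error because the excitation patterns are orthogonal; and the $f$--$r$ and $r$--$r$ ones are $O(|y_i-y_j|^{-7})$ on the diagonal, smaller off it.

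\emph{Assembling the error, and main obstacle.} The diagonal $f$--$f$ terms sum to $\sum_{i<j}e^4\sigma_{ij}|y_i-y_j|^{-6}$; the diagonal $f$--$r$ and $r$--$r$ contributions are $\lesssim\sum_{i<j}|y_i-y_j|^{-7}$; all the $O(|y_i-y_j|^{-3})$ corrections, multiplied by $\frac{e^4}{|y_i-y_j|^3|y_k-y_l|^3}$ and summed using the crude bound $\sum_{i<j}|y_i-y_j|^{-3}\le CM^2R^{-3}$, yield the $\frac{M^4}{R^9}$ term; the exponentially small pieces accumulate into $\frac{M^4}{R^9}N^Ze^{-C_3R}$. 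This is \eqref{Uest}. The main obstacle is making the decoupling of the $N$-body resolvent $(H^{\sigma,\bot}-E(y))^{-1}$ into the two-cluster resolvents $R^{\sigma,\bot}_{A_iA_j}$ genuinely quantitative, so that the error is $O(R^{-3})$ rather than, say, $O(MR^{-3})$, and controlling the growth in $M$ and $N$ of the prefactors of the exponentially small remainders; this is precisely where the techniques of \cite{AS} are reworked, exploiting Newton's theorem and ground-state orthogonality to discard the spectator clusters and manipulating the resolvent with boosted Hamiltonians to move exponential weights past it.
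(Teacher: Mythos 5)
Your proposal reproduces the right skeleton (Feshbach matrix element with the cutoff antisymmetrized product state, Newton's theorem killing $\langle\Psi_a,I_a\Psi_a\rangle$ exactly, multipole expansion producing $f_{ij,v}$, second resolvent identity to decouple, exponential weights via boosted Hamiltonians), but the step that actually carries the content of the proposition is asserted rather than proved, and you say so yourself: the replacement of the full projected resolvent $(H^{\s,\bot}-E(y))^{-1}$ by the two--cluster resolvents $R^{\s,\bot}_{A_iA_j}$ with an error that is $O(R^{-3})$ \emph{per pair, uniformly in $M$ and $N$}. Without that, none of the claims in your ``decoupling'' paragraph — that the omitted part of $H-H_a$ contributes $O(|y_i-y_j|^{-3}+e^{-cR})$ between two-cluster-excited vectors, that the off-diagonal $f$--$f$ terms are of the same size, that the re-excited spectator clusters ``drop out on pairing'' — has a proof; with the full resolvent in the middle the clusters are coupled and Newton's theorem no longer applies to the spectators, so orthogonality of excitation patterns does not make those terms small by itself. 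In the paper this is exactly the estimate of the correction term $V_{kl,2}$ in Lemma \ref{lem:Uentwicklung}: it requires the norm bound \eqref{sumpsiaia1} on $\sum_a P_{\Psi_a}I_a$ (a statement about the whole sum, using the disjoint supports, not about a single term), the decomposition \eqref{Gklest} into $U_{kl,4},\dots,U_{kl,7}$, the second equivalence relation with the counting bound $|X'|\le N^{2Z}$ in \eqref{est:X'card} — which is the actual origin of the $N^Z e^{-C_3R}$ factor in \eqref{Uest} — and the boosted-resolvent Lemma \ref{lem:Resdelta-est} for the \emph{two-cluster} operator. Your proposal contains none of these mechanisms, so the specific error structure $\frac{M^4}{R^9}(1+N^Ze^{-C_3R})$ is obtained only by bookkeeping on bounds you have not established.

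A second concrete problem is your treatment of the cross terms $b\neq a$: you propose to push exponential weights through the \emph{full} resolvent $(H^{\s,\bot}-E(y))^{-1}$ by a Combes--Thomas argument. The paper never does this, and it is not a routine step: $H^{\s,\bot}=\Pi^\bot H^\s\Pi^\bot$ contains the nonlocal rank-one projection built from the full antisymmetrized product, which must itself be conjugated (compare Lemma \ref{PdeltaminusPlemma}, proved only for the two-cluster projection), and a weight that separates $\supp\Psi_a$ from $\supp\Psi_b$ depends on many electron coordinates, so controlling the commutator terms $\delta|\nabla\varphi|^2$ uniformly in $N$ requires either $\delta\sim 1/N$ (destroying the gain) or a carefully constructed weight; you address neither point. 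The paper's route avoids this entirely: all cross terms are kept in \eqref{Uedot}, the full resolvent is exchanged for the local two-cluster resolvent first (which acts trivially on spectator coordinates, so the cutoffs and Newton's theorem annihilate most cross terms identically), and only then are exponential weights used, through the two-cluster resolvent where Lemma \ref{lem:Resdelta-est} applies; the residual full-resolvent piece is bounded by $M^2/R^9$, not claimed exponentially small. So the gap is not a minor omission: what is missing is precisely the reworked quantitative analysis that distinguishes this proposition from the earlier treatment in \cite{AS}.
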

Theorem \ref{thm:vdW-maxspin} follows
from Propositions \ref{prop:Eimpliesgap} and \ref{prop:Feshest}, and from \eqref{PHP}, \eqref{FSE1} and \eqref{interene}.
The constant $C_3$ in Proposition \ref{prop:Feshest} is the same as the constant $C_3$
in Theorem \ref{thm:vdW-maxspin}.
\begin{remark}
The assumption on $R$ in Proposition \ref{prop:Feshest} is weaker
than the assumption on $R$ in Theorem \ref{thm:vdW-maxspin}. This is explained by Proposition \ref{prop:Eimpliesgap}. We need to assume more  on $R$ in order to prove \eqref{Hbotbnd}. The remainder in \eqref{Uest} is small relative to the leading order provided that $R\geq CM^{\frac{2}{3}}$ for some $C$. Given this observation we believe that the assumption on $R$ in Proposition \ref{prop:Eimpliesgap} can be improved. We have been, however, unable to do this.
\end{remark}



\section{Proof of Proposition \ref{prop:Feshest} and of Lemmas \ref{lem:PHP} and \ref{lem:Eyless0} } \label{sec:setup}

 The proofs are organized as follows.  In Section \ref{sub:PHPEst}
 we prove Lemmas \ref{lem:PHP} and \ref{lem:Eyless0}. In
 Section \ref{Esharplower} we prove rough estimates for $E(y)$ showing that it is fairly close
 to $E(\infty)$.  In Section \ref{sec:VEest} we prove Proposition \ref{prop:Feshest} assuming a lemma,
 which in turn we prove in Section \ref{sec:Resdelta-est}.

 \subsection{Proof of Lemmas \ref{lem:PHP} and \ref{lem:Eyless0}}\label{sub:PHPEst}
We begin with the proof of Lemma \ref{lem:PHP}.
Using  \eqref{pairor}, \eqref{Pi}, \eqref{QNpsia}  and that $\text{supp}(H \Psi_a) \subset \text{supp}(\Psi_a)$ for all $a \in \mathcal{A}^{at}$,
we obtain that
\begin{equation*}
\Pi \H^\s \Pi=\lan \Psi_a, H \Psi_a \ran \Pi,
\end{equation*}
because all the cross terms originating from the antisymmetrization
vanish. Therefore, using \eqref{almosteig} and \eqref{Hadecomp}, we
obtain that
\begin{equation}\label{PHP2}
\Pi \H^\s \Pi \doteq_M \Einfty \Pi + \lan \Psi_a, I_a \Psi_a \ran \Pi.
\end{equation}
\begin{lemma}\label{lem:newton}
For all $a \in \mathcal{A}^{at}$, we have
\begin{equation}\label{psiaIbpsib}
\langle \Psi_{a}, I_a \Psi_{a} \rangle = 0.
\end{equation}
\end{lemma}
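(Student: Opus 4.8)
The plan is to group the inter-cluster interaction $I_a$ into a sum of pairwise atomic interactions and to show, using Newton's theorem together with the spherical symmetry of the cut-off atomic one-electron densities, that each pairwise term has vanishing expectation in $\Psi_a$.

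First I would write, for $a=\{A_1,\dots,A_M\}\in\mathcal{A}^{at}$, the decomposition $I_a=\sum_{i<j}^{1,M} I_{ij}$, where
\[
I_{ij}=\sum_{k\in A_i}\sum_{l\in A_j}\frac{e^2}{|x_k-x_l|}-\sum_{k\in A_i}\frac{e^2 Z_j}{|x_k-y_j|}-\sum_{l\in A_j}\frac{e^2 Z_i}{|x_l-y_i|}+\frac{e^2 Z_i Z_j}{|y_i-y_j|}
\]
collects exactly the interaction terms between the atom at $y_i$ and the atom at $y_j$ (the kinetic terms cancel in $H-H_a$, and every inter-cluster Coulomb term is accounted for once). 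Since $\Psi_a=\prod_{m=1}^M\psi_{A_m}$ with each $\psi_{A_m}$ normalized, and $I_{ij}$ is a multiplication operator depending only on the coordinates in $A_i\cup A_j$, integrating out the clusters $A_m$ with $m\neq i,j$ gives $\langle\Psi_a,I_{ij}\Psi_a\rangle=\langle\psi_{A_i}\otimes\psi_{A_j},I_{ij}(\psi_{A_i}\otimes\psi_{A_j})\rangle$. Next I would introduce the one-electron density $m_i$ of $\psi_{A_i}$; since $\psi_{A_i}$ is antisymmetric, all its single-particle marginals coincide with $m_i$, and $|A_i|=Z_i$, so evaluating the expectation gives
\[
\langle\psi_{A_i}\otimes\psi_{A_j},I_{ij}(\psi_{A_i}\otimes\psi_{A_j})\rangle=e^2 Z_i Z_j\Bigl(\int\!\!\int\frac{m_i(x)m_j(x')}{|x-x'|}\,dx\,dx'-\int\frac{m_i(x)}{|x-y_j|}\,dx-\int\frac{m_j(x')}{|x'-y_i|}\,dx'+\frac{1}{|y_i-y_j|}\Bigr).
\]

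Then I would observe that $m_i$ is normalized, is supported in $B(y_i,R/6)$ by the definition \eqref{chiR}, \eqref{psiAj} of the cut-off, and is spherically symmetric about $y_i$. The spherical symmetry follows as in the proof of Proposition \ref{spherical}: under Condition (D) the ground state $\phi_i$ is non-degenerate, so $T_U\phi_i=c(U)\phi_i$ with $|c(U)|=1$ for every rotation $U$, whence $|\phi_i|^2$ is invariant under the simultaneous rotation of all its arguments; since $\chi_1$ is radial, the same holds for $|\psi_i|^2$, and integrating out $Z_i-1$ variables (the rotation having Jacobian $1$) yields a radial marginal. Because $R=\min_{i<j}|y_i-y_j|$, both $y_j$ and $B(y_j,R/6)$ lie at distance $>R/6$ from $y_i$, so Newton's theorem (\cite{LL} Section 9.7) applies to each of the four integrals: $\int\frac{m_i(x)}{|x-y_j|}\,dx=\frac{1}{|y_i-y_j|}$, and, iterating Newton's theorem, $\int\!\!\int\frac{m_i(x)m_j(x')}{|x-x'|}\,dx\,dx'=\int\frac{m_j(x')}{|x'-y_i|}\,dx'=\frac{1}{|y_i-y_j|}$. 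Hence the parenthesis equals $(1-1-1+1)/|y_i-y_j|=0$, so $\langle\Psi_a,I_{ij}\Psi_a\rangle=0$ for every pair $i<j$, and summing yields $\langle\Psi_a,I_a\Psi_a\rangle=0$.

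The computation of the expectation values is routine bookkeeping; the only point requiring care is the spherical symmetry of the cut-off marginal $m_i$, for which it is essential that $\chi_1$ was chosen radial and that Condition (D) makes $\phi_i$ non-degenerate, so that the argument of Proposition \ref{spherical} upgrades to joint rotation invariance of $|\psi_i|^2$. Everything else is repeated application of Newton's theorem, and the argument works for every $R>0$, consistently with the lemma carrying no hypothesis on $R$.
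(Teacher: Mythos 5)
Your proof is correct and follows essentially the same route as the paper: decompose $I_a$ into pairwise atomic interactions, use the spherical symmetry of the cut-off one-electron density (non-degeneracy via Condition (D) plus the radial cut-off) and Newton's theorem, and observe that the four Coulomb contributions cancel because the supports lie in disjoint balls of radius $R/6$. The only difference is bookkeeping: you pass to single-particle marginals of the whole pairwise interaction, while the paper treats each electron pair term $I_{ij}^{lm}$ separately, and your explicit remark that Proposition \ref{spherical} transfers to the cut-off states is a welcome clarification of a point the paper glosses over.
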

\begin{proof}
 Let $a=\{A_1,\dots,A_M\}$. We have that
\begin{equation}\label{Iabreakup}
I_{a}=\sum_{i<j}^{1,M} I_{A_i A_j},\ \quad I_{A_i A_j}:=\sum_{l \in
A_i, m \in A_j}I_{ij}^{lm},
\end{equation}
where
\begin{equation}\label{bijkl}
I_{ij}^{lm}=-\frac{e^2}{|y_j-x_l|}-\frac{e^2}{|y_i-x_m|}+\frac{e^2}{|y_i-y_j|}+\frac{e^2}{|x_l-x_m|}.
\end{equation}
We pass to the variables
\begin{equation}\label{zlm} z_{lr}=x_l-y_r,\
\forall  l \in A_r, \forall r \in \{1,\dots,M\},
\end{equation} and write $z_{A_r}=(z_{lr}: l \in A_r)$ and
$dz_{A_r}=\prod_{l \in A_r} dz_{lr}$.  Using \eqref{Psiaeq} and
\eqref{psiAj}, we obtain that
\begin{equation}\label{intdijkl}
\langle \Psi_a,I_{ij}^{lm} \Psi_a\rangle=\int dz_{A_j} dz_{A_i}
\tilde{I}_{ij}^{lm} |\psi_{i}( z_{A_i})|^2 |\psi_{j}(z_{A_j})|^2,
\end{equation}
where
\begin{equation}\label{dijkl}
\tilde{I}_{ij}^{lm}=-\frac{e^2}{|y_{ij}+z_{li}|}-\frac{e^2
}{|y_{ij}-z_{mj}|}+\frac{e^2}{|y_{ij}|}+\frac{e^2}{|y_{ij}+z_{li}-z_{mj}|},
\end{equation}
    with $y_{ij}=y_i-y_j$.
We will show that
\begin{equation}\label{onesymmetryenough'}
\chi_{2R}^{\otimes Z_j} \int dz_{A_i} \tilde{I}_{ij}^{lm}
|\psi_{i}( z_{A_i})|^2=0,
\end{equation}
where $\chi_{2R}$ was defined in \eqref{chiR} and
$\chi_{2R}^{\otimes Z_j}$ acts on the variables $z_{A_j}$.
     Since by Proposition \ref{spherical}  the one-electron density of the function $\psi_{i}$ is
spherically symmetric, we have by Newton's Theorem (see for example
\cite{LL} Section 9.7) and the support properties of $\psi_i$ due to the cut off
(see \eqref{psiAj}) that
\begin{equation}\label{genexperror2}
\int |\psi_i(z_{A_i})|^2 \frac{1}{|y_{ij}+z_{li}|} dz_{A_i}=
\frac{1}{|y_{ij}|} \int_{|z_{li}| \leq |y_{ij}|} |\psi_i(z_{A_i})|^2
dz_{A_i}= \frac{1}{|y_{ij}|}.
\end{equation}
 In the same way we obtain that
\begin{equation}\label{genexperror3}
\int |\psi_i (z_{A_i})|^2 \frac{1}{|y_{ij}+z_{li}-z_{mj}|} dz_{A_i}
= \frac{1}{|y_{ij}-z_{mj}|} \text{ on  supp} \chi_{2R}^{\otimes
Z_j}.
\end{equation}
 From \eqref{dijkl}, \eqref{genexperror2} and
\eqref{genexperror3} we obtain \eqref{onesymmetryenough'}.
 From \eqref{intdijkl} and \eqref{onesymmetryenough'} we obtain
that $\langle \Psi_a,I_{ij}^{lm} \Psi_a\rangle = 0,$ for all $i,j
\in \{1,\dots,M\}$ with $i \neq j$ and $l \in A_i, m \in A_j$, which
together with \eqref{Iabreakup} implies \eqref{psiaIbpsib}.
\end{proof}

From \eqref{PHP2} and \eqref{psiaIbpsib} we obtain \eqref{PHP}. This
concludes the proof of Lemma \ref{lem:PHP}.

\begin{remark}
Using \eqref{onesymmetryenough'}, going back to the variables $x_j$
and using \eqref{Iabreakup}, we can easily obtain that
\begin{equation}\label{onesymmetryenough}
\chi_{2R}^{A_j} \int dx_{A_i} I_{A_i A_j} |\psi_{A_i}(
x_{A_i})|^2=0, \quad \forall i \neq j,
\end{equation}
where
\begin{equation}\label{charaj}
 \chi_{2R}^{A_j}(x_{A_j}):= \prod_{i \in A_j} \chi_{2R} (x_i-y_j),
\end{equation}
and $\chi_{2R}$ was defined in \eqref{chiR}. This will be useful later
in the proof. The physical meaning of \eqref{onesymmetryenough} is
that the potential created by a spherically symmetric charge
distribution with total charge zero, is zero outside of its support.
\end{remark}
We shall now prove Lemma \ref{lem:Eyless0}.
\begin{proof}[Proof of Lemma \ref{lem:Eyless0}]
Since $\Pi \H^\s \Pi|_{\Ran \Pi}$ is the expectation
of the Hamiltonian against an antisymmetric function we obtain that
$E(y) \leq \Pi \H^\s \Pi|_{\Ran \Pi} \doteq_M \Einfty$, where
the last step follows from \eqref{PHP}. Therefore,
there exist $c$ and $C$ so that
\begin{equation}\label{est:Ey}
 E(y) \leq \Einfty + C M e^{-cR}.
\end{equation}
From \eqref{eqn:Eiless0} and \eqref{Einftydef} it follows that
$\Einfty \leq M \max\{E_j: j \in \{1,\dots,M\}\}<0$. The last inequality together with \eqref{est:Ey} imply
Lemma \ref{lem:Eyless0}.
\end{proof}

\subsection{Rough bounds on $E(y)$}\label{Esharplower}
Before we estimate $V(E(y))$ we first prove some rough bounds for $E(y)$
which are going to be useful. Our goal is to prove
\begin{lemma}\label{prop:ineqE}
Assume that \eqref{Hbotbnd} holds. Then there exists $c$ so that
\begin{equation}\label{ineqE}
-\sum_{1=i<j}^M \frac{1}{|y_i-y_j|^6} \lesssim \E-\Einfty \lesssim M
e^{-cR}.
\end{equation}
\end{lemma}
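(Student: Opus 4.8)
The two bounds are quite different in character and I would treat them separately. The upper bound $\E - \Einfty \lesssim M e^{-cR}$ is essentially already in hand: it follows from the variational principle applied to the normalized antisymmetric function in $\Ran \Pi$, exactly as in the proof of Lemma \ref{lem:Eyless0}. Indeed $E(y) = \inf \sigma(H^\s|_{\Ran Q_N}) \leq \langle \Psi_a/\|Q_N \Psi_a\|, H^\s \cdot \rangle = \Pi H^\s \Pi|_{\Ran \Pi}$, and by Lemma \ref{lem:PHP} (Equation \eqref{PHP}) this equals $\Einfty$ up to an error $O(M e^{-cR})$. So the first step is just to quote \eqref{PHP} and the relation $E(y) \le \Pi H^\s \Pi|_{\Ran\Pi}$, giving $\E - \Einfty \le C M e^{-cR}$.

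For the lower bound I would use the Feshbach relation \eqref{FSE1}, which is available under the standing hypothesis \eqref{Hbotbnd}: $E(y) = \Pi H^\s \Pi|_{\Ran \Pi} + V(E(y))|_{\Ran \Pi}$. Combining with Lemma \ref{lem:PHP} again, $\E - \Einfty \geq V(E(y))|_{\Ran\Pi} - C M e^{-cR} \geq -C M e^{-cR}$, since $V(E(y)) \geq 0$ by \eqref{W} and \eqref{Hbotbnd} (the resolvent $(\H^{\s,\bot}-E(y))^{-1}$ is positive, so $V(\lambda)$ is a positive operator sandwiched appropriately). But this only gives $\E - \Einfty \gtrsim -M e^{-cR}$, which is much stronger than what is claimed; the claimed lower bound $-\sum_{i<j} |y_i - y_j|^{-6}$ is weaker and hence also follows — I should double-check the direction, but in any case the $R^{-6}$ lower bound is implied once we know $V(E(y)) \geq 0$ together with an upper bound $V(E(y)) \lesssim \sum_{i<j}|y_i-y_j|^{-6}$. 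So the cleaner route to the lower bound is: $\E - \Einfty = \langle$positive term$\rangle + O(M e^{-cR}) \geq 0 + O(Me^{-cR})$, and a fortiori $\E - \Einfty \gtrsim -\sum_{i<j}|y_i-y_j|^{-6}$ once $R$ is large enough that $M e^{-cR} \lesssim \sum_{i<j}|y_i-y_j|^{-6}$, which is false in general — so instead I would argue directly that $V(E(y)) \ge 0$ forces $\E-\Einfty \ge -CMe^{-cR}$ and simply note $-CMe^{-cR} \ge -\sum_{i<j}|y_i-y_j|^{-6}$ need not hold, hence the correct reading is that the stated inequality is the (weaker) consequence we record for later use, obtained by combining $V(E(y))\ge 0$ with a crude upper bound $V(E(y)) \lesssim \sum_{i<j}|y_i-y_j|^{-6}$ coming from \eqref{W}, $\|\Pi H^\s \Pi^\bot\| \lesssim \sum \cdots$ and $\|(\H^{\s,\bot}-E(y))^{-1}\| \lesssim 1$.

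Concretely, the one genuine estimate needed is the rough bound $\|\Pi H^\s \Pi^\bot\|^2 \lesssim \sum_{i<j}^{1,M}|y_i-y_j|^{-6}$ (or even just that $\Pi H^\s \Pi^\bot$ is controlled by the interaction $I_a$, which after the two-order Taylor cancellation behaves like $R^{-3}$ per pair), so that $\|V(E(y))\| \le \|\Pi H^\s \Pi^\bot\|^2 \, \|(\H^{\s,\bot}-E(y))^{-1}\| \lesssim \sum_{i<j}|y_i-y_j|^{-6}$. Then \eqref{FSE1} and \eqref{PHP} give both inequalities of \eqref{ineqE} at once. The main obstacle is making the bound $\|\Pi H^\s\Pi^\bot\| \lesssim \big(\sum_{i<j}|y_i-y_j|^{-6}\big)^{1/2}$ precise with a constant depending only on $Z$ and not on $M$: one must expand $H^\s$ acting on $Q_N\Psi_a$, use $H_a \Psi_a \approx \Einfty \Psi_a$ from \eqref{almosteig} to kill the non-interacting part, and then Taylor-expand each $I_{A_iA_j}$ to see the $R^{-3}$ leading behavior, summing the pair contributions while keeping track that cross terms vanish by the disjoint-support property \eqref{pairor}. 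This is routine given the machinery already set up, but it is the step carrying the real content; everything else is bookkeeping around \eqref{FSE1}, \eqref{PHP}, and the positivity of the resolvent.
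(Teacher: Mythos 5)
Your overall route is the paper's: the upper bound in \eqref{ineqE} is exactly \eqref{est:Ey} (variational principle plus Lemma \ref{lem:PHP}), and the lower bound is obtained from the Feshbach identity together with $\|V(E(y))\|\lesssim\|\Pi^\bot H\Pi\|^2$, the reduction $\|\Pi^\bot H\Pi\|\doteq_M\|I_a\Psi_a\|$ via \eqref{almosteig} and \eqref{pairor}, and a Taylor expansion of the inter-atomic interaction. Two points need repair, one minor and one substantive. The minor one is the sign: by \eqref{FP} and \eqref{FPEdot} the correct relation is $E(y)\doteq_M \Einfty-V(E(y))|_{\Ran\Pi}$ (the $+$ in \eqref{FSE1} is a typo), so positivity of $V$ only reproduces the exponential \emph{upper} bound, and the lower bound genuinely requires the estimate $\|V(E(y))\|\lesssim\sum_{i<j}|y_i-y_j|^{-6}$. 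Your intermediate claim that $V\geq 0$ forces $\E-\Einfty\geq -CMe^{-cR}$, ``much stronger than what is claimed'', is false: it would contradict Theorem \ref{thm:vdW-maxspin}, by which $W(y)$ is of order $-\sum_{i<j}\sigma_{ij}|y_i-y_j|^{-6}$. Fortunately your final argument does not rest on it.

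The substantive gap is in the step you dismiss as routine, namely $\|I_a\Psi_a\|^2\lesssim\sum_{i<j}|y_i-y_j|^{-6}$ with a constant depending only on $Z$ (Lemma \ref{lem:Iapsia}). Writing $\|I_a\Psi_a\|^2=\sum_{i<j}\sum_{k<l}\langle I_{A_iA_j}\Psi_a,\,I_{A_kA_l}\Psi_a\rangle$, the cross terms with $\{i,j\}\neq\{k,l\}$ do \emph{not} vanish by the disjoint-support property \eqref{pairor}: both factors involve the same $\Psi_a$ and are supported in $\supp\Psi_a$, so there is no support separation; \eqref{pairor} only removes cross terms between different decompositions $b$ occurring in $Q_N\Psi_a$. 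What kills the pair--pair cross terms is Newton's theorem: by Condition (D) the one-electron densities are spherically symmetric and the cut-off ground states have compact support, whence $\chi_{2R}^{A_j}\int dx_{A_i}\,I_{A_iA_j}|\psi_{A_i}|^2=0$, i.e.\ \eqref{onesymmetryenough} (Lemmas \ref{lem:newton} and \ref{lem:Iapsia}). Without this cancellation, Taylor expansion gives each pair contribution of size $|y_i-y_j|^{-3}$ and the summation (or Cauchy--Schwarz over the $\sim M^2$ pairs) produces an extra factor of order $M^2$, so the resulting constant depends on $M$, which the lemma and the paper's convention for $\lesssim$ (constants depending only on $Z$) do not allow. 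Once this mechanism is supplied, your plan coincides with the paper's proof.
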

\begin{proof}
The right-side inequality is just a restatement of \eqref{est:Ey}.
Therefore, it remains to prove the left-side inequality.
By \eqref{Hbotbnd}, the Feshbach map in \eqref{FP} is well defined at $\lambda=E(y)$, and
 \eqref{FSE} holds. By \eqref{FSE}, \eqref{FP} and
\eqref{PHP} we have that
\begin{equation}\label{FPEdot}
E(y)  \doteq_M  \Einfty-V(E(y))|_{\Ran  \Pi}.
\end{equation}
Now we estimate $V(E(y))$. By \eqref{Hbotbnd}, the definition of $V(E(y))$
in \eqref{W},  and $\Pi^\bot H^\s \Pi=(\Pi H^\s \Pi^\bot)^*$, we
obtain
 \begin{equation}\label{Uexpab}
 \| V(E(y)) \| \ls  \|\Pi^\bot H \Pi\|^2,
 \end{equation}
 where we could replace $H^\s$ with $H$ because $\Pi$ is a projection onto
 an antisymmetric function.
 To estimate $\Pi^\bot H \Pi$ we use that
 \eqref{PHP} implies that
\begin{equation*}
\Pi^\bot H \Pi \doteq_M H \Pi- \Einfty \Pi.
\end{equation*}
From the last estimate and \eqref{Pi} it follows that
\begin{equation}\label{PibotHpi}
\|\Pi^\bot H \Pi\| \doteq_M \|(\H - \Einfty) \frac{Q_N \Psi_a}{\|Q_N
\Psi_a\|}\|.
\end{equation}
Using that $Q_N$ commutes with $H$, that $\text{supp}(H \Psi_a) \subset \text{supp} (\Psi_a)$ for all $a \in \mathcal{A}^{at}$,
\eqref{pairor} and \eqref{QNpsia}, one can verify that
\begin{equation*}
 \|(\H - \Einfty) \frac{Q_N \Psi_a}{\|Q_N \Psi_a\|}\|=\|(H-\Einfty) \Psi_a\|.
\end{equation*}
The last equation together with \eqref{PibotHpi},
\eqref{almosteig} and \eqref{Hadecomp} give that
\begin{equation*}
\|\Pi^\bot H \Pi\| \doteq_M \|I_a \Psi_a\|,
\end{equation*}
where, due to symmetry, the right hand side is $a$ independent. The
last approximate equality together with \eqref{Uexpab} gives that there exists
$c>0$ so that
\begin{equation}\label{Usharporder}
\|V(E(y))\| \ls \|I_a \Psi_a\|^2+  M^2 e^{-cR}.
\end{equation}
 To complete the proof we use Lemma \ref{lem:Iapsia} below which
together with \eqref{Usharporder}
 and \eqref{FPEdot} gives the left inequality in \eqref{ineqE}. This
concludes the proof of \eqref{ineqE}.
  \end{proof}

\begin{lemma}\label{lem:Iapsia}
We have that
\begin{equation}\label{IapsiaL2est}
\|I_a \Psi_a\|^2 \lesssim \sum_{i<j}^{1, M} \frac{1}{|y_i-y_j|^6}.
\end{equation}
\end{lemma}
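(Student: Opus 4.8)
The plan is to expand $I_a\Psi_a$ into its constituent pairwise pieces using the decomposition \eqref{Iabreakup}, $I_a=\sum_{i<j}^{1,M} I_{A_iA_j}$ with $I_{A_iA_j}=\sum_{l\in A_i,\,m\in A_j} I_{ij}^{lm}$, and to estimate each $\|I_{A_iA_j}\Psi_a\|$ separately, showing it is $\lesssim |y_i-y_j|^{-3}$, at which point the Cauchy–Schwarz inequality over the pairs $(i,j)$ gives the claim. The key geometric input is that, by the cut-off in \eqref{psiAj}, the electron coordinates $x_l$, $l\in A_i$, are confined to the ball $B(y_i,R/6)$, and similarly for $A_j$; since the minimal nuclear separation is $R$ (see \eqref{Rdef}), on the support of $\Psi_a$ we have $|y_i-y_j|\geq R$, while $|z_{li}|=|x_l-y_i|\leq R/6$ and $|z_{mj}|\leq R/6$, so that all four denominators appearing in $\tilde I_{ij}^{lm}$ in \eqref{dijkl} are comparable to $|y_{ij}|$ and stay away from zero.

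First I would fix the pair $i<j$ and perform a Taylor expansion of $\tilde I_{ij}^{lm}$ in powers of $z_{li}/|y_{ij}|$ and $z_{mj}/|y_{ij}|$. The zeroth-order terms cancel identically ($-|y_{ij}|^{-1}-|y_{ij}|^{-1}+|y_{ij}|^{-1}+|y_{ij}|^{-1}=0$), and the first-order terms also cancel: the linear term from $-e^2/|y_{ij}+z_{li}|$ cancels that from $e^2/|y_{ij}+z_{li}-z_{mj}|$ in the $z_{li}$ variable, and similarly the $z_{mj}$-linear terms cancel between $-e^2/|y_{ij}-z_{mj}|$ and $e^2/|y_{ij}+z_{li}-z_{mj}|$. (This is exactly the neutrality-induced cancellation referred to repeatedly in the introduction.) Hence $|\tilde I_{ij}^{lm}|\lesssim |y_{ij}|^{-3}(|z_{li}|+|z_{mj}|)^2$ pointwise on $\mathrm{supp}(\Psi_a)$, with a uniform constant since the denominators are bounded below by $c|y_{ij}|$. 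This is most cleanly obtained via the integral form of the Taylor remainder applied to $g(t)=|y_{ij}+tz_{li}-tz_{mj}|^{-1}$ and to the one-variable analogues, controlling $g''$ by $C|y_{ij}|^{-3}$ on the support.

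Next I would assemble the operator norm estimate. Summing over $l\in A_i$ and $m\in A_j$ gives $|I_{A_iA_j}|\lesssim |A_i||A_j|\,|y_{ij}|^{-3}\max_l\langle z_{li}\rangle^2 \max_m \langle z_{mj}\rangle^2$-type bounds; more carefully, one writes $\|I_{A_iA_j}\Psi_a\|\le \sum_{l\in A_i,m\in A_j}\|I_{ij}^{lm}\Psi_a\|$ and bounds each term by $C|y_{ij}|^{-3}\| (|z_{li}|^2+|z_{mj}|^2)\Psi_a\|$, which by \eqref{Psiadecay} (the exponential decay of $\psi_{A_i}$, hence in particular $\||z_{li}|^2\psi_{A_i}\|\lesssim 1$) is $\lesssim |y_{ij}|^{-3}$, uniformly in $l,m$. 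Since $|A_i|,|A_j|\le Z$, the double sum costs only a factor $Z^2$, which is absorbed into the implicit constant (constants may depend on $Z$). Therefore $\|I_{A_iA_j}\Psi_a\|\lesssim |y_i-y_j|^{-3}$, and then
\[
\|I_a\Psi_a\| \;\le\; \sum_{i<j}^{1,M}\|I_{A_iA_j}\Psi_a\| \;\lesssim\; \sum_{i<j}^{1,M}\frac{1}{|y_i-y_j|^3},
\]
from which squaring and Cauchy–Schwarz — or just expanding the square and using that $\big(\sum_{i<j}|y_{ij}|^{-3}\big)^2 \le \big(\sum_{i<j}1\big)\big(\sum_{i<j}|y_{ij}|^{-6}\big)$ is \emph{not} quite what we want — actually I would argue more directly: the cross terms $\langle I_{A_iA_j}\Psi_a, I_{A_kA_l}\Psi_a\rangle$ for $\{i,j\}\ne\{k,l\}$ must also be controlled. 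The cleanest route avoiding this is to note that for each summand $\|I_{A_iA_j}\Psi_a\|\lesssim |y_{ij}|^{-3}$ means $\|I_{A_iA_j}\Psi_a\|^2\lesssim |y_{ij}|^{-6}$, and to bound $\|I_a\Psi_a\|^2 = \big\|\sum_{i<j}I_{A_iA_j}\Psi_a\big\|^2$.

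The main obstacle is precisely this last point: $\|I_a\Psi_a\|^2$ is not simply $\sum_{i<j}\|I_{A_iA_j}\Psi_a\|^2$ because of cross terms, and a naive triangle inequality followed by squaring produces $\big(\sum_{i<j}|y_{ij}|^{-3}\big)^2$, which is larger than $\sum_{i<j}|y_{ij}|^{-6}$. To get the stated bound one must exploit that the dangerous cross terms involve denominators $|y_{ij}|^{-3}|y_{kl}|^{-3}$ with at least three distinct nuclei among $i,j,k,l$, and use a summation lemma of the type $\sum_{i<j,\,k<l}|y_{ij}|^{-3}|y_{kl}|^{-3}\lesssim M^2 \sum_{i<j}|y_{ij}|^{-6}$ is again too weak — so in fact I expect the honest statement is that \eqref{IapsiaL2est} holds with the understanding that the relevant smallness is captured; more plausibly the intended proof keeps the $I_{ij}^{lm}$ terms organized so that the leading $|y_{ij}|^{-3}$ behavior is an orthogonality-respecting sum, or simply accepts $\|I_a\Psi_a\|^2\lesssim \big(\sum_{i<j}|y_{ij}|^{-3}\big)^2$ and then invokes that in the regime where the lemma is used ($R$ large, and ultimately combined with \eqref{Usharporder} and \eqref{FPEdot}) the squared sum is dominated appropriately; I would therefore structure the proof to first establish the pointwise/termwise bound $\|I_{A_iA_j}\Psi_a\|\lesssim |y_i-y_j|^{-3}$ rigorously via the Taylor cancellation and \eqref{Psiadecay}, and then handle the sum by a Schur-test / Cauchy–Schwarz argument over pairs, which is where the real care is needed to land exactly on $\sum_{i<j}|y_i-y_j|^{-6}$ rather than its square.
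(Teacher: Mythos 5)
Your Taylor-expansion step is essentially the paper's: after passing to the variables $z_{li}=x_l-y_i$, the zeroth and first orders of $\tilde I_{ij}^{lm}$ cancel by neutrality, and the exponential decay \eqref{Psiadecay} of the cut-off ground states controls the quadratic weight, giving $\|I_{A_iA_j}\psi_{A_i}\psi_{A_j}\|\lesssim |y_i-y_j|^{-3}$ exactly as in \eqref{prepreIapsia2est}--\eqref{preIapsiaL2est}. But the point where you stall — the cross terms in $\|\sum_{i<j}I_{A_iA_j}\Psi_a\|^2$ — is precisely where the paper's one missing-from-your-argument idea enters, and without it you do not reach \eqref{IapsiaL2est}: a triangle inequality only yields $\big(\sum_{i<j}|y_{ij}|^{-3}\big)^2$, which can exceed $\sum_{i<j}|y_{ij}|^{-6}$ by a factor of order $M^2$ and would weaken all the downstream remainder estimates.

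The resolution is that the cross terms vanish \emph{identically}, not that they are merely small. By Condition (D) and Proposition \ref{spherical} the one-electron density of $\psi_i$ is spherically symmetric, and because of the cut-off its support lies in a ball of radius $R/6$ about $y_i$; Newton's theorem then gives \eqref{onesymmetryenough}: $\chi_{2R}^{A_j}\int dx_{A_i}\, I_{A_iA_j}\,|\psi_{A_i}(x_{A_i})|^2=0$, i.e.\ the neutral, spherically symmetric charge distribution of atom $i$ creates zero potential outside its support. Now take a cross term $\langle I_{A_iA_j}\Psi_a, I_{A_kA_l}\Psi_a\rangle$ with $\{i,j\}\neq\{k,l\}$; there is an index, say $i$, belonging to one pair but not the other, so $I_{A_kA_l}$ does not depend on $x_{A_i}$, and integrating out $x_{A_i}$ against $|\psi_{A_i}|^2$ (the remaining factors being supported where $\chi_{2R}^{A_j}=1$, by \eqref{Psiaeq}, \eqref{psiAj}) kills the term by \eqref{onesymmetryenough}. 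Hence $\|I_a\Psi_a\|^2=\sum_{i<j}\|I_{A_iA_j}\psi_{A_i}\psi_{A_j}\|^2$ as in \eqref{Ibpsib}, and your termwise bound finishes the proof. So the gap in your proposal is the failure to invoke this Newton's-theorem orthogonality — which is also the reason the paper introduces the cut-off ground states in the first place.
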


\begin{proof}

By \eqref{Iabreakup} we have that
\begin{equation*}
\|I_a \Psi_a\|^2 =\sum_{i<j}^{1,M} \sum_{k<l}^{1,M} \lan I_{A_i A_j}
\Psi_a, I_{A_k A_l} \Psi_a \ran.
\end{equation*}
It follows then from  \eqref{Psiaeq}, \eqref{psiAj} and
\eqref{onesymmetryenough} that the cross terms of the double sum
vanish, and therefore we obtain
\begin{equation}\label{Ibpsib}
\|I_a \Psi_a\|^2 =\sum_{i<j}^{1,M} \lan I_{A_i A_j} \psi_{A_i}
\psi_{A_j}, I_{A_i A_j} \psi_{A_i} \psi_{A_j} \ran.
\end{equation}
Making the change of variables \eqref{zlm} and using \eqref{psiAj},
we obtain that
\begin{equation}\label{changeIAiAj}
 \|I_{A_i A_j} \psi_{A_i} \psi_{A_j} \| = \|\tilde{I}_{A_i A_j} \psi_i \psi_j\|,
\end{equation}
where
\begin{equation}\label{def:tildeIAiAj}
\tilde{I}_{A_i A_j}: = \sum_{l \in A_i, m \in A_j}
\tilde{I}_{ij}^{lm}
\end{equation}
and $\tilde{I}_{ij}^{lm}$ were defined in \eqref{dijkl}. The tensor
product of the functions $\psi_i, \psi_j$ has been omitted. We will
prove now that
 for all $l \in A_i, m \in
A_j$ we have
\begin{equation}\label{prepreIapsia2est}
\tilde{I}_{ij}^{lm} \psi_i  \psi_j =\frac{e^2}{|y_i-y_j|^3}
f_{ij, \widehat{y_{ij}}}^{lm} \psi_i  \psi_j+O( \frac{1}{|y_i-y_j|^4}),
\end{equation}
where
\begin{equation}\label{def:fijlm}
f^{lm}_{ij, \widehat{y_{ij}}} (z_{li}, z_{mj}):= z_{li} \cdot z_{mj}-
3(z_{li} \cdot \widehat{y_{ij}})(z_{mj} \cdot
  \widehat{y_{ij}}),
\end{equation}
$y_{ij}=y_i-y_j$, $\widehat{y_{ij}}=\frac{y_{ij}}{|y_{ij}|}$ and
$z_{li}, z_{mj}$ were defined in \eqref{zlm}. The estimate
\eqref{prepreIapsia2est} is a little stronger than what we need now,
but it is going to be useful later. We will Taylor expand all the
terms on the right hand side of \eqref{dijkl} in powers of $\frac{1}{|y_{ij}|}$. Using
that
\begin{equation*}
\frac{1}{|y \pm z|}=\frac{1}{|y|} \mp \frac{z \cdot
\hat{y}}{|y|^2}+\frac{3(\hat{y} \cdot z)^2-|z|^2}{2 |y|^3}+O( \frac{
|z|^3}{|y|^4}),
\end{equation*}
provided that $|z| \leq \frac{|y|}{3}$, we see that the
contributions on the right hand side of \eqref{dijkl} cancel in the
first and second order and we compute
\begin{equation}\label{I1sevmu}
\tilde{I}_{ij}^{lm}=\frac{e^2}{|y_{ij}|^3}f_{ij, \widehat{y_{ij}}}^{lm}+O(\frac{
|z_{li}|^3+|z_{mj}|^3}{|y_{ij}|^4}),\
 \text{ on the set } \{|z_{li}|, |z_{mj}|, |z_{li}-z_{mj}| \leq \frac{|y_{ij}|}{3}\}
\end{equation}
and in particular the last estimate  holds on $\supp \psi_i \psi_j$.
The change of variables \eqref{zlm}, and \eqref{Psiadecay} imply
that there exists $c$ so that
\begin{equation}\label{Psidecay}
\|e^{c \langle z_{A_i} \rangle}
\partial^{\alpha} \psi_{i}\|
\ls 1,\ \quad  \forall \alpha\   \text{ with }  0 \leq |\alpha| \leq
2, \quad \forall i  \in \{1,\dots,M\}.
\end{equation}
 Therefore,
 if we multiply both sides of \eqref{I1sevmu} by $\psi_i
 \psi_j$, we can control the term $ |z_{li}|^3+|z_{mj}|^3 $ in the
remainder uniformly in $|y_{ij}|$ due to \eqref{Psidecay}. As a
consequence, we arrive at \eqref{prepreIapsia2est}. From
\eqref{def:tildeIAiAj} and \eqref{prepreIapsia2est} it follows that
   \begin{equation}\label{exp:IAiAj}
 \tilde{I}_{A_i A_j} \psi_i  \psi_j= \frac{e^2}{|y_i-y_j|^3} f_{ij, \widehat{y_{ij}}} \psi_i  \psi_j+O(  \frac{1}{|y_i-y_j|^4}),
\end{equation}
where $f_{ij, \widehat{y_{ij}}}$ was defined in \eqref{def:fij}. Therefore, we arrive at
\begin{equation}\label{preIapsiaL2est}
\|\tilde{I}_{A_i A_j} \psi_i \psi_j \| \lesssim \frac{1}{|y_i-y_j|^3},
\end{equation}
where the inequality $\| f_{ij,  \widehat{y_{ij}}} \psi_i  \psi_j\| \ls 1$ follows from
\eqref{Psidecay}. From \eqref{Ibpsib}, \eqref{changeIAiAj}
and \eqref{preIapsiaL2est} we obtain \eqref{IapsiaL2est}.
\end{proof}


\subsection{ Proof of Proposition \ref{prop:Feshest}}\label{sec:VEest}
For this proof it is more
 convenient to write $\Pi$ as a product of two projections. To this
 end we choose $P$ to be the orthogonal projection on  $\text{span}
\{\Psi_a: a \in \mathcal{A}^{at}\}$. Using
 \eqref{pairor} it follows that
\begin{equation}\label{P}
   P = \sum_{a \in \mathcal{A}^{at}} P_{\Psi_a}.
\end{equation}
 It is straightforward to see that $P$ is symmetric with respect to
 the electron coordinates in the sense that
\begin{equation}\label{Psym}
P T_\pi=T_\pi P, \quad \text{ } \forall \pi \in S_N,
\end{equation}
where $T_{\pi}$ was defined in \eqref{def:Tpi}.
 Using \eqref{Psym} and \eqref{Qdef} it follows that
\begin{equation}\label{pqqp}
Q_N P=P Q_N,
\end{equation}
 and that $Q_N P=PQ_N $ is also an orthogonal projection. It is, moreover, easy to
 check that
\begin{equation}\label{Piqp}
\Pi=Q_N P=P Q_N.
\end{equation}
Therefore, by \eqref{W} we obtain that
\begin{equation}\label{WU}
V(\lambda)=Q_N P \H P^\bot (\H^{\s,\bot}-\lambda)^{-1} P^\bot \H P Q_N.
\end{equation}
 We will use \eqref{WU} to estimate $V(E(y))$. Before doing so, we will introduce some
 useful notation. For a decomposition $b=\{B_1, B_2,\dots, B_M\} \in \mathcal{A}^{at}$ we define
\begin{align} \notag
H_{B_k B_l}:=H_{B_k}+ H_{B_l},\\
\notag H_{B_k B_l}^\s= Q_{B_k}  Q_{B_l} H_{B_k B_l},
\end{align}
 where $H_{B_k}, H_{B_l}, Q_{B_k}, Q_{B_l}$ where defined in
\eqref{Ha1} and \eqref{def:QAj}.
  We define further
\begin{align}
 \label{Hbkldef} H_{B_k B_l}^{\s,\bot}&:=P_{\psi_{B_k} \psi_{B_l}}^\bot H_{B_k
 B_l}^\s  P_{\psi_{B_k} \psi_{B_l}}^\bot,\\\label{Rbkldef}
  \Rbkl&:=(H_{B_k B_l}^{\s,\bot}-E_k-E_l)^{-1},
\end{align}
where recall that $\psi_{B_k}, \psi_{B_l}$ where defined in \eqref{psiAj}.
Using that $\psi_{B_k} \psi_{B_l}$ is a cut off-ground state of $H_{B_k B_l}$
and \eqref{Psiadecay}, one can verify that there exists
$C$ so that for $R \geq C$ the resolvent $\Rbkl$ is defined and $\|\Rbkl\| \ls 1$.
  We begin with
  \begin{lemma}\label{lem:Uentwicklung}
If \eqref{Hbotbnd} holds, then there exists $C_3, C_7>0$ depending only
on $Z$ so that it $R \geq C_7 \ln M$, then we have that
\begin{equation}\label{Uentwicklung}
 \bigg| V(E(y))|_{\Ran  \Pi}- \sum_{k<l}^{1,M} \lan I_{A_k A_l} \psi_{A_k}
\psi_{A_l}, \Rakl I_{A_k A_l} \psi_{A_k} \psi_{A_l}\ran \bigg| \ls
\frac{M^4}{R^9}+ \frac{M^4}{R^9} N^Z e^{-C_3 R}.
\end{equation}
   \end{lemma}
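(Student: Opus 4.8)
The plan is to turn $V(E(y))|_{\Ran\Pi}$ into a scalar, simplify its ``source'' vector using the cut--off and disjoint--support structure, and then replace the full interacting resolvent $(\H^{\s,\bot}-E(y))^{-1}$ by the decoupled two--cluster resolvents $\Rbkl$ via an explicitly constructed approximate solution. Fix an atomic decomposition $a=\{A_1,\dots,A_M\}$, write $\Phi_0:=Q_N\Psi_a/\|Q_N\Psi_a\|$ for the (by \eqref{QNpsia}, $a$--independent) normalised generator of $\Ran\Pi$, with $\|Q_N\Psi_a\|^2=|\mathcal{A}^{at}|^{-1}$. Since $\Pi=PQ_N$ commutes with $Q_N$ and $\Pi^\bot E(y)\Phi_0=0$, we get $V(E(y))|_{\Ran\Pi}=\lan\xi,(\H^{\s,\bot}-E(y))^{-1}\xi\ran$ with $\xi:=\Pi^\bot(\H-\Einfty)\Phi_0$. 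Using \eqref{QNpsia}, the splitting $(\H-\Einfty)\Psi_b=(H_b-\Einfty)\Psi_b+I_b\Psi_b$, the bound $\|(H_b-\Einfty)\Psi_b\|_{H^1}\doteq_M 0$ from \eqref{almosteig}, Lemma \ref{lem:newton} (which yields $\Pi^\bot I_b\Psi_b=I_b\Psi_b$, because $\lan\Phi_0,I_b\Psi_b\ran=0$), and the pairwise orthogonality of the exponentially small remainders coming from \eqref{pairor}, I would first establish
\[
\xi\doteq_M\tilde\xi:=\|Q_N\Psi_a\|^{-1}\,\frac{1}{|\mathcal{A}^{at}|}\sum_{b\in\mathcal{A}^{at}}\text{sgn}(b,a)\,I_b\Psi_b,\qquad \|\tilde\xi\|=\|I_a\Psi_a\|\ls\Big(\sum_{i<j}^{1,M}\tfrac{1}{|y_i-y_j|^6}\Big)^{1/2},
\]
the last bound being Lemma \ref{lem:Iapsia}. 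Then a resolvent identity, using $\|(\H^{\s,\bot}-\lambda)^{-1}\|\ls1$ near $E(y)$ (from \eqref{Hbotbnd}) and $|E(y)-\Einfty|\ls\sum_{i<j}^{1,M}|y_i-y_j|^{-6}+Me^{-cR}$ (from Lemma \ref{prop:ineqE}), lets me replace $E(y)$ by $\Einfty$ inside the resolvent and $\xi$ by $\tilde\xi$, at the cost of an error $\ls\big(\sum_{i<j}^{1,M}|y_i-y_j|^{-6}\big)^2+\ldots\ls M^4/R^9+\ldots$.

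The core step is to understand $(\H^{\s,\bot}-\Einfty)^{-1}\tilde\xi$, for which I would guess an explicit near--solution built from two--cluster objects: for $b=\{B_1,\dots,B_M\}\in\mathcal{A}^{at}$ and $k<l$ put $u_{b,kl}:=\big(\Rbkl\,I_{B_kB_l}\psi_{B_k}\psi_{B_l}\big)\otimes\bigotimes_{i\ne k,l}\psi_{B_i}$ --- legitimate because $I_{B_kB_l}\psi_{B_k}\psi_{B_l}\in\Ran P_{\psi_{B_k}\psi_{B_l}}^\bot$ by the Newton cancellation of Lemma \ref{lem:newton}, so $\Rbkl$ acts on it as a genuine inverse of $H_{B_kB_l}^{\s,\bot}-E_k-E_l$ --- and set $w:=\|Q_N\Psi_a\|^{-1}\frac1{|\mathcal{A}^{at}|}\sum_b\text{sgn}(b,a)\sum_{k<l}^{1,M}u_{b,kl}$. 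One checks, exactly as for \eqref{QNpsia}, that $w\in\Ran Q_N$ and that $\Pi^\bot w=w$ (since $\lan\psi_{B_k}\psi_{B_l},\Rbkl I_{B_kB_l}\psi_{B_k}\psi_{B_l}\ran=0$). Applying $\H-\Einfty=(H_b-\Einfty)+I_b$ to each $u_{b,kl}$: the terms $H_{B_i}-E_i$ with $i\ne k,l$ contribute $\doteq 0$ by \eqref{almosteigat}; the pair $(H_{B_k}+H_{B_l}-E_k-E_l)$ applied to $\Rbkl I_{B_kB_l}\psi_{B_k}\psi_{B_l}$ reproduces $I_{B_kB_l}\psi_{B_k}\psi_{B_l}$ up to $\doteq 0$ (because $\psi_{B_k}\psi_{B_l}$ is a cut--off near--eigenfunction of $H_{B_kB_l}^\s$ and by \eqref{Hbkldef}, \eqref{Rbkldef}); and the intercluster interaction $I_b u_{b,kl}$ is the only genuinely power--law remainder. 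This gives $(\H^{\s,\bot}-\Einfty)w=\tilde\xi+\mathcal E$, hence $(\H^{\s,\bot}-\Einfty)^{-1}\tilde\xi=w-(\H^{\s,\bot}-\Einfty)^{-1}\mathcal E$, so that
\[
V(E(y))|_{\Ran\Pi}=\lan\tilde\xi,w\ran-\lan\tilde\xi,(\H^{\s,\bot}-\Einfty)^{-1}\mathcal E\ran+(\text{the errors above}).
\]
By the disjointness \eqref{pairor}, the Newton identity \eqref{onesymmetryenough} (which kills the cross terms in which some atom occurs in only one of the two pair labels), and the exponential localisation of the $\psi$'s, $\lan\tilde\xi,w\ran$ reduces, up to $\doteq_{M^d}$ errors, to precisely $\sum_{k<l}^{1,M}\lan I_{A_kA_l}\psi_{A_k}\psi_{A_l},\Rakl I_{A_kA_l}\psi_{A_k}\psi_{A_l}\ran$, the announced main term.

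It remains to bound $\lan\tilde\xi,(\H^{\s,\bot}-\Einfty)^{-1}\mathcal E\ran$, and this is where the real work lies. The exponentially small part of $\mathcal E$ (from $(H_{B_i}-E_i)\psi_{B_i}\doteq0$, from the cut--off tails, and from the combinatorial factor $|\mathcal{A}^{at}|$) is controlled by the Combes--Thomas bound (Theorem \ref{thm:ct}, \eqref{Psiadecay}) and the $\doteq$--calculus, and is absorbed into $\frac{M^4}{R^9}N^Z e^{-C_3R}$ once $R\ge C_7\ln M$ --- the sole purpose of that hypothesis being to let the polynomial-in-$M$ and -$N$ factors generated by the antisymmetrisation and by pushing exponential weights across intercluster distances be dominated by $e^{-cR}$. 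For the power--law part $\sum_{k<l}I_b u_{b,kl}$ I would split $I_b=\sum_{k'<l'}I_{B_{k'}B_{l'}}$: the diagonal pieces $I_{B_kB_l}u_{b,kl}$ are $\ls|y_k-y_l|^{-6}$ and, paired with $\tilde\xi$ through the resolvent, contribute only $\ls M^3/R^9$; for the off--diagonal pieces I would Taylor--expand the Coulomb kernels in inverse intercluster distances exactly as in the proof of Lemma \ref{lem:Iapsia} (equations \eqref{I1sevmu}--\eqref{prepreIapsia2est}), exploiting that the clusters are neutral so that monopole and dipole terms cancel, together with Newton's theorem \eqref{onesymmetryenough} and the exponential localisation of $\Rbkl I_{B_kB_l}\psi_{B_k}\psi_{B_l}$, the latter obtained by conjugating $\Rbkl$ with exponential weights (boosted two--cluster Hamiltonians), in the spirit of the sketch of Theorem \ref{thm:upper}. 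The delicate point --- and the main obstacle --- is the bookkeeping of the powers of $M$: a crude norm bound of the off--diagonal contribution overshoots the claimed $M^4/R^9$ by a power of $M$, and only by tracking the pairwise structure through the resolvent and using the multipole/Newton cancellations (so that configurations with three distinct pair labels vanish or are of higher order, leaving only ``matching'' configurations with two free pair labels) does one bring it down to $M^4/R^9$. Assembling the main term together with all the error contributions $\ls\frac{M^4}{R^9}+\frac{M^4}{R^9}N^Z e^{-C_3R}$ then yields \eqref{Uentwicklung}.
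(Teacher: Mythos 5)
Your overall skeleton is essentially the paper's: after scalarizing $V(E(y))|_{\Ran\Pi}$, replacing $E(y)$ by $\Einfty$ via Lemma \ref{prop:ineqE}, and passing to the decoupled two--cluster resolvents, your ansatz $w$ and the identity $(\H^{\s,\bot}-\Einfty)^{-1}\tilde\xi=w-(\H^{\s,\bot}-\Einfty)^{-1}\mathcal E$ are just the second resolvent formula in disguise, and your extraction of the main term $\sum_{k<l}\lan I_{A_kA_l}\psi_{A_k}\psi_{A_l},\Rakl I_{A_kA_l}\psi_{A_k}\psi_{A_l}\ran$ corresponds to the paper's treatment of $V_{kl,1}$ (cross terms within an equivalence class killed by pushing exponential weights through $\Rbkl$, with a per--class count depending only on $Z$). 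Up to that point the proposal is sound, modulo the omitted rank--one term produced by the extra $\Pi^\bot$ in $\H^{\s,\bot}w$, which is harmless but should be accounted for.

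The genuine gap is the error term $\lan\tilde\xi,(\H^{\s,\bot}-\Einfty)^{-1}\mathcal E\ran$, i.e.\ exactly the paper's $V_{kl,2}$, which you yourself flag as ``the main obstacle'' and then do not resolve. Both $\tilde\xi$ and $\mathcal E$ are antisymmetrized sums over $\mathcal{A}^{at}$, whose cardinality grows factorially in $N$, and the full resolvent sandwiched between them is non--local, so the disjoint--support cancellations that tame $\|\tilde\xi\|$ do not apply to $\mathcal E$; a naive Cauchy--Schwarz overshoots by far, and the mechanism you propose to recover the claimed $M^4/R^9$ --- multipole/Newton cancellations making ``configurations with three distinct pair labels'' vanish or be of higher order --- is not what actually delivers the bound. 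In the paper the estimate rests on a different, concrete set of devices: the orthogonal--projection sandwich inequalities \eqref{pairorthnorm} and \eqref{equivimp}, which reduce the factorially large sums to maxima over equivalence classes whose size depends only on $Z$; the exact identity $P_{\Psi_b}\Rbkl I_{B_kB_l}\Psi_b=0$ for the $HP$--type contribution; the boosted--Hamiltonian Lemma \ref{lem:Resdelta-est} to push $e^{\delta_0\varphi_{kl}}$ through $\Rbkl$; and, for the cut--off complement $(1-\chi_b)I_b$ piece, the second equivalence relation with $|X'|\le N^{2Z}$ and the bound $\|U_{kl,6}\|\le|X'|^{1/2}\max_{D'}\|A_{D'}\|$, which is the sole source of the $N^Z e^{-C_3R}$ factor in \eqref{Uentwicklung}. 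Your proposal neither performs this bookkeeping nor explains where the $N^Z$ factor comes from (it is simply asserted to be absorbed once $R\ge C_7\ln M$), so the quantitative heart of the lemma --- precisely the improvement over \cite{AS} that this paper is about --- is missing.
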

\begin{proof}
In view of \eqref{WU}, we will first estimate $P^\bot H P$. We have
that
\begin{equation}\label{PbotHP}
P^\bot H P= \sum_{a \in \mathcal{A}^{at}} |\varphi_a \rangle \langle \Psi_a|, 
\end{equation}
where $\varphi_a:= P^\bot H \Psi_a$. From \eqref{almosteig},
\eqref{pairor}, \eqref{P} and the fact that $\text{supp}(H_a \Psi_a) \subset \text{supp}(\Psi_a)$
 we obtain that $ P^\bot H_a \Psi_a \doteq_M 0$, which together with the definition of $\varphi_a$ and
 \eqref{Hadecomp}  imply
\begin{equation}\label{chiaest1}
\varphi_a \doteq_M P^\bot I_a \Psi_a.
\end{equation}
 By \eqref{pairor}, \eqref{P} and \eqref{psiaIbpsib}  we have
\begin{equation}\label{chiaest2}
P^\bot I_a \Psi_a= I_a \Psi_a.
  \end{equation}
The last equation together with \eqref{chiaest1} gives that
\begin{equation}\label{chiaest}
\varphi_a \doteq_M I_a \Psi_a.
\end{equation}
 Now we use the following
inequality:
 If $(\phi_n)_{n=1}^m$ are pairwise orthogonal, and
$(\psi_n)_{n=1}^m$ are also pairwise orthogonal, then
\begin{equation}\label{pairorthnorm}
\|\sum_{n=1}^m |\phi_n \rangle \langle \psi_n| \| \leq \max_{n \in
\{1,\dots,m\}} \| |\phi_n \rangle \langle \psi_n| \|.
\end{equation}
The inequality \eqref{pairorthnorm} follows by
$\|B\|=\sup_{\|\phi\|,\|\psi\|=1} \lan \phi, B \psi \ran$, with
$B=\sum_{n=1}^m |\phi_n \rangle \langle \psi_n|$, and from the
Cauchy-Schwarz and Parseval's inequalities (in fact in \eqref{pairorthnorm} equality
holds but we do not need this).

  Using  \eqref{PbotHP}, \eqref{chiaest} and the fact
  that $\varphi_a -I_a \Psi_a$, $a \in \cA^{at}$ have disjoint supports, (following from \eqref{pairor} and the
  fact that $\text{supp} \varphi_a \subset \text{supp} \Psi_a$), we can
  apply  \eqref{pairorthnorm}, to obtain that
  \begin{equation}\label{PbotHPdot}
P^\bot H P \doteq_M \sum_{a \in \mathcal{A}^{at}} I_a P_{\Psi_a}.
  \end{equation}
The last estimate together with \eqref{Hbotbnd} and \eqref{WU} give
that
\begin{equation}\label{Uedot}
V(\lambda) \doteq_M Q_N \sum_{a,b \in \mathcal{A}^{at}} P_{\Psi_a} I_a
(H^{\s,\bot}-\lambda)^{-1} I_b P_{\Psi_b} Q_N, \forall \lambda \leq
E(y)+\imsgap_1.
\end{equation}
Using \eqref{pairor},  \eqref{IapsiaL2est} and \eqref{pairorthnorm}
we obtain that
\begin{equation}\label{sumpsiaia1}
\|\sum_{a \in \mathcal{A}^{at}} P_{\Psi_a}  I_a \| \ls
\frac{M}{R^3}.
\end{equation}
In addition, \eqref{Hbotbnd} and \eqref{est:Ey} imply that there
exist $C_7>0$, depending only on $Z$, so that
\begin{equation}\label{Hbotest}
\Einfty \leq \E+\imsgap_1 \text{ and } (H^{\s,\bot}-\Einfty) \geq \gamma_1, \quad  \forall R \geq C_7 \ln M.
\end{equation}
 Using \eqref{Hbotbnd}, \eqref{ineqE},
 \eqref{Hbotest}, \eqref{Uedot} (applied at $\lambda=E(y)$ and at $\lambda=\Einfty$), \eqref{sumpsiaia1}
and the second resolvent formula we obtain that
\begin{equation}\label{UeUeinfty}
V(E(y))=V(\Einfty)+O(\frac{M^4}{R^{12}}), \quad \forall R \geq C_7 \ln M,
\end{equation}
where $C_7$ is the same as in \eqref{Hbotest}.
In the rest of the proof we will estimate $V(\Einfty)$.
  Using \eqref{Uedot}, \eqref{Hbotest} and \eqref{Iabreakup} we obtain that
\begin{equation}\label{UEdec}
V(\Einfty) \doteq_M \sum_{k<l}^{1,M} V_{kl}, \quad  \forall R \geq C_7 \ln M,
\end{equation}
where
\begin{equation}\label{Ukldef}
V_{kl}:= \sum_{a,b \in \mathcal{A}^{at}} Q_N P_{\Psi_a} I_a
(H^{\s,\bot}-\Einfty)^{-1} I_{B_k B_l} P_{\Psi_b} Q_N.
\end{equation}
We now fix $k,l$.
Recall that the resolvent $\Rakl$ was defined
in \eqref{Rbkldef}. We will prove that there exist $C_3>0$, depending only on $Z$,
 so that
\begin{equation}\label{Uklest}
\| V_{kl} - \Pi \lan I_{A_k A_l} \psi_{A_k} \psi_{A_l},
\Rakl I_{A_k A_l} \psi_{A_k} \psi_{A_l}\ran \| \ls \frac{M^2}{R^9}+
 \frac{M^2}{R^9} N^Z e^{-C_3 R}, \quad \forall R \geq C_7 \ln M,
\end{equation}
where note that the second term on the left hand side does not
depend on the decomposition $a \in \mathcal{A}^{at}$.
 From the second resolvent formula we obtain that
\begin{equation}\label{Ukldec}
V_{kl} = V_{kl,1}+ V_{kl,2},
\end{equation}
where
\begin{equation}\label{Ukl1def}
V_{kl,1}=\sum_{a,b \in \mathcal{A}^{at}} Q_N P_{\Psi_a} I_a \Rbkl
I_{B_k B_l} P_{\Psi_b} Q_N,
\end{equation}
and
\begin{align}\label{Ukl2def}
 V_{kl,2}=  \sum_{a,b \in \mathcal{A}^{at}} Q_N P_{\Psi_a} I_a
(H^{\s,\bot}-\Einfty)^{-1} D_{kl} \Rbkl I_{B_k B_l} P_{\Psi_b} Q_N,
\end{align}
with
\begin{equation}\label{Akldef}
D_{kl}:=-P^\bot H^\s P^\bot  +\Einfty+H_{B_k B_l}^{\s,\bot}-E_k
-E_l.
\end{equation}
   We will now estimate $ V_{kl,1} $ and afterwards $ V_{kl,2} $.

\paragraph{Estimate of $ V_{kl,1}$}
We will prove that
\begin{equation}\label{Ukl1est}
V_{kl,1} \doteq \Pi \lan I_{A_k A_l} \psi_{A_k} \psi_{A_l}, \Rakl
I_{A_k A_l} \psi_{A_k} \psi_{A_l}\ran.
\end{equation}
  Since $\Rbkl I_{B_k B_l}$ acts only on the variables in $B_k \cup B_l$,
  due to the cut off that has been applied to the ground states (see \eqref{Psiaeq}, \eqref{psiAj}), the
  summands in \eqref{Ukl1def} vanish, unless $A_j=B_j$, for all $j \neq
  k,l$. Hence, it is convenient to define in $\mathcal{A}^{at}$ the equivalence
relation $a \sim b \iff A_j=B_j, \forall j \neq k,l$. We denote the
set of equivalence classes by $X$. Since, as was pointed above, the
summands where $a,b$ are not in the same equivalence class vanish,
we obtain that
\begin{equation}\label{VUkl1}
V_{kl,1} =Q_N U_{kl,1} Q_N,
\end{equation}
where
\begin{equation}\label{zero}
U_{kl,1}=\sum_{D \in X} \sum_{a,b \in D} P_{\Psi_a} I_a \Rbkl I_{B_k
B_l} P_{\Psi_b}.
\end{equation}
If in \eqref{zero} we insert the decomposition $I_a=\sum_{i<j}
I_{A_i A_j}$, and use that $a \sim b$, it follows  from
\eqref{Psiaeq} and \eqref{onesymmetryenough} that all the terms
$I_{A_i A_j}$ have zero contribution in \eqref{zero} unless $\{i,j\}
= \{k,l\}$, or in other words that
\begin{equation}\label{Uklsim2}
U_{kl,1}=\sum_{D \in X} \sum_{a,b \in D} P_{\Psi_a} I_{A_k A_l }
\Rbkl I_{B_k B_l} P_{\Psi_b}.
\end{equation}
Splitting into the terms $a=b$ and $a \neq b$ we obtain that
\begin{equation}\label{Ukl1predec}
U_{kl,1}= \sum_{b \in \mathcal{A}^{at}} |\Psi_b \ran\lan I_{B_k B_l}
 \Psi_b, \Rbkl I_{B_k B_l} \Psi_b \ran \lan \Psi_b|+ R_{kl},
\end{equation}
where
\begin{equation}\label{Rkldefinition}
R_{kl}= \sum_{D \in X}  \left( \sum_{a,b \in D, a \neq b} P_{\Psi_a}
I_{A_k A_l } \Rbkl I_{B_k B_l} P_{\Psi_b} \right).
\end{equation}
We observe now that the inner products of the first term on the
right hand side of Equation \eqref{Ukl1predec} are independent of
$b$. This, together with \eqref{Psiaeq} and \eqref{P} implies that
\begin{equation}\label{Ukl1dec}
U_{kl,1}= P \lan I_{A_k A_l} \psi_{A_k} \psi_{A_l}, \Rakl I_{A_k
A_l} \psi_{A_k} \psi_{A_l} \ran+ R_{kl}.
\end{equation}
We will now prove that
\begin{equation}\label{Rklabsch}
 R_{kl} \doteq 0.
\end{equation}
To this end it is convenient to introduce the following projections.
  For $D \in X$, where recall $X$ is the set of equivalence classes,
we define $P_D=\sum_{b \in D} |\Psi_b \rangle \langle \Psi_b|$. From
\eqref{pairor} it follows that $P_D \Psi_b= \Psi_b$ for $b \in
  D$.  Using the last equation and \eqref{Rkldefinition}, we arrive at
\begin{equation}\label{Rklpc}
R_{kl}= \sum_{D \in X} P_D \left( \sum_{a,b \in D, a \neq b}
P_{\Psi_a} I_{A_k A_l } \Rbkl I_{B_k B_l} P_{\Psi_b} \right) P_D.
\end{equation}
If $A_D$ is a family of bounded operators, then
\begin{equation}\label{equivimp}
\| \sum_{D \in X} P_D A_D P_D \| \leq \max_{D \in X} \|A_D\|.
\end{equation}
The inequality follows from $\|B\|=\sup_{\|\phi\|,\|\psi\|=1} \lan
\phi, B \psi \ran$, the Cauchy-Schwarz inequality and by the fact
that $P_D$ are orthogonal projections with $P_{D_1} P_{D_2}=0$ if
$D_1 \neq D_2$. Since in \eqref{Rklpc} the sum over $a,b$ consists
of terms with $a \sim b$,  \eqref{equivimp} and
\eqref{Psiaeq} give
\begin{equation}\label{Rklabsch1}
\|R_{kl}\| \leq \max_{D \in X} \| \sum_{a,b \in D, a \neq b}
R_{kl,ab} \|,
\end{equation}
where
\begin{equation}\label{Rklabdef}
 R_{kl,ab}:= P_{\psi_{A_k} \psi_{A_l}}  I_{A_k A_l }  \Rbkl
I_{B_k B_l} P_{\psi_{B_k} \psi_{B_l}}.
\end{equation}
  Our goal is now to show that for $a \neq b$
\begin{equation}\label{Rklabsch2}
R_{kl,ab} \doteq 0.
\end{equation}

The functions $I_{B_k B_l} \psi_{B_k} \psi_{B_l}$ and $I_{A_k A_l}
\psi_{A_k} \psi_{A_l}$ have disjoint supports. However, the operator
$\Rbkl$  which separates them is non-local. As a result the function
$\Rbkl I_{B_k B_l} P_{\psi_{B_k} \psi_{B_l}}$ has generically
infinite support. We show, however, that its overlap with $I_{A_k
A_l} \psi_{A_k} \psi_{A_l}$ is exponentially small. To prove this,
we quantify the decay of the functions by introducing the
exponential weights $e^{\delta \varphi_{kl}}$ below, and push the
weights through $\Rbkl$ by using boosted Hamiltonians. We now
proceed with details. Recall that for a decomposition $b$, $x_{B_i}$
is the collection of the electron coordinates in $B_i$. Let
$\varphi_i(x_{B_i})\equiv \varphi_R(\langle x_{B_i} \rangle )$ be a
$C^2$ monotonically increasing function of $\langle x_{B_i}
\rangle$, where, recall, $\langle x_{B_i} \rangle=(1+|
x_{B_i}|^2)^{\frac{1}{2}}$, with $|\varphi_R'| \leq 1$ and  with
uniformly (in both $x_{B_i},R$) bounded second derivative and
satisfying,
\begin{align}\label{phi}
\varphi_i(x_{B_i})=\left\{ \begin{array}{ccc}   \langle x_{B_i}
\rangle, & \text{ if } \langle x_{B_i} \rangle \leq \frac{R}{2}-\frac{1}{e^2}  \\
\frac{R}{2}, & \text{ if } \langle x_{B_i} \rangle \geq
\frac{R}{2}+\frac{1}{e^2},
\end{array} \right.
\end{align}
Here $-e$ is the electron charge. The term $\frac{1}{e^2}$
originates from a natural scaling: the ground state $\phi_{B_i}$ of $H_{B_i}^\s$ is given by $\phi_{B_i}(x_{B_i})=e^{3 Z_i}\phi_{B_i,1}(e^2 x_{B_i})$, where $\phi_{B_i,1}$ is the ground state of $H_{B_i}^\s$ with $e=1$. Note that by construction of $\varphi_i$ we have that
\begin{equation}\label{varphigrad}
\|\nabla \varphi_i\|_{L^\infty} \leq 1.
\end{equation}
Let
\begin{equation}\label{varphiji}
\varphi_{kl}(x_{B_k}, x_{B_l}):= \varphi_k(x_{B_k}-y_k)+
\varphi_l(x_{B_l}-y_l),
\end{equation}
where the notation $x_{B_i}-y_i$ has the same meaning as in \eqref{phiAj}.
 For any $\delta >0$, we let
\begin{equation}\label{Hdelta}
\H_{B_k B_l, \delta}^{\s,\bot}:= e^{\delta \varphi_{kl}} \H_{B_k
B_l}^{\s,\bot} e^{-\delta \varphi_{kl}},
\end{equation}
where $\H_{B_k B_l}^{\s,\bot}$ was defined in \eqref{Hbkldef} (here $e$ is the usual one no longer minus the electron charge).
  In Section \ref{sec:Resdelta-est} we prove the following lemma:
\begin{lemma}\label{lem:Resdelta-est}
For all $k,l \in \{1,\dots,M\}$ with $k \neq l$, there exist $c,c_2>0$, depending only on $Z$, such that
if $R \geq c$ and $0 \leq \delta \leq c_2$, then
  $E_k+ E_l$ is in the resolvent set of $\H_{B_k B_l,  \delta}^{\s,\bot}$ and
 \begin{equation}\label{Resdelta-est}
\Rbkld:=(\H_{B_k B_l,  \delta}^{\s,\bot}-E_k-E_l)^{-1} =O(1).
\end{equation}
\end{lemma}
We assume this lemma for now. We choose now
\begin{equation}\label{def:delta0}
\delta_0=\min\{\frac{c_1}{2},c_2\},
\end{equation}
 where $c_1, c_2$ are the same as in \eqref{Psiadecay} and Lemma \ref{lem:Resdelta-est},
 respectively.
We now estimate $R_{kl,ab}$. The equality
\begin{equation}\label{edeltaeminusdelta}
 \Rbkl = e^{-\delta_0 \varphi_{kl} } \Rbkldo e^{\delta_0 \varphi_{kl}}
\end{equation}
 implies that
\begin{equation}\label{Uabdot}
\|R_{kl,ab}\| = |\langle e^{-\delta_0 \varphi_{kl}} I_{A_k A_l }
 \psi_{A_k} \psi_{A_l} ,\Rbkldo
 e^{\delta_0 \varphi_{kl}} I_{B_k B_l} \psi_{B_k} \psi_{B_l} \rangle|.
\end{equation}
 Similarly to \eqref{preIapsiaL2est} one can prove, due to \eqref{Psiadecay}, that we
 have
\begin{equation}\label{sigmaabest2}
\|e^{\delta_0 \varphi_{kl}} I_{B_k B_l } \psi_{B_k} \psi_{B_l}\|
 \lesssim \frac{1}{|y_k-y_l|^3}.
\end{equation}
 In addition, by the construction of $\varphi_{kl}$
 we have that $\varphi_{kl}=R$ on the support of $\psi_{A_k} \psi_{A_l}$, because $a \sim b$ and $a  \neq b$, and therefore
\begin{equation}\label{sigmaabest3}
 e^{-\delta_0 \varphi_{kl}} I_{A_k A_l} \psi_{A_k} \psi_{A_l} \doteq 0.
\end{equation}
Using \eqref{Resdelta-est}, \eqref{Uabdot}, \eqref{sigmaabest2} and
\eqref{sigmaabest3} we arrive at \eqref{Rklabsch2}. Observe now that
the cardinality of each equivalence class $D$ is $Z_k+ Z_l \choose
Z_k$,
 and in particular it depends only on $Z$. Therefore, \eqref{Rklabsch1} and
\eqref{Rklabsch2} imply \eqref{Rklabsch}. The estimate
\eqref{Rklabsch} together with \eqref{Ukl1dec}, \eqref{VUkl1} and
\eqref{Piqp} implies \eqref{Ukl1est}.

\paragraph{Estimate of $V_{kl,2}$}
We will now prove that there exists $c>0$ so that
\begin{equation}\label{Ukl2est}
\|V_{kl,2}\| \ls  \frac{M^2}{R^9}+ \frac{M^2}{R^9} N^Z e^{-cR}, \quad \forall R \geq C_7 \ln M,
\end{equation}
where $C_7$ is the same constant as in \eqref{Hbotest}.
Observe that, since the projections $P^\bot, Q_N$ commute with
$(H^{\s,\bot}-E(y))$, by \eqref{chiaest2} $V_{kl,2}$ remains the same
if we replace $P^\bot H^\s P^\bot$ appearing in the definition of
$D_{kl}$ with $H P^\bot$. Similarly, we can replace $H_{B_k
B_l}^{\s,\bot}$ with $ P_{\psi_{B_k} \psi_{B_l}}^\bot H_{B_k B_l}$.
Since, moreover, $\Einfty=\sum_{j=1}^M E_j$ we obtain that
\begin{equation}\label{VUkl2}
V_{kl,2}=Q_N U_{kl,2} Q_N,
\end{equation}
where
\begin{equation}\label{Ukl2eq}
U_{kl,2}=\sum_{a,b \in \mathcal{A}^{at}} P_{\Psi_a} I_a
(H^{\s,\bot}-\Einfty)^{-1} G_{kl} \Rbkl I_{B_k B_l} P_{\Psi_b},
\end{equation}
and
\begin{equation}\label{Gkldef}
G_{kl}:= -H P^\bot+P_{\psi_{B_k} \psi_{B_l}}^\bot H_{B_k B_l}+
\sum_{j \neq k,l}E_j.
\end{equation}
Therefore, using  \eqref{sumpsiaia1} and \eqref{Hbotest}, we obtain
that
\begin{equation}\label{Ukl2dec}
\|U_{kl,2}\| \ls \frac{M}{R^3} \|U_{kl,3}\|, \quad \forall R \geq C_7 \ln M,
\end{equation}
where
\begin{equation}\label{Ukl3def}
U_{kl,3}:=\sum_{b \in \mathcal{A}^{at}} G_{kl} \Rbkl I_{B_k B_l}
P_{\Psi_b}.
\end{equation}
To estimate $U_{kl,3}$ we use that $P_{\psi_{B_k} \psi_{B_l}}^\bot
H_{B_k B_l}=H_{B_k B_l}-P_{\psi_{B_k} \psi_{B_l}} H_{B_k B_l}$ and
that $-H P^\bot=HP-H=HP -H_b - I_b \chi_b - (1-\chi_b) I_b$, where
\begin{equation}\label{chib}
\chi_b=\chi_R^b:= \otimes_{j=1}^M \chi_{2R}^{B_j},
\end{equation}
and $\chi_{2R}^{B_j}$ was defined in \eqref{charaj},
 to obtain that
\begin{equation}\label{Gklest}
G_{kl}= (-H_b+H_{B_{k} B_l}+\sum_{j \neq k,l} E_i) +H P- I_b \chi_b
- (1-\chi_b) I_b-P_{\psi_{B_k} \psi_{B_l}} H_{B_k B_l}.
\end{equation}
Since $\Rbkl I_{B_k B_l}$ acts only on the coordinates in $B_k \cup
B_l$, if we decompose $H_b$ to the sum of the Hamiltonias of the
atoms $\sum_{j=1}^M H_{B_j}$ (see \eqref{Ha}), all $H_{B_j},j \neq k,l$ act directly
to their cut off ground states. Therefore, we obtain, using \eqref{almosteigat}, that
\begin{equation}\label{Hbsimplification}
H_b \Rbkl I_{B_k B_l} P_{\Psi_b} \doteq_M (H_{B_{k} B_l}+\sum_{j
\neq k,l} E_i) \Rbkl I_{B_k B_l} P_{\Psi_b}.
\end{equation}
 Next we use  \eqref{Ukl3def}, \eqref{Gklest} and \eqref{Hbsimplification} to arrive at
\begin{equation}\label{Ukl3dec}
U_{kl,3} \doteq_M U_{kl,4}- U_{kl,5}- U_{kl,6}-U_{kl,7},
\end{equation}
where
\begin{align}\label{Ukl4def}
U_{kl,4}&=\sum_{b \in \mathcal{A}^{at}} HP \Rbkl I_{B_k B_l}
P_{\Psi_b},\\ \label{Ukl5def} U_{kl,5}&=\sum_{b \in
\mathcal{A}^{at}} \chi_b I_b \Rbkl I_{B_k B_l} P_{\Psi_b},
\\\label{Ukl6def} U_{kl,6}&=\sum_{b \in \mathcal{A}^{at}} (1-\chi_b)
I_b \Rbkl I_{B_k B_l} P_{\Psi_b}, \\ \label{Ukl7def}
U_{kl,7}&=\sum_{b \in \mathcal{A}^{at}} P_{\psi_{B_k} \psi_{B_l}}
H_{B_k B_l} \Rbkl I_{B_k B_l} P_{\Psi_b}.
\end{align}
We begin with estimating $U_{kl,4}$. Using $P=PP$ and $\|HP\| \ls
M$, we obtain that
\begin{align}\label{Ukl4est1}
\|U_{kl,4}\| \ls M \|\sum_{b \in \mathcal{A}^{at}} P \Rbkl I_{B_k
B_l} P_{\Psi_b}\|.
\end{align}
Furthermore, observe that since $\Rbkl$ acts on the variables in
$B_k \cup B_l$ only, we have that
\begin{equation}\label{PPc}
P \Rbkl I_{B_k B_l} P_{\Psi_b}=P_D \Rbkl I_{B_k B_l} P_{\Psi_b} P_D,
\forall b \in \mathcal{A}^{at},
\end{equation}
where $D$ is the equivalence class of $b$, and we have also used
that $P_D \Psi_b= \Psi_b$ to insert $P_D$ on the right. If on the
right hand side of \eqref{Ukl4est1} we write $\sum_{b \in
\mathcal{A}^{at}}=\sum_{D \in X} \sum_{b \in D}$, where recall that
$X$ is the set of equivalence classes, and use \eqref{PPc} and
\eqref{equivimp} we obtain that
\begin{equation}\label{Ukl4est2}
\|U_{kl,4}\| \ls  M  \max_{D \in X} \| P_D \left(\sum_{b \in D}
\Rbkl I_{B_k B_l}  P_{\Psi_b} \right) P_D\|.
\end{equation}
Similarly to \eqref{chiaest2} we can show that
\begin{equation}\label{PbotIBkl}
I_{B_k B_l} P_{\Psi_b}= P_{\psi_{B_k} \psi_{B_l}}^\bot I_{B_k B_l}
P_{\Psi_b}.
\end{equation}
It is easy to show that $[P_{\psi_{B_k} \psi_{B_l}}^\bot, \Rbkl]
=0$, which together with \eqref{PbotIBkl} gives $P_{\Psi_b} \Rbkl
I_{B_k B_l} \Psi_b = 0$. The last equation together with
\eqref{Ukl4est2} implies that
\begin{equation}
\|U_{kl,4}\| \ls M   \max_{D \in X} \| P_D \left(\sum_{b \in D}
(P_D- P_{\Psi_b}) \Rbkl I_{B_k B_l}  P_{\Psi_b} \right) P_D\|.
\end{equation}
Therefore, using  that $(P_D- P_{\Psi_b}) e^{-\delta_0 \phi_{kl}}
\doteq 0$, where $\delta_0$ was defined in \eqref{def:delta0}, proceeding as in the proof of \eqref{Rklabsch2}, and
using that the cardinality of $D$ depends only on $Z_k$ and $Z_l$ we
obtain that
\begin{equation}\label{Ukl4est}
\|U_{kl,4}\| \doteq_M 0.
\end{equation}

We next estimate $U_{kl,5}$. Using \eqref{pairorthnorm}, which can
be applied because $\chi_b \chi_a=0$ when $a \neq b$, and because
\eqref{pairor} holds, we arrive at
\begin{equation}\label{Ukl5est1}
\|U_{kl,5} \| \leq  \max_{b \in \mathcal{A}^{at}} \| \chi_b I_b
\Rbkl I_{B_k B_l}  P_{\Psi_b} \|.
\end{equation}
Therefore, proceeding as in the proof of \eqref{Uabdot}, we obtain
that
\begin{equation}\label{Ukl5est2}
\|U_{kl,5} \| \leq  \max_{b \in \mathcal{A}^{at}} \| e^{-\delta_0
\varphi_{kl}}\chi_b I_b \Rbkldo e^{\delta_0 \varphi_{kl}} I_{B_k B_l}
P_{\Psi_b} \|.
\end{equation}
Using the last inequality, together with \eqref{Psiaeq},
\eqref{sigmaabest2}, \eqref{Resdelta-est} and the inequality
\begin{equation}\label{est:edeltaIb}
 \|e^{-\delta_0
\varphi_{kl}} I_b \prod_{j \neq k,l} \psi_{B_j}\| \ls
\frac{M}{R^3},
\end{equation}
which can be proven similarly to
\eqref{IapsiaL2est}, we obtain that
\begin{equation}\label{Ukl5est}
\|U_{kl,5} \| \ls  \frac{M}{R^6}.
\end{equation}

We now estimate $U_{kl,6}$. In particular, we will show that
there exists $c$, so that
\begin{equation}\label{Ukl6est}
\|U_{kl,6}\| \ls \frac{M}{R^6} N^Z e^{-cR}.
\end{equation}
To this end we need to define a new equivalence
relation $\sim'$ on $\mathcal{A}^{at}$. Let $a,b \in \mathcal{A}^{at}$. We say that $a \sim' b$
if $A_k=B_k$ and $A_l=B_l$. We denote an equivalance class originating from this equivalence
relation by $D'$ and the set of the equivalance classes by $X'$. With the help of elementary
combinatorics, one can verify that
\begin{equation}\label{est:X'card}
 |X'| \leq N^{2Z}.
\end{equation}
For $D' \in X'$ we define,
$P_{D'}:=\sum_{a \in D'} P_{\Psi_a}$. From \eqref{pairor} it follows that
$P_{\Psi_b} P_{D'}=P_{\Psi_b}$, for all $b \in D', D' \in X'$. Therefore, from \eqref{Ukl6def}
we obtain that
\begin{equation}\label{Ukl6rewrite}
 U_{kl,6}= \sum_{D' \in X'} A_{D'} P_{D'},
\end{equation}
where
\begin{equation}\label{def:AD'}
 A_{D'}= \sum_{b \in D'}  (1-\chi_b) I_b
\Rbkl I_{B_k B_l} P_{\Psi_b}.
\end{equation}
Using \eqref{Ukl6rewrite} and that $P_{D_1'} P_{D_2'}=0$ if $D_1' \neq D_2'$, one can verify,
by arguing similarly as in verifying \eqref{equivimp}, that
\begin{equation}\label{Ukl6est1}
 \|U_{kl,6}\| \leq |X'|^{\frac{1}{2}} \max_{D' \in X'} \| A_{D'} \|.
\end{equation}
We shall now estimate $A_{D'}$. Using \eqref{Psiaeq}, we can  write
\begin{equation}\label{ADrewrite1}
 A_{D'}= \sum_{b \in D'}  \bigg|(1-\chi_b) I_b \prod_{j=1, j \neq k,l}^M \psi_{B_j}
\Rbkl I_{B_k B_l} \psi_{B_k} \psi_{B_l} \bigg\rangle \bigg\langle \Psi_b \bigg|,
\end{equation}
because the functions $\psi_{B_j}$ commute with the resolvent $\Rbkl$
for $j \neq k,l$. From \eqref{charaj} and \eqref{psiAj} it follows
that
\begin{equation}\label{eqn:chirbrestriction}
\chi_{2R}^{B_j} \psi_{B_j}=\psi_{B_j}, \text{ } \forall j=1,\dots,M, \forall b \in \mathcal{A}^{at}.
\end{equation}
Due to the last equation and \eqref{chib}, we can replace $\chi_b$ on
 the right hand side of \eqref{ADrewrite1}
with $\chi_{2R}^{B_k}\chi_{2R}^{B_l}$. This gives that
\begin{equation}\label{ADrewrite}
 A_{D'}=\sum_{b \in D'} | F_b \rangle \langle \Psi_b|,
\end{equation}
where
\begin{equation*}
F_b:=(1-\chi_{2R}^{B_k}\chi_{2R}^{B_l}) I_b \prod_{j=1, j \neq k,l}^M \psi_{B_j}
\Rbkl I_{B_k B_l} \psi_{B_k} \psi_{B_l}.
\end{equation*}
We will show that
\begin{equation}\label{suppFaFb}
\text{supp}(F_a) \cap \text{supp}(F_b)=\emptyset, \text{ } \forall a, b \in D', \text{ with } a \neq b, \forall D' \in X'.
\end{equation}
  Indeed, let $D' \in X'$ and $a,b \in D'$. Then $A_k=B_k, A_l=B_l$ so that the resolvents
$\Rakl, \Rbkl$ act on the same coordinates. Since $\psi_{A_k}=\psi_{B_k}$ and $\psi_{A_l}=\psi_{B_l}$,
we obtain from \eqref{pairor} that
$$\text{supp}\left(\prod_{j=1, j \neq k,l}^M \psi_{A_j}\right) \cap \text{supp}\left(\prod_{j=1, j \neq k,l}^M \psi_{B_j}\right)=\emptyset,$$
which in turn implies \eqref{suppFaFb}.
 Due to \eqref{suppFaFb} and \eqref{pairor}, we can apply \eqref{pairorthnorm} to \eqref{ADrewrite} to obtain that
\begin{equation*}
 \|A_{D'} \| \leq \max_{b \in D'}  \|F_b\|.
\end{equation*}
We can estimate $F_b$ in a similar way as the right hand side of \eqref{Ukl5est1} (see \eqref{Ukl5est})
to obtain that there exists $c$ so that
\begin{equation}\label{ADest}
 \|A_{D'} \| \ls \frac{M}{R^6} e^{-cR}, \text{ for all } D \in X'
\end{equation}
We note that the extra factor $e^{-cR}$, which does not exist on the right hand side
of \eqref{Ukl5est}, was gained by the fact that
$\|(1-\chi_{2R}^{B_k}\chi_{2R}^{B_l})e^{-\frac{\delta_0}{2} \varphi_{kl}}\|_{L^\infty} \doteq 0$.
From \eqref{est:X'card}, \eqref{Ukl6est1} and \eqref{ADest} we obtain \eqref{Ukl6est}.

 Now we estimate $U_{kl,7}$. We observe that each of the summands
 in \eqref{Ukl7def} remains invariant if multiplied with $P_{\Psi_b}$ on
 the left, because $\Rbkl I_{B_k B_l}$ acts only on the coordinates in $B_k \cup B_l$.
 Therefore,
 \begin{equation}
U_{kl,7}=\sum_{b \in \mathcal{A}^{at}} P_{\Psi_b}  P_{\psi_{B_k}
\psi_{B_l}} H_{B_k B_l} \Rbkl I_{B_k B_l} P_{\Psi_b},
 \end{equation}
 which together with \eqref{pairorthnorm} implies that
 \begin{equation*}
\|U_{kl,7}\| \leq \max_{b \in \mathcal{A}^{at}} \|P_{\psi_{B_k}
\psi_{B_l}} H_{B_k B_l} \Rbkl I_{B_k B_l} P_{\Psi_b}\|.
 \end{equation*}
Therefore, since $P_{\psi_{B_k} \psi_{B_l}}^\bot$ commutes with
$\Rbkl$, we obtain, by \eqref{PbotIBkl} that
\begin{equation*}
\|U_{kl,7}\| \leq \max_{b \in \mathcal{A}^{at}} \|P_{\psi_{B_k}
\psi_{B_l}} H_{B_k B_l} P_{\psi_{B_k} \psi_{B_l}}^\bot \Rbkl I_{B_k
B_l} P_{\Psi_b}\|,
 \end{equation*}
 which together with \eqref{almosteigat} gives that
\begin{equation}\label{Ukl7est}
\|U_{kl,7}\| \doteq 0.
\end{equation}
 By
\eqref{Ukl3dec}, \eqref{Ukl4est}, \eqref{Ukl5est}, \eqref{Ukl6est}
and \eqref{Ukl7est} we obtain that there exists $c$ so that
\begin{equation}\label{Ukl3est}
\|U_{kl,3}\|  \ls \frac{M}{R^6}+ \frac{M}{R^6} N^Z e^{-cR}.
\end{equation}
From \eqref{Ukl3est}, \eqref{Ukl2dec} and \eqref{VUkl2}
we obtain \eqref{Ukl2est}. Moreover, \eqref{Uklest} follows from \eqref{Ukldec}, \eqref{Ukl1est} and
\eqref{Ukl2est}. Lemma \ref{lem:Uentwicklung} follows
 from \eqref{UeUeinfty}, \eqref{UEdec} and  \eqref{Uklest}.
\end{proof}

In Lemma \ref{lem:Uentwicklung} we estimated $V(E(y))|_{\Ran \Pi}$. The following lemma will
help us to obtain more precise information on $V(E(y))|_{\Ran \Pi}$.

\begin{lemma}\label{lem:Uaagenauer}
  For all $i,j \in \{1, 2,\dots, M\}$ with $i<j$, we have that
\begin{equation}\label{Uaagenauer}
r_{ij}:=\lan I_{A_i A_j} \psi_{A_i} \psi_{A_j}, \Raij I_{A_i A_j} \psi_{A_i}
\psi_{A_j}\ran=\frac{ e^4
\sigma_{ij}}{|y_i-y_j|^6}+O(\frac{1}{|y_i-y_j|^7}),
\end{equation}
where $\sigma_{ij}$ was defined in \eqref{def:sigmaij}.
\end{lemma}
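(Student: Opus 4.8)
The plan is to use that $r_{ij}$ involves only the electron coordinates $x_l$ with $l\in A_i\cup A_j$, so it is effectively a two‑atom quantity, and to reduce it to the definition \eqref{sigmaijvdef}--\eqref{def:sigmaij} of $\sigma_{ij}$. First I would pass to the translated variables $z_{li}=x_l-y_i$ ($l\in A_i$), $z_{mj}=x_m-y_j$ ($m\in A_j$) of \eqref{zlm}. The implementing unitary sends $\psi_{A_i}\psi_{A_j}$ to the fixed function $\psi_i\psi_j$, sends $I_{A_iA_j}\psi_{A_i}\psi_{A_j}$ to $\tilde I_{A_iA_j}\psi_i\psi_j$ (cf. \eqref{changeIAiAj}), and sends $H_{A_iA_j}^\s=Q_{A_i}Q_{A_j}(H_{A_i}+H_{A_j})$ to $Q_{Z_i}Q_{Z_j}(H_i\otimes I^{Z_j}+I^{Z_i}\otimes H_j)$, i.e. to an operator that no longer depends on $y$; hence $\Raij$ is sent to a $y$‑independent resolvent, for which I keep the notation $\Raij$. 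Thus all the $y$‑dependence of $r_{ij}$ sits inside $\tilde I_{A_iA_j}$. Inserting the expansion \eqref{exp:IAiAj},
\[
\tilde I_{A_iA_j}\psi_i\psi_j=\frac{e^2}{|y_i-y_j|^3}\,f_{ij,\widehat{y_{ij}}}\,\psi_i\psi_j+O(|y_i-y_j|^{-4}),
\]
into both slots of $r_{ij}=\langle \tilde I_{A_iA_j}\psi_i\psi_j,\Raij\,\tilde I_{A_iA_j}\psi_i\psi_j\rangle$, and using $\|\Raij\|\ls 1$ (the bound established right after \eqref{Rbkldef}) together with $\|\tilde I_{A_iA_j}\psi_i\psi_j\|\ls|y_i-y_j|^{-3}$ (from \eqref{preIapsiaL2est}), every contribution carrying a factor $O(|y_i-y_j|^{-4})$ is $O(|y_i-y_j|^{-7})$, so that
\[
r_{ij}=\frac{e^4}{|y_i-y_j|^6}\big\langle f_{ij,\widehat{y_{ij}}}\,\psi_i\psi_j,\ \Raij\, f_{ij,\widehat{y_{ij}}}\,\psi_i\psi_j\big\rangle+O(|y_i-y_j|^{-7}).
\]

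It remains to show the inner product equals $\sigma_{ij}$ up to an error that is exponentially small in $R$. Note first that $f_{ij,v}$ is permutation‑symmetric within each of the two coordinate blocks — indeed $f_{ij,v}=\big(\sum_{l=1}^{Z_i}z_l\big)\cdot\big(\sum_{m=Z_i+1}^{Z_i+Z_j}(z_m-3(z_m\cdot v)v)\big)$ — so $f_{ij,v}\psi_i\psi_j\in\Ran(Q_{Z_i}\otimes Q_{Z_j})$, and on that subspace the operator $Q_{Z_i}Q_{Z_j}(H_i\otimes I^{Z_j}+I^{Z_i}\otimes H_j)$ coincides with $H_{ij}^\s$ of \eqref{def:Hijs}. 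I would therefore compare $\Raij$ with $R_{ij}^{\s,\bot}$ of \eqref{def:rijsbot}: they differ only through the projection sandwiched around the same Hamiltonian, namely $P_{\psi_i\psi_j}^\bot$ versus $P_{\phi_i\otimes\phi_j}^\bot$. Using $\|\psi_i-\phi_i\|_{H^2}\doteq 0$ from \eqref{phijpsij}, $H_i\phi_i=E_i\phi_i$, and $\|(H_i-E_i)\psi_i\|\doteq 0$ from \eqref{almosteigat} (so that $(H_i\otimes I^{Z_j}+I^{Z_i}\otimes H_j)\psi_i\psi_j$ is $L^2$‑close to $(E_i+E_j)\psi_i\psi_j$), one commutes the unbounded Hamiltonian past the rank $\le 2$ operator $P_{\psi_i\psi_j}^\bot-P_{\phi_i\otimes\phi_j}^\bot$ and checks that $P_{\psi_i\psi_j}^\bot(H_i\otimes I^{Z_j}+I^{Z_i}\otimes H_j)P_{\psi_i\psi_j}^\bot-P_{\phi_i\otimes\phi_j}^\bot(H_i\otimes I^{Z_j}+I^{Z_i}\otimes H_j)P_{\phi_i\otimes\phi_j}^\bot$ is a bounded operator of norm $\doteq 0$ on $\Ran(Q_{Z_i}\otimes Q_{Z_j})$; the second resolvent identity and $\|\Raij\|\ls 1$, $\|R_{ij}^{\s,\bot}\|\ls 1$ then give $\|\Raij-R_{ij}^{\s,\bot}\|\doteq 0$ on this subspace. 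Also $\|f_{ij,v}(\psi_i\psi_j-\phi_i\otimes\phi_j)\|\doteq 0$, since $f_{ij,v}$ is quadratic in the $z$'s while $\psi_i\psi_j-\phi_i\otimes\phi_j$ has exponentially small norm even after multiplication by any polynomial weight, by \eqref{Psidecay} and \eqref{groundstatedecay}. Combining these approximations and invoking Lemma \ref{lem:sigma} for the independence of the chosen unit vector,
\[
\big\langle f_{ij,\widehat{y_{ij}}}\,\psi_i\psi_j,\ \Raij\, f_{ij,\widehat{y_{ij}}}\,\psi_i\psi_j\big\rangle=\langle f_{ij,\widehat{y_{ij}}}\,\phi_i\otimes\phi_j,\ R_{ij}^{\s,\bot}\, f_{ij,\widehat{y_{ij}}}\,\phi_i\otimes\phi_j\rangle+O(e^{-cR})=\sigma_{ij}+O(e^{-cR}).
\]
Since $|y_i-y_j|\ge R$, multiplying by $e^4|y_i-y_j|^{-6}$ turns the $O(e^{-cR})$ into a term bounded by $O(e^{-cR})$, which is absorbed into the remainder, and \eqref{Uaagenauer} follows.

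I expect the step requiring the most care to be the replacement of $\Raij$ by $R_{ij}^{\s,\bot}$: one must keep track that $Q_{Z_i}Q_{Z_j}(H_i\otimes I^{Z_j}+I^{Z_i}\otimes H_j)$ and $H_{ij}^\s$ agree only on $\Ran(Q_{Z_i}\otimes Q_{Z_j})$ — harmless here, as every vector to which the resolvents are applied lies in that subspace — and one must commute the unbounded atomic Hamiltonian past the finite‑rank difference of the two ground‑state projections, which is precisely where the $H^2$ estimate \eqref{phijpsij} and the approximate‑eigenfunction estimate \eqref{almosteigat} are used. The Taylor‑expansion bookkeeping in the first paragraph is routine given \eqref{exp:IAiAj}, \eqref{preIapsiaL2est} and the exponential bounds \eqref{Psidecay}.
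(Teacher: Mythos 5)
Your proposal is correct and follows essentially the same route as the paper: change of variables \eqref{zlm}, insertion of the Taylor expansion \eqref{exp:IAiAj} with the bounded resolvent to isolate the $|y_i-y_j|^{-6}$ term up to $O(|y_i-y_j|^{-7})$, and then replacement of the cut-off states $\psi_i,\psi_j$ by the exact ground states $\phi_i,\phi_j$ with exponentially small error, concluding via \eqref{sigmaijvdef}, \eqref{def:sigmaij} and Lemma \ref{lem:sigma}. The only difference is that you spell out the resolvent comparison (second resolvent identity, finite-rank projection difference, \eqref{phijpsij} and \eqref{almosteigat}) which the paper treats tersely by citing \eqref{phijpsij}.
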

\begin{proof}
 By the change of variables \eqref{zlm} and from
\eqref{exp:IAiAj} and \eqref{Rbkldef}, it follows that
\begin{equation}\label{IAkAlskl}
r_{ij} = \frac{e^4}{|y_i-y_j|^6} \langle f_{ij, \widehat{y_{ij}}} \psi_{i} \otimes \psi_{j},
(P_{\psi_{i} \otimes \psi_{j}}^\bot H_{ij}^{\s} P_{\psi_{i} \otimes \psi_{j}}^\bot - E_i-E_j)^{-1} f_{ij,\widehat{y_{ij}}} \psi_{i} \otimes \psi_{j}
\rangle + O(\frac{1}{|y_i-y_j|^7}),
\end{equation}
where recall that $\widehat{y_{ij}}=\frac{y_i-y_j}{|y_i-y_j|}$, $\psi_i, \psi_j$
where defined in \eqref{psiAj}, and $H_{ij}^\s$ was defined in \eqref{def:Hijs}.
We will show now that the leading term on the right hand side of
\eqref{IAkAlskl} fulfills the estimate
\begin{equation}\label{fastsigma}
\langle f_{ij, \widehat{y_{ij}}} \psi_{i} \otimes \psi_{j},
(P_{\psi_{i} \otimes \psi_{j}}^\bot H_{ij}^{\s} P_{\psi_{i} \otimes \psi_{j}}^\bot - E_i-E_j)^{-1} f_{ij,\widehat{y_{ij}}} \psi_{i} \otimes \psi_{j}
\rangle \doteq \sigma_{ij}(\widehat{y_{ij}}),
\end{equation}
where $\sigma_{ij}(\widehat{y_{ij}}) $ was defined in
\eqref{sigmaijvdef}. Indeed, the right hand side and the left hand side
of \eqref{fastsigma} differ only because the cut off ground states
$\psi_i, \psi_j$ appear on the left instead of the exact ground states
$\phi_i, \phi_j$ that appear in the definition of $\sigma_{ij}(\widehat{y_{ij}}) $.
Therefore, using \eqref{phijpsij}, we obtain \eqref{fastsigma}.
 From \eqref{IAkAlskl}, \eqref{fastsigma} and \eqref{def:sigmaij} we obtain
\eqref{Uaagenauer}.
\end{proof}

 Lemmas \ref{lem:Uentwicklung} and \ref{lem:Uaagenauer}
 imply Proposition \ref{prop:Feshest}. Since in the proof of Lemma \ref{lem:Uentwicklung} we assumed Lemma
\ref{lem:Resdelta-est}, it remains to prove the latter and this is what we do next.

\subsection{Proof of Lemma \ref{lem:Resdelta-est}} \label{sec:Resdelta-est}
 For any $\delta>0$ and any operator $K$ acting on the coordinates in $B_k \cup B_l $, we let
\begin{equation}\label{Adelta}
K_{\delta}:= e^{\delta \varphi_{kl}} K e^{-\delta \varphi_{kl}},
\text{ and } K_\delta^\bot:=(K^\bot)_{\delta}.
\end{equation}
If the operator has indices e.g. $K_{mn}$ we define
$K_{mn,\delta}:=(K_{mn})_{\delta}$.
 We also define $\Delta_{B_k B_l}=\sum_{m \in B_k \cup B_l}
\Delta_{x_m}$, $\Delta_{B_k B_l,\delta}=(\Delta_{B_k B_l})_\delta$,
and similarly $\nabla_{B_k B_l}$ to be the gradient in $\mathbb{R}^{3(Z_k +
Z_l)}$. We begin with two auxiliary lemmas.
\begin{lemma}\label{HdeltaPdeltaP}
There exists $c$ so that for all $\delta \leq c$ we have
\begin{equation}\label{HdeltaminusH-bnd1}
\|(1-\Delta_{B_k B_l})^{-\frac{1}{2}} (H_{B_k B_l, \delta}^\s-H_{B_k
B_l}^\s)(1-\Delta_{B_k B_l})^{-\frac{1}{2}}\| \ls  \delta.
\end{equation}
\end{lemma}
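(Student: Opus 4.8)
The plan is to compute the operator $H_{B_kB_l,\delta}^\s-H_{B_kB_l}^\s$ explicitly, recognise it as $\delta$ times a first-order differential operator with coefficients bounded uniformly in $R$, and then absorb the two resolvent factors $(1-\Delta_{B_kB_l})^{-\frac12}$ in the standard way. First I would record the commutation properties of the weight. By \eqref{varphiji}, for each $i\in\{k,l\}$ the function $\varphi_i$ depends on $x_{B_i}$ only through $\langle x_{B_i}\rangle$, hence it is invariant under permutations of the coordinates inside $B_i$; therefore the multiplication operator $e^{\delta\varphi_{kl}}$ commutes with $Q_{B_k}$ and with $Q_{B_l}$, and, being a function of the positions only, it also commutes with every Coulomb multiplication term occurring in $H_{B_k B_l}$. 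Hence only the kinetic part is affected by the conjugation: with the notation $(\cdot)_\delta$ of \eqref{Adelta},
\[
H_{B_kB_l,\delta}^\s-H_{B_kB_l}^\s=Q_{B_k}Q_{B_l}\big(e^{\delta\varphi_{kl}}(-\Delta_{B_kB_l})e^{-\delta\varphi_{kl}}+\Delta_{B_kB_l}\big).
\]
Since $\varphi_{kl}$ has $L^\infty$-bounded first and (by the hypothesis on the $\varphi_i$) second derivatives, $e^{\pm\delta\varphi_{kl}}$ preserves $H^2(\R^{3(Z_k+Z_l)})$, so a direct computation, valid first on $C_c^\infty$ and then by density, gives
\[
e^{\delta\varphi_{kl}}(-\Delta_{B_kB_l})e^{-\delta\varphi_{kl}}+\Delta_{B_kB_l}=2\delta\,\nabla\varphi_{kl}\cdot\nabla_{B_kB_l}+\delta(\Delta\varphi_{kl})-\delta^2|\nabla\varphi_{kl}|^2 .
\]

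Next I would estimate the three resulting terms after sandwiching by $(1-\Delta_{B_kB_l})^{-\frac12}$. The two zeroth-order terms are multiplication by the functions $\delta\,\Delta\varphi_{kl}$ and $\delta^2|\nabla\varphi_{kl}|^2$: by \eqref{varphigrad} we have $\|\nabla\varphi_{kl}\|_{L^\infty}\ls1$ and the uniform bound on the second derivatives of the $\varphi_i$ gives $\|\Delta\varphi_{kl}\|_{L^\infty}\ls1$, so for $\delta\le1$ both have operator norm $O(\delta)$, and sandwiching by the contractions $(1-\Delta_{B_kB_l})^{-\frac12}$ only decreases the norm. For the first-order term I would use that $\|\nabla_{B_kB_l}(1-\Delta_{B_kB_l})^{-\frac12}\|\le1$ and that, by the pointwise Cauchy--Schwarz inequality, $\|\nabla\varphi_{kl}\cdot v\|_{L^2}\le\|\nabla\varphi_{kl}\|_{L^\infty}\|v\|_{L^2}$ for any vector field $v$; combining these gives $\|\nabla\varphi_{kl}\cdot\nabla_{B_kB_l}(1-\Delta_{B_kB_l})^{-\frac12}\|\ls1$, hence $\|(1-\Delta_{B_kB_l})^{-\frac12}\,2\delta\,\nabla\varphi_{kl}\cdot\nabla_{B_kB_l}\,(1-\Delta_{B_kB_l})^{-\frac12}\|\ls\delta$. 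Finally $Q_{B_k}Q_{B_l}$, being permutation invariant, commutes with $(1-\Delta_{B_kB_l})^{-\frac12}$ and has norm $\le1$, so it may be moved outside the resolvent factors without changing the bound; adding the three contributions yields \eqref{HdeltaminusH-bnd1} with $c$ any number $\le1$ (in fact $\delta^2|\nabla\varphi_{kl}|^2\le\delta$ already needs only $\delta\le\frac12$).

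The computation itself is routine; the one point that needs care is the commutation of $e^{\delta\varphi_{kl}}$ with the antisymmetrizers $Q_{B_k},Q_{B_l}$, which is precisely why $\varphi_{kl}$ was chosen as a function of $\langle x_{B_i}\rangle$, together with the bookkeeping showing that all implied constants depend only on $Z$ — which holds because each bound above is uniform in $R$ and in the particular clusters $B_k,B_l$ (the number of electron coordinates involved being at most $2Z$).
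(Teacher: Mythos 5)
Your proposal is correct and takes essentially the same route as the paper: one observes that the conjugation by $e^{\delta\varphi_{kl}}$ affects only the kinetic part (the antisymmetrizers and Coulomb potentials commuting with the weight), computes the resulting difference as a first-order operator of size $O(\delta)$, and absorbs it with the factors $(1-\Delta_{B_k B_l})^{-\frac{1}{2}}$. The only, harmless, difference is that the paper estimates the difference as a quadratic form on $H^1\times H^1$, written in the antisymmetric way in which only $\|\nabla\varphi_{kl}\|_{L^\infty}\ls 1$ from \eqref{varphigrad} enters, whereas your operator-level expansion also invokes the uniform bound on $\Delta\varphi_{kl}$, which the stated $C^2$ properties of $\varphi_R$ (and the fact that at most $2Z$ electron coordinates are involved) indeed provide.
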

\begin{proof}
Note that to prove \eqref{HdeltaminusH-bnd1} we can disregard $\s$
in the Hamiltonian because the antisymmetrizing projections $Q_{B_k},
Q_{B_l}$ commute with the Laplacians. It is therefore enough to
prove that there exists $c$ so that  for all $\delta \leq c$ we have
\begin{equation}
|\langle \Phi,(H_{B_k B_l, \delta}-H_{B_k B_l}) \Psi \rangle| \ls
\delta \|\Phi \|_{H^1} \| \Psi \|_{H^1}, \forall \Phi, \Psi \in
H^1(\mathbb{R}^{3(Z_k+Z_l)}).
\end{equation}
Observe that \begin{equation}\label{HdeltaminusH1} H_{B_k B_l,
\delta}-H_{B_k B_l}=-\Delta_{B_k B_l,\delta}+\Delta_{B_k B_l}.
\end{equation}
   Furthermore, an elementary computation
   gives that
\begin{equation*}
\langle \Phi,(-\Delta_{B_k B_l,\delta}+\Delta_{B_k B_l}) \Psi
\rangle $$$$= \delta \left[\langle (\nabla_{B_k B_l} \varphi_{kl})
\Phi, \nabla_{B_k B_l} \Psi \rangle - \langle \nabla_{B_k B_l} \Phi,
(\nabla_{B_k B_l} \varphi_{kl}) \Psi \rangle - \delta \langle \Phi,
|\nabla_{B_k B_l} \varphi_{kl}|^2 \Psi \rangle \right].
\end{equation*}
The  last equation together with \eqref{varphigrad} yields the
desired result.
\end{proof}

\begin{lemma}\label{PdeltaminusPlemma}
    For $k,l \in \{1,\dots,M\}$ with $k \neq l$ we define $P_{kl}:= P_{\psi_{B_k} \psi_{B_l}}$.
     Then there exists
$c$ so that if $\delta \leq c$, then
\begin{equation}\label{1minusdeltaPdeltaminusP}
\| P_{kl,\delta}-P_{kl}\| \lesssim \delta,  \quad \|H_{B_k B_l}
(P_{kl,\delta}-P_{kl})\| \lesssim \delta, \quad \|H_{B_k B_l,
\delta} (P_{kl,\delta}-P_{kl})\| \lesssim \delta.
\end{equation}
\end{lemma}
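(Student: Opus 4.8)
The plan is to exploit the rank-one structure of $P_{kl}$. Since $P_{kl}=P_{\psi_{B_k}\psi_{B_l}}=|u\rangle\langle u|$ with $u:=\psi_{B_k}\psi_{B_l}$ (a normalized function, so $P_{kl}$ is a genuine orthogonal projection) and $e^{\pm\delta\varphi_{kl}}$ are real multiplication operators, conjugation preserves this structure:
\begin{equation*}
P_{kl,\delta}=|u_\delta^{+}\rangle\langle u_\delta^{-}|,\qquad u_\delta^{\pm}:=e^{\pm\delta\varphi_{kl}}\,\psi_{B_k}\psi_{B_l},\qquad \langle u_\delta^{-},u_\delta^{+}\rangle=1 .
\end{equation*}
Hence $P_{kl,\delta}-P_{kl}=|u_\delta^{+}-u\rangle\langle u_\delta^{-}|+|u\rangle\langle u_\delta^{-}-u|$ is a rank-two operator with range in $\mathrm{span}\{u_\delta^{+}-u,u\}\subset H^2$, so, via $\||v\rangle\langle w|\|=\|v\|\,\|w\|$ and the triangle inequality, all three inequalities of the lemma reduce to the bounds
\begin{equation*}
\|u_\delta^{\pm}\|_{H^2}\lesssim 1,\qquad \|u_\delta^{\pm}-u\|_{H^2}=\|(e^{\pm\delta\varphi_{kl}}-1)\,u\|_{H^2}\lesssim\delta ,
\end{equation*}
valid for $\delta$ below a constant depending only on $Z$, together with the facts that $H_{B_k B_l}$ and $H_{B_k B_l,\delta}-H_{B_k B_l}$ act on $\mathrm{span}\{u_\delta^{+}-u,u\}$ with good norm bounds.

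For the bounds on $u_\delta^{\pm}$, the key point is that by construction \eqref{phi} each $\varphi_i$ is an increasing function of $\langle\cdot\rangle$ with derivative at most one that agrees with $\langle\cdot\rangle$ near the origin, so $\varphi_i\le\langle\cdot\rangle$ pointwise and hence $\varphi_{kl}(x_{B_k},x_{B_l})=\varphi_k(x_{B_k}-y_k)+\varphi_l(x_{B_l}-y_l)\le\langle x_{B_k}-y_k\rangle+\langle x_{B_l}-y_l\rangle$. For $\delta\le c_1/2$, with $c_1$ as in \eqref{Psiadecay}, this gives $e^{\delta\varphi_{kl}}\le e^{\frac{c_1}{2}(\langle x_{B_k}-y_k\rangle+\langle x_{B_l}-y_l\rangle)}$; combining it with Leibniz' rule for $\partial^{\alpha}(\psi_{B_k}\psi_{B_l})$, the elementary inequality $\langle x\rangle e^{\frac{c_1}{2}\langle x\rangle}\lesssim e^{c_1\langle x\rangle}$, and the weighted bounds \eqref{Psiadecay} applied to $\psi_{B_k}$ and $\psi_{B_l}$ separately, one gets $\|(1+\varphi_{kl})e^{\delta\varphi_{kl}}\partial^{\alpha}u\|\lesssim 1$ for $0\le|\alpha|\le 2$ (the number of Leibniz terms depending only on $Z$ since $|B_k|,|B_l|\le Z$). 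Since $|e^{\delta\varphi_{kl}}-1|\le\delta\varphi_{kl}e^{\delta\varphi_{kl}}$ and, for $1\le|\beta|\le2$, $|\partial^{\beta}e^{\delta\varphi_{kl}}|\lesssim\delta e^{\delta\varphi_{kl}}$ (using $\|\nabla\varphi_{kl}\|_{\infty}\le\sqrt2$ by \eqref{varphigrad}, the uniformly bounded second derivatives of the $\varphi_i$, and $\delta\le1$), so that $|\partial^{\beta}(e^{\delta\varphi_{kl}}-1)|\lesssim\delta(1+\varphi_{kl})e^{\delta\varphi_{kl}}$ for $|\beta|\le2$, one more application of Leibniz yields $\|(e^{\delta\varphi_{kl}}-1)u\|_{H^2}\lesssim\delta$; the estimate for $e^{-\delta\varphi_{kl}}$ is analogous and easier since $e^{-\delta\varphi_{kl}}\le1$ (one uses $|e^{-\delta\varphi_{kl}}-1|\le|e^{\delta\varphi_{kl}}-1|$).

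Granting these, the first inequality is immediate: $\|P_{kl,\delta}-P_{kl}\|\le\|u_\delta^{+}-u\|\,\|u_\delta^{-}\|+\|u\|\,\|u_\delta^{-}-u\|\lesssim\delta$. For the second, $H_{B_k B_l}=H_{B_k}+H_{B_l}$ is bounded $H^2\to L^2$ with a constant depending only on $Z$ (the Laplacian part trivially, and each of the finitely many Coulomb terms, whose number is bounded in terms of $Z$, is $(-\Delta)$-bounded with relative bound $0$), so $\|H_{B_k B_l}(P_{kl,\delta}-P_{kl})\|\le\|H_{B_k B_l}(u_\delta^{+}-u)\|\,\|u_\delta^{-}\|+\|H_{B_k B_l}u\|\,\|u_\delta^{-}-u\|\lesssim\delta$, using $\|H_{B_k B_l}u\|\lesssim\|u\|_{H^2}\lesssim1$ by \eqref{Psiadecay}. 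For the third I would use the identity $H_{B_k B_l,\delta}-H_{B_k B_l}=\Delta_{B_k B_l}-\Delta_{B_k B_l,\delta}$ underlying the proof of Lemma \ref{HdeltaPdeltaP}, where it is shown that the right-hand side is a first-order differential operator with bounded coefficients carrying an overall factor $\delta$, hence $\|(H_{B_k B_l,\delta}-H_{B_k B_l})v\|\lesssim\delta\|v\|_{H^1}$; applying this to the rank-two operator $P_{kl,\delta}-P_{kl}$ and using $\|P_{kl,\delta}-P_{kl}\|_{L^2\to H^1}\lesssim\delta$ gives a contribution $\lesssim\delta^2$, which added to the bound just obtained for $\|H_{B_k B_l}(P_{kl,\delta}-P_{kl})\|$ finishes the third estimate; one takes $c=\min\{c_1/2,1,c'\}$ with $c'$ the constant of Lemma \ref{HdeltaPdeltaP}. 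I do not anticipate a serious obstacle here: the only thing requiring care is keeping every implicit constant dependent on $Z$ alone, which is guaranteed since all combinatorial counts (Leibniz terms, number of Coulomb singularities) are controlled by $Z_k,Z_l\le Z$ and the weighted bounds \eqref{Psiadecay} are uniform in $R$ and $M$.
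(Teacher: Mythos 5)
Your proposal is correct, and it reaches the three bounds by a genuinely different route than the paper. You exploit the explicit rank-one structure: writing $P_{kl,\delta}-P_{kl}=|u_\delta^{+}-u\rangle\langle u_\delta^{-}|+|u\rangle\langle u_\delta^{-}-u|$ with $u_\delta^{\pm}=e^{\pm\delta\varphi_{kl}}u$, you reduce everything to weighted $H^2$ bounds on $(e^{\pm\delta\varphi_{kl}}-1)u$, which follow from $\varphi_{kl}\le\langle x_{B_k}-y_k\rangle+\langle x_{B_l}-y_l\rangle$ and the localization estimate \eqref{Psiadecay} for $\delta\le c_1/2$; the second bound then uses the Kato/Hardy $H^2\to L^2$ boundedness of $H_{B_kB_l}$ with $Z$-only constants, and the third uses \eqref{HdeltaminusH1} plus the $L^2\to H^1$ bound $O(\delta)$ of the rank-two difference. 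The paper instead proves the first inequality by differentiating $g(\delta)=P_{kl,\delta}-P_{kl}$ in $\delta$ (a $C^1$-in-norm argument plus the fundamental theorem of calculus, again resting on \eqref{Psiadecay}), and proves the second by a Leibniz expansion of $H_{B_kB_l}P_{kl,\delta}$ combined with the almost-eigenfunction property \eqref{almosteigat}, which replaces $H_{B_kB_l}P_{kl}$ by $E_{kl}P_{kl}$ up to an $O(e^{-cR})$ error; the third inequality is handled in both arguments the same way, via the relative boundedness coming from \eqref{HdeltaminusH1}. Your version is more elementary and self-contained (no differentiation in $\delta$, no use of \eqref{almosteigat}, no exponentially small error terms), at the mild cost of invoking the standard relative bound for the finitely many Coulomb terms, whose count and coefficients are controlled by $Z$ alone, so the uniformity of all constants in $M$ and $R$ is preserved exactly as in the paper.
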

\begin{proof}
Let $g(\delta):=P_{kl,\delta}-P_{kl}.$ Using \eqref{Psiadecay} and the dominated convergence theorem, one can show that
$g \in C^1([0,\frac{c_1}{2}]; B(L^2(\mathbb{R}^{3(Z_k+Z_l)})))$, that
\begin{equation}
g'(\delta)=\varphi_{kl} e^{\delta \varphi_{kl}} P_{kl} e^{-\delta
\varphi_{kl}}- e^{\delta \varphi_{kl}} P_{ij} \varphi_{kl}
e^{-\delta \varphi_{kl}}, \quad \forall \delta \in [0,\frac{c_1}{2}]
\end{equation}
and  that there exists $c$ so that $\|g'(\delta)\| \leq c$, for all $\delta \leq \frac{c_1}{2}$.
 Since $g(0)=0$, by  the fundamental theorem of calculus, we obtain,
 for $\delta \leq \frac{c_1}{2}$, that
\begin{equation}
\|g(\delta)\| \lesssim \delta.
\end{equation}
This implies the first inequality in
\eqref{1minusdeltaPdeltaminusP}. We now prove the second inequality
in \eqref{1minusdeltaPdeltaminusP}. We write $E_{kl}:=E_k+E_l$.
     Applying the Leibnitz rule and using \eqref{almosteigat}
      we obtain that there exists $c$ so that
\begin{align}\notag
\| H_{B_k B_l} (P_{kl, \delta}-P_{kl})\| \leq \| E_{kl}& (P_{kl,
\delta}-P_{kl}) \| \\ \label{leqdelta1} + \|  \nabla_{B_k B_l}
(e^{\delta \varphi_{kl}}) & \cdot \nabla_{B_k B_l} P_{kl} e^{-\delta
\varphi_{kl}}\| + \| (\Delta_{B_k B_l} e^{\delta \varphi_{kl}})
P_{kl} e^{-\delta \varphi_{kl}} \|+O(e^{-cR}).
\end{align}
Therefore, using \eqref{Psiadecay} and the first inequality in
\eqref{1minusdeltaPdeltaminusP} we arrive at the second inequality
in \eqref{1minusdeltaPdeltaminusP} for $\delta \leq \frac{c_1}{2}$.
 The last inequality in \eqref{1minusdeltaPdeltaminusP} follows from
the second and the $H_{B_k B_l}$ boundedness of $H_{B_k B_l,
\delta}$, which in turn follows from \eqref{HdeltaminusH1}.
\end{proof}
We will now prove Lemma \ref{lem:Resdelta-est}.
 We observe that there exists $C,c$ so that
 \begin{equation}\label{est:positivity}
 \H_{B_k
B_l}^{\s,\bot}-E_{kl} \geq c, \forall R \geq C,
\end{equation}
 where the assumption on $R$
is necessary to obtain the gap $c$ due to the cut off of the ground states.
Using \eqref{est:positivity} together with the decomposition
 $\H_{B_k B_l, \delta}^{\s,\bot}-E_{kl}=\H_{B_k
B_l}^{\s,\bot}-E_{kl}+ \H_{B_k B_l, \delta}^{\s,\bot}-\H_{B_k
B_l}^{\s,\bot}$,
 we obtain that
\begin{equation}\label{HdeltaminusE}
\H_{B_k B_l, \delta}^{\s,\bot}-E_{kl} = (\H_{B_k
B_l}^{\s,\bot}-E_{kl})^{\frac{1}{2}}(I+ K) (\H_{B_k
B_l}^{\s,\bot}-E_{kl})^{\frac{1}{2}}, \forall R \geq C,
\end{equation}
where
\begin{equation}\label{Kdef}
K:=(\H_{B_k B_l}^{\s,\bot}-E_{kl})^{-\frac{1}{2}} (\H_{B_k B_l,
\delta}^{\s,\bot}-\H_{B_k B_l}^{\s,\bot}) (\H_{B_k
B_l}^{\s,\bot}-E_{kl})^{-\frac{1}{2}}.
\end{equation}
 We show that if $\delta$ is small enough, and $R \geq C$ then $I+ K$ is invertible, and we estimate its inverse. First, since $\H_{B_k
B_l}^{\s,\bot}:=P_{kl}^\bot H_{B_k B_l}^\s P_{ij}^\bot$, $\H_{B_k
B_l, \delta}^{\s,\bot}:=(\H_{B_k B_l}^{\s,\bot})_{\delta}$ and
$(P_{kl,\delta}^\bot-P_{kl}^\bot)=P_{kl}-P_{kl,\delta}$, we obtain
that
\begin{equation}\label{Hdeltadifdecompo}
 (H_{B_k B_l, \delta}^{\s,\bot}-H_{B_k B_l}^{\s,\bot})
$$$$ =  -P_{kl,\delta}^\bot H_{B_k B_l, \delta}^\s
(P_{kl,\delta}-P_{kl}) + P_{kl,\delta}^\bot  (H_{B_k B_l,
\delta}^\s-H_{B_k B_l}^\s) P_{kl}^\bot - (P_{kl,\delta}-P_{kl})
H_{B_k B_l}^\s P_{kl}^\bot.
\end{equation}
       Using \eqref{HdeltaminusH-bnd1}, $\| Q_{B_k} Q_{B_l} (1-\Delta_{B_k B_l})^{\frac{1}{2}}
(H_{B_k B_l}^{\s,\bot}-E_{kl})^{-\frac{1}{2}} \| \lesssim 1$ and
\eqref{Psiadecay} we obtain that for $\delta$ small enough
\begin{equation}\label{hilfreich}
\|(\H_{B_k B_l}^{\s,\bot}-E_{kl})^{-\frac{1}{2}} P_{kl,\delta}^\bot
(H_{B_k B_l, \delta}^\s-H_{B_k B_l}^\s) P_{ij}^\bot (\H_{B_k
B_l}^{\s,\bot}-E_{kl})^{-\frac{1}{2}} \| \ls \delta, \quad \forall R \geq C.
\end{equation}
Since, moreover, on the right hand side of \eqref{Hdeltadifdecompo}
the first and third terms
      are estimated by the third and second inequality in \eqref{1minusdeltaPdeltaminusP},
  respectively, we obtain, using \eqref{est:positivity}, \eqref{Kdef}, \eqref{Hdeltadifdecompo} and
  \eqref{hilfreich}, that for $\delta$ small enough and $R \geq C$
\begin{equation}\label{UabEest3}
\|K\| \lesssim \del.
\end{equation}
We use the last estimate and take $\delta$ small enough to obtain
that $\|K\| \leq \frac{1}{2}.$ This shows that $I+K$ is invertible
and its inverse is bounded by $2$. This together with
\eqref{est:positivity} and \eqref{HdeltaminusE} gives \eqref{Resdelta-est}, for $\delta$ small enough
and $R \geq C$.



\section{Proof of Theorem \ref{thm:upper}}\label{proof:thmupper}

  A reasonable generalization of the test function as described in
  the sketch of the proof for two atoms in the introduction, is the normalized antisymmetrization of
  the function
\begin{equation}\label{testdef}
\tilde \Psi_b=\Psi_b-\sum_{k<l} \chi_{B_k B_l} \Rbkl I_{B_k B_l}
\Psi_b,
\end{equation}
where $b \in \mathcal{A}^{at}$ and recall that $\Rbkl$ was defined
in \eqref{Rbkldef}. The cut off functions $\chi_{B_k
B_l}=\chi_{2R}^{B_k} \chi_{2R}^{B_l}$, with $\chi_{2R}^{B_m}$
defined in \eqref{charaj}, together with the cut off introduced to
construct the functions $\Psi_b$ (see \eqref{Psiaeq} and
\eqref{psiAj}) ensure that
\begin{equation}\label{tildepsiab}
\text{supp}(\tilde \Psi_a) \cap \text{supp}(\tilde \Psi_b)=\emptyset, \text{ } \forall a, b \in \mathcal{A}^{at} \text{ with } a \neq b.
\end{equation}
The dilation $2R$ of the characteristic function was chosen to
ensure that, by \eqref{eqn:chirbrestriction},
\begin{equation}\label{chipsi}
\chi_{B_k B_l}=1 \text{ on } \supp \Psi_b.
\end{equation}
Similarly to \eqref{QNpsia} one can show that
\begin{equation}\label{tildeQNpsia}
Q_N \tilde \Psi_a=\frac{1}{|\mathcal{A}^{at}|}\sum_{b \in \mathcal{A}^{at}} \text{sgn}(b,a) \tilde \Psi_b.
\end{equation}
Since the interaction energy of the system is the ground state
energy of $Q_N(H-\Einfty)Q_N$, we have that
\begin{equation*}
W(y) \leq \frac{1}{\|Q_N \tilde \Psi_b\|^2} \langle Q_N \tilde \Psi_b ,
(H-\Einfty) Q_N \tilde \Psi_b\rangle.
\end{equation*}
Due to \eqref{tildepsiab} and \eqref{tildeQNpsia} and the fact that
$\text{supp}((H-\Einfty) \Psi_a) \subset \text{supp}(\Psi_a), \forall a \in \mathcal{A}^{at}$,
it turns out that we can drop the anti-symmetrization projection
$Q_N$ to obtain that
\begin{equation}\label{Wupp}
W(y) \leq \frac{1}{\|\tilde \Psi_b\|^2} \langle  \tilde \Psi_b ,
(H-\Einfty) \tilde \Psi_b\rangle.
\end{equation}
By \eqref{Hadecomp}, \eqref{almosteig} and \eqref{chipsi} we obtain that $\chi_{B_k
B_l}(H-\Einfty)\Psi_b = (H-\Einfty)\Psi_b \doteq_M I_b \Psi_b$ for
all $k<l$ and therefore, using \eqref{testdef} and
\eqref{psiaIbpsib}, we obtain that
\begin{equation}\label{proddec}
\langle  \tilde \Psi_b , (H-\Einfty) \tilde \Psi_b\rangle \doteq_M
-2 \sum_{k<l} Re \langle I_b \Psi_b, \Rbkl I_{B_k B_l} \Psi_b
\rangle+D_1+D_2,
\end{equation}
where
\begin{equation}\label{D1def}
D_1=\sum_{i<j} \sum_{k<l} \langle \chi_{B_k B_l} \Rbkl I_{B_k B_l}
\Psi_b, (H_b-\Einfty)\chi_{B_i B_j} \Rbij I_{B_i B_j} \Psi_b
\rangle,
\end{equation}
and
\begin{equation}\label{D2def}
D_2=\sum_{i<j} \sum_{k<l} \langle  \Rbkl I_{B_k B_l} \Psi_b,
\chi_{B_k B_l} I_b \chi_{B_i B_j} \Rbij I_{B_i B_j} \Psi_b \rangle.
\end{equation}
The decomposition $D_1+D_2$ is based simply on the decomposition
$H-\Einfty=(H_b-\Einfty)+I_b$. Now we will estimate the different
terms on the right hand side of \eqref{proddec}.

 For the first term
we write $I_b=\sum_{i<j} I_{B_i B_j}$ and use
\eqref{onesymmetryenough}, which implies that the contribution of
$I_{B_i B_j}$ in the first term is zero unless $\{i,j\}=\{k,l\}$. In
other words, we obtain that
\begin{equation}\label{IbIBkBl}
 \langle I_b \Psi_b, \Rbkl I_{B_k B_l} \Psi_b \rangle= \langle I_{B_k B_l} \psi_{B_k} \psi_{B_l} , \Rbkl I_{B_k B_l} \psi_{B_k} \psi_{B_l}
 \rangle,
\end{equation}
where we have also used \eqref{Psiaeq} and that $\Rbkl I_{B_k B_l}$
acts only on the coordinates in $B_k \cup B_l$. We now estimate
$D_1$. Proceeding as in the proof of \eqref{Rklabsch2}, we obtain
that
\begin{equation*}
\|(1-\chi_{B_k B_l}) \Rbkl I_{B_k B_l} \psi_{B_k} \psi_{B_l}\|_{H^1}
\doteq 0, \text{ for all } k,l \in \{1,\dots,M\} \text{ with } k \neq l.
\end{equation*}
Therefore, the characteristic functions in the definition
 of $D_1$ can be dropped, by paying an error that is exponentially small in $R$.
In other words
\begin{equation}
D_1 \doteq_{M^4} \sum_{i<j} \sum_{k<l} \langle  \Rbkl I_{B_k B_l}
\Psi_b, (H_b-\Einfty) \Rbij I_{B_i B_j} \Psi_b \rangle,
\end{equation}
where the factor $M^4$ in the exponentially small error arises due
to the double sum over the pairs of the atoms. Therefore, using
\eqref{Hbsimplification} and that $(H_{B_i B_j}-E_i-E_j)\Rbij I_{B_i
B_j} \Psi_b \doteq I_{B_i B_j} \Psi_b$, we obtain that
\begin{equation}\label{D1est}
D_1 \doteq_{M^5} \sum_{i<j} \sum_{k<l} \langle  \Rbkl I_{B_k B_l}
\Psi_b, I_{B_i B_j} \Psi_b \rangle$$$$ = \sum_{k<l} \langle I_{B_k
B_l} \psi_{B_k} \psi_{B_l},  \Rbkl I_{B_k B_l} \psi_{B_k} \psi_{B_l}
\rangle,
\end{equation}
where in the last step we used that $\sum_{i<j} I_{B_i B_j}=I_b$ and
\eqref{IbIBkBl}. To estimate $D_2$ we write $I_b=\sum_{m<n} I_{B_m
B_n}$. Then, due to \eqref{onesymmetryenough}, the contribution of
the term $I_{B_m B_n}$ in $D_2$ is zero unless $m,n \in
C_{i,j,k,l}:=\{i,j\} \cup \{k,l\}$. Therefore,
\begin{equation}\label{D2dec}
D_2=\sum_{i<j} \sum_{k<l} \sum_{m,n \in C_{i,j,k,l}} D_{ij,kl,mn}
\end{equation}
where
\begin{equation}\label{Dijklmn}
D_{ij,kl,mn} := \langle  \Rbkl I_{B_k B_l} \Psi_b, \chi_{B_k B_l} I_{B_m B_n} \chi_{B_i B_j}
\Rbij I_{B_i B_j} \Psi_b \rangle.
\end{equation}
We shall now show the following lemma:
\begin{lemma}\label{lem:Dijklmnfast0}
Let $i,j,k,l \in \{1,\dots,M\}$, with $i<j$, $k<l$ and $m,n \in C_{i,j,k,l}$.
Assume that one of the following happens:

(i) $\{i,j\} \cap \{k,l\}=\emptyset$.

(ii) $\{i,j\} \cap \{k,l\}$ has one element which belongs in $\{m,n\}$.
\newline
Then
\begin{equation}
D_{ij,kl,mn} \doteq 0.
\end{equation}
\end{lemma}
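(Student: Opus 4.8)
The plan is to exploit the fact that every operator occurring in $D_{ij,kl,mn}=\langle\Rbkl I_{B_kB_l}\Psi_b,\chi_{B_kB_l}I_{B_mB_n}\chi_{B_iB_j}\Rbij I_{B_iB_j}\Psi_b\rangle$ acts on at most two of the atomic blocks $B_r$, and that each $\chi_{2R}^{B_r}$ equals $1$ on $\supp\psi_{B_r}$. Hence the inner product factorises over the blocks; the blocks outside $\{i,j,k,l\}$ contribute $\|\psi_{B_r}\|^2=1$, so only $\{i,j,k,l\}$ matters. The structural input is $\Rbij I_{B_iB_j}\psi_{B_i}\psi_{B_j}\in\Ran P_{\psi_{B_i}\psi_{B_j}}^\bot$ (and likewise for $\Rbkl$): the output of the reduced resolvent is orthogonal to the two‑body cut‑off ground state. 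First I would integrate out the two ``bare'' atoms among $\{i,j,k,l\}$ — those not carrying a leg of $I_{B_mB_n}$; by this orthogonality the resulting one‑block functions, say $g$ in the $B_k$ variables and $h$ in the $B_i$ variables, satisfy $\langle\psi_{B_k},g\rangle=0$ and $\langle\psi_{B_i},h\rangle=0$.

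Next I would write $I_{B_mB_n}=\sum_{p,q}I^{pq}_{mn}$ with $I^{pq}_{mn}=-e^2|y_n-x_p|^{-1}-e^2|y_m-x_q|^{-1}+e^2|y_m-y_n|^{-1}+e^2|x_p-x_q|^{-1}$ and note that the first three summands each involve the coordinates of a single block. When $\{m,n\}$ is one of the pairs $\{i,j\}$ or $\{k,l\}$ (the generic case in (i)), the factor over $B_m\cup B_n$ of the inner product reduces to $\langle\Rbkl I_{B_kB_l}\psi_{B_k}\psi_{B_l},\psi_{B_k}\psi_{B_l}\rangle$ or its $ij$-analogue, which vanishes, so $D_{ij,kl,mn}=0$ exactly. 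When $I_{B_mB_n}$ straddles two different blocks, the three single‑block terms factor out against $\langle\psi_{B_k},g\rangle=0$ or $\langle\psi_{B_i},h\rangle=0$ and give $0$, leaving only the genuinely two‑body term $e^2|x_p-x_q|^{-1}$. For that term I would Taylor expand in $1/|y_m-y_n|$: the monopole and the two monopole–dipole contributions again factor through one of the orthogonalities, or through $\int(x_p-y_r)|\psi_{B_r}|^2=0$ — the vanishing dipole moment of a spherically symmetric density, Proposition~\ref{spherical} — leaving only higher multipoles.

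The higher‑multipole remainder is to be made exponentially small exactly as in the proof of \eqref{Rklabsch2}: write $\Rbkl=e^{-\delta_0\varphi_{kl}}\Rbkldo e^{\delta_0\varphi_{kl}}$, control $\Rbkldo$ by Lemma~\ref{lem:Resdelta-est}, and use that $\chi_{2R}^{B_k}\chi_{2R}^{B_l}$ forces the part of $g$ overlapping the far block to sit where $\varphi_{kl}=R/2$, so the overlap is $\doteq 0$. Case (ii), where $\{i,j\}\cap\{k,l\}$ is a single atom $s$, is handled by the same scheme: the hypothesis $s\in\{m,n\}$ is precisely what lets the leg of $I_{B_mB_n}$ at $s$ be paired against the bare copy of $\psi_{B_s}$ on the opposite side, so that the Newton‑type cancellation of \eqref{onesymmetryenough} or one of the ground‑state orthogonalities applies; $B_s$ being entangled by both resolvents is harmless because $\Rbij,\Rbkl$ commute with multiplication by the $\chi_{2R}$'s.

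The step I expect to be the real obstacle is this last one — upgrading ``polynomially small'' to ``exponentially small'' for the straddling, electron–electron term. Since $e^2|x_p-x_q|^{-1}$ is non‑local and the resolvent outputs $g,h$ are not compactly supported, one must pin down precisely which cut‑off forces $x_p$ and $x_q$ to lie at distance $\gtrsim R$, and then feed this into the boosted‑resolvent bound with the correct weight; reconciling the weights $\varphi_{kl},\varphi_{ij}$ with the supports of $\chi_{2R}^{B_k}\chi_{2R}^{B_l}$ and $\chi_{2R}^{B_i}\chi_{2R}^{B_j}$ is where the proof becomes most technical, and it is also where I am least certain the bound really is $\doteq 0$ rather than a large negative power of $R$.
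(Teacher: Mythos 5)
Your proposal has a genuine gap, and it is exactly the point you flagged at the end: the mechanism you describe cannot yield the exponential smallness $D_{ij,kl,mn}\doteq 0$ that the lemma asserts, and one of your intermediate vanishing claims is false. Concretely, take case (i) with $\{m,n\}=\{i,k\}$. After integrating out the bare blocks as you propose, the remaining integral is $\int \overline{g(x_{B_k})}\,\overline{\psi_{B_i}(x_{B_i})}\,\chi_{2R}^{B_k}\,I_{B_iB_k}\,h(x_{B_i})\,dx_{B_i}dx_{B_k}$ with $\langle\psi_{B_k},g\rangle=0$, $\langle\psi_{B_i},h\rangle=0$. The single-block piece $-e^2|y_k-x_p|^{-1}$, $p\in B_i$, then factorizes as $\bigl(\int\overline{g}\,\chi_{2R}^{B_k}dx_{B_k}\bigr)\bigl(\langle\psi_{B_i},|y_k-x_p|^{-1}h\rangle\bigr)$: the first factor is not $\langle g,\psi_{B_k}\rangle$ and the Coulomb kernel in the second sits between $\psi_{B_i}$ and $h$, so neither orthogonality applies. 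If you instead expand all four Coulomb pieces as in \eqref{I1sevmu}, the first two orders cancel algebraically, but the surviving dipole--dipole term pairs $\overline{\psi_{B_i}}\,h$ against first moments, which no orthogonality and no statement like $\int(x_p-y_i)|\psi_{B_i}|^2=0$ annihilates (the density there is $\overline{\psi_{B_i}}h$, not $|\psi_{B_i}|^2$); this route can only give a bound of order $R^{-9}$, i.e.\ the size of the terms \emph{not} covered by the lemma. Moreover the boosted-weight argument of \eqref{Rklabsch2} is not available here: that argument exploits that $\Psi_a$ and $\Psi_b$ for two \emph{different} decompositions have supports separated by a distance of order $R$, whereas in $D_{ij,kl,mn}$ both vectors are built from the same $b$, and the cutoffs $\chi_{B_kB_l},\chi_{B_iB_j}$ localize near the nuclei, precisely where the resolvent outputs are largest, so no exponential separation is forced.

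The missing idea, which is the paper's proof, is a one-body projection argument rather than a two-body orthogonality. Under either hypothesis there is a ``spectator'' atom $s\in(\{i,j\}\cup\{k,l\})\setminus\{m,n\}$ lying in exactly one of the two pairs (in case (ii) this is precisely what the assumption that the shared index belongs to $\{m,n\}$ guarantees; when it fails, as for $D_{ij,il,jl}$, the term is genuinely only $O(R^{-9})$). Say $s=j$. By \eqref{onesymmetryenough} one has the exact identity $I_{B_iB_j}\Psi_b=P_{\psi_{B_j}}^\bot I_{B_iB_j}\Psi_b$ (a statement strictly stronger than your $\Rbij I_{B_iB_j}\psi_{B_i}\psi_{B_j}\perp\psi_{B_i}\psi_{B_j}$); by \eqref{almosteigat} the one-body projection $P_{\psi_{B_j}}^\bot$ commutes with $\Rbij$ up to an $O(e^{-cR})$ error, it commutes exactly with $\chi_{B_iB_j}$ by \eqref{chipsi}, and it commutes exactly with $I_{B_mB_n}$, $\chi_{B_kB_l}$ and $\Rbkl I_{B_kB_l}$ since none of these acts on the $B_j$ coordinates; sliding it across the inner product it finally hits the bare factor in $\Psi_b$ on the other side, and $P_{\psi_{B_j}}^\bot\Psi_b=0$. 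This produces the exponential smallness in one stroke, with no multipole expansion and no exponential weights.
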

\begin{proof}
The argument is similar in both cases and we will illustrate it by
looking into the second case: we will show that
\begin{equation}\label{Dijklmnfast0}
D_{ij,il,il} \doteq 0.
\end{equation}
Indeed, we first observe that from \eqref{onesymmetryenough}
it follows that
$I_{B_i B_j} \Psi_b=P_{\psi_{B_j}}^\bot I_{B_i B_j} \Psi_b$.
Furthermore, by \eqref{almosteigat} it follows that $[P_{\psi_{B_j}}^\bot, H_{B_j}] \doteq 0$,
which implies that $[P_{\psi_{B_j}}^\bot, \Rbij] \doteq 0$, because $P_{\psi_{B_j}}^\bot$
commutes with all other operators of the definition of $\Rbij$.
Moreover, $[P_{\psi_{B_j}}^\bot, \chi_{B_i B_j}]=0$ because of \eqref{chipsi}, and
$P_{\psi_{B_j}}^\bot$ commutes with all other operators in the definition of $D_{ij,il,il}$,
 because they do not act on the
coordinates in $B_j$. Therefore, since  $P_{\psi_{B_j}}^\bot \Psi_b=0$, we obtain \eqref{Dijklmnfast0}.
This concludes the proof of Lemma \ref{lem:Dijklmnfast0}.
\end{proof}
From \eqref{D2dec} and Lemma \ref{lem:Dijklmnfast0}, we obtain that
there exists $c$ so that
\begin{equation}\label{D2est1}
|D_2| \ls M^3 \max_{ij,kl,mn} |D_{ij,kl,mn}|+M^4e^{-cR},
\end{equation}
because most of the $D_{ij,kl,mn}$ terms are exponentially small.
 From \eqref{changeIAiAj} and
\eqref{preIapsiaL2est} it follows that $\|I_{B_k B_l} \Psi_b\| \ls
\frac{1}{R^3}$. Proceeding as in estimating the right hand side of
\eqref{Ukl5est1} we obtain that $\|I_{B_m B_n} \chi_{B_i B_j} \Rbij
I_{B_i B_j} \Psi_b\| \ls \frac{1}{R^6}$. Inserting these estimates
in \eqref{Dijklmn} and using \eqref{D2est1}, we obtain that
\begin{equation}\label{D2est}
|D_2| \ls \frac{ M^3}{R^9}+M^4 e^{-cR}.
\end{equation}

From \eqref{proddec}, \eqref{IbIBkBl}, \eqref{D1est} and
\eqref{D2est} it follows that there exists $c$ so that
\begin{equation}\label{upperscalar}
\langle  \tilde \Psi_b , (H-\Einfty) \tilde
\Psi_b\rangle=-\sum_{k<l} \langle I_{B_k B_l} \psi_{B_k} \psi_{B_l}
, \Rbkl I_{B_k B_l} \psi_{B_k} \psi_{B_l}\rangle+
O\left(\frac{M^3}{R^9}\right)+O(M^5 e^{-cR}).
\end{equation}
By \eqref{testdef} we obtain that
\begin{equation*}
\|\tilde \Psi_b-\Psi_b\|^2= \sum_{i<j} \sum_{k<l} \langle \chi_{B_i B_j} \Rbij I_{B_i B_j}
\Psi_b,\chi_{B_k B_l} \Rbkl I_{B_k B_l}
\Psi_b  \rangle
$$$$\doteq_{M^4} \sum_{k<l} \langle \chi_{B_k B_l} \Rbkl I_{B_k B_l}
\Psi_b,\chi_{B_k B_l} \Rbkl I_{B_k B_l}
\Psi_b  \rangle,
\end{equation*}
where the last step can be verified with the same argument as in the proof of Lemma
\ref{lem:Dijklmnfast0}. The last estimate together with $\|I_{B_k B_l} \Psi_b\| \ls \frac{1}{R^3}$
gives that there exists $c$ so that
\begin{equation}\label{est:psibtilde}
\|\tilde \Psi_b-\Psi_b\|^2 \ls \frac{M^2}{R^6}+M^4 e^{-cR}.
\end{equation}
 In addition, from \eqref{chipsi}, $[P_{\Psi_b}^\bot, \Rbkl]=0$ and the equality $I_{B_k B_l} \Psi_b=P_{\Psi_b}^\bot I_{B_k B_l} \Psi_b$,
  which can be proven similarly to \eqref{chiaest2}, it follows that $\Psi_b$ is orthogonal to $\tilde \Psi_b- \Psi_b$, so that, by \eqref{est:psibtilde},
\begin{equation}\label{tildepsinorm}
1 \leq \|\tilde \Psi_b\|^2 \leq 1+ O\bigg(\frac{M^2}{R^6}\bigg)+O(M^4 e^{-cR}).
\end{equation}
 From \eqref{Wupp}, \eqref{upperscalar}, \eqref{Uaagenauer} and
\eqref{tildepsinorm}, we obtain Theorem \ref{thm:upper}. Note that
 the higher order terms could be dropped by imposing that $R \geq C_4
M^{\frac{1}{3}}$ for an appropriate $C_4>0$ depending only on $Z$.

 \section{ Proof of Proposition \ref{prop:Eimpliesgap} }\label{Hbotbndseveral}
 To simplify the exposition in the rest of the proof, we set the
elementary charge $e$ to be 1. This does not affect the proof, as
$e$ could have been removed from the beginning with a rescaling argument.
We will first prove the first part Proposition \ref{prop:Eimpliesgap} and then
we will discuss how to modify the proof in order to prove its second part.
\subsection{Proof of part (i) of Proposition \ref{prop:Eimpliesgap}}
 Recall that $\mathcal{A}$
is the set of all decompositions of $\{1,\dots,N\}$ into $M$ clusters. We will cover,
for $R \geq 1$, the configuration space $\mathbb{R}^{3 N}$ by the
domains,
\begin{equation}\label{Omegadef}
\Omega_{a}^{ \beta }=\{(x_1,\dots,x_N):|x_i-y_j| \leq \beta
R^{\frac{3}{4}}, \forall j \in \{1,\dots,M\} , \forall i \in A_j \},
\end{equation}
with $\beta>0$, $a=\{A_1,\dots,A_M\} \in \mathcal{A}$, where the
electrons in each cluster are close to the corresponding nucleus,
and
\begin{equation}
\Omega_{\{i\}}^{ \beta}=\{(x_1,\dots,x_N):|x_i-y_j| \geq \beta
R^{\frac{3}{4}}, \forall j \in \{1,\dots,M\} \},
\end{equation}
for $i \in \{1,\dots,N\}$, where the $i-$th electron is far away from all the
nuclei. It will be made clear in the proof that the choice of the
power $\frac{3}{4}$ in the definitions above optimizes the
assumption on $R$ in the proposition. Let
$\hat{\mathcal{A}}=\mathcal{A} \cup \{\{1\},\dots,\{N\}\}$. With the
covering above we associate a partition of unity $(J_{\hb})_{\hb \in
\hat{\mathcal{A}}}$ having the properties (cf. \cite{Sig}):
\begin{equation}\label{partunprop1}
 \supp J_{a} \subset \Omega_{a}^{\frac{1}{6}}, \quad \supp J_{\{i\}} \subset
\Omega_{\{i\}}^{ \frac{1}{12}}, \quad \forall a \in \mathcal{A},
\quad \forall i \in \{1,\dots,N\},
\end{equation}
\begin{equation}\label{partunprop3}
0 \leq J_{\hb} \leq 1, \quad \forall \hb \in \hat{\mathcal{A}},
\end{equation}
\begin{equation}\label{partunprop2}
\sum_{\hb \in \hat{\mathcal{A}}}J_{\hb}^2=1,
\end{equation}
\begin{equation}\label{symmetryJa}
T_\pi J_a=J_a, \forall \pi \in S(a), \forall a \in \mathcal{A},
\end{equation}
where $S(a) \subset S_N$ is the subgroup of permutations that
keeps the clusters of $a$ invariant, and $T_\pi$ was defined in \eqref{def:Tpi}.
 We denote the characteristic
function of a set $K$ by $\chi_K$. We consider the functions
$F_{a}=g*\chi_{\Omega_{a}^{\frac{7}{48}}}, a \in \mathcal{A}$ and
$F_{\{j\}}=g*\chi_{\Omega_{\{j\}}^{ \frac{5}{48}}}, j \in
\{1,\dots,N\}$, where $g:=\otimes_{j=1}^N g_R$,
$g_R(x):=R^{-\frac{9}{4} } g_1(R^{-\frac{3}{4}}x)$ and $g_1:\R^{3}
\rightarrow \R$ is a $C_c^{\infty}$ spherically symmetric function
supported in the ball $B(0,{\frac{1}{48}})$ with $g_1 \geq 0$ and
$\int_{\R^{3}} g_1=1$. Then $F_{\hb} \in C^{\infty}$ and $F_{\hb}
\geq 0$, for all $\hb \in \hat{\cA}$. Furthermore, using the
triangle inequality and the fact that $g$ is supported in
$B(0,{\frac{R^\frac{3}{4}}{48}})$ we obtain that
$F_{\hb}|_{\Omega_{\hb}^{\frac{1}{8}}}=1$ for all $\hb \in \hat{A}$.
 The last equality together with the fact that $\cup_{\hb \in \hat{\mathcal{A}}}
\Omega_{\hb}^{\frac{1}{8}}=\R^{3N}$ gives that $\sum_{\hb \in
\hat{\mathcal{A}}} F_{\hb}^2 \geq 1$. We now define
\begin{equation*}
J_{\hc}=\frac{F_{\hc}}{\sqrt{\sum_{\hb \in \hat{\mathcal{A}}}
F_{\hb}^2}},\ \quad \hc \in \hat{\mathcal{A}}.
\end{equation*}
 All the stated properties of the family $J_{\hc}$ follow easily by construction.
We will now show that there exists $D_1>0$ so that
\begin{equation}\label{nablaleqN}
\sum_{\hc \in \hat{\mathcal{A}}} |\nabla J_{\hc}|^2 \leq \frac{D_1^2
N^2}{R^{\frac{3}{2}}},
\end{equation}
  where by $D_n$ with $n \in \N$, we mean a positive constant
which depends only on $Z$ but, unlike $c, C$, does not change
from one equation to the other.
 Indeed, a direct calculation gives,
using the inequality $\sum_{\hb \in \hat{\mathcal{A}}} F_{\hb}^2
\geq 1$, that
\begin{equation}\label{angular}
\sum_{\hc \in \hat{\mathcal{A}}} |\nabla J_{\hc}|^2 \leq \sum_{\hc
\in \hat{\mathcal{A}}} |\nabla F_{\hc}|^2.
\end{equation}
Moreover, by construction of $F_{\hc}$, there exists
$D_1$ such that
\begin{equation}\label{Fcschranke}
\|\nabla F_{\hc}\|_{L^\infty} \leq \frac{ D_1 \sqrt{N}}{2
R^\frac{3}{4}}, \text{ } \forall \hc \in \hat{\mathcal{A}},
\end{equation}
where the factor $\sqrt{N}$ arises because of the change of the
dimension of the domain of definition $\R^{3N}$, and the factor
$\frac{1}{R^{\frac{3}{4}}}$ arises from the rescaling of the
function $g_R$ involved in the construction. Here $D_1=2 \|\nabla
g_1\|_{L^1}$. Furthermore, observe that $\text{supp}F_{a_1} \cap \text{supp}F_{a_2}=0, \forall
a_1, a_2 \in \mathcal{A}$ with $a_1 \neq a_2$. As a consequence, the
sum $\sum_{\hc \in \hat{\mathcal{A}}} |\nabla F_{\hc}|^2$ consists
locally of at most $N+1$ terms, because $\hat{\cA}$ has $N$ more
elements than $\cA$. The last observation together with
\eqref{angular} and \eqref{Fcschranke} gives \eqref{nablaleqN}.

    Now we use the IMS localization formula (see for example \cite{CFKS})
\begin{equation}\label{IMS}
\H=\sum_{\hb \in \hat{\mathcal{A}}} \left( J_{\hb} \H
J_{\hb}-|\nabla J_{\hb}|^2 \right).
\end{equation}
  From \eqref{IMS} and \eqref{nablaleqN} it follows that
\begin{equation*}
Q_N P^\bot \H P^\bot Q_N \geq \sum_{\hb \in \hat{\mathcal{A}}} Q_N P^\bot
J_{\hb} H J_{\hb} P^\bot Q_N-\frac{ D_1^2 N^2}{R^{\frac{3}{2}}} Q_N
P^\bot.
\end{equation*}
Therefore, using that $Q_N P^\bot \leq 1$ and the equality
$\H^{\s,\bot}=Q_N P^\bot \H P^\bot Q_N$, we obtain that
\begin{equation}\label{Hbotgenest1}
\H^{\s,\bot} \geq \sum_{\hb \in \hat{\mathcal{A}}} Q_N P^\bot J_{\hb}
H J_{\hb} P^\bot Q_N-\frac{ D_1^2 N^2}{R^{\frac{3}{2}}}.
\end{equation}
We will now estimate the different terms $Q_N P^\bot J_{\hb} H J_{\hb}
P^\bot Q_N$. To this end we need some notation and some definitions.
 Recall that $E_j$  denotes the ground state energy $H_j^\s$ (defined in \eqref{def:His}),
 and let $ E_j^{'}$ denote its  first excited state energy.
   We define
\begin{equation}\label{L1def}
D_2=\min\{E_{j}^{'}-E_{j}:j =1,\dots,M\},
\end{equation}
where it is obvious that $D_2$ depends only on $Z$, and
\begin{equation}\label{L3def'}
\gamma= \Sigma_{N-1}-E(y),
\end{equation}
where $\Sigma_{N-1}$ was defined in \eqref{def:sigmam}.
From Theorems \ref{hvz} and \ref{zysl}  it follows that
$\gamma>0$. Arguing similarly as in the proof of Lemma \ref{lem:Eyless0}, we can show that there exists $C$ so that
\begin{equation}\label{GSEless0}
 \inf \sigma(H^{N-1,\s}(y))<0, \text{ for all } R \geq C.
\end{equation}
The only difference in the argument is in the construction of the test function: it is tensor product of $M-1$ cut off ground states of atoms and 1 cut off ground state of a positive ion instead of $M$ cut off ground states of atoms. The interaction vanishes again, because only one ion is positive and the potential created by the atoms is zero (see \eqref{onesymmetryenough}). Therefore, if $R \geq C$, then it follows
from \eqref{def:sigmam} and \eqref{GSEless0} that
\begin{equation}
\Sigma_{N-1}=  \inf \sigma(H^{N-1,\s}(y))<0,
\end{equation}
because removing the restriction onto $\Ran Q_{N-1}$ can change
the spectrum only at zero. From the last estimate and \eqref{L3def'}
it follows that
 \begin{equation}\label{firstexcited}
E(y)+\gamma<0
 \end{equation}
 and that
 \begin{equation}\label{L3def}
  \gamma= \inf \sigma(H^{N-1,\s}(y))-E(y).
 \end{equation}
  Before stating the next lemma we remind and introduce some more notation.
Let $H_{\{j\}}=H-I_{\{j\}}$,  where $I_{\{j\}}=-\sum_{i=1}^M
\frac{ Z_i}{|x_j-y_i|}+\sum_{i\neq j}\frac{1}{|x_i-x_j|}$ is
 the interaction between the $j-$th electron and the rest of the system.
 Hence, $H_{\{j\}}$ is the Hamiltonian of the system with the $j$-th electron decoupled from the rest of the system.
Let $Q_{\{j\}}:=Q_{\{1,2,...,N\}/\{j\}}$ with the latter defined as in \eqref{def:QAj}.
In other words $Q_{\{j\}}$ is the projection onto the functions
 that are antisymmetric with respect to all electron coordinates except the $j$-th.
Recall that for $b \in \cA^{at}$ the function $\chi_b$, defined in \eqref{chib},
is a smooth characteristic function of a set where
the electrons in $B_j$ are near to the nucleus $y_j$.
Recall also that by
$D_n, n \in \N$ we denote positive constants, which depend  on $Z$
 only and do not change from one equation to the other. Our next goal is to prove
\begin{lemma}\label{lem:PJHJPest}
With the notation defined above, we have that
\begin{equation}\label{PJHJPestHVZ}
Q_N P^\bot J_{\{j\}} H J_{\{j\}} P^\bot Q_N \geq \bigg(E(y)+\gamma-\frac{12
N}{R^{\frac{3}{4}}}\bigg) Q_N P^\bot J_{\{j\}}^2
 P^\bot Q_N, \quad \forall j \in
\{1,\dots,N\},
\end{equation}
and, with $D_2$ defined in \eqref{L1def}, there exist  $c, D_3$ such
that
\begin{align}\notag
Q_N P^\bot J_a H J_a P^\bot Q_N \geq \bigg(\Einfty+D_2 &- \frac{
D_3 N^2}{R^{\frac{3}{2}}}\bigg) Q_N P^\bot J_a^2 P^\bot Q_N  \\
\label{PJHJPestatomic} &-Q_N \chi_a O( M e^{-cR^{\frac{3}{4}}}) \chi_a
Q_N, \quad \forall a \in \cA^{at}.
\end{align}
Moreover, Property (E) implies that there exists  $D_4$, such that
\begin{equation}\label{PJHJPestnonat}
Q_N P^\bot J_{b} H J_{b} P^\bot Q_N \geq \bigg(\Einfty+ D_4- \frac{ D_3
N^2}{R^{\frac{3}{2}}}\bigg) Q_N P^\bot J_{b}^2 P^\bot Q_N, \quad \forall R \geq \frac{4N}{D_4},\quad  \forall
b \in \cA /\cA^{at}.
\end{equation}
\end{lemma}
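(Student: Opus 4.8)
The plan is to treat all three inequalities by the same recipe: throw away whatever projection is inert, pass to the quadratic form on $\Phi:=P^\bot Q_N\Psi$ (which is fully antisymmetric), and split $H$ according to the decomposition attached to the localization region. The point that makes this work is that, by its construction from the rotation- and permutation-symmetric building blocks, each $J_{\hb}$ is a symmetric function under the permutations in $S(\hb)$ (and, for $\hb=\{j\}$, under every permutation fixing $j$), so multiplication by $J_{\hb}$ commutes with the corresponding $T_\pi$'s and hence $J_{\hb}\Phi$ again lies in $\Ran Q_{\hb}$ (respectively in $\Ran Q_{\{j\}}$). Therefore $\langle\Psi,Q_NP^\bot J_{\hb}HJ_{\hb}P^\bot Q_N\Psi\rangle=\langle J_{\hb}\Phi,HJ_{\hb}\Phi\rangle$, and it suffices to bound the latter below by the asserted multiple of $\|J_{\hb}\Phi\|^2$ plus the stated errors.

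\textbf{The far-electron and atomic terms.} For $\hb=\{j\}$, on $\supp J_{\{j\}}\subset\Omega_{\{j\}}^{1/12}$ every $|x_j-y_i|\ge\frac1{12}R^{3/4}$, so, dropping the nonnegative electron--electron part of $I_{\{j\}}$ and using $\sum_iZ_i=N$, one has $I_{\{j\}}\ge-12NR^{-3/4}$ there, hence $H=H_{\{j\}}+I_{\{j\}}\ge H_{\{j\}}-12NR^{-3/4}$; since $H_{\{j\}}=-\Delta_{x_j}\otimes I+I\otimes(\text{Hamiltonian of the other }N-1\text{ electrons})\ge I\otimes(\cdots)$ and $J_{\{j\}}\Phi\in\Ran Q_{\{j\}}$, the bracketed operator is $\ge\Sigma_{N-1}=E(y)+\gamma$ on that vector by \eqref{def:sigmam} and \eqref{L3def'}, which gives \eqref{PJHJPestHVZ}. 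For $a\in\cA^{at}$ put $\Phi_a=J_a\Phi$. From $H_a=\sum_iH_{A_i}$, Condition (D), and the spectral theorem on $\Ran Q_{A_i}$ one gets $H_{A_i}-E_i\ge D_2 P_{\phi_{A_i}}^\bot$, and summing, together with the elementary inequality $\sum_iP_{\phi_{A_i}}^\bot\ge 1-\prod_iP_{\phi_{A_i}}$, yields $H_a\ge\Einfty+D_2-D_2P_{\tilde\Phi_a}$ on $\Ran Q_a$, where $\tilde\Phi_a=\prod_i\phi_{A_i}$. Since $P\Phi=0$ and the $P_{\Psi_b}$ have orthogonal ranges we have $\langle\Psi_a,\Phi\rangle=0$, so $\|P_{\tilde\Phi_a}\Phi_a\|=|\langle J_a\tilde\Phi_a,\Phi\rangle|\le|\langle J_a(\tilde\Phi_a-\Psi_a),\Phi\rangle|+|\langle(1-J_a)\Psi_a,\Phi\rangle|$, and using $\|\tilde\Phi_a-\Psi_a\|\doteq_M0$, the fact that $J_a\equiv1$ on a neighbourhood of the nuclei at scale $R^{3/4}$ (obtained by tracking supports of the $F_{\hb}$), the decay \eqref{Psiadecay}, and $\supp J_a\cup\supp\Psi_a\subset\supp\chi_a$, this is $O(Me^{-cR^{3/4}})\|\chi_a\Phi\|$. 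Finally, writing $I_a=\sum_{i<j}\iint\frac{d\mu_i(u)\,d\mu_j(v)}{|u-v|}$ with $\mu_i=Z_i\delta_{y_i}-\sum_{l\in A_i}\delta_{x_l}$ of total charge $Z_i-|A_i|=0$, a Newton/multipole expansion on $\Omega_a^{1/6}$ (where $\supp\mu_i$ has radius $\frac16R^{3/4}$ and the clusters are $\ge\frac12R$ apart) makes the monopole--monopole and monopole--dipole contributions vanish by neutrality, leaving the dipole--dipole bound $|\iint\cdots|\lesssim Z_iZ_jR^{3/2}R^{-3}$; summing, $I_a\ge-C(Z)N^2R^{-3/2}$ on $\supp J_a$. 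Adding $\langle\Phi_a,H_a\Phi_a\rangle$ and $\langle\Phi_a,I_a\Phi_a\rangle$ gives \eqref{PJHJPestatomic}.

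\textbf{The non-atomic terms.} Fix $b\in\cA/\cA^{at}$, put $n_i=Z_i-|B_i|$, so $\sum_in_i=0$ and $\sum_i|n_i|=2P$ with $P:=\sum_{n_i>0}n_i\ge1$ and $P\le N$. Because $\supp J_b$ cannot meet $\supp\Psi_a$ for any $a\in\cA^{at}$ (a common point would force two distinct nuclei to be within $\frac16R^{3/4}+\frac16R<R$ of a single electron), one has $P^\bot J_b=J_b$, so the projection drops out and it remains to bound $\langle J_b\Phi,HJ_b\Phi\rangle$ from below. Here $H=H_b+I_b$ with $H_b=\sum_iH_{B_i}\ge\sum_iE_{i,n_i}$ on $\Ran Q_b$ (Theorem \ref{zysl}), and the decisive use of Property (E) is that it yields a gap \emph{linear in the charge imbalance}: applying (E) repeatedly to a positive-ion cluster and a negative-ion cluster, each of the $P$ electron-moving steps that reduce $(n_i)$ to the all-neutral configuration contributes a gap $\ge 2D_4$, where $D_4>0$ depends only on $Z$ (the elementary gaps take only finitely many, positive, values since $E_{j,-n}$ stabilizes as $n\to\infty$), so $\sum_iE_{i,n_i}\ge\Einfty+2D_4P$. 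Splitting $I_b=C_{\mathrm{mono}}+\widetilde I_b$ with $C_{\mathrm{mono}}=\sum_{i<j}n_in_j/|y_i-y_j|\ge-P^2/R$ a constant, the surviving multipole pieces obey $\widetilde I_b\ge-C(Z)(PNR^{-5/4}+N^2R^{-3/2})$ on $\Omega_b^{1/6}$ (the monopole--dipole part controlled by $\sum_i|n_i|\le2P$ and $\sum_i|B_i|=N$, the rest by dipole--dipole). Hence $\langle J_b\Phi,HJ_b\Phi\rangle\ge\big(\Einfty+2D_4P-P^2/R-C(Z)(PNR^{-5/4}+N^2R^{-3/2})\big)\|J_b\Phi\|^2$, and for $R\ge 4N/D_4$ (and $R$ above a $Z$-constant) the negative monopole and monopole--dipole terms, being $O(PN/R)$, are absorbed against $2D_4P$ since $P/R\le N/R\le D_4/4$, leaving the bracket $\ge\Einfty+D_4-D_3N^2R^{-3/2}$; this is \eqref{PJHJPestnonat}.

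\textbf{Main obstacle.} The hard part is this last estimate: one must squeeze out of Property (E) the quantitative, imbalance-dependent lower bound $\sum_iE_{i,n_i}-\Einfty\gtrsim D_4P$ \emph{uniformly in $M$}, and then recognise that the only genuinely long-range obstruction left, the Coulomb energy of the ionic monopoles (size $\sim N^2/R$), is precisely what the hypothesis $R\ge 4N/D_4$ is designed to defeat. Once that is seen, the choice of the exponent $\tfrac34$ in the regions $\Omega^{\,\beta}$ is fixed by the requirement that both $\sum_{\hb}|\nabla J_{\hb}|^2$ and the residual higher multipoles come out of order $N^2/R^{3/2}$, and the remaining manipulations (the support bookkeeping for the $\chi_a$-localized errors, the multipole remainder estimates) are routine.
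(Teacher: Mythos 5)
Your proposal is correct, and for the first two inequalities it is essentially the paper's argument: the same splittings $H=H_{\{j\}}+I_{\{j\}}$ and $H=H_a+I_a$, the bound $I_{\{j\}}\geq -12N R^{-3/4}$ on $\supp J_{\{j\}}$ together with $\Sigma_{N-1}=E(y)+\gamma$ on $\Ran Q_{\{j\}}$, the gap bound $H_a\geq \Einfty+D_2-D_2P_{\prod_i\phi_{A_i}}$ on $\Ran Q_a$ (identical to the paper's \eqref{est:Hasbelow}; your union bound $\sum_i P^\bot_{\phi_{A_i}}\geq 1-\prod_iP_{\phi_{A_i}}$ replaces the paper's one-line spectral argument), and the same dipole--dipole Taylor bound $I_a\geq -D_3N^2R^{-3/2}$ on $\Omega_a^{1/6}$. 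Your control of the rank-one error via $\langle\Psi_a,P^\bot Q_N\Psi\rangle=0$, $\|\prod_i\phi_{A_i}-\Psi_a\|\doteq_M 0$ and $J_a\equiv 1$ near the nuclei is an equivalent substitute for the paper's commutator estimate on $[J_a,P_{\Psi_a}]$; just note that converting your error $O(M^2e^{-2cR^{3/4}})\|\chi_aP^\bot Q_N\Psi\|^2$ into the stated form $Q_N\chi_a O(Me^{-cR^{3/4}})\chi_aQ_N$ uses $\chi_aP=\chi_aP_{\Psi_a}$ and $\chi_a\Psi_a=\Psi_a$, hence $\|\chi_aP^\bot Q_N\Psi\|\leq 2\|\chi_aQ_N\Psi\|$ — routine, but worth a sentence. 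Where you genuinely deviate is \eqref{PJHJPestnonat}: the paper runs the electron-transfer chain of its Lemma \ref{lem:bca} and compares $\inf\s(H^\s_{c_m})+\inf I_{c_m}|_{\supp J_{c_m}}$ step by step, losing $4N/R$ per step in the interaction, whereas you bound $\inf\s(H_b^\s)\geq\Einfty+2D_4P$ directly in terms of the total positive charge $P$ and then estimate $I_b$ on $\supp J_b$ by an explicit multipole expansion, isolating the monopole--monopole ($\geq -2P^2/R$) and monopole--dipole ($\gtrsim -PN R^{-5/4}$) pieces and absorbing them against the charge-linear gap; your extra observation $P^\bot J_b=J_b$ (from the support separation of $J_b$ and the $\Psi_a$) lets you drop the projection altogether, which the paper does not need. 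Both routes rest on exactly the same two pillars — the per-step gap being $Z$-only because $E_{j,-n}$ stabilizes for $n\geq Z_j$ (the paper cites Lieb here; you assert it, so a citation should be added), and $R\geq 4N/D_4$ to defeat the $\sim PN/R$ ionic Coulomb attraction — and both leave the existence of the $P$-step transfer path to an easy induction, so the difference is one of bookkeeping: your version makes the physical mechanism (charge-proportional gap versus ionic monopole attraction) more transparent, while the paper's chain comparison avoids expanding $I_b$ into multipoles for non-neutral clusters.
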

\begin{proof}
 \textbf{Proof of \eqref{PJHJPestHVZ}:} We decompose the left hand side into two terms using that
 $H=H_{\{j\}}+I_{\{j\}}$.
 From \eqref{L3def} it follows that $ Q_{\{j\}} H_{\{j\}} \geq E(y)+\gamma$. Moreover,
 since by \eqref{partunprop1}
 the j-th electron coordinate is at least
$\frac{R^{\frac{3}{4}}}{12}$ far from all nuclei on $\supp
J_{\{j\}}$, we obtain that $I_{\{j\}} \geq
-\frac{12N}{R^{\frac{3}{4}}}$, on $\supp J_{\{j\}}$. The last two
estimates together with the fact that $Q_{\{j\}} Q_N= Q_N$ and that
$Q_{\{j\}}$ commutes with $P^\bot, J_{\{j\}},$ imply
\eqref{PJHJPestHVZ}.
\newline
\textbf{Proof of \eqref{PJHJPestatomic}:}  We decompose the left
hand side into two terms using that $H=H_a+I_a$.
 We will first estimate the term $Q_N P^\bot J_a H_a J_a P^\bot Q_N$.  By
\eqref{Omegadef}, \eqref{partunprop1} and the support properties of
$\Psi_a$ (see \eqref{Psiaeq}, \eqref{psiAj}) we obtain
that
\begin{equation}\label{Janeqb}
 J_c \Psi_a = \delta_{ac} J_c \Psi_a, \quad \forall c \in \mathcal{A}^{at}.
\end{equation}
Due to \eqref{P} and \eqref{Janeqb} we have
\begin{equation}\label{IMS3'}
Q_N P^\bot J_a H_a J_a P^\bot Q_N=Q_N P_{\Psi_a}^\bot J_a H_a J_a
P_{\Psi_a}^\bot Q_N.
\end{equation}
We recall that $H_a^\s, Q_a$, were defined in \eqref{def:Has}. Since $Q_a$
commutes with $J_a, H_a, P_{\Psi_a},$ (with $J_a$ because of
\eqref{symmetryJa}), equation \eqref{IMS3'} implies that
\begin{equation}\label{IMS3}
Q_N P^\bot J_a H_a J_a P^\bot Q_N=Q_N P_{\Psi_a}^\bot J_a H_a^\s J_a
P_{\Psi_a}^\bot Q_N.
\end{equation}
Using \eqref{L1def} and the decomposition $1= P_{\Phi_a}+P_{\Phi_a}^\bot$, where recall that $\Phi_a$ is the (exact) ground state of $H_a^\s$ (see \eqref{Phiaeq}), we obtain that
\begin{equation}\label{est:Hasbelow}
H_a^\s \geq (\Einfty+D_2)-D_2 P_{\Phi_a}.
\end{equation}
Note that we have dropped the operator $Q_a$ on the
right hand side of \eqref{est:Hasbelow} using that $\Einfty+D_2<0$
and that $Q_a \leq 1$. From \eqref{est:Hasbelow} it follows that
\begin{equation}\label{IMS2}
P_{\Psi_a}^\bot J_a H_a^\s J_a P_{\Psi_a}^\bot \geq (\Einfty+D_2)
P_{\Psi_a}^\bot J_a^2 P_{\Psi_a}^\bot-D_2 P_{\Psi_a}^\bot J_a
P_{\Phi_a} J_a P_{\Psi_a}^\bot.
\end{equation}
To estimate the second term on the right hand side of \eqref{IMS2},
we use that
\begin{equation}\label{komutatorsum}
[J_a, P_{\Psi_a}]=\sum_{j=1}^M P_{\psi_{A_1}}\dots P_{\psi_{A_{j-1}}}
[J_a, P_{\psi_{A_j}}] P_{\psi_{A_{j+1}}}\dots P_{\psi_{A_M}}.
\end{equation}
Due to the exponential decay of $\psi_{A_j}$ one expects, since
$J_a=1$ when the electrons are close to the nucleus, that the right
hand side of \eqref{komutatorsum} is small. Indeed,
 by \eqref{Omegadef}-\eqref{partunprop1} and
\eqref{partunprop2}, we have that
$J_a|_{\Omega_a^{\frac{1}{12}}}=1$. This together with
\eqref{Psiadecay}, \eqref{Omegadef} and \eqref{komutatorsum} implies
that there exists $c$ so that
\begin{equation*}
\|[J_a, P_{\Psi_a}]\|  \ls M e^{-c R^{\frac{3}{4}}}.
\end{equation*}
The last estimate together with the fact that $P_{\Psi_a}^\bot
P_{\Phi_a} \doteq_M 0$, which follows from \eqref{phiapsia}, gives
\begin{equation}\label{kom1}
\|P_{\Psi_a}^\bot J_a P_{\Phi_a} J_a P_{\Psi_a}^\bot\|  \ls M e^{-c
R^{\frac{3}{4}}}.
\end{equation}
From  \eqref{Psiaeq}, \eqref{chib} and \eqref{eqn:chirbrestriction} it follows that
\begin{equation}\label{eqn:chibrestriction}
\chi_a|_{\supp \Psi_a}=1.
\end{equation}
Therefore, both sides of
\eqref{IMS2} are invariant when we multiply on the left and right by $\chi_a$.
From this observation and estimate \eqref{kom1} we obtain that
\begin{equation*}
P_{\Psi_a}^\bot J_a H_a^\s J_a P_{\Psi_a}^\bot \geq (\Einfty+ D_2)
P_{\Psi_a}^\bot  J_a^2 P_{\Psi_a}^\bot - \chi_a O(M
e^{-cR^{\frac{3}{4}}}) \chi_a.
\end{equation*}
The last inequality together with \eqref{P} and \eqref{Janeqb} implies
that
\begin{equation}\label{IMS4}
P_{\Psi_a}^\bot J_a H_a^\s J_a P_{\Psi_a}^\bot \geq (\Einfty+ D_2)
P^\bot  J_a^2 P^\bot - \chi_a O(M e^{-cR^{\frac{3}{4}}}) \chi_a.
\end{equation}

 We will now estimate the term  $Q_N P^\bot J_a I_a J_a P^\bot Q_N $.
  Recall that $\tilde{I}_{ij}^{kl}$ is obtained by $I_{ij}^{kl}$ by the change of
variables \eqref{zlm} (see \eqref{bijkl}, \eqref{dijkl}).
We define $\tilde{J}_a$ in a similar way. If $a \in
\mathcal{A}^{at}$, then using \eqref{I1sevmu},
  which holds on the support of $\tilde{J}_a$, and that the electrons are at most
 $\frac{R^{\frac{3}{4}}}{6}$
far away from the corresponding nuclei (see \eqref{partunprop1}), or in other words $|z_{ik}|
\leq \frac{R^{\frac{3}{4}}}{6}$ on the support of $\tilde{J}_a$, we obtain
that there exists $D_3$ such that
\begin{equation}\label{Ibjbzer}
\| \tilde{I}_{ij}^{kl}|_{\supp \tilde{J}_a}\|_{L^\infty} \leq
\frac{D_3 R^{\frac{3}{2}}}{R^3}.
\end{equation}
 Using \eqref{Iabreakup} we obtain that
 $I_a=\sum_{i<j}^{1<M} \sum_{k \in A_i, l \in A_j} I_{ij}^{kl}$,
 which together with \eqref{Ibjbzer} implies that
\begin{equation}\label{Iaschranke}
 I_a|_{\supp J_a}  \ge - \frac{D_3 N^2}{R^{\frac{3}{2}}}, \forall a \in \mathcal{A}^{at}.
\end{equation}
From \eqref{IMS3}, \eqref{IMS4} and \eqref{Iaschranke} we obtain
\eqref{PJHJPestatomic}.
\newline
 \textbf{Proof of \eqref{PJHJPestnonat}:}
If a decomposition is not in $\mathcal{A}^{at}$,
then the intercluster interaction has more attractive terms
and it can not be bounded in the same way as in the case of an atomic decomposition.
With the Property (E) we will counterbalance this issue. We first
show that each decomposition $b \in \mathcal{A}$
is associated to an $a \in \mathcal{A}^{at}$ in the following way:
there is a finite sequence $c_0,\dots, c_l$ of decompositions so that
$c_0=a, c_l=b$ and $c_{m+1}$ is created by $c_m$ by moving an electron
from a non-negative ion of $c_m$ to a non-positive ion of $c_m$.
Going from $c_m$ to $c_{m+1}$ creates attractive terms in the intercluster
interaction. But Property (E) gives a gap which counterbalances the attraction.
We now state everything precisely. For a decomposition with an index
e.g. $c_n \in \mathcal{A}$, we write
$c_n=\{C_{n,1},...,C_{n,M}\}$. We  need the following lemma
\begin{lemma}\label{lem:bca}
Suppose that $b \in \cA/\cA^{at}$. Then there exists $a \in
\cA^{at}, l \in \mathbb{N}$ and a finite sequence $c_0,\dots,c_l$ of elements in $\cA$
such that:
\newline
(i) $c_0=a$ and $c_l=b$.
\newline
(ii) For each $m=0,...,l-1$, there exists $i,j \in \{1,\dots,M\}$ with $i \neq j$
and $k \in \{1,\dots,N\}$
so that $|C_{m,i}| \leq Z_i$, $|C_{m,j}| \geq Z_j$, $k \in C_{m,i}$,
and so that $C_{m+1,n}=C_{m,n}$ for $n \neq i,j$,  $C_{m+1,i}=C_{m,i}/\{k\}$, $C_{m+1,j}=C_{m,j}\cup\{k\}$.
\end{lemma}
 The lemma can be proven by a simple induction on the number of
atoms.  As we said, by Property (E) the sum of ground states of ions
of $c_{m+1}$, is bigger that the one of $c_m$. In principle the gap could depend
on $m$ but we shall show that it depends only on $Z$. Recall that $H_b^\s$
 was defined in \eqref{def:Has} for any $b \in \mathcal{A}$.
\begin{lemma}\label{lem:gapde}
Let $c_0,\dots,c_l$ be as in Lemma \ref{lem:bca}.
 Property (E) implies that there exists $D_4$ so that
\begin{equation}\label{Hcgap}
 \inf \s(H_{c_{m+1}}^\s) \geq \inf \s(H_{c_m}^\s)+2 D_4, \quad
 \forall m=0,\dots,l-1.
\end{equation}
 Moreover, we have that
\begin{equation}\label{Icgap}
 \inf I_{c_{m+1}}|_{\supp J_{c_{m+1}}} \geq \inf I_{c_{m}}|_{
\supp J_{c_{m}}}-\frac{4N}{R}, \quad \forall m=0,\dots,l-1.
\end{equation}
\end{lemma}
\begin{proof}
In the proof we shall use the notation of Lemma \ref{lem:bca}.
From \eqref{Ha}, \eqref{def:HAis}, \eqref{def:Has} and \eqref{infsHAi} and Lemma
\ref{lem:bca} it follows that
\begin{equation}\label{eqn:stepgap}
\inf \s(H_{c_{m+1}}^\s)-\inf \s(H_{c_m}^\s)=E_{i,Z_i+1-|C_{m,i}|}+ E_{j,Z_j-1-|C_{m,j}|}-E_{i,Z_i-|C_{m,i}|}-E_{j,Z_j-|C_{m,j}|}>0,
\end{equation}
where the last inequality follows from Property (E). Therefore,
Property (E) implies \eqref{Hcgap} for some positive constant, which
a priori could depend on $m$. We show that in fact the constant $D_4$
depends only on $Z$. Indeed, by \cite{Lieb}, (see
also \cite{Sig}, \cite{Phan}) we know that the sequence $E_{j,n},$
$n \in \Z, n \leq Z_j$ is constant when $n \leq -Z_j$. Therefore, the set
$B_{Z_j}:=\{E_{j,n}| n \in \Z, n \leq Z_j\}$ consists of at most
$2Z_j+1$ elements. Moreover, all the gaps in Property (E) are
determined by differences of elements in the sets $B_{Z_j}$. Since
these sets are at most $Z$ many, we can define $2 D_4$ to be
the minimum of these gaps, and it depends only on $Z$.

      To prove \eqref{Icgap} we observe, by Lemma \ref{lem:bca}, that the inter-cluster
interactions $I_{c_{m+1}}$, $I_{c_{m}}$ differ only by the interaction terms of the
electron with coordinate $x_k$. Because of the last observation, it is convenient to denote by
$I_{c_n,k}$, where $n=m,m+1$, the part of $I_{c_n}$ which has only
the inter-cluster interaction terms of the electron with coordinate
$x_k$.
     Since by construction of $J_a$ we have $\supp J_a=\supp F_a, \forall a \in \cA$
and $F_a$ is a product function, it turns out that the supports of $J_{c_m}$ and
$J_{c_{m+1}}$ are product sets differing only on the $k$-th element
of the product, which corresponds to the coordinate $x_k$.
Therefore, the difference of the two infimums in \eqref{Icgap}
depends only on the interaction terms $I_{c_n,k}$, $n=m,m+1$. Since
on the support of $J_{c_{m+1}}$  the coordinate $x_k$ is at least $\frac{R}{2}$
far from the nuclei of the other clusters, and the total charge of
these  nuclei is less than $N$, we have that $I_{c_{m+1},k} \geq
-\frac{2N}{R}$, on $\supp J_{c_{m+1}}$. Similarly, since on the
support of $J_{c_m}$ the electon $x_k$ is at least $\frac{R}{2}$ far
from the electrons in the other clusters, we obtain that
$I_{c_{m},k} \leq \frac{2N}{R}$, on $\supp J_{c_{m}}$. Therefore,
\eqref{Icgap} follows.
\end{proof}
 We now continue with the proof of \eqref{PJHJPestnonat}. Let $a,b$ be as
 in Lemma \ref{lem:bca}. Then,
   from \eqref{Hcgap} it follows that
   \begin{equation}\label{est:Habgap}
 \inf \s(H_b^\s)- \inf \s(H_a^\s) \geq 2 l D_4.
    \end{equation}
  From the last estimate and \eqref{Icgap} it follows that
\begin{equation}
 \inf \s(H_b^\s)+ \inf I_b|_{\supp J_b} \geq \inf \s(H_a^\s)+ \inf I_a|_{\supp J_a}+l
(2D_4-\frac{4N}{R}).
\end{equation}
Using the last estimate together with the inequalities $\inf
\s(H_a^\s) \geq \Einfty$ and \eqref{Iaschranke}, we obtain that
\begin{equation}\label{Eprimegap}
J_b H_b^\s J_b+ J_b I_b J_b \geq \bigg(\Einfty + l
\big(2D_4-\frac{4N}{R}\big)-\frac{D_3 N^2}{R^{\frac{3}{2}}}\bigg)
J_b^2.
\end{equation}
This implies \eqref{PJHJPestnonat}, when $R \geq \frac{4N}{D_4}$.
This concludes the proof of Lemma \ref{lem:PJHJPest}.
\end{proof}
 We now continue with the proof of Proposition \ref{prop:Eimpliesgap}.
Observe that we can use Estimate \eqref{est:Ey}, because for its proof we did not need \eqref{Hbotbnd}.
The estimates \eqref{est:Ey}, \eqref{partunprop2}, \eqref{PJHJPestHVZ}, \eqref{PJHJPestatomic} and \eqref{PJHJPestnonat} imply that
 for there exists $C,c$ so that for all $R \geq \frac{4N}{D_4}$
 we have that
$$\sum_{\hb \in \hat{\mathcal{A}}} Q_N P^\bot J_{\hb} H J_{\hb}
P^\bot Q_N $$$$\geq \bigg(E(y)+ \min\{\gamma, D_2, D_4\} -\frac{12
N}{R^{\frac{3}{4}}}-\frac{D_3 N^2}{R^{\frac{3}{2}}}-C e^{-cR} \bigg) Q_N P^\bot
Q_N -\sum_{a \in \cA^{at}} Q_N \chi_{a} O(M e^{-cR^{\frac{3}{4}}})
\chi_{a}Q_N.$$
 Since $\chi_a \chi_b=0$ when $a \neq b$, arguing as in
  the proof of the inequality \eqref{equivimp}, we can show that
  the last inequality implies that
  \begin{equation}\label{sumofpart}
 \sum_{\hb \in \hat{\mathcal{A}}} Q_N P^\bot J_{\hb} H
J_{\hb} P^\bot Q_N $$$$ \geq E(y)+ \min\{\gamma,D_2, D_4 \} -\frac{12
N}{R^{\frac{3}{4}}} -\frac{D_3 N^2}{R^{\frac{3}{2}}}- O(M
e^{-cR^{\frac{3}{4}}}), \quad \forall R \geq \frac{4N}{D_4},
\end{equation}
 where we dropped $Q_N P^\bot Q_N$ on the right hand side using \eqref{firstexcited} and that $Q_N P^\bot Q_N \leq 1$.
  The inequalities \eqref{Hbotgenest1}, \eqref{sumofpart} and the Lemma
\ref{lem:gammac} below, imply that there exists $C_1$, depending only on $Z$ so that, for $R \geq C_1
N^\frac{4}{3}$, Estimate \eqref{Hbotbnd} holds. This concludes the proof of Proposition \ref{prop:Eimpliesgap}.
It therefore remains to prove:
\begin{lemma}\label{lem:gammac}
There exist $c,C$ so that if $R \geq CN^{\frac{4}{3}}$, then we have
that $\gamma \geq c$.
\end{lemma}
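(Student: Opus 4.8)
The plan is to reduce the statement to a lower bound on the ground state energy of the once‑ionized system, and to establish that bound by an IMS localization of the same flavour as in the proof of Proposition~\ref{prop:Eimpliesgap}, but using a \emph{product‑structured} partition of unity. Since the proof of \eqref{est:Ey} does not use \eqref{Hbotbnd}, we may use $\E\le\Einfty+CMe^{-cR}$, and by \eqref{L3def} we have $\gamma=\inf\sigma(H^{N-1,\s}(y))-\E$. Hence it is enough to find $c_0>0$ and $C$, depending only on $Z$, with
\begin{equation*}
\inf\sigma(H^{N-1,\s}(y))\ \ge\ \Einfty+c_0 \qquad\text{whenever } R\ge CN^{4/3}.
\end{equation*}
Indeed then $\gamma\ge c_0-CMe^{-cR}$, and $Me^{-cR}\le Ne^{-cCN^{4/3}}$ is smaller than any prescribed constant once $C$ is large, so $\gamma\ge c_0/2=:c$.

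For this lower bound I would proceed as follows. For each electron coordinate $x_m$ ($m=1,\dots,N-1$) fix a smooth partition of $\R^3$ at length scale $R^{3/4}$ into $M$ pieces supported near the individual nuclei (cf.\ \eqref{Omegadef}) and one further piece supported where $x_m$ lies at distance $\ge R^{3/4}/12$ from all nuclei; tensoring over $m$ yields a family $\{J_{\vec k}\}$ indexed by assignments $\vec k$ of each electron either to a nucleus or to ``infinity'', with $\sum_{\vec k}J_{\vec k}^2=1$, each $J_{\vec k}$ invariant under the permutations preserving $\vec k$ (so the relevant antisymmetrizers commute with $J_{\vec k}$, with $Q_{N-1}$ and with the cluster Hamiltonians below), and — because of the product structure — $\sum_{\vec k}|\nabla J_{\vec k}|^2\le D_1N^2/R^{3/2}$ exactly as in \eqref{nablaleqN}. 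On $\supp J_{\vec k}$, with $S$ the set of electrons sent to infinity and $a=\{A_1,\dots,A_M\}$ the induced decomposition of the remaining $|S^c|=N-1-|S|$ electrons into clusters near the nuclei, I split
\begin{equation*}
H^{N-1}(y)\ =\ \Big(\sum_{i=1}^M H_{A_i}+I_a\Big)+\sum_{j\in S}(-\Delta_{x_j})+W,
\end{equation*}
where $I_a$ is the inter‑cluster interaction of the clustered electrons, $\sum_{j\in S}(-\Delta_{x_j})\ge0$, and the multiplication operator $W$ collects the interactions of the far electrons with everything else: it is $\ge0$ on the electron–electron part and $\ge-12|S|NR^{-3/4}$ on the electron–nucleus part (using $\sum_iZ_i=N$).

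The decisive point is a trade‑off between the number of lost electrons and the ionization level of the clustered part. The clustered electrons carry total charge $\sum_in_i=N-|S^c|=|S|+1$, with $n_i=Z_i-|A_i|$, so on the antisymmetric subspace $\sum_iH_{A_i}\ge\sum_iE_{i,n_i}$, and $\sum_iE_{i,n_i}\ge\Einfty+(|S|+1)\,2\eta_0$ with $2\eta_0:=\min\{E_{i,n+1}-E_{i,n}:1\le i\le M,\ 0\le n<Z_i\}>0$, a constant depending only on $Z$ (these gaps are positive by Theorems~\ref{hvz} and \ref{zysl1}, and only finitely many values occur): for $n_i\ge0$ this is the telescoping bound $E_{i,n_i}-E_{i,0}=\sum_{p<n_i}(E_{i,p+1}-E_{i,p})\ge n_i\,2\eta_0$ (valid also when $n_i=Z_i$, since $E_{i,Z_i}=0$), and for general $n_i$ one uses Property~(E) exactly as in Lemmas~\ref{lem:bca} and \ref{lem:gapde} to connect $(n_i)$ to an all‑nonnegative configuration of the same total charge through a chain of electron transfers, each lowering the ground state energy while the accompanying interaction loss is controlled by \eqref{Icgap} and absorbed by the per‑step gap $2D_4$ once $R\ge 2N/D_4$. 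Combining this with the bounds on $W$ and treating $I_a$ as in \eqref{PJHJPestatomic}–\eqref{PJHJPestnonat} (so that for the minimal configuration $I_a\ge-D_3N^2/R^{3/2}$ up to an extra $O(N/R)$ from ion–bare‑nucleus tails), one obtains on $\supp J_{\vec k}$
\begin{equation*}
H^{N-1}(y)\ \ge\ \Einfty+2\eta_0+|S|\Big(2\eta_0-\frac{12N}{R^{3/4}}\Big)-\frac{D_3N^2}{R^{3/2}}-\frac{D_5N}{R}.
\end{equation*}
When $R\ge CN^{4/3}$ each of $12N/R^{3/4}$, $D_3N^2/R^{3/2}$, $D_5N/R$ is at most a fixed multiple of $C^{-3/4}$, so for $C$ large (depending only on $Z$) the coefficient of $|S|$ is $\ge\eta_0>0$ and the remaining errors sum to $\le\eta_0$; hence $H^{N-1}(y)\ge\Einfty+\eta_0$ on every $\supp J_{\vec k}$. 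Inserting this and the gradient estimate into the IMS formula gives $H^{N-1,\s}(y)|_{\Ran Q_{N-1}}\ge\Einfty+\eta_0-D_1N^2/R^{3/2}\ge\Einfty+\eta_0/2$, which is the required bound with $c_0=\eta_0/2$, and the lemma follows.

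The step I expect to be the real obstacle is precisely the one that forces the product structure of the partition of unity. A localization resolving only ``one electron at infinity at a time'', as in Proposition~\ref{prop:Eimpliesgap}, would leave $\inf\sigma(H^{N-2,\s}(y))$ on the right, and handling it recursively would pile up an $O(N/R^{3/4})$ loss over $\sim N$ steps, forcing $R\gtrsim N^{8/3}$ instead of $R\gtrsim N^{4/3}$. The product construction, together with the observation that $|S|$ electrons at infinity are necessarily paid for by $|S|+1$ units of ionization in the clustered part — whose energetic cost, by Property~(E), exceeds $(|S|+1)\,2\eta_0$ and so beats the far‑electron loss term by term — is what keeps the argument at the $N^{4/3}$ scale and eliminates any recursion.
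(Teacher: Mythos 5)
Your argument is essentially correct, but it takes a genuinely different route from the paper. The paper does not bound $\inf\sigma(H^{N-1,\s}(y))$ in one shot: it proves the recursive inequality \eqref{sigmarec} for the thresholds $\Sigma_{N-k}$, using at each level the same two-type partition as in the proof of Proposition \ref{prop:Eimpliesgap} (cluster pieces, bounded via \eqref{infsHa'}, \eqref{infIa'} and \eqref{infHaIa}, gaining $2D_5$ per removed electron; and ``one electron far'' pieces, bounded by $\Sigma_{N-k-1}-12(N-k)/R^{3/4}$), and then closes the induction using $\Sigma_0\ge 0\ge E(y)+ND_6$. You instead use a single product-structured partition that resolves all far electrons of the $(N-1)$-electron system simultaneously, and credit each far electron directly with one unit of ionization gap $2\eta_0$ (your $\eta_0$ is the paper's $D_5$ of \eqref{def:D5}); this eliminates the recursion and, as you note, your product structure even improves the gradient bound. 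Both arguments live at the same scale $R\gtrsim N^{4/3}$, so your closing diagnosis is off: the ``one electron at a time'' recursion does \emph{not} force $R\gtrsim N^{8/3}$, because in \eqref{sigmarec} the per-step loss $12N/R^{3/4}+D_1^2N^2/R^{3/2}$ is compared with the per-step gain $\min\{D_5,D_6\}$ against the budget $\Sigma_0-E(y)\ge ND_6$, not with a fixed constant.

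Two points in your write-up need tightening. First, the interaction of a strongly ionized localization piece cannot be bounded separately the way your final display suggests: for a piece whose clusters carry large opposite charges, $\inf I_a$ on the support is of size $O(N^2/R)$ (not $O(N/R)$), which at $R\sim N^{4/3}$ is not small. The bound must therefore be run jointly, absorbing the at most $4N/R$ interaction loss of each electron transfer/removal into the corresponding Property (E) gap $2D_4$ or telescoping gap $2\eta_0$, exactly as in \eqref{Hcgap}--\eqref{Icgap}, \eqref{Eprimegap} and \eqref{infsHa'}--\eqref{infIa'}; you say this in words, so the fix is only bookkeeping, e.g. replacing your error $D_5N/R$ by $(|S|+1)\,4N/R$ and folding it into the coefficient of $|S|$. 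Second, the chain lemma you invoke (Lemma \ref{lem:bca}) is stated for total charge zero; for your clustered configurations of total charge $|S|+1>0$ you need the easy extension in which the chain starts from an all-nonnegative configuration of the same total charge (such a starting point and ordering of transfers always exist, since the positive charges of the target sum to at least $|S|+1$ and recipients only ever decrease from charge $0$), or alternatively you can follow the paper and view the clustered decomposition as an $N$-electron decomposition with $|S|+1$ electrons removed from nonnegative ions, which is precisely \eqref{est:chargek}--\eqref{infsHa'}. With these adjustments your proof closes and gives the stated $R\ge CN^{4/3}$ threshold.
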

\begin{proof}
By \eqref{L3def'}  showing the lemma is equivalent to showing, that
there exists $C,c$ so that if $R \geq CN^{\frac{4}{3}}$, then
 \begin{equation}\label{sigmanminus1n}
\Sigma_{N-1} \geq E(y)+c.
 \end{equation}
 Recall Definition \eqref{def:H^m}.
 Since estimating $H^\bot=P^\bot H^{N,\s}(y) P^\bot$ from below was reduced to estimating $H^{N-1,\s}(y)$ from below (namely the Proof of Proposition
 \ref{prop:Eimpliesgap} was reduced to proving Lemma \ref{lem:gammac}), it turns out
 that we need to estimate from below all $H^{N-k,\s}(y)$, where $k \in \{1,\dots,N-1\}$. To this end
 we construct, similarly as before, a partition $J_{a'}, a' \in
\hat{\cA'}$ where $\hat{\cA'}=\cA' \cup \{\{1\},\dots,\{N-k\}\}$, and
$\cA'$ is the set of decompositions of $\{1,\dots,N-k\}$ into $M$
clusters. The parameters for the construction of $J_{a'}$ are the
same (except for $N$ of course). The Hamiltonian $H_{a'}$ and the intercluster
interaction $I_{a'}$ can be defined in a similar manner as in the case of the decompositions
in $\hat{\cA}$ and, similarly as in \eqref{Hadecomp}, we have
\begin{equation}\label{Hadecomp'}
H^{N-k}(y)=H_{a'}+I_{a'}.
\end{equation}
 Using the IMS localization
formula, we can prove, similarly to \eqref{Hbotgenest1} (omitting
$P^\bot$), that
\begin{equation}\label{IMS'}
H^{N-k,\s}(y)  \geq \sum_{a' \in \hat{A'}} Q_{N-k} J_{a'}
H^{N-k}(y) J_{a'} Q_{N-k}-\frac{D_1^2 (N-k)^2}{R^{\frac{3}{2}}} Q_{N-k}.
\end{equation}
 If $a' \in \cA'$ (decomposition to clusters), then using
 \eqref{Hadecomp'} we can show similarly to \eqref{IMS3}
that
\begin{equation}\label{HNminus1dec}
Q_{N-k} J_{a'} H^{N-k}(y) J_{a'} Q_{N-k} =  Q_{N-k} J_{a'}
H_{a'}^\s J_{a'} Q_{N-k}+ Q_{N-k} J_{a'} I_{a'} J_{a'} Q_{N-k},
\end{equation}
where $H_{a'}^\s=H_{a'} Q_{a'}$, and $Q_{a'}$ is defined similarly as in \eqref{def:Has}.
 Observe now that $a'$ has some
clusters $A_{i_1}',\dots, A_{i_l}'$, for some $l \leq k$,  corresponding to
positive ions of total charge at least $k$, namely $|A_{i_m}'| < Z_{i_m}$ for all
$m \in \{1,\dots,l\}$ and $\sum_{m=1}^l (Z_{i_m}-|A_{i_m}'|) \geq k$. Therefore, $a'$ comes
from an $a \in \cA$ (decomposition of $\{1,\dots,N\}$), after removing
$k$ electrons from nonnegative ions of $a$. More precisely, there
exists an $a \in \cA$  with $A_{i_m}' \varsubsetneq A_{i_m}$ and
$|A_{i_m}| \leq Z_{i_m}$ for all $m \in \{1,\dots, l\}$, with $A_n=A_n'$ for all $n \neq i_1,\dots,i_l$ and with
\begin{equation}\label{est:chargek}
\sum_{m=1}^l(|A_{i_m}|-|A_{i_m}'|)=k.
\end{equation}
 By Theorem \ref{zysl} it follows that
 \begin{equation}\label{est:D5gap}
\inf \s (H_{A_{i_m}'}^\s) - \inf \s (H_{A_{i_m}}^\s) \geq 2(|A_{i_m}|-|A_{i_m}'|)D_5,\text{ } \forall  m \in \{1,\dots, l\},
\end{equation}
where
\begin{equation}\label{def:D5}
D_5=\frac{1}{2} \min\{E_{i,n+1}-E_{i,n}| i \in \{1,\dots,M\}, n \in \{0,\dots,Z_i-1\}\},
\end{equation}
depends only on $Z$.
It therefore follows, using \eqref{est:chargek} and \eqref{est:D5gap}, that
\begin{equation}\label{infsHa'}
\inf \s (H_{a'}^\s) \geq \inf \s(H_{a}^\s) + k 2 D_5.
\end{equation}
 Similarly to \eqref{Icgap} it can be proven that
\begin{equation}\label{infIa'}
\inf( I_{a'}|_{\supp J_{a'}}) \geq \inf(I_{a}|_{\supp J_{a}})-k
\frac{4N}{R}.
\end{equation}
Moreover, in the proof of \eqref{Eprimegap}, we have proven
that
\begin{equation}\label{infHaIa}
\inf \s (H_{a}^\s)+\inf( I_{a}|_{\supp J_{a}}) \geq
\Einfty-\frac{D_3 N^2}{R^{\frac{3}{2}}}, \forall a \in \cA, \forall R \geq \frac{2N}{D_4}.
\end{equation}
 Using \eqref{est:Ey} together with \eqref{HNminus1dec}, \eqref{infsHa'}, \eqref{infIa'}
and \eqref{infHaIa} we arrive at
\begin{equation}\label{dec'}
Q_{N-k} J_{a'} H^{N-k}(y) J_{a'} Q_{N-k}  $$$$\geq
\bigg(E(y)+k\big(2D_5-\frac{4N}{R}\big) -\frac{D_3
N^2}{R^{\frac{3}{2}}}-C M e^{-cR}\bigg) Q_{N-k} J_{a'}^2 Q_{N-k},
\forall a' \in \cA',  \forall R \geq \frac{2N}{D_4} .
\end{equation}
 If $a'=\{m\}, m \in \{1,\dots,N-k\}$, then, arguing similarly as in the proof
 of \eqref{PJHJPestHVZ}, we obtain that
 $I_{a'}|_{\supp J_{a'}} \geq- \frac{12(N-k)}{R^{\frac{3}{4}}}$.
Since moreover $Q_{a'} H_{a'} \geq \Sigma_{N-k-1} Q_{a'}$, where
$\Sigma_m$ was defined in \eqref{def:sigmam}, we obtain that
\begin{equation}\label{j'}
Q_{N-k} J_{a'} H^{N-k}(y) J_{a'} Q_{N-k} $$$$ \geq
\bigg(\Sigma_{N-k-1} -\frac{12 (N-k)}{R^{\frac{3}{4}}}\bigg) Q_{N-k}
J_{a'}^2 Q_{N-k}, \forall a' \in \{\{1\},\{2\},\dots,\{N-k\}\},
\end{equation}
because $Q_{a'}$ commutes with $J_{a'}$ and $Q_{a'} Q_{N-k}=Q_{N-k}$.
 Using \eqref{IMS'}, \eqref{dec'}, \eqref{j'} and
$\sum_{a' \in \hat{\cA'}} J_{a'}^2=1$, and taking $R$ large enough,
so that $(2 D_5-\frac{4N}{R}) -\frac{(D_1^2+D_3)
N^2}{R^{\frac{3}{2}}}-Ce^{-cR} \geq D_5$ and $R \geq \frac{2N}{D_4}$,
(here C,c are the same as in \eqref{dec'}) we arrive at
\begin{equation}\label{sigmarec}
\Sigma_{N-k} \geq \min \{E(y)+ k D_5, \Sigma_{N-k-1}-\frac{12
N}{R^{\frac{3}{4}}}-\frac{D_1^2 N^2}{R^{\frac{3}{2}}}\}, \forall
k \in \{1,\dots,N-1\}.
\end{equation}
By \eqref{est:Ey} and \eqref{Einftydef} it follows that
 there exists $C, D_6$ so that $E(y) \leq - N D_6$ for all $R \geq C$.
 Since, moreover, $\Sigma_0 \geq 0$, we obtain that
    $\Sigma_0 \geq E(y)+ N D_6$. Therefore, using
\eqref{sigmarec} for $k=N-m, m=1,\dots,N-1$, and taking $R$ large
enough so that the assumptions on $R$ of \eqref{sigmarec} are fulfilled
and  $\frac{12 N}{R^{\frac{3}{4}}}+\frac{D_1^2
N^2}{R^{\frac{3}{2}}} \leq \min \{D_5, D_6\}$, it follows, by
induction on $m$, that
\begin{equation}
\Sigma_m \geq E(y)+(N-m) \min \{D_5, D_6\}.
\end{equation}
Applying the last estimate for $m=N-1$ we arrive at
\eqref{sigmanminus1n}. This concludes the proof of Lemma \ref{lem:gammac}.
\end{proof}

\subsection{Proof of part (ii) of Proposition \ref{prop:Eimpliesgap}}
Now we assume Property (E') instead of Property (E). The only estimate for the proof
of which we directly used Property (E) is estimate \eqref{Hcgap}, which implies \eqref{est:Habgap}.
With Property (E') alone \eqref{Hcgap} is no longer valid, but we shall change our strategy in order
to obtain a variant of \eqref{est:Habgap}. Once we achieve this, the rest of the proof remains
unchanged. At first we discuss the additional difficulties that we have to face.
 Each time we were moving an electron from a non-negative ion to a non-positive one, we were gaining a gap that was
only $Z$ dependent. This argument can not be used in the case of Property (E'), as we do not gain
a gap in every single step. Moreover, Property (E) was formulated in terms of pairs of atoms
but Property (E') in terms of the entire system. So it could be in principle that the minimum gap
of Property (E') depends on $M$. To deal with this problem we will proceed as follows: we consider a group of ions of total charge zero, with each of them
having charge no less than $-Z$.
As we shall see in the proof, the assumption that each charge is not less than $-Z$ is not restrictive, because an ion with charge less than $-Z$ does not have an isolated ground state energy.
 We will show that
 such a group of ions, can be partitioned into subgroups with the following properties:  each of the subgroups consists
 of less than $Z^2+2\delta_{Z,1}$ ions and its ions have total charge zero.
Here $\delta_{Z,1}$ is the Kronecker $\delta$.
Then each of these subgroups will give us a gap that depends only on $Z$. By adding the gaps we will obtain a variant of  \eqref{est:Habgap}.

 To this end we need the following auxiliary lemma:
\begin{lemma}\label{lem:dirichlet}
Suppose that $k_1,...,k_Z$ are integers. Then there exists
$\al_1,\dots,\al_Z \in\{0,1\}$ not all of them zero, so that $\sum_{i=1}^Z \al_i k_i$
is a multiple of $Z$.
\end{lemma}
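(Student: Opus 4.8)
The plan is a straightforward pigeonhole argument on partial sums. Define $s_0 := 0$ and, for $n \in \{1,\dots,Z\}$, set $s_n := \sum_{i=1}^n k_i$. This gives $Z+1$ integers $s_0, s_1, \dots, s_Z$. Reducing them modulo $Z$, they lie in a set of only $Z$ residue classes, so by the pigeonhole principle there exist indices $p,q$ with $0 \le p < q \le Z$ and $s_p \equiv s_q \pmod{Z}$.

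It then follows that
\begin{equation*}
s_q - s_p = \sum_{i=p+1}^{q} k_i
\end{equation*}
is divisible by $Z$. Since $0 \le p < q \le Z$, the index set $\{p+1,\dots,q\}$ is a nonempty subset of $\{1,\dots,Z\}$. Accordingly, define $\alpha_i := 1$ for $p < i \le q$ and $\alpha_i := 0$ otherwise. Then $\alpha_1,\dots,\alpha_Z \in \{0,1\}$, not all of them are zero (because $q > p$ forces at least one $\alpha_i = 1$), and $\sum_{i=1}^Z \alpha_i k_i = \sum_{i=p+1}^q k_i$ is a multiple of $Z$, as required.

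There is no real obstacle here; the only point requiring a moment's care is to include the empty partial sum $s_0 = 0$ so that one genuinely has $Z+1$ values to place into $Z$ boxes, which is exactly what guarantees a nontrivial (nonzero) choice of the $\alpha_i$. I expect the full proof to be essentially the three sentences above.
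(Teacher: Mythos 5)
Your proof is correct and is essentially the same pigeonhole-on-prefix-sums argument used in the paper; the only cosmetic difference is that you include the empty partial sum $s_0=0$ to merge into one case what the paper handles by first checking whether some $K_j$ is already a multiple of $Z$.
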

\begin{proof}
The lemma is standard, but we shall give the proof for convenience of the reader. We consider
the numbers $K_j=k_1+\dots + k_j, j \in \{1,\dots,Z\}$. Of course if one of these numbers is a multiple of $Z$
then the conclusion holds. If not then $K_1,\dots,K_Z$ are $Z$ numbers but the remainders $ K_1 (\text{mod} Z),\dots,K_Z (\text{mod} Z)$
are at most $Z-1$ numbers, because none of them is zero. Therefore, there exist $j_1, j_2$ with $j_1<j_2$ so that $K_{j_1} (\text{mod} Z)=K_{j_2} (\text{mod} Z)$. It follows that $K_{j_2}-K_{j_1}$ is a multiple of $Z$.
\end{proof}
Now we consider a group of ions of total charge zero, with each of them
having charge no less than $-Z$.
The following Lemma will enable us to split it to subgroups of less
 than $Z^2+2 \delta_{Z,1}$ ions of total charge zero.
\begin{lemma}\label{lem:groupbreaking}
Let $m \in \mathbb{N}$, and $n_1,\dots,n_m \in \mathbb{Z}/\{0\}$ with  $-Z \leq n_j \leq Z$ for all $j \in \{1,\dots,m\}$.
We assume that $\sum_{j=1}^m n_j=0$ and moreover that $\sum_{j \in S} n_j \neq 0$ for all nonempty sets
$S \varsubsetneq \{1,\dots,m\}$. Then $m < Z^2+2 \delta_{Z,1}$.
\end{lemma}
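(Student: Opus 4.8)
The plan is to prove the (much stronger) bound $m\le 2Z-1$ when $Z\ge 2$, and $m\le 2$ when $Z=1$, both of which are strictly below $Z^2+2\delta_{Z,1}$. We may assume $m\ge 1$, the case $m=0$ being vacuous.

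The core idea is to arrange the terms so that their partial sums stay in a short interval. I would build an ordering $n_{\pi(1)},\dots,n_{\pi(m)}$ together with partial sums $S_0=0$ and $S_t=n_{\pi(1)}+\dots+n_{\pi(t)}$ by the following rule: after placing $t$ terms, if $S_t\le 0$ place next any positive term not yet used, and if $S_t>0$ place next any not-yet-used negative term. This is always possible while terms remain: if $S_t\le 0$ the unused terms sum to $-S_t\ge 0$, so they are not all negative and a positive term is available; symmetrically if $S_t>0$. Using $|n_j|\le Z$, an immediate induction gives the invariant $S_t\in[1-Z,\,Z]$ for every $t$ (when $S_t\le 0$ we add a value in $\{1,\dots,Z\}$, so $1-Z\le S_t+1\le S_{t+1}\le Z$; when $S_t>0$ we add a value in $\{-Z,\dots,-1\}$, so $1-Z\le S_t-Z\le S_{t+1}<S_t\le Z$). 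Moreover the hypothesis forces $S_0,\dots,S_{m-1}$ to be pairwise distinct: if $S_i=S_j$ with $0\le i<j\le m-1$ then $n_{\pi(i+1)}+\dots+n_{\pi(j)}=0$, a nonempty sum over an index set of size $j-i\le m-1<m$, i.e.\ a nonempty proper zero-sum subset, which is excluded. Since $[1-Z,Z]$ contains exactly $2Z$ integers, this already gives $m\le 2Z$, and hence $m\le 2<Z^2+2\delta_{Z,1}$ when $Z=1$.

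For $Z\ge 2$ it remains to rule out $m=2Z$. If $m=2Z$ then $\{S_0,\dots,S_{m-1}\}$ exhausts $[1-Z,Z]\cap\mathbb{Z}$, so $S_k=Z$ and $S_{k'}=1-Z$ for some $k,k'$; since $S_0=0$ and $Z\ge 2$, both $k,k'\ge 1$. From $S_k=S_{k-1}+n_{\pi(k)}=Z$ and $S_{k-1}\le Z$: if $S_{k-1}>0$ we placed a negative term and $S_k<S_{k-1}\le Z$, a contradiction, so $S_{k-1}\le 0$ and then $n_{\pi(k)}=Z-S_{k-1}\ge Z$, forcing $n_{\pi(k)}=Z$; symmetrically $n_{\pi(k')}=-Z$. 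Thus some term equals $+Z$ and some (necessarily distinct) term equals $-Z$, and these two terms form a nonempty subset of size $2<m$ summing to $0$, contradicting the hypothesis. Hence $m\le 2Z-1\le Z^2-1<Z^2+2\delta_{Z,1}$, as claimed.

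The whole argument rests on the single observation that there is an ordering keeping the partial sums within $[1-Z,Z]$; after that the distinctness of the partial sums (which is where minimality enters) and a short case check do the rest. The only slightly delicate points are the degenerate case $Z=1$ and eliminating the boundary value $m=2Z$, which is precisely the reason the target is phrased as $m<Z^2+2\delta_{Z,1}$ rather than, say, $m\le 2Z$.
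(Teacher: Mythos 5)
Your proof is correct, and it takes a genuinely different route from the paper's. The paper argues by contradiction from $m\ge Z^2$: since $d$ and $-d$ cannot both occur, at most $Z$ distinct values appear, so by pigeonhole some value, say $-q$, occurs at least $Z$ times; Lemma \ref{lem:dirichlet} applied to $q$ positive terms then produces a nonempty subset with sum $kq$, $1\le k\le Z$, which together with $k$ copies of $-q$ gives a proper zero-sum subset of size at most $2Z<Z^2$. Your argument instead orders the terms greedily by sign so that all partial sums stay in the $2Z$-element window $[1-Z,Z]$, uses the minimality hypothesis to force $S_0,\dots,S_{m-1}$ to be pairwise distinct (so $m\le 2Z$), and then excludes $m=2Z$ for $Z\ge 2$ by showing the extreme values $Z$ and $1-Z$ of the window can only be reached via terms equal to $+Z$ and $-Z$, which pair into a proper zero-sum subset; the checks of availability of a term of the required sign, of the invariant, and of the boundary case are all sound, and the degenerate case $Z=1$ is handled correctly. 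What this buys is significant: you obtain $m\le 2Z-1$ for $Z\ge 2$, which is exactly the bound conjectured to be sharp in the remark following the lemma (sharpness being witnessed there by $Z$ copies of $Z-1$ and $Z-1$ copies of $-Z$), and which the paper states it was unable to prove; your rearrangement argument therefore not only proves the lemma but settles that conjecture, whereas the paper's pigeonhole-plus-Lemma \ref{lem:dirichlet} route is shorter but only yields the weaker threshold $Z^2+2\delta_{Z,1}$. Since the constants in the paper only need to depend on $Z$, this does not change the main theorems, but it would allow the restriction $|F|<Z^2+2\delta_{Z,1}$ in the definition of $D_4'$ to be replaced by $|F|\le 2Z-1+\delta_{Z,1}$, improving that constant.
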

\begin{remark}
As suggested from the notation, $n_j$ will play the role of the charges of ions.
 The bound $Z^2$ for $Z>1$ is definitely not optimal.
We believe that the sharp bound is $m \leq 2Z-1$. Such a bound would be sharp because if $n_1,\dots,n_Z=Z-1$
and $n_{Z+1},\dots,n_{2Z-1}=-Z$, then the assumptions of the lemma are fulfilled. We have been, however,
unable to prove that $2Z-1$ is a bound.
\end{remark}
\begin{proof}[Proof of Lemma \ref{lem:groupbreaking}]
For $Z=1$ and $Z=2$ the proof is trivial, so we will prove the lemma for $Z>2$.
   Assume that there is a finite sequence $n_1,\dots,n_{Z^2}$ satisfying the assumptions of
Lemma \ref{lem:groupbreaking}. Using the assumptions of the lemma, it is easy to show that
at least $Z$ terms of this sequence have to be positive and similarly
at least $Z$ terms have to be negative.
Furthermore, if a number $d$ appears in the sequence $n_j, j \in \{1,\dots,Z^2\}$, then $-d$ can not
appear because of the assumptions of the lemma. As a consequence, at most $Z$ different numbers can appear
in the sequence. Since the sequence $n_j$ has $Z^2$ terms, one number has to appear at least $Z$ times. We may assume,
without loss of generality, that this number is negative and we denote it by $-q$. We consider $q$ positive elements
$n_{j_1},\dots,n_{j_q}$ of the sequence. By Lemma \ref{lem:dirichlet} there are $\al_{j_1},...,\al_{j_q} \in \
\{0,1\}$ not all zero, so that $\sum_{m=1}^q \al_{j_m} n_{j_m}$ is a multiple of $q$. Since all
elements $n_{j_1},\dots,n_{j_q}$ are positive and not bigger than $Z$ we obtain that
$\sum_{m=1}^q \al_{j_m} n_{j_m}=kq$, where $0< k \leq Z$. Then
$\underbrace{-q+...+(-q)}_{k \text{ times }}+\sum_{m=1}^q \al_{j_m} n_{j_m}=0$, contradicting the
assumption $\sum_{j \in S} n_j \neq 0$ for all nonempty sets
$S \varsubsetneq \{1,\dots,Z^2\}$, because the sum consists of at most $2Z<Z^2$ terms. This concludes the proof of Lemma \ref{lem:groupbreaking}.
\end{proof}
For each subset $F \subset\{1,\dots,M\}$ we define
\begin{equation}
D_F= \min\{\sum_{i \in F} E_{i,n_{i}}-\sum_{i \in F} E_{i,0}|, -Z \leq n_i \leq Z_i, n_i \in \mathbb{Z},
 \forall i \in F, \sum_{i \in F}n_i=0, \sum_{i \in F}|n_i| \neq 0\}.
\end{equation}
and
\begin{equation}
D_4'=\frac{1}{2 Z^3} \min_{F \subset \{1,\dots,M\},|F| < Z^2+2 \delta_{Z,1}} D_F.
\end{equation}
From Property (E') it follows that $D_4'>0$. From the restriction $|F| < Z^2+2 \delta_{Z,1}$ it follows that
$D_4'$ depends only on $Z$. Now we are ready to prove a variant of \eqref{est:Habgap},
for decompositions with ions each of them having charge no less than $-Z$.
\begin{lemma}\label{lem:Habgap'}
Let $a, b, l$ be as in Lemma \ref{lem:bca}. Assume that the decomposition $b=\{ B_1,\dots, B_M\}$ has the
property that $|B_i| \leq Z_i+Z$, $\forall i \in \{1,\dots,M\}$. Then
\begin{equation}\label{est:Habgap'}
\inf \s(H_b^\s)- \inf \s(H_a^\s) \geq  2 l D_4'.
\end{equation}
\end{lemma}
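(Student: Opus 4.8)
The plan is to translate the spectral statement into a purely combinatorial inequality about the cluster charges $n_i := Z_i-|B_i|$ of $b$. Exactly as in the proof of Lemma~\ref{lem:gapde} (cf.~\eqref{infsHAi}), for any $c=\{C_1,\dots,C_M\}\in\mathcal{A}$ one has $\inf\sigma(H_c^\sigma)=\sum_{i=1}^M E_{i,\,Z_i-|C_i|}$; since $a\in\mathcal{A}^{at}$ forces $|A_i|=Z_i$, this gives $\inf\sigma(H_a^\sigma)=\Einfty$ and hence $\inf\sigma(H_b^\sigma)-\inf\sigma(H_a^\sigma)=\sum_{i=1}^M(E_{i,n_i}-E_{i,0})$. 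Moreover $\sum_i n_i=\sum_i Z_i-\sum_i|B_i|=N-N=0$, and the hypothesis $|B_i|\le Z_i+Z$ together with $|B_i|\ge 0$ gives $-Z\le n_i\le Z_i\le Z$ for all $i$.

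The first real step is to compute $l$ in terms of the $n_i$. For a decomposition $c$ write $q_r(c)=Z_r-|C_r|$ for the charge of its $r$-th cluster; along the sequence $c_0=a,\dots,c_l=b$ of Lemma~\ref{lem:bca} we have $q_r(c_0)=0$ for every $r$, and at the step $c_m\to c_{m+1}$ the giver $i$ has $q_i(c_m)\ge 0$ and $q_i(c_{m+1})=q_i(c_m)+1$, while the receiver $j$ has $q_j(c_m)\le 0$ and $q_j(c_{m+1})=q_j(c_m)-1$, all other charges being unchanged. I would then observe that each integer sequence $m\mapsto q_r(c_m)$ is monotone: a $+1$-jump of $q_r$ cannot be followed later by a $-1$-jump, because in between $q_r$ stays $\ge 1$, which rules out $r$ being a receiver, and symmetrically a $-1$-jump cannot be followed by a $+1$-jump. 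Consequently all jumps of $q_r$ have a single sign and sum to $n_r$, so the number of steps in which cluster $r$ participates is exactly $|n_r|$; summing over $r$, and using that each step has one giver and one receiver, $2l=\sum_{r=1}^M|n_r|$.

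Next I would split the nonzero charges into small zero-sum blocks. Let $S=\{i:n_i\ne 0\}$; repeatedly extract from $S$ a nonempty subset that is minimal (for inclusion) among those with zero charge sum, and remove it — this preserves zero total charge, so it yields a partition $S=F_1\cup\dots\cup F_p$ into blocks each of which has zero charge sum and no proper nonempty zero-sum subset. For each $t$ the charges $(n_i)_{i\in F_t}$ are nonzero integers in $[-Z,Z]$ with zero sum and no proper nonempty zero-sum subset, so Lemma~\ref{lem:groupbreaking} gives $|F_t|<Z^2+2\delta_{Z,1}$. From this I extract two facts. First, $(n_i)_{i\in F_t}$ is admissible in the minimum defining $D_{F_t}$ (it satisfies $-Z\le n_i\le Z_i$, $\sum_{i\in F_t}n_i=0$ and $\sum_{i\in F_t}|n_i|\ne 0$), and since $|F_t|<Z^2+2\delta_{Z,1}$ we get $\sum_{i\in F_t}(E_{i,n_i}-E_{i,0})\ge D_{F_t}\ge 2Z^3 D_4'$ by the definition of $D_4'$. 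Second, $\tfrac12\sum_{i\in F_t}|n_i|\le Z^3$: if $Z\ge 2$ then $2\delta_{Z,1}=0$, so $\tfrac12\sum_{i\in F_t}|n_i|\le\tfrac12|F_t|\,Z<\tfrac12 Z^3\le Z^3$; if $Z=1$ then $|F_t|<3$ together with $|n_i|=1$ for $i\in F_t$ forces $|F_t|=2$ with charges $+1,-1$, so $\tfrac12\sum_{i\in F_t}|n_i|=1=Z^3$.

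Finally I would add up the blocks: using $2l=\sum_{i}|n_i|=\sum_{t=1}^p\sum_{i\in F_t}|n_i|\le 2pZ^3$,
\[
\inf\sigma(H_b^\sigma)-\inf\sigma(H_a^\sigma)=\sum_{t=1}^p\sum_{i\in F_t}(E_{i,n_i}-E_{i,0})\ \ge\ \sum_{t=1}^p 2Z^3 D_4'\ =\ 2pZ^3 D_4'\ \ge\ 2l\,D_4',
\]
which is \eqref{est:Habgap'}. The main obstacle is the second step: establishing $2l=\sum_i|n_i|$ cleanly from the combinatorial properties (ii) of Lemma~\ref{lem:bca} via the monotonicity of the cluster charges, and — linked to this — verifying the arithmetic of the third step in the borderline case $Z=1$, which is exactly why the correction $2\delta_{Z,1}$ appears in Lemma~\ref{lem:groupbreaking} and the factor $\tfrac{1}{2Z^3}$ in the definition of $D_4'$.
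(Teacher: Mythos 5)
Your proof is correct and follows essentially the same route as the paper's: reduce the gap to $\sum_i (E_{i,n_i}-E_{i,0})$ with $n_i=Z_i-|B_i|$, split the nonzero charges into zero-sum groups of fewer than $Z^2+2\delta_{Z,1}$ ions via Lemma \ref{lem:groupbreaking}, and invoke the definition of $D_4'$. Your accounting is in fact a sharpened version of the paper's: you prove the exact identity $2l=\sum_i|n_i|$ through the monotonicity of the cluster charges along the sequence of Lemma \ref{lem:bca}, where the paper writes $l=mZ^3+n$ and only asserts that at least $2mZ^2$ atoms are ionized, and you handle $Z=1$ within the same framework instead of dismissing it as trivial.
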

\begin{proof}
For $Z=1$ the proof is trivial, so we will show the Lemma in the case
$Z>1$. We write $l=m Z^3+n$, where $m, n \in \mathbb{N} \cup \{0\}$ with $n<Z^3$. Given the restrictions
of the charges of ions of the decomposition $b$, one can verify that after $l$ steps (where by step is meant going
from $c_m$ to $c_{m+1}$) at least
$2m Z^2$ atoms are ionized. Using Lemma \ref{lem:groupbreaking} it follows that we can break these ions to at least
$2m+1$ groups of ions with the following properties: each group has less than $Z^2$ ions and the sum of the charges of the ions of each group is  zero. By definition of
$D_4'$, each of these $2m+1$ groups is giving at least a gap $ 2 Z^3 D_4'$. Therefore we obtain that
$\inf \s(H_b^\s)- \inf \s(H_a^\s) \geq (2 m+1) 2 Z^3 D_4'$, which together with the equation $l=m Z^3+n$ implies
\eqref{est:Habgap'}.
\end{proof}
Of course we need \eqref{est:Habgap'} to hold for all $b \in \mathcal{A}$ without the restrictions
$|B_i| \leq Z_i+Z$. In other words we need to prove
\begin{lemma}\label{lem:last}
Let $a, b, l$ be as in Lemma \ref{lem:bca}. Then \eqref{est:Habgap'} holds.
\end{lemma}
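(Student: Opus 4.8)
The plan is to deduce Lemma \ref{lem:last} from Lemma \ref{lem:Habgap'} by replacing $b$ with a ``desaturated'' decomposition $b^{*}$ that satisfies the hypothesis $|B_{i}^{*}|\leq Z_{i}+Z$ of that lemma, and then to recover the steps ``lost'' in the desaturation by invoking Lieb's result (\cite{Lieb}) on the maximal number of electrons a nucleus can bind, together with the single‑ionization gap $D_{5}$ of \eqref{def:D5}.

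\emph{Setup.} Write $n_{i}=Z_{i}-|B_{i}|$ for the charge of the $i$‑th cluster of $b$, so $\sum_{i}n_{i}=0$ and $n_{i}\leq Z_{i}$; by \eqref{infsHAi} (and the cluster decomposition of $H_{b}^{\s}$) we have $\inf\s(H_{b}^{\s})=\sum_{i}E_{i,n_{i}}$, while $\inf\s(H_{a}^{\s})=\Einfty$ since $a\in\cA^{at}$. In the construction underlying Lemma \ref{lem:bca}, $l$ is the total positive charge $\sum_{i:\,n_{i}>0}n_{i}$ of $b$. Set $\Delta=\sum_{i:\,n_{i}<-Z}(|n_{i}|-Z)\geq 0$. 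If $\Delta=0$ then every cluster of $b$ has charge $\geq -Z$, i.e.\ $|B_{i}|\leq Z_{i}+Z$, and Lemma \ref{lem:Habgap'} applied to $b$ gives \eqref{est:Habgap'} directly; so from now on assume $\Delta\geq 1$ (for $Z=1$ the statement is in any case trivial).

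\emph{Desaturation and energy comparison.} Starting from $b$, repeatedly pick a cluster of charge $<-Z$ and a cluster of charge $>0$ --- the latter exists by charge neutrality --- and move one electron from the former to the latter. Each move decreases $\Delta$ by one, so after exactly $\Delta$ moves we reach $b^{*}=\{B_{1}^{*},\dots,B_{M}^{*}\}$, all of whose clusters have charge in $[-Z,Z_{i}]$, i.e.\ $|B_{i}^{*}|\leq Z_{i}+Z$, with total positive charge $l^{*}=l-\Delta\geq Z\geq 1$, so $b^{*}\in\cA/\cA^{at}$. In each move the electron leaves a cluster $i$ of charge $n_{i}<-Z$; since $Z\geq Z_{i}$ we have $n_{i}\leq -Z_{i}-1$, so $n_{i}$ and $n_{i}+1$ both lie in the range where $E_{i,\cdot}$ is constant by \cite{Lieb}, and the energy of cluster $i$ is unchanged; it enters a cluster $j$ of charge $n_{j}\geq 1$, so $n_{j}-1\in\{0,\dots,Z_{j}-1\}$ and, by \eqref{def:D5}, the energy of cluster $j$ drops by $E_{j,n_{j}}-E_{j,n_{j}-1}\geq 2D_{5}$. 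Hence each move decreases $\sum_{i}E_{i,\cdot}$ by at least $2D_{5}$, so
\[
\inf\s(H_{b}^{\s})-\inf\s(H_{b^{*}}^{\s})\ \geq\ 2\Delta D_{5}.
\]

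\emph{Conclusion and main obstacle.} Applying Lemma \ref{lem:Habgap'} to $b^{*}$ yields $\inf\s(H_{b^{*}}^{\s})-\Einfty\geq 2l^{*}D_{4}'=2(l-\Delta)D_{4}'$. Shrinking $D_{4}'$ if necessary so that $D_{4}'\leq D_{5}$ --- this keeps $D_{4}'$ a positive constant depending only on $Z$, and only strengthens the other hypotheses on $R$ --- we obtain
\[
\inf\s(H_{b}^{\s})-\inf\s(H_{a}^{\s})=\big(\inf\s(H_{b}^{\s})-\inf\s(H_{b^{*}}^{\s})\big)+\big(\inf\s(H_{b^{*}}^{\s})-\Einfty\big)\ \geq\ 2\Delta D_{5}+2(l-\Delta)D_{4}'\ \geq\ 2lD_{4}',
\]
which is \eqref{est:Habgap'}. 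The part I expect to require the most care is the bookkeeping with the length parameter: one must check that the sequence furnished by Lemma \ref{lem:bca} for $b^{*}$ really has length $l-\Delta$, equivalently that ``$l$'' in Lemmas \ref{lem:bca} and \ref{lem:Habgap'} is the total positive charge of the target decomposition and that each desaturation move lowers it by exactly one. The remaining points --- availability of a positive target cluster at each step, genuine constancy of $E_{i,\cdot}$ in the oversaturated range, and the degenerate case $Z=1$ --- are routine.
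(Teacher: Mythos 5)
Your proof is correct, and it rests on exactly the same three ingredients as the paper's: Lemma \ref{lem:Habgap'} for the part of the electron transfer in which no ion is more negative than $-Z$, the constancy of $E_{i,n}$ for $n\leq -Z_i$ from \cite{Lieb}, and the one-electron gap $D_5$ of \eqref{def:D5}. The organization, however, is mirrored. The paper chooses the chain of Lemma \ref{lem:bca} greedily, so that all moves which dump an electron onto an ion of charge $\leq -Z$ occur after a last admissible decomposition $c_{m_0}$; it applies Lemma \ref{lem:Habgap'} to $c_{m_0}$ and bounds each later step by $2D_5$ (ionization gap of the non-negative source plus Lieb constancy on the over-negative target). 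You instead undo the oversaturation of $b$ directly, producing $b^{*}$ with all charges in $[-Z,Z_i]$, gain $2D_5$ per undone move (de-ionization gap of the positive target plus Lieb constancy on the over-negative source --- the same energy differences read backwards), and then apply Lemma \ref{lem:Habgap'} to $b^{*}$. What your version buys is that you never need the paper's claim that the forward chain can be ordered so that every step beyond $c_{m_0}$ targets an already over-negative ion; the price is the bookkeeping point you flagged, namely that the $l$ of Lemma \ref{lem:bca} is forced to equal the total positive charge of the target decomposition --- which is indeed true, since each step of type (ii) raises that charge by exactly one starting from zero for $a\in\cA^{at}$, and it gives $l^{*}=l-\Delta$ for $b^{*}$. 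Two small remarks: the paper simply asserts $D_5\geq D_4'$ and uses it, so your replacement of $D_4'$ by $\min\{D_4',D_5\}$ is a harmless (arguably safer) variant, as only positivity and $Z$-dependence of the constant are used downstream; and the parenthetical dismissal of $Z=1$ as trivial is unnecessary, since your desaturation argument covers $Z=1$ verbatim.
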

\begin{proof}
We construct the sequence $c_0,\dots, c_l$ as follows. We go from $c_m$ to $c_{m+1}$
so that $c_{m+1}$ fulfills the Properties of $b$ in Lemma \ref{lem:Habgap'} until
it is no longer possible to do this. In other words we do not let any ion
 have charge  less than $-Z$ until we have no choice to do this. Let $c_{m_0}$ the last
decomposition that fulfills the Properties of $b$ in Lemma \ref{lem:Habgap'}. Then by Lemma
\ref{lem:Habgap'} we obtain that
\begin{equation}\label{est:m0}
\inf \s(H_{c_{m_0}}^\s)- \inf \s(H_a^\s) \geq  2 m_0 D_4'.
\end{equation}
We shall now show that
\begin{equation}\label{est:step}
\inf \s(H_{c_{m+1}}^\s)- \inf \s(H_{c_m}^\s) \geq 2 D_5, \forall m \geq m_0,
\end{equation}
where $D_5$ was defined \eqref{def:D5}. Let $m \geq m_0$. With the notation of Lemma \ref{lem:bca},
we have that $|C_{m,j}| \geq Z_j+Z$. In other we go from $c_m$ to $c_{m+1}$
by transferring an electron to an ion with charge already less or equal to $-Z$.
 By \cite{Lieb} this implies that $H_{C_{m+1,j}}^\s$ has no discrete spectrum and therefore, by Theorem \ref{hvz}, we obtain that
\begin{equation}
E_{j,Z_j-|C_{m,j}|}=E_{j, Z_j-1-|C_{m,j}|}.
\end{equation}
The last equality together with \eqref{eqn:stepgap} implies that
\begin{equation}
\inf \s(H_{c_{m+1}}^\s)-\inf \s(H_{c_m}^\s)=E_{i,Z_i+1-|C_{m,i}|}-E_{i,Z_i-|C_{m,i}|}.
\end{equation}
Estimate \eqref{est:step} follows from the definition of $D_5$. One can easily verify
that $D_5 \geq  D_4'$ and therefore  \eqref{est:m0} and \eqref{est:step} imply \eqref{est:Habgap'}.
This concludes the proof of Lemma \ref{lem:last}.
\end{proof}

\end{document}